\newcommand{\Ab}{\mathbf{A}}
\newcommand{\Eb}{\mathbf E}
\DeclareMathOperator{\curl}{curl}\DeclareMathOperator{\Div}{div}
 \DeclareMathOperator{\dist}{dist} 
\DeclareMathOperator{\supp}{supp} \DeclareMathOperator{\dom}{\mathrm {Dom}}
\DeclareMathOperator{\bd}{\mathrm{bnd}} \DeclareMathOperator{\br}{\mathrm{bar}}\DeclareMathOperator{\T}{\mathrm{T}}\DeclareMathOperator{\blk}{\mathrm{int}}
\DeclareMathOperator{\re}{\mathrm{Re}}
\DeclareMathOperator{\csch}{\mathrm{csch}}
\DeclareMathOperator{\spc}{\mathrm{sp}}
\DeclareMathOperator{\ImP}{\mathrm{Im}\Phi}
\DeclareMathOperator{\im}{\mathrm{Im}}
\DeclareMathOperator{\p}{\mathrm{P}}
\DeclareMathOperator{\q}{\mathrm{Q}}
\newtheorem{thm}{Theorem}[section]
\newtheorem{theorem}[thm]{Theorem}
\newtheorem{assumption}[thm]{Assumption}
\newtheorem{notation}[thm]{Notation}
\newtheorem{lemma}[thm]{Lemma}
\newtheorem{proposition}[thm]{Proposition}
\theoremstyle{remark}
\newtheorem{rem}[thm]{Remark}
\newcommand{\nb}{\nabla}
\newcommand{\R}{\mathbb{R}}
\newcommand{\Fb}{\mathbf{F}}
\newcommand{\N}{\mathbb{N}}
\newcommand{\C}{\mathbb{C}}
\newcommand{\Hd}{H_{{\rm div}}^1(\Omega)}
\newcommand{\Om}{\Omega}
\newcommand{\kp}{\kappa}
\newcommand{\kn}{\nabla-i\kappa H{\bf A}}
\newcommand{\Es}{{\rm E}_{\rm g.st}(\kappa,  H)}
\newcommand{\GL}{\mathcal E_{\kappa,H}}
\def\sig#1{\vbox{\hsize=5.5cm
		\kern2cm\hrule\kern1ex
		\hbox to \hsize{\strut\hfil #1 \hfil}}}
\newcommand\signatures[4]{%
	\vspace{3cm}
	\hbox to \hsize{\hfil #1, \today\hfil}
	\vspace{3cm}
	\hbox to \hsize{\quad#2\hfil\hfil #3\quad}
	\vspace{3cm}
	\hbox to \hsize{\hfil#4\hfil}}
\numberwithin{equation}{section}
\title[Breakdown of superconductivity under magnetic steps]{The breakdown of superconductivity in the presence of magnetic steps}
\author[W. Assaad]{Wafaa Assaad}
\address{Lund University, Department of Mathematics, Lund, Sweden}
\email{wafaa.assaad@math.lth.se}
\date{\today}
\begin{document}
\maketitle
\begin{abstract}
	Many earlier works were devoted to the study of the breakdown of superconductivity in  type-II superconducting bounded planar domains, submitted to smooth magnetic fields. In the present contribution, we consider a new situation where the applied magnetic field is piecewise-constant, and the discontinuity jump occurs along a smooth curve meeting the boundary transversely. To handle this situation, we perform a detailed spectral analysis of a  new effective model. Consequently, we establish the monotonicity  of the transition from a superconducting to a normal state. Moreover, we determine the location of superconductivity in the sample just before it disappears completely. Interestingly, the study shows similarities with the case of corner domains subjected to constant fields.
\end{abstract}

\section{Introduction}\label{sec:int}

The breakdown of superconductivity in type-II superconductors submitted to a sufficiently strong magnetic field is a celebrated phenomenon in physics~\cite{saint1963onset,lu2000gauge,helffer2001magnetic,helffer2003upper}. A theorem of Giorgi and Phillips~\cite{giorgi2002breakdown} asserts that a 
 superconducting sample with Ginzburg--Landau parameter $\kappa$,  submitted to  a constant magnetic field of strength $H$, passes permanently to the normal state when $H$ exceeds some critical value. An important question in the literature has been to establish that the transition from  the superconducting to the normal state is \emph{monotone}, i.e. to prove that the sample is superconducting for all $H$ less than the aforementioned critical value. 

Such a  monotonicity has been established in several geometric situations both in 2- and 3-dimensional settings in the case where the Ginzburg--Landau parameter is big,  and for large classes of smooth magnetic fields~\cite{fournais2006third,fournais2007strong,fournais2009ginzburg,raymond2009sharp,fournais2011strong,dombrowski2013semiclassical}. 
In particular, the analysis of 2-dimensional  domains with smooth boundary, submitted to uniform fields shows that the problem is related to a purely linear eigenvalue problem~\cite{fournais2006third,fournais2007strong}. 
The case of corner domains  was treated in~\cite{bonnaillie2005fundamental,bonnaillie2006asymptotics,bonnaillie2007superconductivity}.

However, a monotone transition is not guaranteed in general, and an \emph{oscillatory} behavior occurs in certain geometric settings. One famous example is the Little--Parks effect for 2D annuli~\cite{little1962observation,erdHos1997dia,fournais2015lack},  where the topology of the sample causes the lack of monotonicity. Other examples of this oscillation effect were provided in~\cite{fournais2015lack}, in a case of a disc-shaped sample placed in a non-uniform magnetic field.
 
 In the present paper, we  focus on the case of a smooth domain placed in a \emph{discontinuous} magnetic field. More precisely, we consider a long cylindrical superconducting domain with smooth cross section, submitted to a magnetic field with direction parallel to the axis of the cylinder and whose profile is a step function.  Such a case was not treated in the aforementioned literature. We  aim mainly at answering the following questions:
\begin{itemize}
	\item \emph{Question 1.} How does the discontinuity of the magnetic field affect the monotonicity of the transition from the superconducting to the normal state?
	\item \emph{Question 2.} Where is superconductivity localized right before it completely disappears from the sample?
\end{itemize}
 As shown later in this article, the answers to these questions generate an interesting comparison between the case that we handle and another known case of corner domains submitted to constant magnetic fields (see Section~\ref{sec:main}).
\subsection{The functional and our assumptions}

Consider an open,  bounded, and simply connected set $\Om$ of $\R^2$. Assume that $\Om$ is the horizontal cross section of a long wire subjected to a magnetic field, whose profile is the function $B_0\colon\Om \rightarrow [-1,1]$ and  whose intensity is $H > 0$. The Ginzburg--Landau (GL) free energy is given by the functional
\begin{equation}\label{eq:GL}
\GL (\psi,\Ab)= \int_\Om \Big( \big|(\nb-i\kp H {\mathbf
	A})\psi\big|^2-\kp^2|\psi|^2+\frac{\kappa^2}{2}|\psi|^4 \Big)\,dx
+\kp^2H^2\int_{\Om}\big|\curl\Ab-B_0\big|^2\,dx,
\end{equation}
with $\psi \in H^1(\Om;\C)$ and  $\Ab\in H^1(\Om;\R^2)$. In physics, $\kp>0$ is a characteristic scale of the sample called  the GL parameter,  
$\psi$ is the order parameter with $|\psi|^2$ being a measure of the density of  Cooper pairs, and
 $\Ab$ is the vector potential whose $\curl$ represents the induced magnetic field in the sample.

We carry out our analysis in the asymptotic regime $\kappa\to+\infty$, which corresponds in physics to extreme type-II superconductors. 
We work under the following assumptions on the domain $\Om$ and the magnetic field $B_0$ (see~Figure~\ref{fig1}):
\begin{assumption}\label{assump1}~
	\begin{enumerate}
		\item $\Omega_1$ and $\Omega_2$ are two disjoint open sets.
		\item $\Omega_1$ and $\Omega_2$ have a finite number of connected components.
		\item $\partial\Omega_1$ and $\partial\Omega_2$ are piecewise-smooth with  a finite number of corners.
		\item $\Gamma=\partial\Omega_1\cap\partial\Omega_2$ is the union of a finite number of disjoint simple smooth curves $\{\Gamma_k\}_{k \in \mathcal K}$\,; we will refer to  $\Gamma$ as the \emph{magnetic barrier}.
		\item $\Omega=(\Omega_1\cup\Omega_2\cup\Gamma)^\circ$ and  $\partial\Omega$  is smooth.
		\item For any $k \in \mathcal K$, $\Gamma_k$ intersects $\partial \Om$ at two distinct points. This intersection is transversal, i.e. $\mathrm{T}_{\partial \Omega} \times \mathrm{T}_{\Gamma_k} \neq 0$ at the intersection point, where $\mathrm{T}_{\partial \Omega}$ and  $\mathrm{T}_{\Gamma_k}$ are respectively unit tangent vectors of $\partial \Omega$ and $\Gamma_k$.
		\item $B_0={\mathbbm 1}_{\Omega_1}+a{\mathbbm 1}_{\Omega_2}$, where $a \in [-1,1)\setminus\{0\}$ is a given constant.
	\end{enumerate} 
\end{assumption}
\begin{figure}[H]
	\begin{subfigure}{0.45\linewidth}
		\centering
		\includegraphics[scale=1]{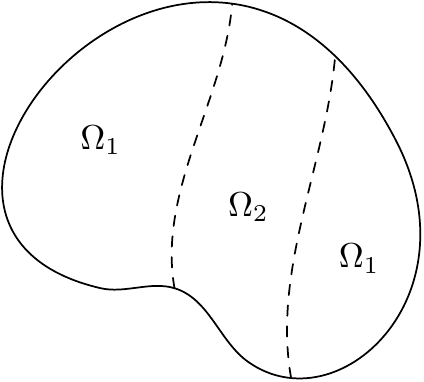}
	\end{subfigure}%
	\begin{subfigure}{0.45\linewidth}
		\centering
		\includegraphics[scale=1]{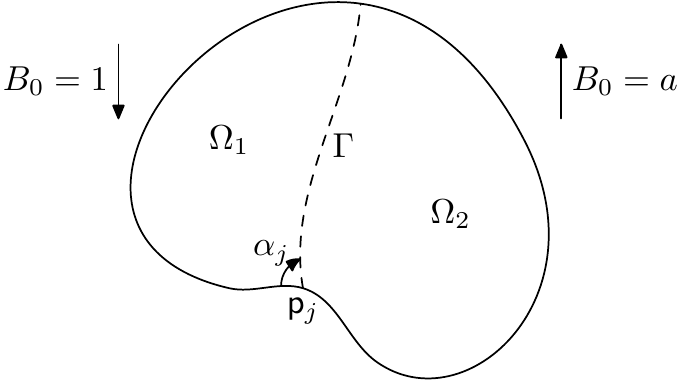}
	\end{subfigure}%
	\caption{Schematic representation of the set $\Om$ subjected to the piecewise-constant magnetic field $B_0$, with the magnetic barrier $\Gamma$.} 
	\label{fig1}
\end{figure}
\begin{notation}\label{not:alfa}
Since $\Gamma \cap \partial \Om$ is finite, we denote by
\[\Gamma \cap \partial \Om=\big\{ \mathsf p_j~:~j \in\{1,...,n\}\big\},\]
where $n =\mathrm{Card}(\Gamma \cap \partial \Om)$. For all $j \in \{1,...,n\}$, let   $\alpha_j \in(0,\pi)$ be the angle between $\Gamma$ and  $\partial \Om$ at the intersection point $\mathsf p_j$ (measured towards $\Om_1$). 
\end{notation}
Since the functional in~\eqref{eq:GL} is gauge invariant\footnote{The physically meaningful quantities $|\psi|^2$, $\curl \Ab$ and $|(\nabla-i\kappa H\Ab)\psi|^2$ are gauge invariant
	in the sense that they do not change under the  transformation $(\psi,\Ab)\mapsto (e^{i\varphi\kappa H}\psi,\Ab+\nb \varphi)$ for any  $\varphi \in H^2(\Om;\R)$.}, one may restrict  its minimization with respect to $(\psi,\Ab)$ (originally done in $H^1(\Om;\C)\times H^1(\Om;\R^2)$) to the space $H^1(\Om;\C)\times\Hd$, where 
\begin{equation}\label{eq:Hd}
\Hd=
\left\{
\Ab \in H^1(\Om;\R^2)~:~ \Div\Ab=0 \ \mathrm{in}\ \Om,\ \Ab\cdot\nu=0\ \mathrm{on}\ \partial \Om
\right\}
\end{equation}
and $\nu$ is a unit normal vector of $\partial \Omega$.
This restriction is beneficial due to the nice regularity properties of the space $\Hd$ (see~\cite[Appendix B]{Assaad}).
Hence, we introduce the following ground-state energy:
\begin{equation} \label{eq:gr_st}
\Es=\inf\big\{\GL(\psi,\Ab)~:~(\psi,\Ab) \in H^1(\Om;\C)\times\Hd\big\}.
\end{equation}

Critical points $(\psi, \Ab) \in H^1(\Om;\C)\times\Hd $ of $\GL$ are weak solutions of the following GL equations:
\begin{equation}\label{eq:Euler}
\begin{cases}
\big(\kn\big)^2\psi=\kp^2(|\psi|^2-1)\psi &\mathrm{in}\ \Om,\\
-\nb^{\perp}  \big(\curl\Ab-B_0\big)= \frac{1}{\kp H}\im\big(\overline{\psi}(\nb-i\kp H \Ab)\psi\big) & \mathrm{in}\ \Om,\\
\nu\cdot(\kn)\psi=0 & \mathrm{on}\ \partial \Om ,\\
\curl\Ab=B_0 & \mathrm{on}~ \partial \Om .
\end{cases}
\end{equation}
Here, $\nb^\perp= (\partial_{x_2},-\partial_{x_1})$.
\subsection{Critical fields}\label{sec:critical}
Let $\Fb \in \Hd$ be the unique vector potential generating the \emph{step} magnetic field $B_0$ (see~\eqref{A_1}). For large $\kappa$, a  result \`a la Giorgi--Phillips (Section~\ref{sec:giorgi}) asserts that for sufficiently strong magnetic fields, $H$, the only solution of~\eqref{eq:Euler} is the normal state $(0,\Fb)$. We want  to prove the existence of a \emph{unique} field where the transition to the normal state happens. To be consistent with the literature, we call this field \emph{the third critical field} and denote it by $H_{C_3}(\kappa)$.

 As mentioned, such a uniqueness result  has been proved  in many generic situations~\cite{fournais2006third,fournais2007strong,fournais2009ginzburg,raymond2009sharp,fournais2011strong,dombrowski2013semiclassical}. 
In their analysis of constant magnetic fields, Fournais and Helffer~\cite{fournais2006third,fournais2010spectral} introduced several natural critical fields, called \emph{global} and \emph{local} fields: a monotone transition requires the global fields to coincide.  To prove the equality of these fields (for large $\kappa$), Fournais and Helffer linked these global fields to  local fields involving spectral data of a linear problem. 

We adapt the definitions of the critical fields in~\cite{fournais2010spectral} to our situation of a step magnetic field. For large $\kappa$, we consider the global fields:
\begin{equation}\label{eq:Hc3_over}
\overline{H}_{C_3}(\kappa)=\inf\big\{H>0~:~\mbox{for all}\ H'>H, (0,\Fb)\ \mbox{is the only minimizer of}\ \mathcal E_{\kappa,H'}\big\},
\end{equation}
\begin{equation}\label{eq:Hc3_under}
\underline{H}_{C_3}(\kappa)=\inf\big\{H>0~:~(0,\Fb)\ \mbox{is the only minimizer of}\ \mathcal E_{\kappa,H}\big\}.
\end{equation}
The latter field was first introduced by Lu and Pan~\cite{lu1999estimates}.
We consider also the local fields:
\begin{equation}\label{eq:Hc3_loc_over}
\overline{H}^\mathrm{loc}_{C_3}(\kappa)=\inf\big\{H>0~:~\mbox{for all}\ H'>H,\ \lambda(\kappa H')\geq\kappa^2\big\},
\end{equation}
\begin{equation}\label{eq:Hc3_loc_under}
\underline{H}^\mathrm{loc}_{C_3}(\kappa)=\inf\big\{H>0~:~\lambda(\kappa H)\geq\kappa^2\big\},
\end{equation}
 where $\lambda(\kappa H)$ stands for the ground-state energy of a Schr\"odinger operator with a step magnetic field, defined in Section~\ref{sec:intro2}. The equality between $\overline{H}^\mathrm{loc}_{C_3}(\kappa)$ and $\underline{H}^\mathrm{loc}_{C_3}(\kappa)$---and consequently between $\overline{H}_{C_3}(\kappa)$ and $\underline{H}_{C_3}(\kappa)$---depends on whether  the function $\mathfrak b\mapsto \lambda(\mathfrak b)$ is monotone increasing for large $\mathfrak b$, a property that has been called 'strong diamagnetism'.  In the settings of this paper, we prove this property in Section~\ref{sec:monot}.
\subsection{Main results}\label{sec:main}
We present now our main results: Theorem~\ref{thm:Hc3} answers Question 1 in the introduction by establishing the existence and the uniqueness of the third critical field, for large $\kappa$, and providing  asymptotics of this field.  Question 2 is answered in Theorem~\ref{thm:agmon}, where we establish certain Agmon-type estimates that make precise the zone of nucleation of superconductivity before disappearing from the sample, and show that the size of this zone is of order $\kappa^{-2}$.

These results involve the following spectral quantities:
\begin{itemize}
	\item $\Theta_0\approx 0.59$ is the so-called de Gennes constant, introduced in Section~\ref{sec:corner} as the ground-state energy of the Neumann realization of the operator $\p_{1,U_\pi}$ in the half-space.
	\item $\mu(\alpha,a)$ is the ground-state energy of the Neumann realization of a Schr\"odinger operator with a step magnetic field in $\R^2_+$, introduced in Section~\ref{sec:new_model}. 
\end{itemize}

The main theorems, namely Theorems~\ref{thm:Hc3} and~\ref{thm:agmon}, are established under the following additional assumption:
\begin{assumption}\label{assump3}~
	Suppose that Assumption~\ref{assump1} holds. For $j\in\{1,...,n\}$, let $\alpha_j$ be the angle in Notation~\ref{not:alfa}. We assume that $\mu(\alpha_j,a)<|a|\Theta_0$.
\end{assumption}
We will discuss the conditions in this  assumption later in the paper (see Section~\ref{sec:heur}).
\begin{rem}
	 In Section~\ref{sec:bound_state}, we provide particular examples of pairs $(\alpha_j,a)$ for which this assumption is satisfied.
\end{rem}
\begin{theorem}\label{thm:Hc3}
	 There exists $\kappa_0>0$ such that if $\kappa\geq \kappa_0$ and $\lambda(\cdot)$ is as in~\eqref{eq:lmda_a}, then the equation
	\[\lambda(\kappa H)=\kappa^2\]
	admits a unique solution $H=H_{C_3}(\kappa)$ which can be estimated as follows:
	\begin{equation}\label{eq:Hc3}
	H_{C_3}(\kappa)	=\frac \kappa {\min\limits_{j \in\{1,...,n\}}\mu(\alpha_j,a)}+ \mathcal O(\kappa^\frac 12),\quad \mathrm{as}\ \kappa \rightarrow +\infty.
	\end{equation}
	Furthermore for $\kappa\geq \kappa_0$, the critical fields defined in~\eqref{eq:Hc3_over} and~\eqref{eq:Hc3_under} coincide and satisfy
	\begin{equation*}
	\overline{H}_{C_3}(\kappa)=\underline{H}_{C_3}(\kappa)=	 H_{C_3}(\kappa).
	\end{equation*}
\end{theorem}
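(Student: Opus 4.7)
My plan is to follow the Fournais--Helffer strategy: analyse the \emph{local} (linear) critical fields using the spectral properties of $\lambda(\mathfrak b)$, in particular the strong-diamagnetism statement announced for Section~\ref{sec:monot}, and then sandwich the \emph{global} (nonlinear) critical fields between the local ones by comparing the GL functional to its linearization at the normal state.

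\textbf{Step 1 (uniqueness of a solution to $\lambda(\kappa H)=\kappa^2$).} Strong diamagnetism gives that $\mathfrak b\mapsto\lambda(\mathfrak b)$ is strictly increasing for $\mathfrak b$ large; together with $\lambda(\mathfrak b)\to +\infty$ (read off from the asymptotic of Step~2), it delivers for every $\kappa\geq\kappa_0$ a unique $\mathfrak b(\kappa)$ satisfying $\lambda(\mathfrak b(\kappa))=\kappa^2$. Setting $H_{C_3}(\kappa):=\mathfrak b(\kappa)/\kappa$ gives existence and uniqueness, and the same monotonicity immediately yields $\overline{H}^{\mathrm{loc}}_{C_3}(\kappa)=\underline{H}^{\mathrm{loc}}_{C_3}(\kappa)=H_{C_3}(\kappa)$.

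\textbf{Step 2 (asymptotic formula).} The key spectral input is the expansion
\[
\lambda(\mathfrak b)=\mathfrak b\,\min_{j\in\{1,\dots,n\}}\mu(\alpha_j,a)+\mathcal O\!\left(\mathfrak b^{1/2}\right),\qquad \mathfrak b\to+\infty,
\]
which I would establish by a two-scale analysis: localise near each $\mathsf p_j$, rescale by $\mathfrak b^{-1/2}$, and identify the frozen operator with the half-plane step-field model whose ground-state energy is $\mu(\alpha_j,a)$. The lower bound uses a partition of unity plus an IMS localization, where Assumption~\ref{assump3} ($\mu(\alpha_j,a)<|a|\Theta_0$) ensures that boundary pieces away from the $\mathsf p_j$'s and interior pieces along $\Gamma$ contribute energies strictly larger than $\mathfrak b\min_j\mu(\alpha_j,a)$, so that the minimum is achieved at the $\mathsf p_j$'s. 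Inserting $\mathfrak b=\kappa H_{C_3}(\kappa)\sim\kappa^2/\min_j\mu(\alpha_j,a)$ and inverting this asymptotic (allowed by monotonicity) gives the stated estimate $H_{C_3}(\kappa)=\kappa/\min_j\mu(\alpha_j,a)+\mathcal O(\kappa^{1/2})$.

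\textbf{Step 3 (coincidence of global and local fields).} It remains to prove the standard sandwich
\[
\underline{H}^{\mathrm{loc}}_{C_3}(\kappa)\leq \underline{H}_{C_3}(\kappa)\leq \overline{H}_{C_3}(\kappa)\leq \overline{H}^{\mathrm{loc}}_{C_3}(\kappa).
\]
The lower inequality is soft: if $\lambda(\kappa H)<\kappa^2$, I take the first eigenfunction of the linear step-field operator, together with the potential $\Fb$, as a trial pair in $\GL$; a Taylor expansion yields $\GL<0$, so $(0,\Fb)$ cannot be a minimizer. The upper inequality is the main obstacle: one needs a nonlinear-to-linear reduction showing that if $\lambda(\kappa H')\geq\kappa^2$ for every $H'>H$, then for every such $H'$ the normal state $(0,\Fb)$ is the \emph{only} minimizer. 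This is achieved through a Giorgi--Phillips-type argument combined with the a priori bounds on GL minimizers (in particular the regularity afforded by working in $\Hd$, cf.~\cite{Assaad}) to pass from a would-be nontrivial minimizer to a negative-energy eigenfunction of the linearized operator, contradicting $\lambda(\kappa H')\geq\kappa^2$. Once the sandwich is in place, Step~1 collapses it, forcing $\overline H_{C_3}(\kappa)=\underline H_{C_3}(\kappa)=H_{C_3}(\kappa)$.
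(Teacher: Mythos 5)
Your three steps reproduce the paper's architecture: Step~1 is Proposition~\ref{prop:H_unique} together with Remark~\ref{rem:equal_loc}, Step~2 is Proposition~\ref{prop:Hc3}, and the soft direction of your Step~3 sandwich is exactly the test-function computation of Theorem~\ref{thm:Hc_order}. One small inaccuracy in Step~2: the spectral expansion you quote, $\lambda(\mathfrak b)=\mathfrak b\min_j\mu(\alpha_j,a)+\mathcal O(\mathfrak b^{1/2})$, is sharper than what is proved. Theorem~\ref{thm:lambda_lin} only gives $\lambda(\mathfrak b)-\mathfrak b\min_j\mu(\alpha_j,a)$ bounded below by $-C\mathfrak b^{3/4}$ and above by $C\mathfrak b^{3/5}$. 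This does not hurt you, since $\mathcal O(\mathfrak b^{3/4})$ still inverts to an $\mathcal O(\kappa^{1/2})$ error in $H_{C_3}(\kappa)$, but you are asserting an intermediate estimate that is not established.

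The real gap is in the hard inequality $\overline H_{C_3}(\kappa)\leq\overline H^{\mathrm{loc}}_{C_3}(\kappa)$. A Giorgi--Phillips argument together with the a priori bounds cannot, by itself, reduce a hypothetical nontrivial minimizer to a negative eigenvalue of $\mathcal P_{\kappa H,\Fb}$: the minimizer $\psi$ is a trial function for the form with the \emph{induced} potential $\Ab$, and the quantity $\lambda(\kappa H)$ is defined with the \emph{fixed} potential $\Fb$. Estimating the defect $\Delta:=\kappa^2\|\psi\|^2_{L^2(\Om)}-Q_{\kappa H,\Ab}(\psi)$ via a Cauchy split and the curl-div/Sobolev bounds yields $0<\Delta\leq(\kappa^2-(1-\delta)\lambda(\kappa H))\|\psi\|_{L^2(\Om)}^2+C\delta^{-1}(\kappa H)^2\|\Ab-\Fb\|_{L^4}^2\|\psi\|_{L^4}^2$; the a priori estimates and $\GL(\psi,\Ab)\leq 0$ control $(\kappa H)^2\|\Ab-\Fb\|^2_{L^4}\lesssim\Delta$, but the cross term $\delta\lambda(\kappa H)\|\psi\|^2_{L^2(\Om)}$ is only $o(\Delta)$ if one can bound $\|\psi\|^2_{L^2(\Om)}\lesssim\kappa^{-3/2}\Delta^{1/2}$, i.e.\ one must know that $\psi$ is concentrated in an $\mathcal O((\kappa H)^{-1/2})$-neighbourhood of $\partial\Om\cup\Gamma$. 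This is precisely the content of the nonlinear surface Agmon estimates of Theorem~\ref{thm:agmon1}, imported from~\cite{Assaad}, and applicable here because $H>\kappa/|a|$ in the relevant range. What you call the ``regularity afforded by working in $\Hd$'' controls $\Ab-\Fb$ but says nothing about the spatial confinement of $\psi$; dropping the Agmon step, the best generic bound is $\|\psi\|^2_{L^2(\Om)}\lesssim\|\psi\|^2_{L^4}=\Delta^{1/2}/\kappa$, for which the remainder becomes $\mathcal O(\Delta\kappa^{1/4})$ and the contradiction $0<\Delta\leq C\Delta\kappa^{-1/4}$ is lost.
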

It is worth  comparing the asymptotics of the third critical field in Theorem~\ref{thm:Hc3} with these established in the literature, for regular domains or corner domains submitted to  uniform magnetic fields. In bounded planar domains with smooth boundary, the third critical field has the following asymptotics as $\kappa$ tends to $+\infty$~\cite{lu1999estimates,helffer2001magnetic,helffer2003upper,fournais2006third, fournais2007strong}:
\[H^\mathrm{unif}_{C_3}(\kappa)=\frac \kappa {B\Theta_0}+o(\kappa),\]
when the applied field has  a constant (positive) value $B$. In corner domains, a richer physics is produced for stronger applied magnetic fields, since the corners allow superconductivity to survive longer in the regime $\kappa /(B\Theta_0)\leq H<H^\mathrm{cor}_{C_3}(\kappa)$, where $B$ is the constant field  and $H^\mathrm{cor}_{C_3}(\kappa)$ is the third critical field in the  corner situation. More precisely, the following asymptotics were established in certain geometric settings ~\cite{bonnaillie2005fundamental,bonnaillie2006asymptotics,bonnaillie2007superconductivity}:
\begin{equation}\label{eq:Hcor}
H^\mathrm{cor}_{C_3}(\kappa)=\frac \kappa {B\Lambda}+o(\kappa),\end{equation}
where $\Lambda$ is the ground-state energy of the infinite sector operator with opening angle $\alpha$, introduced in Section~\ref{sec:corner}, and $\alpha$ is the angle corresponding to the corners with the smallest such a ground-state energy. The result has been established under the assumption that $\alpha$ fulfils $\Lambda <\Theta_0$, which is known to be true for the opening angles $\alpha\in(0,\alpha_0)$, $\alpha_0\approx 0.595\pi$ (see Section~\ref{sec:corner}).

Theorem~\ref{thm:Hc3} shows a similarity between the situation in the present paper and that in the corner domains  submitted to uniform  fields.  In the former situation, the magnetic field, having a jump discontinuity along a curve that \emph{cuts the boundary}, has enlarged the scope of the field's strengths where superconductivity still survive in the sample, exactly as the corners do in the latter situation.
 Indeed,  we see that $H_{C_3}(\kappa)$ is of the same order but \emph{strictly larger} than $H^\mathrm{unif}_{C_3}(\kappa)$, where  $H^\mathrm{unif}_{C_3}(\kappa)$ corresponds to the constant field $B=|a|$.

Our next result makes even more clear the similarity between the two aforementioned situations.  
It is known that the corners attract the Cooper pairs (see for instance~\cite{bonnaillie2007superconductivity,helffer2018density}). Indeed, under certain geometric/spectral conditions~\cite[Assumption 1.3]{bonnaillie2007superconductivity}, some asymptotics  of the global energy  established in~\cite{bonnaillie2007superconductivity} suggest the existence of  intermediate phases, between the surface phase  and the normal phase, in which superconductivity can be confined to the  corners satisfying particular spectral conditions---the energetically favourable corners.
Moreover, \cite{bonnaillie2007superconductivity} asserts the nucleation of superconductivity at least at a corner of the domain having the smallest opening angle, before its breakdown.

Recently, the results of~\cite{bonnaillie2007superconductivity} have been sharpened in~\cite{helffer2018density} where some  asymptotics of the \emph{local} energy affirm the confinement of superconductivity to the energetically favourable corners. 

In our case, the Cooper pairs can be attracted by the intersection points of the magnetic barrier $\Gamma$ and the boundary $\partial \Om$. Indeed, working under the spectral conditions in Assumption~\ref{assump3}, Theorem~\ref{thm:agmon} suggests the following: when $\kappa/(|a|\Theta_0)\leq H<H_{C_3}(\kappa)$, superconductivity can successively nucleate near the intersection points of $\Gamma$ and $\partial \Om$,  $\{\mathsf p_j\}_j$, according to the ordering of their spectral parameters $\{\mu(\alpha_j,a)\}_j$. Furthermore, this theorem asserts that superconductivity is eventually localized near at least one of  the points ${\mathsf p_k}$ admitting the smallest parameter $\mu(\alpha_k,a)$, before vanishing in the entire sample. 
\begin{theorem}\label{thm:agmon}
 Take $\mu>0$ satisfying
\[\min_{j\in\{1,...,n\}}\mu(\alpha_j,a)\leq\mu<|a|\Theta_0.\]
We define
\[S=\left\{\mathsf p_j\in\Gamma\cap\partial\Om~:~\mu(\alpha_j,a)\leq\mu\right\}.\]
There exist positive constants $R_0$,  $\kappa_0$, $C$ and $\beta$ such that for all $\kappa\geq\kappa_0$, if
\[ H\geq \frac \kappa\mu,\]
and $(\psi,\Ab) \in H^1(\Om;\C)\times \Hd$ is a solution of~\eqref{eq:Euler}, then
\begin{equation}\label{eq:agmon}
\int_{\Om}e^{\beta \sqrt{\kappa H}\dist(x,S)}\Big(|\psi|^2+\frac 1{\kappa H}|(\nabla-i\kappa H\Ab)\psi|^2\Big)\,dx\leq C\int_{\{\sqrt{\kappa H}\dist(x,S)< R_0\}} |\psi|^2\,dx.\end{equation}
\end{theorem}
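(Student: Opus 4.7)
The plan is to apply the Agmon strategy adapted to magnetic Schrödinger operators, using a weight $\phi(x) = \beta\sqrt{\kappa H}\,\chi(\dist(x,S))$, with $\chi$ a smoothed distance and $\beta>0$ to be chosen small. Multiplying the first equation of~\eqref{eq:Euler} by $e^{2\phi}\overline{\psi}$, integrating over $\Om$, and using the Neumann boundary condition together with the standard Agmon identity yields
\begin{equation*}
\int_\Om\bigl|(\kn)(e^\phi\psi)\bigr|^2\,dx - \int_\Om|\nb\phi|^2 e^{2\phi}|\psi|^2\,dx = \kappa^2\int_\Om(|\psi|^2-1)e^{2\phi}|\psi|^2\,dx.
\end{equation*}
The pointwise bound $|\psi|\leq 1$ (a standard maximum-principle consequence of the first GL equation) combined with the assumption $H\geq\kappa/\mu$ bounds the right-hand side from below by $-\mu\kappa H\int_\Om e^{2\phi}|\psi|^2\,dx$, reducing the task to a linear spectral lower bound on the left.

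Next I would introduce a quadratic partition of unity $\sum_\ell\chi_\ell^2\equiv 1$ on $\Om$ with cutoffs of diameter $\rho$ satisfying $(\kappa H)^{-1/2}\ll\rho\ll 1$, localized in five types of regions: neighbourhoods of each $\mathsf p_j\in\Gamma\cap\partial\Om$; tubes along $\Gamma$ away from $\partial\Om$; tubes along $\partial\Om\setminus\Gamma$; and bulk pieces of $\Om_1$ and $\Om_2$. On each patch, after a gauge transformation freezing $\Ab$ to its linearization and rescaling by $\sqrt{\kappa H}$, the magnetic quadratic form is bounded below by $\kappa H\,\lambda_{\mathrm{loc}}+o(\kappa H)$, where $\lambda_{\mathrm{loc}}$ is $\mu(\alpha_j,a)$ near $\mathsf p_j$, $|a|$ near interior points of $\Gamma$, $\Theta_0$ or $|a|\Theta_0$ near the smooth boundary, and $1$ or $|a|$ in the bulk. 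Assumption~\ref{assump3} (together with $|a|\leq 1$) ensures all these values are strictly greater than $\mu$ outside neighbourhoods of points of $S$. Since $\rho^{-2}\ll\kappa H$ makes the IMS localization error $\sum_\ell|\nb\chi_\ell|^2$ an $o(\kappa H)$ term, and $|\nb\phi|^2\leq C\beta^2\kappa H$, choosing $\beta$ small enough and $R_0$ large enough gives
\begin{equation*}
(\mu_0-\mu-C\beta^2-o(1))\,\kappa H\int_{\{\sqrt{\kappa H}\dist(x,S)\geq R_0\}}e^{2\phi}|\psi|^2\,dx\leq C'\kappa H\int_{\{\sqrt{\kappa H}\dist(x,S)< R_0\}}e^{2\phi}|\psi|^2\,dx
\end{equation*}
for some $\mu_0\in(\mu,|a|\Theta_0)$. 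Since $e^{2\phi}$ is uniformly bounded on the right-hand region, this gives the desired bound on $\int e^{2\phi}|\psi|^2\,dx$. The kinetic contribution $\frac{1}{\kappa H}|(\kn)\psi|^2$ in~\eqref{eq:agmon} is then recovered by re-running the Agmon identity with a slightly smaller parameter $\beta'<\beta$ and feeding the already-established $L^2$ control back into the right-hand side.

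The main obstacle is proving the local lower bound $\lambda_{\mathrm{loc}}=\mu(\alpha_j,a)+o(1)$ near the intersection points $\mathsf p_j$, since the limiting operator from Section~\ref{sec:new_model} lives on a half-plane with a \emph{straight} magnetic barrier opening at angle $\alpha_j$, whereas the true operator feels both the curvature of $\partial\Om$ and the bending of $\Gamma_k$. Handling this requires a local chart flattening $\partial\Om$ and $\Gamma_k$ simultaneously near $\mathsf p_j$, a magnetic gauge transformation replacing $\Fb$ by its linearization at $\mathsf p_j$, and a careful tracking of the resulting errors; the choice of $\rho$ above reflects the competition between the linearization error, which favours small $\rho$, and the localization error, which favours large $\rho$. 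Analogous but technically simpler linearizations near interior points of $\Gamma$ and along $\partial\Om\setminus\Gamma$ yield the remaining local bounds from, respectively, the half-plane step-field operator and the de Gennes half-plane model.
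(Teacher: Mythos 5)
Your overall strategy — an Agmon weight $e^{\phi}$ with $\phi\sim\beta\sqrt{\kappa H}\,\dist(x,S)$, the identity obtained by pairing the first GL equation against $e^{2\phi}\bar\psi$, the pointwise bound $|\psi|\le1$, and a localized spectral lower bound via a partition of unity at scale $\sim(\kappa H)^{-1/2}$ — is the same route the paper takes (their Lemma~\ref{lem:spectral} plays exactly the role of your patchwise lower bound). However, there is a genuine gap at the step \emph{``after a gauge transformation freezing $\Ab$ to its linearization''}. For this linearization to have a controllable error you must know that the oscillation of $\Ab$ on a patch of diameter $R_0(\kappa H)^{-1/2}$ is $o(1/(\kappa H))$ in an appropriate norm, and this does not follow from anything purely local: $\Ab$ is the unknown minimizer, not the explicit field $\Fb$. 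The paper closes this by invoking the a priori estimates from Theorem~\ref{thm:priori} — specifically $\Ab-\Fb\in\mathcal C^{0,\beta}(\overline\Om)$ with $\|\Ab-\Fb\|_{\mathcal C^{0,\beta}}\le CH^{-1}\|\psi\|_{L^2}$ — which is what justifies the local replacement $\Ab\approx\Fb+\nabla\phi_j$ in~\eqref{eq:AF_loc}; the explicit $\Fb$ is then linearized in small patches. Without this nonlinear input your reduction to the half-plane/step-field models does not close. A related consequence is that the relevant quadratic form must be $Q_{\kappa H,\Ab}$, not $Q_{\kappa H,\Fb}$, and the passage between the two is exactly what~\eqref{eq:AF_loc} affords.

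Two smaller remarks. First, the patchwise lower bound along interior points of $\Gamma$ is $\beta_a\,\kappa H+o(\kappa H)$, with $\beta_a$ the ground-state energy of $\mathcal L_a$ from Section~\ref{sec:La}; for $-1<a<0$ one has $|a|\Theta_0<\beta_a<|a|$, so stating the local threshold there as $|a|$ overshoots. This does not break the argument (what is needed is only $\beta_a>\mu$, which holds since $\mu<|a|\Theta_0\le\beta_a$), but the value is not $|a|$ in general. Second, the right-hand side of your displayed Agmon identity should carry $(1-|\psi|^2)$ rather than $(|\psi|^2-1)$, since $-(\kn)^2$ is the positive operator; the subsequent bound you derive is consistent with the corrected sign, so this is only a transcription issue.
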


	This paper is an integral part of a research that started in~\cite{Assaad,Assaad2019}. Throughout these papers, we  present tools for studying the distribution of superconductivity in a smooth domain submitted to a step magnetic field satisfying  Assumption~\ref{assump1} (the SDSF case), when  $\kappa$ is large, considering various regimes of the intensity of this magnetic field. We particularly aim at detecting any behavior of the sample that is distinct from the well-known behavior of a smooth  domain submitted to a uniform magnetic field (the SDUF case) or  a corner domain submitted to a uniform field (the CDUF case). Such a distinction is not exhibited in the intensity-regime of~\cite{Assaad}. However, in the intensity-regime of~\cite{Assaad2019}, the sample's behavior in the SDSF case is remarkable. It can be dramatically different from the behavior in \emph{both} the SDUF and CDUF cases. The present paper records another interesting magnetic conduct. In the intensity-regime of this paper,  the SDSF case shows  analogy to the CDUF case. This analogy is noteworthy, especially when contrasted to  the discrepancy between these  two cases, observed in~\cite{Assaad2019}. 
	
	In what follows, we summarize our results under three intensity-regime scenarios:

	\begin{itemize}
		\item In the intensity-regime $H<\kappa/|a|$:~\cite{Assaad} establishes the existence of superconductivity in the whole bulk of $\Om$, and the  results of our SDSF case are similar to those of the SDUF and CDUF cases (see e.g.~\cite{sandier2003decrease}).
		\item In the intensity-regime $\kappa/|a|<H\leq\kappa/(|a|\Theta_0)$:~\cite{Assaad} shows the disappearance of superconductivity from the bulk of $\Om_1$ and $\Om_2$. In~\cite{Assaad2019} we affirm the nucleation of superconductivity near $\partial \Om \cup \Gamma$. This nucleation can be \emph{global} (along the entire $\partial \Om \cup\Gamma$) or \emph{partial} (along certain parts of $\partial \Om \cup \Gamma$), according to the values of $H$ and $a \in [-1,1)\setminus\{0\}$ (see~\cite[Section~1.5]{Assaad2019}). This  differs from what occurs in a smooth/corners domain, submitted to the uniform magnetic field\footnote{We choose the value $|a|$ for the uniform magnetic field just to facilitate the comparison between our SDSF case and the SDUF/CDUF case. Choosing a different value for this field will not qualitatively affect the comparison.} $B=|a|$ and considered in the same intensity-regime. Indeed, in the latter case,  if the boundary is smooth then superconductivity is localized \emph{exclusively} and \emph{uniformly} along this boundary~\cite{pan2002surface, almog2007distribution, helffer2011superconductivity, Correggi}. Recently,~\cite{correggi2017surface} proved that this uniform distribution is not affected (to leading order) by the presence of corners.
		\item In the intensity-regime $H>\kappa/(|a|\Theta_0)$: the discussion is done under Assumption~\ref{assump3}. Here, the distribution of superconductivity is  dictated  by the existence of intersection of the discontinuity curve $\Gamma$ and the boundary of the sample. Before its breakdown, superconductivity is shown to be confined to the points of $\partial \Om \cap\Gamma$. As explained in the discussion after Theorems~\ref{thm:Hc3} and~\ref{thm:agmon},  the sample's behavior differs in some aspects  from that in the SDUF case but  shows similarities with that in the CDUF case,  when the uniform field is $B=|a|$. 
		\end{itemize}
		
			Based on the above observations, the combined results of our three papers highlight the peculiarity  of the discontinuous case that we handle: according to the intensity-regime, the SDSF case may resemble to (or deviate from) one or both of the SDUF and CDUF cases. Particularly, the two schematic phase-diagrams in Figure~\ref{fig2} graphically illustrate the comparison between the SDSF case, with the step magnetic field $B_0$, and the CDUF case, with the uniform field $B=|a|$. These diagrams show the distribution of superconductivity in the sample according to the intensity of the applied magnetic field.
			In each case, we plot some critical lines in the $(\kappa,H)$-plane (for large $\kappa$) representing the following:
			\begin{equation*}H_{C_2}(\kappa)=\frac \kappa {|a|}\,,\ H_C^\mathrm {int}(\kappa)=\frac \kappa{|a|\Theta_0}\,, H^\mathrm {step}_{C_3}(\kappa)=H_{C_3}(\kappa)\ \mathrm{in}~\eqref{eq:Hc3}\,,\ \mathrm {and}\ H^\mathrm {cor}_{C_3}(\kappa) \ \mbox{as in}~\eqref{eq:Hcor}.\end{equation*}
			In the SDSF diagram, the configurations of the sample between $H_{C_2}(\kappa)$ and $H_C^\mathrm {int}(\kappa)$ illustrate different  instances of the sample's behavior, occurring according to the values of  $H$ and $a$ (see~\cite[Section 1.5]{Assaad2019}).
\begin{figure}[H]
	\begin{subfigure}{.45\linewidth}
		\centering
		\includegraphics[scale=0.8]{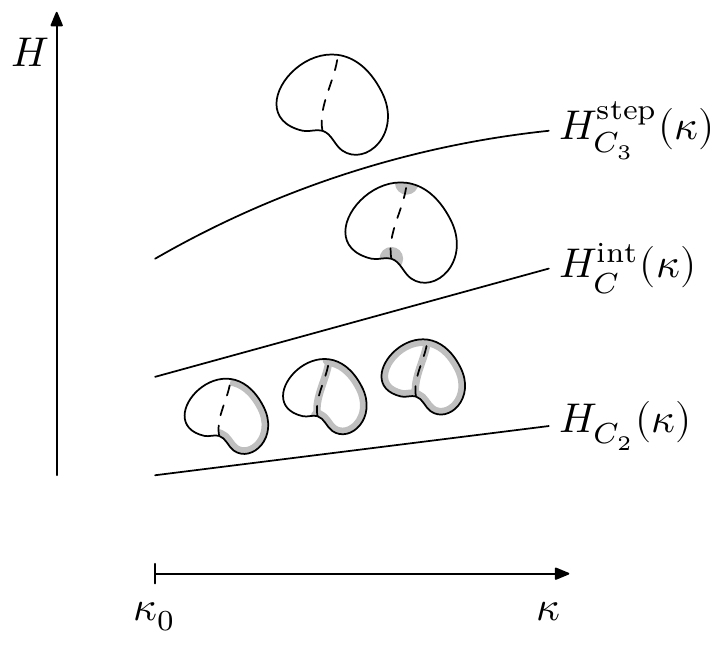}
	\end{subfigure}%
	\begin{subfigure}{.45\linewidth}
		\centering
		\includegraphics[scale=0.8]{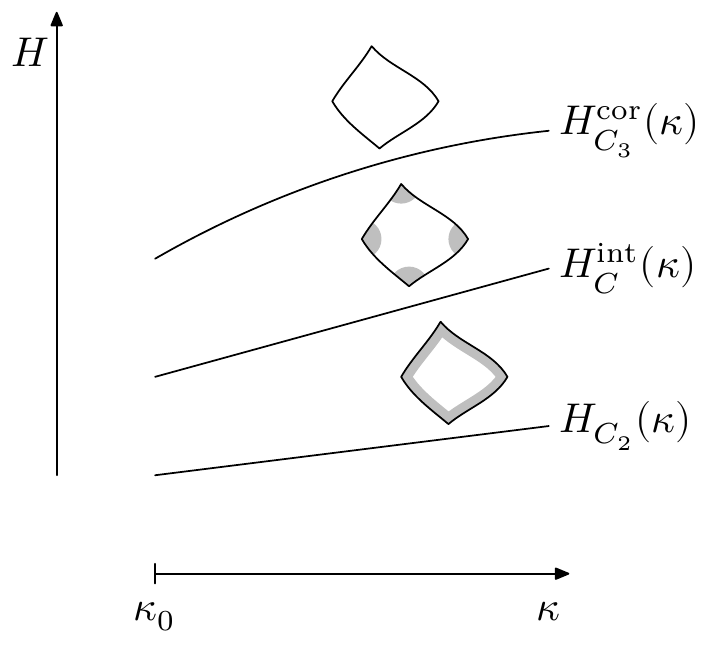}
	\end{subfigure}
	\caption{Schematic phase-diagrams: the SDSF case to the left and the CDUF case to the right. Only the grey regions carry superconductivity.} 
	\label{fig2}
\end{figure}
%
\subsection{Heuristic considerations and outline of the approach}\label{sec:heur}
The discussion in this section is quite informal and is done under the assumptions stated in the introduction (mainly Assumptions~\ref{assump1} and~\ref{assump3}, and that $\kappa$ is large). It aims at presenting the workflow in a simple way. Recall that the two principal results are Theorems~\ref{thm:Hc3} and~\ref{thm:agmon}. 

A  sort of Giorgi--Phillips result  established in Section~\ref{sec:giorgi} asserts that our sample stops superconducting when submitted to large  magnetic fields.  We aim at proving the following: with increasing values of the applied field, there is a one-way phase-transition between superconducting and normal states; once a superconducting sample passes to the normal state it remains in this state. This goal can be achieved by proving that the global critical fields, $\overline{H}_{C_3}(\kappa)$ and $\underline{H}_{C_3}(\kappa)$, defined in Section~\ref{sec:critical}, coincide.

As it is usually the case in the study of  breakdown of superconductivity, the equality of the global fields is not directly established. Instead, the analysis is more manageable when these fields are linked to local ones, $\overline{H}^\mathrm{loc}_{C_3}(\kappa)$ and $\underline{H}^\mathrm{loc}_{C_3}(\kappa)$, also introduced in Section~\ref{sec:critical}. These local fields involve the ground-state energy $\lambda(\mathfrak b)$ of the linear Schr\"odinger operator
\begin{equation*}
\mathcal P_{\mathfrak b,\Fb}=-(\nabla-i\mathfrak b\Fb)^2
\end{equation*}
 defined on $\Om$ with magnetic Neumann boundary conditions  (Section~\ref{sec:intro2}). Here $\mathfrak b$ is a positive parameter, and $\Fb \in \Hd$  is  the vector potential satisfying $\curl\Fb= {\mathbbm 1}_{\Om_1}+a{\mathbbm 1}_{\Om_2}$ ($a\in[-1,1)\setminus\{0\}$). The spirit behind linking the four aforementioned critical fields is that, close to the phase of transition from superconducting to normal state,  the problem can be viewed as linear. Indeed, when $\psi\approx0$  and $\Ab\approx \Fb$, the first equation in~\eqref{eq:Euler} can be approximated by
\[-(\nabla-i\mathfrak b\Fb)^2\psi=\kp^2\psi \quad \mathrm{in}\ \Om.\]
  This approximation of the problem by a linear one is the implicit reason behind establishing  that $\overline{H}_{C_3}(\kappa)=\overline{H}^\mathrm{loc}_{C_3}(\kappa)$ and  $\underline{H}_{C_3}(\kappa)\geq\underline{H}^\mathrm{loc}_{C_3}(\kappa)$ (Section~\ref{sec:main thm}).
Since $\overline{H}_{C_3}(\kappa)\geq \underline{H}_{C_3}(\kappa)$, the equality of the global fields is now equivalent to that of the local ones, which in its turn can be concluded from the fact that  the function $\mathfrak b\mapsto \lambda(\mathfrak b)$ is strictly increasing for large values of $\mathfrak b$. This monotonicity result is proved in Section~\ref{sec:monot} (see Proposition~\ref{prop:monot}), but 
its main ingredients are prepared in Section~\ref{sec:lin_prob}.

 In the aforementioned sections, we generally follow the highways in~\cite{helffer2001magnetic,bonnaillie2003analyse,bonnaillie2005fundamental,fournais2007strong,fournais2010spectral} where similar problems are handled in the case of smooth magnetic fields. However, the particularity of the step magnetic field case that we handle causes deviations at several stages of the analysis. Indeed, our discontinuous situation involves particular models while reducing the problem to other effective ones. Furthermore, a careful analysis and additional techniques are required when working in an environment with a low level of regularity compared to the smooth field case. Some examples showing such a particularity will be presented  while continuing this discussion below.

The asymptotic bounds of the ground-state energy $\lambda(\mathfrak b)$ in Theorem~\ref{thm:lambda_lin} are  key-elements in the monotonicity argument. In the lower bound proof (Section~\ref{sec:low}), a partition of unity allows the local examination of the energy in four main regions of $\Om$: the interior of $\Om$ away from $\partial \Om$, the neighbourhood of $\partial \Om$ away from $\Gamma$, the neighbourhood of $\Gamma$ away from $\partial \Om$, and the vicinity of the intersection points, $\mathsf p_j$, of $\Gamma$ and  $\partial \Om$. 

The study in the first two regions is the same as that in the uniform field case, since the field $\curl \Fb$ is constant in each of the sets $\Om_1$ and $\Om_2$. Hence, the results are borrowed  from the existing literature (e.g.~\cite{fournais2010spectral}). In these two regions, the energy admits  lower bounds of order  $|a|\mathfrak b$ and $|a|\Theta_0\mathfrak b$ respectively.

By suitable change of variables (Sections~\ref{sec:Psi} and~\ref{sec:bc}), gauge transformations and rescaling arguments, we link the study in the two remaining regions to the effective operators with step magnetic fields, $\mathcal L_a$ and $\mathcal H_{\alpha,a}$,  defined on $\R^2$ and $\R^2_+$ respectively. The operator $\mathcal L_a$ is introduced in Section~\ref{sec:La}. It has been studied earlier  in~\cite{hislop2016band,Assaad2019} (and the references therein), and the following bounds of the corresponding ground-state energy, $\beta_a$,  were established:
$|a|\Theta_0\leq \beta_a \leq |a|$. The analysis of the operator $\mathcal H_{\alpha,a}$ in Section~\ref{sec:new_model} is new.  A further comment about this operator is given later in the current section. At the moment we are mainly interested in the upper bound, $\mu(\alpha,a)\leq |a|\Theta_0$, of the ground-state energy 
of $\mathcal H_{\alpha,a}$ (see Remark~\ref{rem:v_0}).
 Consequently, we get the following spectral ordering
\[\mu(\alpha,a)\leq|a|\Theta_0\leq \beta_a \leq |a|,\]
which yields a lower bound  of $\lambda(\mathfrak b)$ with leading order $\min_{j\in\{1,...,n\}}\mu(\alpha_j,a)\mathfrak b$. 

We note that the fulfilment of Assumption~\ref{assump3} is not required while establishing the lower bound result. It is while deriving a matching upper bound of the energy that this assumption is useful (see Section~\ref{sec:up}). Indeed, under Assumption~\ref{assump3} the energies $\{\mu(\alpha_j,a)\}_j$ are eigenvalues (Remark~\ref{rem:v_0}). In particular, the minimal energy $\min_{j}\mu(\alpha_j,a)$ is an eigenvalue. This validates the construction of the trial function involving an eigenfunction corresponding to this minimal energy, in the proof of Proposition~\ref{prop:Upper3}. In the rest of the paper, we work under Assumption~\ref{assump3} each time the argument requires the upper bound of $\lambda(\mathfrak b)$.

In addition to the bounds in Section~\ref{sec:bound}, certain linear Agmon estimates established in 
Theorem~\ref{thm:lin_bound} are used to get the monotonicity result in Proposition~\ref{prop:monot}.
The proof of this proposition is an adaptation of that in~\cite[Theorem~1.1]{fournais2007strong} to our step field situation. It employs the leading order term of  $\lambda(\mathfrak b)$, sparing us the complexity of using  higher order expansions of this energy  as in e.g.~\cite{fournais2006third,bonnaillie2007superconductivity}.  However, the discontinuity of our field as well as the way the magnetic field meets the boundary impose more complicated techniques on the argument (see the discussion below Proposition~\ref{prop:monot}). Moreover, the proof contains a perturbation argument using the independence of the linear operator domain from the parameter $\mathfrak b$ (see~\eqref{eq:domP_F}). Whereas establishing such an independence is standard in the case of smooth fields, our case requires a particular argument given in Appendix~\ref{sec:regularity}.

Consequently, we conclude that the value of the equal global and local fields---the third critical field $H_{C_3}(\kappa)$---is the unique solution of the equation $\lambda(\kappa H)=\kappa^2$ (Proposition~\ref{prop:H_unique}). Asymptotic estimates of this field are given in Proposition~\ref{prop:Hc3}. The aforementioned results (in Sections~\ref{sec:monot} and~\ref{sec:main thm}) constitute the proof of Theorem~\ref{thm:Hc3}.
 
 The  second main result of this work, namely Theorem~\ref{thm:agmon}, is established  in Section~\ref{sec:agmon}. The proof  is given under Assumption~\ref{assump3} which implies the exclusive nucleation of superconductivity near the  points of $\Gamma \cap \partial \Om$ corresponding to the minimal  energy  $\min_{j}\mu(\alpha_j,a)$, right before  its breakdown. Lemma~\ref{lem:spectral} is essential in the proof. It mainly relies on the local energy estimates in Proposition~\ref{prop:Ub2}, together with a simple, yet important, link between the fields $\Ab$ and $\Fb$, done in small patches of the sample (see~\eqref{eq:AF_loc}).
 
 The discussion done so far  shows the main contribution of the operator $\mathcal H_{\alpha,a}$, defined in Section~\ref{sec:new_model}, to our problem. We conclude this outline with a brief spectral  description of this operator. $\mathcal H_{\alpha,a}$ is defined on the half-plane with magnetic Neumann boundary condition, and depends on the two parameters $\alpha \in (0,\pi)$ and $a\in[-1,1)\setminus\{0\}$. 
It is reminiscent of the operator $\mathcal L_a$ defined on the plane (Section~\ref{sec:La}), since each of them involves a  step magnetic field:
 \[\curl \Ab_{\alpha,a}(x)={\mathbbm 1}_{D^1_\alpha}(x)+a{\mathbbm 1}_{D^2_\alpha}(x)\,,\quad\ x\in\R_+^2\quad (\mathrm{for}\ \mathcal H_{\alpha,a}),\]
 \[\curl \sigma\Ab_0(x)={\mathbbm 1}_{\R_+}(x_1)+a{\mathbbm 1}_{\R_-}(x_1)\,,\quad\ x\in\R^2\quad (\mathrm{for}\ \mathcal L_a).\]
However, due to the  dependence of $\mathcal H_{\alpha,a}$ on the angle $\alpha$,  the study of this operator combines  spectral properties of both the operator $\mathcal L_a$ and the sector operator with a constant field defined in Section~\ref{sec:corner}. Actually, our analysis reveals more spectral similarities with the latter operator. Yet, as it will be shown in Section~\ref{sec:new_model}, the discontinuity of the magnetic field in our operator makes the study technically more challenging than that of the sector operator.
 
 By using Persson's lemma in Appendix~\ref{sec:persson}, we show that the bottom of the essential spectrum of $\mathcal H_{\alpha,a}$ is $|a|\Theta_0$. This implies that the ground-state energy satisfies
 \begin{equation}\label{eq:heu}
 \mu(\alpha,a)\leq |a|\Theta_0.\end{equation}
 As mentioned earlier in this section, we are interested in the pairs $(\alpha,a)$ for which the inequality in~\eqref{eq:heu} is strict and consequently the energy $\mu(\alpha,a)$ is an eigenvalue. The existence of such pairs validates Assumption~\ref{assump3} under which this work is done. 
 Let us call here such pairs \emph{admissible pairs}. The pair $(\pi/2,-1)$ is admissible and is directly derived by a symmetry argument  (Proposition~\ref{prop:inf_ev}).
 
Certainly, the continuity of $\mu(\alpha,a)$ with respect to the parameters $\alpha$ and $a$, once verified, would provide more admissible pairs living in a neighbourhood of $(\pi/2,-1)$ (more generally near any already found admissible pair). However, such a regularity result is hard to  establish in our case.  In fact, a continuity result of $\mu(\alpha,a)$ with respect to $a$ is reached after a lengthy proof in Section~\ref{sec:cont}. Still, we did not succeed to prove the continuity with respect to $\alpha$. The way the operator depends on $\alpha$ prevents making profit of the techniques used in earlier works (e.g.~\cite[Section 5.3]{bonnaillie2003analyse}) in similar situations while studying the sector operator (see discussion in Section~\ref{sec:cont}).

The continuity of the energy with respect to $a$ extends the admissibility result at $(\pi/2,-1)$  to other pairs $(\alpha,a)$ for which $\alpha=\pi/2$ (Proposition~\ref{prop:inf_ev}). A more complicated (rigorous) computation is done in Proposition~\ref{prop:bnd_stat_2} seeking  more admissible pairs, in particular pairs with $\alpha\neq \pi/2$. 
  The  proof of this proposition is inspired by the approach in~\cite{exner2018bound}. It starts with some techniques that facilitate the adoption of such an approach. Then it uses a variational argument with convenient test functions to establish a sufficient condition for a pair $(\alpha, a)$ to be admissible. 
   After this proposition, an illustration  using Mathematica is given to show a region of admissible pairs in the vicinity of $(\pi/2,-1)$ (see discussion below the proposition, and Figure~\ref{fig:bnd_stat_2}). 
 
 At this point, it is worth  comparing our results to those in~\cite{exner2018bound}, in order to highlight the challenges created by the  step magnetic field. The argument in~\cite{exner2018bound} shows the existence of bound states for any sector operator with opening angle $\alpha$, such that $\alpha\in(0,\alpha_0)$ and $\alpha_0\approx 0.595\pi$. Similar methods adapted to our situation yield the existence of bound states of the operator $\mathcal H_{\alpha,a}$ for distinct values of $\alpha$. Yet, these values are still near $\pi/2$ (Figure~\ref{fig:bnd_stat_2}).  Also note that the corresponding values of $a$ are negative (near $-1$), and no positive values of $a$ are provided by these methods.
 
 The spectral study of the operator $\mathcal H_{\alpha,a}$, that occupies Section~\ref{sec:new_model} (and Appendix~\ref{sec:persson}), is an essential contribution of the present article. 
 \subsection{Notation}
 \begin{itemize}
 	\item[]	
 	\item The letter $C$ denotes a positive constant whose value may change from one formula to another.
 	\item  Let  $\beta \in (0,1)$.   We use the following
 	H\"{o}lder space:
 	\[\mathcal C^{0,\beta}({\overline \Om})=\Big\{f \in \mathcal C({\overline \Om})\ | \sup_{x\neq y\in \Om}\frac {|f(x)-f(y)|}{|x-y|^\beta}<+\infty\Big\}.\]	
 \end{itemize}
\subsection{Organization of the paper} The rest of the paper is divided into seven sections. In Section~\ref{sec:models}, we summarize some useful  properties of certain known 2D model operators. The operator $\mathcal H_{\alpha,a}$ is analysed in Section~\ref{sec:new_model}. The spectral data of the model operators are used in Section~\ref{sec:lin_prob} while studying the linear eigenvalue problem.  The breakdown of superconductivity under strong magnetic fields is proved in Section~\ref{sec:giorgi}. In Section~\ref{sec:monot}, we establish the eigenvalue monotonicity result when $\kappa$ is large. Consequently, we deduce the equality of the local critical fields and provide certain asymptotics of them as $\kappa$ tends to $+\infty$. The non-linear Agmon estimates in Theorem~\ref{thm:agmon} are  established in Section~\ref{sec:agmon}. Finally, in Section~\ref{sec:main thm} we show the equality of the global and local critical fields for large $\kappa$ and conclude the result in Theorem~\ref{thm:Hc3}. The appendices gather technical estimates that we use here and there.

\section{Some model operators}\label{sec:models}
 We present self-adjoint realizations of some Schr\"odinger operators with magnetic fields in open sets of $\R^2$. A spectral study of these operators can be found in the literature (for instance see~\cite{jadallah2001onset,bonnaillie2003analyse,fournais2010spectral,Assaad2019}).
\subsection{Operators with a constant magnetic field}\label{sec:corner}
Let $U$ be an open and simply connected domain of $\R^2$. Let $\mathfrak b >0$, and $\Ab_0$ be the constant magnetic potential defined by
\begin{equation}\label{canon}
\Ab_0(x) = (0,x_1)\qquad \big(x = (x_1,x_2) \in \R ^2\big).
\end{equation}
If $U=\R^2$, we consider the self-adjoint operator
\begin{equation*}\label{eq:HB-R2}
\p_{\mathfrak b, \R^2}=-(\nabla-i\mathfrak b\Ab_0)^2,
\end{equation*} 
defined on the domain
\begin{equation*}\dom \p_{\mathfrak b, \R^2}=\big\{u\in L^2(\R^2)~:~(\nabla-i\mathfrak b\Ab_0)^j u  \in L^2(\R^2),\, \mathrm{for}\ j\in \{1,2\}\big\}.\end{equation*}
If $U \subsetneq \R^2$, we assume that $\partial U$ is piecewise-smooth with possibly a finite number of corners. In this case we consider the Neumann realization of the self-adjoint operator 
\begin{equation*}\label{eq:HB}
\p_{\mathfrak b, U}=-(\nabla-i\mathfrak b\Ab_0)^2,
\end{equation*}
defined on the domain 
\begin{multline*}\dom \p_{\mathfrak b, U}=\big\{u\in L^2(U)~:~(\nabla-i\mathfrak b\Ab_0)^j u \in L^2(U),\\ \mathrm{for}\ j\in \{1,2\},\,(\nabla-i\mathfrak b\Ab_0)\cdot\nu|_{\partial U}=0\big\},\end{multline*}
where $\nu$ is a unit normal vector of $\partial U$ (when it exists). Let
\[\q_{\mathfrak b,U}(u)=\int_{U}\big|(\nabla-i\mathfrak b\Ab_0)u\big|^2\,dx\]
be the associated quadratic form defined on  
\[\dom \q_{\mathfrak b,U}=\left\{u\in L^2(U)~:~ (\nabla-i\mathfrak b\Ab_0)u \in L^2(U)\right\}.\]
We denote the bottom of the spectrum of $\p_{\mathfrak b, U}$ by $\lambda_U(\mathfrak b)$. 

The case where $U$ is an angular sector in the plane  corresponds to an important \emph{sector} operator. For $0<\alpha\leq \pi$, we define the domain $U_\alpha$ in polar coordinates 
\[ U_\alpha=\big\{r(\cos\theta,\sin\theta)\in \R^2~:~r\in(0,\infty),\ 0<\theta<\alpha\big\}.\]
Using a simple scaling argument, one can prove the following relation between the spectra of the operators $\p_{\mathfrak b, U_\alpha}$ and $\p_{1, U_\alpha}$:
\[\spc \p_{\mathfrak b, U_\alpha}=\mathfrak b\spc \p_{1,U_\alpha}.\]
Therefore, we may restrict to the case $\mathfrak b=1$ and define 
\begin{equation}\label{eq:mu1}
\mu(\alpha)=\lambda_{U_\alpha}(1).
\end{equation}
The special case of $\alpha=\pi$ (the half-plane) has been intensively studied. In this case we denote
\begin{equation}\label{eq:theta01}
\Theta_0:=\mu(\pi).
\end{equation}
Numerical computation shows that $\Theta_0=0.5901....$ We note that $\mu(\pi)$ is not an eigenvalue of $\p_{1,U_\pi}$.

It was conjectured that  $\mu(\alpha)$ is an eigenvalue satisfying $\mu(\alpha)<\Theta_0$, for all $\alpha \in(0,\pi)$
(see e.g.~\cite[Remark~2.4]{bonnaillie2005fundamental}).
 This conjecture has been proved for  $\alpha\in (0,\alpha_0)$ where $\alpha_0\approx 0.595\pi$ ~\cite{jadallah2001onset,bonnaillie2005fundamental,exner2018bound}.  The validity of the conjecture for all $\alpha \in (0,\pi)$ is still not settled, although numerical evidence suggests it (see~\cite{bonnaillie2007computations}). 
When $\mu(\alpha)$ is an eigenvalue,  let $u_\alpha$ be a corresponding normalized eigenfunction. 
%
\subsection{An operator with a step magnetic field in the plane}\label{sec:La}

Let  $a \in [-1,1)\setminus \{0\}$.
For $x \in \R^2$, let  $\sigma$ be a step function defined as follows:
\begin{equation}\label{eq:sigma1}
\sigma(x)=\mathbbm{1}_{\R_+}(x_1)+a\mathbbm{1}_{\R_-}(x_1).
\end{equation}
We introduce the self-adjoint operator 
\begin{align}\label{eq:ham_operator}
\mathcal L_a&=-(\nabla-i\sigma\Ab_0)^2,\quad\mathrm{with}\\
\dom \mathcal L_a&=\big\{u\in L^2(\R^2)~:~(\nabla-i\sigma\Ab_0)^j u \in L^2(\R^2),\, \mathrm{for}\ j\in \{1,2\}\big\},\nonumber
\end{align}
and $\Ab_0$ is the magnetic potential in~\eqref{canon}.  We denote the ground-state energy of $\mathcal L_a$ by
\begin{equation}\label{eq:lamda}
\beta_a= \inf \spc\big(\mathcal L_a \big).
\end{equation}
A spectral analysis of the operator $\mathcal L_a$ has been done in~\cite{hislop2016band} and~\cite{Assaad2019} (see also~\cite{iwatsuka1985examples,hislop2015edge} and references therein), and $\beta_a$  is found to satisfy:
\begin{itemize}
	\item For $0<a<1$,  $\beta_a=a$.
	\item For $a=-1$, $\beta_a=\Theta_0$.
	\item For $-1<a<0$, $|a|\Theta_0<\beta_a<|a|$.
\end{itemize}
\section{A new operator with a step magnetic field in the half-plane}\label{sec:new_model}
In this section we introduce a Schr\"odinger operator with a step magnetic field in $\R_+^2$. To the best of our knowledge, the spectral analysis of this operator is considered for the first time in this contribution. The ground-state energy of this model operator is involved in the leading order of the third critical field $H_{C_3}(\kappa)$, for large values of $\kappa$ (see Theorem~\ref{thm:Hc3}), and it also appears when to determining the zone of concentration of superconductivity in the sample $\Om$, for large $\kappa$ and for sufficiently strong magnetic fields (see Theorem~\ref{thm:agmon}).

Let $a \in[-1,1)\setminus\{0\}$ and  $\alpha \in(0,\pi)$. We define the sets $D_\alpha^1$ and $D_\alpha^2$  in polar coordinates as follows:
\begin{align}
D_\alpha^1&=\{r(\cos\theta,\sin\theta)\in \R^2~:~r\in(0,\infty),\ 0<\theta<\alpha\}, \nonumber\\
D_\alpha^2&=\{r(\cos\theta,\sin\theta)\in \R^2~:~r\in(0,\infty),\ \alpha<\theta<\pi\}. \label{eq:A_alfa}
\end{align}
Consider in $\R^2_+$ the Neumann realization of the operator 
\begin{equation}\label{eq:P_alfa}
\mathcal H_{\alpha,a}=-\left(\nabla-i\Ab_{\alpha,a} \right)^2,
\end{equation}
where  $\Ab_{\alpha,a}=\big(0,A_{\alpha,a}\big)$ is  the magnetic potential\footnote{One may choose a simpler magnetic potential than  $\Ab_{\alpha,a}$, but the choice  in~\eqref{eq:Aa1},~\eqref{eq:Aa2} and~\eqref{eq:Aa3} will prove  useful in  Section~\ref{sec:lin_prob} (see Lemma~\ref{lem:gauge1}).} such that: 
\begin{equation}\label{eq:Aa1}
\mathrm{For}\ \alpha \in(0,\pi/2),\quad A_{\alpha,a}(x_1,x_2)= 
\begin{cases}
x_1+\frac {a-1}{\tan \alpha}x_2,&\mathrm{if}~(x_1,x_2)\in D^1_\alpha,\\
ax_1,&\mathrm{if}~(x_1,x_2)\in D^2_\alpha,
\end{cases}
\end{equation}
\begin{equation}\label{eq:Aa2}
\mathrm{for}\ \alpha \in(\pi/2,\pi),\quad A_{\alpha,a}(x_1,x_2)= 
\begin{cases}
x_1,&\mathrm{if}~(x_1,x_2)\in D^1_\alpha,\\
ax_1+\frac {1-a}{\tan \alpha}x_2,&\mathrm{if}~(x_1,x_2)\in D^2_\alpha,
\end{cases}
\end{equation}
\begin{equation}\label{eq:Aa3}
\mathrm{and}\quad A_{\frac \pi 2,a}(x_1,x_2)= 
\begin{cases}
x_1,&\mathrm{if}~(x_1,x_2)\in D^1_{\pi/2},\\
ax_1,&\mathrm{if}~(x_1,x_2)\in D^2_{\pi/2}.
\end{cases}
\end{equation}
 The potential $\Ab_{\alpha,a}$ is in $H^1(\R^2_+,\R^2)$ and satisfies $\curl\Ab_{\alpha,a}={\mathbbm 1}_{D_\alpha^1}+a{\mathbbm 1}_{D_\alpha^2}$.
The operator $\mathcal H_{\alpha,a}$ is defined on the domain
 \begin{multline}\label{eq:domH}
\dom \mathcal H_{\alpha,a}=\big\{u\in L^2(\R^2_+)~:~ (\nabla-i\Ab_{\alpha,a})^j u \in L^2(\R^2_+),\\ \mathrm{for}\ j\in \{1,2\}\,,(\nabla-i\Ab_{\alpha,a})\cdot (0,1)|_{\partial (\R^2_+)}=0\big\}.
\end{multline}
The associated quadratic form, $q_{\alpha,a}$, is defined as 

\begin{align}\label{eq:qa}
q_{\alpha,a}(u)&=\int_{\R^2_+}\big|(\nabla-i\Ab_{\alpha,a})u\big|^2\,dx, \quad \mathrm{with} \\
\dom q_{\alpha,a}&=\left\{u\in L^2(\R^2_+)~:~ (\nabla-i\Ab_{\alpha,a})u \in L^2(\R^2_+)\right\}. \nonumber
\end{align}
Let 
\begin{equation}\label{eq:mu_alfa}
\mu(\alpha,a)=\inf_{\substack{u\in\dom q_{\alpha,a}\\ u \neq 0 }}\frac{q_{\alpha,a}(u)}{\|u\|^2_{L^2(\R^2_+)}},
\end{equation}
be the bottom of the spectrum of $\mathcal H_{\alpha,a}$. 
 \subsection{Bottom of the essential spectrum}
\begin{theorem}\label{thm:ess_theta}
	Let $\alpha \in (0,\pi)$ and $a \in [-1,1)\setminus \{0\}$. Then $\inf \spc_\mathrm{ess}(\mathcal H_{\alpha,a})=|a|\Theta_0$.
\end{theorem}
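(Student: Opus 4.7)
The plan is to prove the two inequalities separately: the upper bound by constructing a Weyl-type singular sequence, and the lower bound via Persson's lemma (Appendix~\ref{sec:persson}) combined with an IMS localization.

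For the upper bound $\inf\spc_\mathrm{ess}(\mathcal H_{\alpha,a})\leq |a|\Theta_0$, I would exploit the fact that $D_\alpha^2$ contains a tube along the negative $x_1$-axis, of positive width since $\alpha \in (0,\pi)$, where $\curl\Ab_{\alpha,a}=a$. Starting from a quasi-mode for the Neumann half-plane operator $\p_{|a|,U_\pi}$, whose spectrum has infimum $|a|\Theta_0$, I translate it far in the negative $x_1$-direction, cut it off to be supported in $D_\alpha^2\cap\{x_1<-n\}$, and apply the local gauge transformation that reduces $\Ab_{\alpha,a}$ in $D_\alpha^2$ to the canonical constant-field potential $a\Ab_0$. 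The resulting sequence $(u_n)_n$ lies in $\dom q_{\alpha,a}$, has $\|u_n\|_{L^2}=1$, tends weakly to zero, and satisfies $q_{\alpha,a}(u_n)\to|a|\Theta_0$, since the cut-off commutator errors vanish as $n\to\infty$.

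For the lower bound, Persson's lemma reduces the problem to estimating from below
\[\Sigma(\mathcal H_{\alpha,a})=\lim_{R\to\infty}\inf\Bigl\{\frac{q_{\alpha,a}(u)}{\|u\|^2_{L^2(\R^2_+)}}:u\in\dom q_{\alpha,a},\ \supp u\subset\R^2_+\setminus\overline{B_R(0)}\Bigr\}.\]
I would set up a smooth IMS partition of unity $\{\chi_j\}_{j=1}^{5}$ on $\R^2_+\setminus\overline{B_R(0)}$, adapted to the five qualitative regimes at infinity: (i) a tubular neighbourhood of the positive $x_1$-axis inside $D_\alpha^1$ (boundary with constant field $1$); (ii) a tubular neighbourhood of the negative $x_1$-axis inside $D_\alpha^2$ (boundary with constant field $a$); (iii) the interior of $D_\alpha^1$ away from $\partial\R^2_+$ and $\Gamma$ (constant field $1$); (iv) the interior of $D_\alpha^2$ away from $\partial\R^2_+$ and $\Gamma$ (constant field $a$); and (v) a tubular neighbourhood of the interface ray $\Gamma$ kept disjoint from $\partial\R^2_+$ (full-plane step field). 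The cut-offs can be chosen of radial scale $\sim R$, so that $\sum_j|\nabla\chi_j|^2=\mathcal O(R^{-2})$. After a local gauge transformation on each piece, the IMS formula yields
\[q_{\alpha,a}(u)\geq\sum_{j=1}^5 \lambda_j\,\|\chi_j u\|^2_{L^2}-CR^{-2}\|u\|^2_{L^2},\]
where $\lambda_1=\Theta_0$, $\lambda_2=|a|\Theta_0$, $\lambda_3=1$, $\lambda_4=|a|$, and $\lambda_5=\beta_a$ come respectively from $\p_{1,U_\pi}$, the rescaled $\p_{|a|,U_\pi}$, the Landau operators $\p_{1,\R^2}$ and $\p_{|a|,\R^2}$, and the plane step-field operator $\mathcal L_a$ of Section~\ref{sec:La}. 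The inequalities $|a|\Theta_0\leq\beta_a\leq|a|\leq 1$ and $|a|\Theta_0\leq\Theta_0$ imply $\lambda_j\geq|a|\Theta_0$ for every $j$, and letting $R\to\infty$ gives $\Sigma(\mathcal H_{\alpha,a})\geq|a|\Theta_0$.

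The main technical obstacle is the construction and control of the interface piece $\chi_5$: its support must be a tubular neighbourhood of $\Gamma$ of width small compared to $R$ yet wide enough for the local gauge transformation reducing the restricted form to that of $\mathcal L_a$ on $\R^2$ to be accurate, and it must remain disjoint from $\partial\R^2_+$. Since $\Gamma$ meets $\partial\R^2_+$ only at the origin and enters the open half-plane with $\sin\alpha>0$, this separation is achievable once $R$ is large. The gauge bookkeeping, which matches the specific potential $\Ab_{\alpha,a}$ of \eqref{eq:Aa1}--\eqref{eq:Aa3} to the canonical model potential on each patch via a smooth phase factor with bounded derivative, is the technical core of the argument, but is routine modulo careful choice of the partition.
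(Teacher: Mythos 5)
Your proposal is correct and follows essentially the same route as the paper: Persson's lemma for the essential spectrum, an IMS localization on $\{|x|>R\}$ comparing each piece to the model operators $\p_{1,U_\pi}$, $\p_{|a|,U_\pi}$, $\p_{\cdot,\R^2}$ and $\mathcal L_a$ (yielding the lower bound $|a|\Theta_0$ since all model energies dominate $|a|\Theta_0$), and a trial state supported far inside $D_\alpha^2$, where $\curl\Ab_{\alpha,a}=a$, for the matching upper bound. The paper implements the partition with only three angular cut-offs $\hat\chi_j(\theta)$ (one cone inside $D_\alpha^1$ touching $\partial\R^2_+$, one around the ray $\Gamma$ staying away from $\partial\R^2_+$, one inside $D_\alpha^2$ touching $\partial\R^2_+$), which is scale-invariant and gives $|\nabla\chi_j|\leq C/\rho$ automatically; also, the gauge reductions to the canonical model potentials are \emph{exact} because the field is genuinely piecewise constant, so the ``accuracy'' concern you raise about the interface piece does not in fact arise.
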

We refer to Appendix~\ref{sec:persson} for the proof of Theorem~\ref{thm:ess_theta}. Our proof is an adaptation of the corresponding proof for sector operators~\cite[Section 3]{bonnaillie2003analyse}, which in turn is  a  generalization of Persson's lemma for unbounded domains in $\R^2$ and Neumann realizations, and is based on ideas in~\cite{persson1960bounds,helffer99,agmon2014lectures}. 

\begin{rem} \label{rem:v_0}
 From Theorem~\ref{thm:ess_theta}, it follows that $\mu(\alpha,a)\leq|a|\Theta_0$ for all $\alpha \in (0,\pi)$ and $a \in [-1,1)\setminus \{0\}$, and if  $\mu(\alpha,a)<|a|\Theta_0$, then $\mu(\alpha,a)$ is an eigenvalue of $\mathcal H_{\alpha,a}$.
\end{rem}
\subsection{A continuity result}\label{sec:cont}
The operator $\mathcal H_{\alpha,a}$ depends on the parameters $\alpha$ and~$a$. Some change of variable techniques have been previously used for other parameter dependent operators (see e.g.~\cite[Section 5.3]{bonnaillie2003analyse}) to  link the problem to an operator with a fixed domain, independent of the parameters. This allows the use of the perturbation theory \cite[Appendix C]{fournais2010spectral} to prove certain regularity properties of the ground-state energy. Unfortunately, such techniques may not be useful in our case. This causes  difficulties in establishing some smoothness results with respect to $\alpha$. The aim of this section is to prove the continuity of $\mu(\alpha,a)$ with respect to $a$.
\begin{proposition}\label{prop:mu-continuity}
	Let $\alpha \in (0,\pi)$. The function $a\mapsto\mu(\alpha,a)$ is continuous for  $a \in  [-1,1)\setminus \{0\}$.
\end{proposition}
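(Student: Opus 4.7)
The plan is to establish upper and lower semicontinuity of $a\mapsto\mu(\alpha,a)$ at each fixed $a_0\in[-1,1)\setminus\{0\}$ separately. The key algebraic observation from inspection of~\eqref{eq:Aa1}--\eqref{eq:Aa3} is that $\Ab_{\alpha,a}$ is affine in~$a$, so one may write
\[\Ab_{\alpha,a}=\Ab_{\alpha,a_0}+(a-a_0)\,\mathbf{B}_\alpha,\]
with $\mathbf{B}_\alpha$ independent of $a$ and of at most linear growth in~$|x|$. Expanding $(\nabla-i\Ab_{\alpha,a})u=(\nabla-i\Ab_{\alpha,a_0})u-i(a-a_0)\mathbf{B}_\alpha u$ and applying Cauchy--Schwarz gives
\[q_{\alpha,a}(u)\leq q_{\alpha,a_0}(u)+2|a-a_0|\sqrt{q_{\alpha,a_0}(u)}\,\|\mathbf{B}_\alpha u\|_{L^2}+(a-a_0)^2\|\mathbf{B}_\alpha u\|_{L^2}^2,\]
together with the analogous inequality obtained by swapping $a$ and $a_0$. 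This two-sided comparison drives both directions of the argument.

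For upper semicontinuity, fix $\epsilon>0$ and, by density of $C_c^\infty(\overline{\R^2_+})$ in $\dom q_{\alpha,a_0}$, pick a normalized $u_\epsilon$ with $\supp u_\epsilon\subset B(0,R_\epsilon)$ and $q_{\alpha,a_0}(u_\epsilon)\leq \mu(\alpha,a_0)+\epsilon$. Since $|\mathbf{B}_\alpha|\leq CR_\epsilon$ on $\supp u_\epsilon$, the inequality above yields $q_{\alpha,a}(u_\epsilon)\leq \mu(\alpha,a_0)+\epsilon+O(|a-a_0|R_\epsilon)$, and the variational principle gives $\limsup_{a\to a_0}\mu(\alpha,a)\leq \mu(\alpha,a_0)+\epsilon$. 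Sending $\epsilon\to 0$ concludes this half.

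For lower semicontinuity, argue by contradiction: suppose $a_n\to a_0$ with $\mu(\alpha,a_n)\to L<\mu(\alpha,a_0)$. Since $\mu(\alpha,a_0)\leq|a_0|\Theta_0$ by Theorem~\ref{thm:ess_theta}, the gap $|a_n|\Theta_0-\mu(\alpha,a_n)$ stays bounded below by some $\delta>0$ for $n$ large, and by Remark~\ref{rem:v_0} each $\mu(\alpha,a_n)$ is then an eigenvalue with a normalized eigenfunction $u_n$. A magnetic Agmon argument, based on this uniform gap, produces constants $\beta,C>0$ independent of $n$ such that
\[\int_{\R^2_+}e^{\beta|x|}\bigl(|u_n|^2+|(\nabla-i\Ab_{\alpha,a_n})u_n|^2\bigr)\,dx\leq C,\]
so $\|\mathbf{B}_\alpha u_n\|_{L^2}$ is uniformly bounded in~$n$. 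Applying the swapped comparison with $u=u_n$ yields $q_{\alpha,a_0}(u_n)/\|u_n\|_{L^2}^2\to L$, and the variational principle forces $\mu(\alpha,a_0)\leq L$, contradicting $L<\mu(\alpha,a_0)$.

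The main obstacle is the uniform Agmon decay for $(u_n)$: since $\mathbf{B}_\alpha$ grows linearly, the boundedness of $\|\mathbf{B}_\alpha u_n\|_{L^2}$ cannot come from a soft perturbation bound and must be extracted from genuine exponential localization. The classical magnetic Agmon identity is available in principle, but has to be re-derived here at the quadratic-form level because $\curl\Ab_{\alpha,a}$ is discontinuous across the ray separating $D_\alpha^1$ from $D_\alpha^2$; one then invokes Theorem~\ref{thm:ess_theta} to ensure the decay rate depends only on the uniform gap~$\delta$, making the Agmon bound uniform in $n$.
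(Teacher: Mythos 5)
Your proposal is correct in outline but takes a genuinely different route from the paper. The paper compactifies the problem: it introduces Dirichlet eigenvalues $\mu(\alpha,a,r)$ on the bounded half-discs $B_r^+$ (where continuity in $a$ is elementary, Lemma~\ref{lem:cont2}), shows $\mu(\alpha,a,r)\downarrow\mu(\alpha,a)$ as $r\to\infty$, and closes the loop via an IMS lower bound $\mu(\alpha,a)\geq\min(\mu(\alpha,a,r),|a|\Theta_0)-C/r^2$ with $C$ independent of $a$. You instead exploit the affine dependence of $\Ab_{\alpha,a}$ on $a$ directly: upper semicontinuity via a compactly supported, nearly-optimal trial function; lower semicontinuity by contradiction via a sequence of eigenfunctions $u_n$ controlled by a uniform magnetic Agmon estimate. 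Your upper-semicontinuity half is essentially interchangeable with (and arguably simpler than) the paper's lower-bound half in the final ``Step 3,'' since both rest on trial functions with bounded support and the affine structure.

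The lower-semicontinuity half is where the approaches truly diverge, and where your argument carries a nontrivial burden the paper deliberately avoids. You need $\mu(\alpha,a_n)$ to be eigenvalues (granted, by the contradiction hypothesis once $|a_n|\Theta_0-\mu(\alpha,a_n)>\delta_0>0$) and you need the Agmon decay to be \emph{uniform} in $n$, including uniformity of the multiplicative constant, not just of the exponent. The paper's Theorem~\ref{thm:v0-decay} is stated with a constant $C_{\delta,\alpha}$ and an omitted proof, so uniformity in $a$ is not established there; you would have to rerun the Agmon/Persson machinery (the core ingredient being Lemma~\ref{lem:out_B_r}, whose constant \emph{is} shown to be $a$-independent, which is encouraging) and track that the resulting constants depend only on $\delta_0$ and $\alpha$. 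This is plausible and should go through, but it is precisely the technical work the paper sidesteps by never touching eigenfunctions: the paper's Persson-type lower bound and the monotone approximation $\mu(\alpha,a,r)\to\mu(\alpha,a)$ require only quadratic-form manipulations with cut-offs, no spectral gap, no exponential localization. The trade-off is that your argument is shorter and more conceptual once the uniform Agmon estimate is in hand, whereas the paper's is longer but entirely self-contained at the form level. Your proof is not complete as written — the uniform Agmon bound is asserted rather than derived — but the gap is fillable and the strategy is sound.
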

The proof of Proposition~\ref{prop:mu-continuity}  mainly relies on establishing that  $\mu(\alpha,a)$ is the limit of another ground-state energy, $\mu(\alpha,a,r)$, of an operator with associated form domain that is independent of $\alpha$ and $a$. Then, the continuity of $a\mapsto\mu(\alpha,a)$ is deduced from that of $a \mapsto \mu(\alpha,a,r)$. This will be made more precise in what follows. Let $B_r=B(0,r)$ be the ball of radius $r>0$, and $B_r^+=B_r\cap \R^2_+$. Define 
\begin{equation}\label{eq:Dr} \mathcal D_r=\big\{u \in H^1(B_r^+)~:~u=0\ \mathrm{on}\ \partial B_r \cap \R^2_+\big\}.\end{equation}
Let \begin{equation}\label{eq:mu_r} 
\mu(\alpha,a,r)=\inf_{\substack{u \in \mathcal D_r\\ u \neq 0}}\frac {\|(\nabla-i\Ab_{\alpha,a})u\|_{L^2(B_r^+)}}{\|u\|_{L^2(B_r^+)}}.\end{equation}
\begin{lemma}\label{lem:cont2}
	The function $a \mapsto \mu(\alpha,a,r)$ is continuous.
\end{lemma}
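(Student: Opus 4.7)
The decisive feature that makes this lemma much easier than the corresponding question for $\alpha$ (discussed just above) is that the form domain $\mathcal D_r$ defined in~\eqref{eq:Dr} depends neither on $a$ nor on $\alpha$. Hence I am looking at a one-parameter family of Rayleigh quotients on a \emph{fixed} Hilbert space, and a direct min--max perturbation argument should give in fact local Lipschitz continuity of $a\mapsto\mu(\alpha,a,r)$.

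The first step is to quantify how the magnetic potential $\Ab_{\alpha,a}$ depends on $a$. Inspecting~\eqref{eq:Aa1}--\eqref{eq:Aa3}, in each of the two sectors $D^1_\alpha$ and $D^2_\alpha$ the component $A_{\alpha,a}$ is an affine function of $(x_1,x_2)$ whose coefficients depend \emph{linearly} on $a$, with the $\alpha$-dependence sitting only in factors of the form $1/\tan\alpha$. A piecewise calculation—separating the contributions from $D^1_\alpha\cap B_r^+$ and $D^2_\alpha\cap B_r^+$—then produces the pointwise bound
\begin{equation*}
\|\Ab_{\alpha,a}-\Ab_{\alpha,a_0}\|_{L^\infty(B_r^+)}\leq C_{r,\alpha}\,|a-a_0|,
\end{equation*}
valid for all $a,a_0\in[-1,1)\setminus\{0\}$, with $C_{r,\alpha}$ depending only on $r$ and $\alpha$.

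Next, for every $u\in\mathcal D_r$ I would use the pointwise identity $(\nabla-i\Ab_{\alpha,a})u-(\nabla-i\Ab_{\alpha,a_0})u=-i(\Ab_{\alpha,a}-\Ab_{\alpha,a_0})u$ together with the reverse triangle inequality in $L^2(B_r^+)$ and the $L^\infty$-bound above to obtain
\begin{equation*}
\bigl|\,\|(\nabla-i\Ab_{\alpha,a})u\|_{L^2(B_r^+)}-\|(\nabla-i\Ab_{\alpha,a_0})u\|_{L^2(B_r^+)}\,\bigr|\leq C_{r,\alpha}\,|a-a_0|\,\|u\|_{L^2(B_r^+)}.
\end{equation*}
Dividing by $\|u\|_{L^2(B_r^+)}$ and running the standard two-sided min--max argument (picking an $\varepsilon$-quasi-minimizer for $\mu(\alpha,a,r)$, using it as a test function for $\mu(\alpha,a_0,r)$, then swapping the roles of $a$ and $a_0$, and finally letting $\varepsilon\to 0$) yields the desired Lipschitz estimate
\begin{equation*}
|\mu(\alpha,a,r)-\mu(\alpha,a_0,r)|\leq C_{r,\alpha}\,|a-a_0|.
\end{equation*}
Because any $a_0\in[-1,1)\setminus\{0\}$ admits a neighbourhood contained in $[-1,1)\setminus\{0\}$, the excluded value $a=0$ plays no role. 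There is no real obstacle here; the only point requiring attention is the piecewise definition of $\Ab_{\alpha,a}$ when estimating the $L^\infty$-difference, but since both pieces are separately linear in $a$, this causes no discontinuity across the interface $\{\theta=\alpha\}$.
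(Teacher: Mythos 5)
Your proposal is correct and follows essentially the same route as the paper: both exploit the fixed ($a$-independent) form domain $\mathcal D_r$, the pointwise bound $\|\Ab_{\alpha,a}-\Ab_{\alpha,a_0}\|_{L^\infty(B_r^+)}\leq C_{r,\alpha}|a-a_0|$ available because the support is bounded, and a two-sided min--max perturbation argument. The only cosmetic difference is that the paper applies Cauchy's inequality directly to the quadratic forms to obtain $q_{\alpha,a}(u)\leq(1+|h|)q_{\alpha,a+h}(u)+C(r)|h|$, whereas you package the same estimate via the reverse triangle inequality on the magnetic $L^2$-norms; both yield the (Lipschitz-quality) continuity claimed.
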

The proof of the lemma above is standard, but  presented in Appendix~\ref{sec:persson} for the convenience of the reader.
\begin{rem}
	Note that the  form domain $\mathcal D_r$ is independent of the parameter $a$, and that for a fixed function $u\in \mathcal D_r$, $a\mapsto \|(\nabla-i \Ab_{\alpha,a})u\|^2_{L^2(B_r^+)}$ is holomorphic. Consequently, one can apply  the perturbation theory to prove more regularity of $a\mapsto \mu(\alpha,a,r)$ (see \cite[Appendix~C]{fournais2010spectral}). However,  we will be satisfied by the continuity result of Lemma~\ref{lem:cont2} to establish Proposition~\ref{prop:mu-continuity}.
\end{rem}
\begin{rem}	
	Unfortunately, the perturbation theory might not be helpful in proving the smoothness of $\alpha \mapsto \mu(\alpha,a,r)$, despite of the independence of the domain $\mathcal D_r$ from $\alpha$. Moreover,  the continuity of $\alpha \mapsto \mu(\alpha,a,r)$ is not obvious; a technical difficulty comes from the possibility that
	\[\liminf_{h\rightarrow 0}\sup_{\substack{u\in  \mathcal D_r\\ \|u\|_{L^2(B_r^+)}=1}}\int_{\alpha}^{\alpha+h}|u|^2\,dx>0\]
	which prevents us from comparing the eigenvalues $\mu(\alpha,a,r)$ and $\mu(\alpha+h,a,r)$ using the min-max principle.
\end{rem}
The next lemma gives an energy lower bound for the functions in the domain of $q_{\alpha,a}$, supported away from the origin. The proof is also provided in Appendix~\ref{sec:persson}. 

Consider the set
\begin{equation}\label{eq:M_r}
\mathcal M_r=\big\{u \in \dom q_{\alpha,a}~:~u=0\ \mathrm{in}\ B_r^+\big\}.\end{equation}
	\begin{lemma}\label{lem:out_B_r}
		Let $\alpha \in (0,\pi)$ and $a \in [-1,1)\setminus \{0\}$. There exists a constant $C>0$, independent of $a$ and dependent on $\alpha$, such that for all $r>0$ and any non-zero function $u \in\mathcal M_r$, it holds
		\[q_{\alpha,a}(u) \geq \Big(|a|\Theta_0-\frac C {r^2}\Big)\|u\|^2_{L^2(\R^2_+)}.\]
	\end{lemma}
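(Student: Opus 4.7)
The plan is to run an IMS-type localization argument at scale $r$ that cuts $u$ into pieces, each of which either sits in a region where $\curl\Ab_{\alpha,a}$ is constant, or sits in a tubular neighborhood of the jump line $\{\theta=\alpha\}$ where the form of $\mathcal H_{\alpha,a}$ reduces, after rotation and gauge, to that of the model $\mathcal L_a$. Each piece is then bounded below by $|a|\Theta_0$ times its $L^2$-mass, and the derivatives of the cutoffs contribute the announced $C/r^2$ error.

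Concretely, I would build a smooth partition of unity $\{\chi_j\}_{j=0}^{5}$ on $\R^2_+$ with $\sum_j\chi_j^2=1$ and $|\nabla\chi_j|\le C(\alpha)/r$, such that $\chi_0$ is supported in $B_r^+$ (so $\chi_0 u\equiv 0$ by the assumption $u\in\mathcal M_r$); $\chi_1$ and $\chi_2$ are supported in the interiors of $D_\alpha^1$ and $D_\alpha^2$ at distance $\gtrsim r$ from both $\partial\R^2_+$ and the ray $\{\theta=\alpha\}$; $\chi_3$ and $\chi_4$ are supported in strips of width $\sim r$ along $\partial\R^2_+\cap\overline{D_\alpha^1}$ and $\partial\R^2_+\cap\overline{D_\alpha^2}$, away from the jump line; and $\chi_5$ is supported in a tubular neighborhood of $\{\theta=\alpha\}$ of width $\sim r$, away from $\partial\R^2_+$. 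The IMS identity then gives
\[
q_{\alpha,a}(u)=\sum_{j=1}^{5}q_{\alpha,a}(\chi_j u)-\sum_{j=0}^{5}\int_{\R^2_+}|\nabla\chi_j|^2|u|^2\,dx\ \geq\ \sum_{j=1}^{5}q_{\alpha,a}(\chi_j u)-\frac{C}{r^2}\|u\|_{L^2(\R^2_+)}^2.
\]

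The next step is to compare each localized form with the appropriate model operator from Section~\ref{sec:corner} or Section~\ref{sec:La}. On $\supp\chi_1$ and $\supp\chi_2$ the field is constant equal to $1$ and $a$, respectively; extending by zero to $\R^2$ and invoking $\inf\spc(\p_{\mathfrak b,\R^2})=|\mathfrak b|$ yields $q_{\alpha,a}(\chi_1 u)\ge\|\chi_1 u\|^2$ and $q_{\alpha,a}(\chi_2 u)\ge |a|\,\|\chi_2 u\|^2$. On $\supp\chi_3$ and $\supp\chi_4$ the field is still constant but the Neumann boundary is now present, so the half-plane model $\p_{\mathfrak b,\R^2_+}$ (with ground-state $|\mathfrak b|\Theta_0$) gives $q_{\alpha,a}(\chi_3 u)\ge\Theta_0\|\chi_3 u\|^2$ and $q_{\alpha,a}(\chi_4 u)\ge |a|\Theta_0\|\chi_4 u\|^2$. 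On $\supp\chi_5$ the jump is present but the boundary is not, so after a rotation and a gauge transformation aligning the jump line with the $x_2$-axis, the form equals that of $\mathcal L_a$, giving $q_{\alpha,a}(\chi_5 u)\ge \beta_a\|\chi_5 u\|^2$. Using $\beta_a\ge|a|\Theta_0$ and $1,\Theta_0,|a|\ge|a|\Theta_0$, every piece is bounded below by $|a|\Theta_0\|\chi_j u\|^2$; summing and using $\sum\chi_j^2=1$ produces the claimed lower bound.

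The main obstacle is the gauge-equivalence step on each $\supp\chi_j$. Since these supports are chosen simply connected and $\curl\Ab_{\alpha,a}$ on each of them either is constant, or agrees after rotation with the curl of the canonical potential $\sigma\Ab_0$ used to define $\mathcal L_a$, a standard primitive construction provides a gauge $\varphi_j\in H^1$ such that $e^{i\varphi_j}\chi_j u$ belongs to the form domain of the model and realizes the same form value; multiplication by the real cutoff $\chi_j$ preserves the Neumann condition on $\partial\R^2_+$ automatically. No higher-order spectral information is required, and because the partition and the gauges depend only on $\alpha$ and on the scale $r$, the resulting constant $C$ is $a$-independent, as demanded.
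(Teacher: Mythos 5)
Your high\-level plan is exactly the paper's: an IMS localization at scale $r$ plus comparison with the three model operators (constant\-field half\-plane for the sectors near $\{\theta=0\}$ and $\{\theta=\pi\}$, and $\mathcal L_a$ for the sector around the jump line), then the ordering $1,\Theta_0,\beta_a,|a|\ge|a|\Theta_0$. The model\-comparison steps are all correct.

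However, the partition of unity you describe does not quite work as stated, and this is precisely where the paper makes a more careful choice. You ask for $\chi_3$ supported in a strip of \emph{Euclidean} width $\sim r$ along $\{\theta=0\}$, $\chi_5$ in a tube of width $\sim r$ around $\{\theta=\alpha\}$, and $\chi_1$ at distance $\gtrsim r$ from both. But for $r<\rho\lesssim r/\sin\alpha$ (an annulus lying outside $B_r^+$, so $u$ may be nonzero there), \emph{every} point of $D^1_\alpha$ is simultaneously within distance $\lesssim r$ of both the boundary ray and the jump ray. Your strip $\chi_3$ and your tube $\chi_5$ would therefore overlap in that annulus if taken to have width $r$ (and then neither model reduction applies on the overlap), while shrinking the widths to $O(r\sin\alpha)$ leaves the "bulk" $\chi_1$, $\chi_2$ either empty or forced to creep into the overlap region; in either case your assertion $|\nabla\chi_j|\le C(\alpha)/r$ and $\sum\chi_j^2=1$ is not automatic and needs a construction you have not given. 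The paper sidesteps this entirely by using a \emph{purely angular} partition $\chi_j(\rho,\theta)=\hat\chi_j(\theta)$ with three pieces: two small angular sectors containing $\theta=0$ and $\theta=\pi$ but avoiding $\theta=\alpha$, and one sector containing $\theta=\alpha$ but avoiding both boundary rays. These are cones with a common vertex at the origin, so they tile $\R^2_+$ without any overlap issues; and since $|\nabla\chi_j|=|\hat\chi_j'(\theta)|/\rho$, on $\supp u\subset\{\rho\ge r\}$ one gets $|\nabla\chi_j|\le C(\alpha)/r$ for free. The error term then comes out as $C(\alpha)/r^2$ with no geometric headaches, and only three pieces are needed (your extra bulk pieces $\chi_1,\chi_2$ are superfluous — in the paper the sectors reach all the way to the boundary and are compared with the half\-plane model, which gives $\Theta_0$ rather than $1$, but that is still $\ge|a|\Theta_0$). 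In short: same strategy and same models, but your scale\-$r$ Euclidean strips must be replaced by angular cutoffs (or their widths made $\alpha$\-dependent with a further argument you haven't supplied) for the decomposition to be consistent.
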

\begin{proof}[Proof of Proposition~\ref{prop:mu-continuity}]	
First, we prove that 
\begin{equation}\label{eq:conv_mu}
\lim_{r \rightarrow +\infty} \mu(\alpha,a,r)=\mu(\alpha,a).
\end{equation}
 By a standard application of the min-max principle,  we see that $r\mapsto \mu(\alpha,a,r)$ is decreasing in $\R_+$. Indeed, for $r>0$, we may extend any $u\in \mathcal D_r$ by zero outside $B_r$ (the extension is still denoted by $u$ for simplicity), hence for $\rho>r$ and $u \in \mathcal D_r$, we view $u \in \mathcal D_{\rho}$. Consequently, $\lim_{r \rightarrow +\infty} \mu(\alpha,a,r)$ exists.

	Since $ D_r\subset\dom q_{\alpha,a}$, the estimate  $\mu(\alpha,a)\leq\lim_{r \rightarrow +\infty} \mu(\alpha,a,r)$ is straightforward. It remains to establish $\mu(\alpha,a)\geq\lim_{r \rightarrow +\infty} \mu(\alpha,a,r)$.
	Let $u \in \dom q_{\alpha,a}$. Consider a smooth cut-off function $f_r$ supported in $B_r$, such that
	\begin{equation}\label{eq:cut}
	0\leq f_r\leq1,\quad f_r=1\ \mathrm{in}\ B_{\frac r2},\quad \mathrm{and}\  |\nabla f_r|\leq \frac C{r},
	\end{equation}
	for some universal constant $C>0$.	We have
	\begin{multline}\label{eq:cut1}
	\|(\nabla-i\Ab_{\alpha,a})f_r u\|_{L^2(\R^2_+)}^2=\|f_r(\nabla-i\Ab_{\alpha,a})u\|_{L^2(\R^2_+)}^2 +\|u|\nabla f_r| \|_{L^2(\R^2_+)}^2\\+2 \re\big\langle u\nabla f_r,f_r (\nabla-i\Ab_{\alpha,a}) u\big\rangle.\end{multline}
	Then by~\eqref{eq:cut} and~\eqref{eq:cut1}, we bound $\|f_r(\nabla-i\Ab_{\alpha,a})u\|_{L^2(\R^2_+)}^2$ from below by \[\|(\nabla-i\Ab_{\alpha,a})f_r u\|_{L^2(\R^2_+)}^2-\frac C{r^2}\|u\|^2_{L^2(\R^2_+)}
	-\frac Cr\|u\|_{L^2(\R^2_+)}\|f_r(\nabla-i\Ab_{\alpha,a})u\|_{L^2(\R^2_+)}\]
	which, in turn, by the min-max principle can be bounded below by 
	\[\mu(\alpha,a,r)\|f_r u\|^2_{L^2(\R^2_+)}- \frac C{r^2}\|u\|^2_{L^2(\R^2_+)}
	- \frac Cr\|u\|_{L^2(\R^2_+)}\|f_r(\nabla-i\Ab_{\alpha,a})u\|_{L^2(\R^2_+)}.\]

	Hence, having $ q_{\alpha,a}(u)\geq \|f_r(\nabla-i\Ab_{\alpha,a})u\|_{L^2(\R^2_+)}^2$, we get
	\[\frac {q_{\alpha,a}(u)}{\|u\|^2_{L^2(\R^2_+)}}\geq \mu(\alpha,a,r)\frac  {\|f_r u\|^2_{L^2(\R^2_+)}}{\|u\|^2_{L^2(\R^2_+)}}-\frac C{r^2}-\frac Cr\frac {\|f_r(\nabla-i\Ab_{\alpha,a})u\|_{L^2(\R^2_+)}}{\|u\|_{L^2(\R^2_+)}}.\]
	Taking $r$ to $+\infty$ and using the dominated convergence theorem, we obtain
	\[\frac {q_{\alpha,a}(u)}{\|u\|^2_{L^2(\R^2_+)}}\geq \lim_{r \rightarrow +\infty}\mu(\alpha,a,r).\]
	Since $u \in \dom q_{\alpha,a}$ is arbitrary, we conclude that $\mu(\alpha,a)\geq\lim_{r \rightarrow +\infty} \mu(\alpha,a,r)$. 

Next, we establish a useful lower bound of $\mu(\alpha,a)$. Let $r>0$ and consider a partition of unity, $(\varphi_r,\chi_r)$, of $\R^2$ satisfying
	\begin{equation*}
	\supp \varphi_r \subset B_r\,,\quad \supp \chi_r \subset (B_{\frac r2})^\complement\,,\quad |\nabla \varphi_r|^2+|\nabla \chi_r|^2\leq \frac C{r^2},
	\end{equation*}
	for some universal constant $C>0$. Let $u \in \dom q_{\alpha,a}$ such that $\|u\|_{L^2(\R^2_+)}=1$. The IMS formula (\cite[Theorem 3.2]{cycon2009schrodinger}) ensures that 
	\begin{equation}\label{eq:cont1}
	q_{\alpha,a}(u)\geq q_{\alpha,a}(\varphi_r u)+q_{\alpha,a}(\chi_r u)-\frac C{r^2}.
	\end{equation}
	Note that $\varphi_r u \in \mathcal D_r$ and $\chi_r u\in \mathcal M_{\frac r2}$, where $\mathcal M_{\frac r2}$ is defined in~\eqref{eq:M_r}. Thus
	\begin{equation}
	\label{eq:cont2}
	q_{\alpha,a}(\varphi_r u) +q_{\alpha,a}(\chi_r u)\geq \mu(\alpha,a,r) \|\varphi_r u\|_{L^2(\R^2_+)}+|a|\Theta_0 \|\chi_r u\|_{L^2(\R^2_+)}-\frac C{r^2},
	\end{equation} 
	for some $C$ that is independent of $a$ and $r$. In the above inequality, we used~\eqref{eq:mu_r} and Lemma~\ref{lem:out_B_r}. Combining~\eqref{eq:cont1} and \eqref{eq:cont2} gives
	\begin{equation}\label{eq:cont4}
	\mu(\alpha,a)\geq \min\big(\mu(\alpha,a,r),|a|\Theta_0\big)-\frac C{r^2},\end{equation}
	for $C$ independent of $a$ and $r$.
	
Finally, we establish the continuity of $a\mapsto \mu(\alpha,a)$. Let $r>0$ and $h \in \R$ such that $a+h \in  [-1,1)\setminus \{0\}$. 
By~\eqref{eq:conv_mu} and the monotonicity of $r\mapsto \mu(\alpha,a,r)$ (see Step 1.), we have
	\[\mu(\alpha,a+h)\leq \mu(\alpha,a+h,r).\]
	Hence, using the continuity of $a \mapsto \mu(\alpha,a,r)$ (Lemma~\ref{lem:cont2}) gives
	\[\limsup_{h \rightarrow 0}\mu(\alpha,a+h) \leq \mu(\alpha,a,r).\]
	Let $r$ tend to $+\infty$ and use~\eqref{eq:conv_mu} to get
	$\limsup_{h \rightarrow 0}\mu(\alpha,a+h) \leq \mu(\alpha,a)$.
	Next, by~\eqref{eq:cont4} we have
	\begin{equation}\label{eq:cont10}
	\mu(\alpha,a+h)\geq \min\big(\mu(\alpha,a+h,r),|a+h|\Theta_0\big)-\frac C{r^2}.
	\end{equation}
	But Theorem~\ref{thm:ess_theta} asserts that
	\begin{equation}
	|a+h|\Theta_0 \geq |a|\Theta_0-|h| \Theta_0
	\geq \mu(\alpha,a)-|h|\Theta_0. \label{eq:cont11}
	\end{equation}
	We plug~\eqref{eq:cont11} in~\eqref{eq:cont10} and we insert $\liminf_{h \rightarrow 0}$ to obtain
\begin{equation*}\liminf_{h \rightarrow 0}\mu(\alpha,a+h)\geq \min\big(\mu(\alpha,a,r),\mu(\alpha,a)\big)-\frac C{r^2}
\geq \mu(\alpha,a)-\frac C{r^2}.
\end{equation*}
	In the above inequality, Lemma~\ref{lem:cont2} and the monotonicity of $r\mapsto \mu(\alpha,a,r)$ are used again. Take $r$ to $+\infty$ and use~\eqref{eq:conv_mu} to conclude that 
	$\liminf_{h \rightarrow 0}\mu(\alpha,a+h) \geq \mu(\alpha,a)$.\qedhere
\end{proof}
 %
\subsection{Bound states}\label{sec:bound_state}
In what follows, we provide particular values of $\alpha$ and $a$ where $\mu(\alpha,a)$ is an eigenvalue (see Propositions~\ref{prop:inf_ev} and~\ref{prop:bnd_stat_2}), then we conclude with establishing some decay result of the corresponding eigenfunction(s) (see Theorem~\ref{thm:v0-decay}).
\begin{proposition}\label{prop:inf_ev}
	There exists $\gamma_0 \in (0,1)$ such that, for all  $a \in [-1,-1+\gamma_0)$, 
	the bottom of the spectrum of $\mathcal H_{\pi/2,a}$,  $\mu\big(\pi/2, a\big)$, is an eigenvalue. 
\end{proposition}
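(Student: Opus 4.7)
The plan is a two-step argument: first establish the strict inequality $\mu(\pi/2,-1)<\Theta_0$ at the extremal value $a=-1$ by exploiting a reflection symmetry that reduces the problem to the classical sector operator with opening angle $\pi/2$, and then propagate the bound to a right-neighborhood $[-1,-1+\gamma_0)$ by continuity. In both cases Remark~\ref{rem:v_0} will upgrade the strict inequality $\mu(\pi/2,a)<|a|\Theta_0$ to the desired eigenvalue conclusion.

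For $a=-1$, formula~\eqref{eq:Aa3} yields the symmetric potential $\Ab_{\pi/2,-1}(x_1,x_2)=(0,|x_1|)$, which is invariant under the reflection $R:(x_1,x_2)\mapsto(-x_1,x_2)$ of $\R^2_+$; note that this reflection is not available for generic $a$, since $\curl\Ab_{\pi/2,a}$ then takes the distinct values $1$ on $D^1_{\pi/2}$ and $a$ on $D^2_{\pi/2}$. A direct computation shows that $R$ commutes with $\mathcal H_{\pi/2,-1}$ and preserves the form domain, so I may restrict the Rayleigh quotient to even functions. Using $|x_1|=x_1$ on $D^1_{\pi/2}$ and the change of variables $y_1=-x_1$ on $D^2_{\pi/2}$, every even $u$ satisfies
\[q_{\pi/2,-1}(u)=2\int_{D^1_{\pi/2}}|(\nabla-i\Ab_0)u|^2\,dx,\qquad \|u\|^2_{L^2(\R^2_+)}=2\|u\|^2_{L^2(D^1_{\pi/2})},\]
where $\Ab_0$ is the canonical potential of~\eqref{canon}. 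Since any test function $v\in\dom q_{1,U_{\pi/2}}$ extends evenly to an admissible element of $\dom q_{\pi/2,-1}$, the min-max principle will yield $\mu(\pi/2,-1)\leq\lambda_{U_{\pi/2}}(1)=\mu(\pi/2)$. I then invoke the known result~\cite{jadallah2001onset,bonnaillie2005fundamental,exner2018bound} that $\mu(\alpha)<\Theta_0$ for every $\alpha\in(0,\alpha_0)$ with $\alpha_0\approx 0.595\pi$; since $\pi/2<\alpha_0$, this gives $\mu(\pi/2,-1)<\Theta_0=|{-1}|\Theta_0$, and Remark~\ref{rem:v_0} identifies $\mu(\pi/2,-1)$ as an eigenvalue.

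For the second step, Proposition~\ref{prop:mu-continuity} provides the continuity of $a\mapsto\mu(\pi/2,a)$ on $[-1,1)\setminus\{0\}$, and $a\mapsto|a|\Theta_0$ is manifestly continuous, so the strict inequality at $a=-1$ will persist on some half-open interval $[-1,-1+\gamma_0)$ with $\gamma_0\in(0,1)$; applying Remark~\ref{rem:v_0} once more on this interval concludes the proof. The only delicate point in the whole argument is the symmetry reduction itself: one must verify that the orthogonal splitting of $\dom q_{\pi/2,-1}$ into even and odd parts is compatible with the magnetic form, and that the change of variables on $D^2_{\pi/2}$ precisely converts the step potential $(0,|x_1|)$ into the canonical sector potential $(0,y_1)$. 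Once this reduction is cleanly executed, the remainder is an immediate combination of Proposition~\ref{prop:mu-continuity}, Remark~\ref{rem:v_0}, and the classical bound $\mu(\pi/2)<\Theta_0$.
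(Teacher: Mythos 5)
Your argument is essentially the paper's own proof: both extend the sector eigenfunction $u_{\pi/2}$ evenly across $\{x_1=0\}$ to a test function in $\dom q_{\pi/2,-1}$, compute the Rayleigh quotient to get $\mu(\pi/2,-1)\le\mu(\pi/2)<\Theta_0$, invoke Remark~\ref{rem:v_0} to obtain an eigenvalue at $a=-1$, and then propagate to $[-1,-1+\gamma_0)$ via the continuity in Proposition~\ref{prop:mu-continuity}. The only cosmetic difference is that you frame the reduction through the reflection symmetry and the even/odd splitting and flag that as the delicate point, whereas in fact only the easy direction (an even extension is admissible and its form value splits as you write) is needed, which is precisely the paper's one-line test-function computation.
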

\begin{proof}
	Let $u:=u_{\pi/2}$ be a normalized eigenfunction associated with the eigenvalue $\mu(\pi/2)$ introduced in Section~\ref{sec:corner}.
	Consider  a function $\hat u$ in $\R \times \R_+$ satisfying
	\begin{equation*}
	\hat u(x_1,x_2)=
	\left\{
	\begin{array}{ll}
	u(x_1,x_2)& \mathrm{if}\ x_1> 0,\\
	u(-x_1,x_2)& \mathrm{if}\ x_1<0.
	\end{array}
	\right.
	\end{equation*}
	 For $a=-1$,  a simple computation yields that $\hat u \in \dom q_{\pi/2,-1}$ and satisfies
	\begin{equation*} 
	\frac {q_{\frac \pi 2,-1}(\hat u)}{\|\hat u\|^2_{L^2(\R^2_+)}} =\mu\Big(\frac \pi 2\Big)
	<\Theta_0
	\end{equation*}
	(see Section~\ref{sec:corner}). Hence, the min-max principle ensures that
	$\mu\big(\pi/2, -1\big) <\Theta_0$,
	which establishes that this ground-state energy is an eigenvalue (see Remark~\ref{rem:v_0}).
	The rest of the proof follows from the continuity of $a\mapsto \mu(\pi/2,a)$ at $a=-1$ (see Proposition~\ref{prop:mu-continuity}).
\end{proof} 
Inspired by the construction in~\cite[Proof of Theorem 1.1]{exner2018bound} in the study of corner domains, we establish a sufficient condition on the angle $\alpha$ and the number $a$ under which $\mu(\alpha,a)$ is an eigenvalue.
\begin{proposition}\label{prop:bnd_stat_2}
	For $\alpha \in (0,\pi)$ and  $a \in [-1,1)\setminus \{0\}$, consider the function $P_{\alpha,a}:(0,+\infty) \rightarrow \R$ defined by 
	\begin{equation*}
	P_{\alpha,a}(x)=Ax^2-\frac \pi 2|a|\Theta_0x+\frac\pi 2,\end{equation*}
	with
	\begin{multline*}A=-\frac 1{64} \csch(\pi)\Big(-4a\pi+(3-2a+3a^2)\pi\cosh(\pi)+(-1+a)\pi\big((-1+a)\cosh(\pi-2\alpha)\\+4\cosh(\pi-\alpha)-4a\cosh(\alpha)\big)-8\big(\alpha+a^2(\pi-\alpha)\big)\sinh(\pi)\Big).\end{multline*}
	If there exists $x=x(\alpha,a)>0$ such that $P_{\alpha,a}(x)<0$, then $\mu(\alpha,a)$ is an eigenvalue of the operator $\mathcal H_{\alpha,a}$.
\end{proposition}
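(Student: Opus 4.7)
The plan is to combine the variational principle~\eqref{eq:mu_alfa} with Remark~\ref{rem:v_0}: since the bottom of the essential spectrum is exactly $|a|\Theta_0$, it suffices to exhibit a trial function $u\in\dom q_{\alpha,a}$ whose Rayleigh quotient is strictly smaller than $|a|\Theta_0$. More precisely, I would construct a one-parameter family $(u_x)_{x>0}$ in $\dom q_{\alpha,a}$ such that
\[\frac{q_{\alpha,a}(u_x)}{\|u_x\|^2_{L^2(\R^2_+)}}=\frac{Ax^2+\tfrac{\pi}{2}}{\tfrac{\pi}{2}\,x},\]
so that the hypothesis $P_{\alpha,a}(x)<0$ for some $x>0$ translates directly into the right-hand side being $<|a|\Theta_0$, and the conclusion follows from the min-max principle and Remark~\ref{rem:v_0}.

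The construction adapts the approach of~\cite{exner2018bound} (used there for the sector operator with a constant field) to the step-field setting. First, I would work in polar coordinates $(r,\theta)$ on $\R^2_+$ and exploit that $\curl\Ab_{\alpha,a}$ is piecewise constant: on each sub-sector $D_\alpha^j$ the potential splits as $\Ab_{\alpha,a}=\nabla\Phi^{(j)}_{\alpha,a}+\Ab^{(j)}_\mathrm{res}$ with $\Ab^{(j)}_\mathrm{res}$ linear in $r$, and the exact parts can be glued into a continuous global phase $e^{i\Phi_{\alpha,a}}$ thanks to the matching conditions built into~\eqref{eq:Aa1}--\eqref{eq:Aa3} at $\{\theta=\alpha\}$.

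The second step is the separated ansatz
\[u_x(r,\theta)=e^{i\Phi_{\alpha,a}(r,\theta)}\,\phi(\theta)\,e^{-r/x},\]
with $1/x$ the free decay rate and $\phi$ a piecewise $\sinh$-type angular profile matched continuously at $\theta=\alpha$. The Neumann condition on $\partial\R^2_+$ and the membership $u_x\in\dom q_{\alpha,a}$ reduce to algebraic conditions on $\phi$. The radial integrations $\int_0^\infty r^k e^{-2r/x}\,dr$ for $k\in\{0,1,2\}$ are elementary and produce the powers of $x$. The angular integrals of $|\phi'|^2$, $|\phi|^2$, and $|\phi|^2|\Ab^{(j)}_\mathrm{res}|^2$ should reproduce the hyperbolic combination defining $A$, while the mixed term $\re\langle\partial_\theta u_x,i\Ab^{(j)}_\mathrm{res} u_x\rangle$, after integration by parts and a comparison with the half-plane de Gennes problem from Section~\ref{sec:corner}, should produce the linear term $-\tfrac{\pi}{2}|a|\Theta_0\,x$ in which $\Theta_0$ enters.

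The principal obstacle is the angular bookkeeping: the five distinct terms in $A$, namely $(3-2a+3a^2)\cosh(\pi)$, $(1-a)^2\cosh(\pi-2\alpha)$, $4(1-a)\cosh(\pi-\alpha)$, $-4a(1-a)\cosh(\alpha)$, and $-8(\alpha+a^2(\pi-\alpha))\sinh(\pi)$, together with the $\csch(\pi)$ prefactor, must be matched simultaneously by a single pair $(\phi,\Phi_{\alpha,a})$; this encodes both the interior jump of $\curl\Ab_{\alpha,a}$ across $\{\theta=\alpha\}$ and the way the Neumann boundary at $\theta\in\{0,\pi\}$ interacts with the decay profile $\phi$. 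Once the asserted Rayleigh identity is verified, picking any $x>0$ with $P_{\alpha,a}(x)<0$ concludes the proof.
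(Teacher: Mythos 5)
Your high-level strategy is correct and is the same as the paper's: pass to the symmetric gauge $\Ab_1=\tfrac12(-x_2,x_1)$ via a gauge phase, exhibit a one-parameter family of trial functions whose energy-minus-$|a|\Theta_0$-times-norm equals $P_{\alpha,a}(x)$, and conclude via the min-max principle and Remark~\ref{rem:v_0}. The norm identity $\|u\|^2_{L^2_\rho}=\tfrac{\pi}{2}x$ does hold for the paper's construction, so your claimed Rayleigh reformulation is consistent. However, the specific trial function you propose does not work.

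The separated ansatz $u_x(r,\theta)=e^{i\Phi_{\alpha,a}}\phi(\theta)e^{-r/x}$ has \emph{infinite} quadratic form whenever $\phi$ is non-constant. In polar coordinates the angular term contributes
\[\int_0^\pi\!\int_0^{+\infty}\frac{1}{\rho}\,\Big|\phi'(\theta)-i\breve\sigma\tfrac{\rho^2}{2}\phi(\theta)\Big|^2 e^{-2\rho/x}\,d\rho\,d\theta,\]
and near $\rho=0$ the magnetic part is $O(\rho^2)$, leaving $\int_0^\epsilon\frac{|\phi'(\theta)|^2}{\rho}\,d\rho$, which diverges logarithmically. This is precisely the obstruction that the Exner--Lotoreichik--P\'erez-Obiol device is engineered to avoid: the paper takes $u_*(\rho,\theta)=e^{-\beta\rho^2/2}\,e^{-i\rho g(\theta)}$, so that $\partial_\theta u_*=-i\rho g'(\theta)u_*$ and the extra factor of $\rho$ cancels the $1/\rho$ singularity. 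You have misplaced the angular hyperbolic profile: it should appear as a coefficient $g(\theta)$ in a $\rho$-linear phase, not as a modulus factor $\phi(\theta)$. Related to this, the radial envelope must be Gaussian, $e^{-\beta\rho^2/2}$ (so that the relevant moments are $\mathcal E_1=1/(2\beta)$, $\mathcal E_2=\sqrt\pi/(4\beta^{3/2})$, $\mathcal E_3=1/(2\beta^2)$), not exponential; the exponential moments $\int_0^\infty r^k e^{-2r/x}dr$ produce a different polynomial in $x$ than the one defining $P_{\alpha,a}$.

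Finally, your explanation for the linear term $-\tfrac{\pi}{2}|a|\Theta_0 x$ is off: it does not come from a "comparison with the half-plane de Gennes problem," and no integration by parts is needed. The paper defines $\mathcal I[u]=\tilde q^{\rm pol}(u)-|a|\Theta_0\|u\|^2_{L^2_\rho}$, and for the Gaussian trial function $\|u_*\|^2_{L^2_\rho}=\pi\mathcal E_1=\tfrac{\pi}{2\beta}=\tfrac{\pi}{2}x$, so the subtraction mechanically produces $-\tfrac{\pi}{2}|a|\Theta_0 x$. The quantity $\Theta_0$ enters only through this subtraction (i.e. through the location of the essential spectrum), not through any cross term in the energy.
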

\begin{proof}
	Fix $a \in [-1,1)\setminus\{0\}$ and $\alpha \in(0,\pi)$.  Recall the notation in the introduction of Section~\ref{sec:new_model}. There exists  a function $\varphi \in H^1_\mathrm{loc}(\R_+^2)$ such that the vector potential  $\Ab_{\alpha,a}$ satisfies on $\R_+^2$ 
	\begin{equation*}
\Ab_{\alpha,a}=\breve \sigma\Ab_1+\nabla\varphi,
	\end{equation*} 
	where $\Ab_1(x)=1/2(-x_2,x_1)$ (with $x=(x_1,x_2)$) and
	\begin{equation*}
	\breve\sigma(x)=
	\left\{
	\begin{array}{ll}
	1& \mathrm{if}\ x\in D_\alpha^1,\\
	a& \mathrm{if}\ x\in D_\alpha^2
	\end{array}
	\right.
	\end{equation*}
~\cite[Lemma 1.1]{leinfelder1983gauge}. An explicit definition of this function is the following:  
\begin{equation*}
\mathrm{For}\ \alpha\in(0,\pi/2],\ \varphi(x)=
\left\{
\begin{array}{ll}
\frac 12x_1x_2+\frac {a-1}{2} \cot \alpha\, x_2^2 & \mathrm{if}\ x\in D_\alpha^1,\\
\frac a2 x_1x_2& \mathrm{if}\ x\in D_\alpha^2,
\end{array}
\right.
\end{equation*}
\begin{equation*}
\mathrm{For}\ \alpha\in(\pi/2,\pi),\ \varphi(x)=
\left\{
\begin{array}{ll}
\frac 12 x_1x_2 & \mathrm{if}\ x\in D_\alpha^1,\\
\frac a2x_1x_2+\frac {1-a}{2} \cot \alpha\, x_2^2& \mathrm{if}\ x\in D_\alpha^2.
\end{array}
\right.
\end{equation*}
Hence, considering the quadratic form
\begin{equation*}\label{eq:q1}
\breve q(v)=\int_{\R^2_+}\big|(\nabla-i\breve \sigma\Ab_1)v\big|^2\,dx,
\end{equation*}
with domain
\[\dom \breve q=\left\{v \in L^2(\R_+^2)~:~(\nabla-i\breve \sigma\Ab_1)v \in L^2(\R_+^2) \right\},\]
we get for all $v\in \dom \breve q$
\begin{equation}\label{eq:q0}
\breve q(v)=q_{\alpha,a}(e^{i\varphi}v).
\end{equation}
The quadratic form $\breve q$ is expressed in polar coordinates $(\rho,\theta) \in \breve D^{pol}:=(0,+\infty)\times(0,\pi)$ as follows
	\begin{equation*}
	\breve q^{pol}(v)=\int_0^\pi\int_0^{+\infty}\Big(|\partial_\rho v|^2+\frac 1 {\rho^2}\Big|\Big(\partial_{\theta}-i\breve\sigma^{pol}\frac {\rho^2}2 \Big)v\Big|^2\Big)\rho\,d\rho\,d{\theta}
	,
	\end{equation*}
	where $\breve{\sigma}^{pol}(\rho,\theta)=\breve{\sigma}(x_1,x_2)$ and
	\[\dom \breve q^{pol}=\left\{v \in L^2_\rho(\breve D^{pol})~:~\partial_\rho v \in L^2_\rho(\breve D^{pol})\,,\, \frac 1 \rho\Big(\partial_{\theta}-i\breve\sigma^{pol}\frac {\rho^2}2\Big)v \in L^2_\rho(\breve D^{pol})  \right\}.\]
	For any set $D \subset \R^2$, $L^2_\rho(D)$ denotes the weighted space of weight $\rho$. Just to easily follow the computation steps in~\cite{exner2018bound}, we consider further the quadratic form $\tilde{q}^{pol}$, defined on $\tilde{D}^{pol}:=(0,+\infty)\times(-\pi+\alpha,\alpha)$ by
	\begin{equation*}
	\tilde{q}^{pol}(u)=\int_{-\pi+\alpha}^\alpha\int_0^{+\infty}\Big(|\partial_\rho u|^2+\frac 1 {\rho^2}\Big|\Big(\partial_{\theta}+i\tilde{\sigma}^{pol}\frac {\rho^2}2 \Big)u\Big|^2\Big)\rho\,d\rho\,d\theta,
	\end{equation*}
	where
	\[\dom \tilde{q}^{pol}=\left\{u \in L^2_\rho(\tilde{D}^{pol})~:~\partial_\rho u \in L^2_\rho(\tilde{D}^{pol})\,,\, \frac 1 \rho\Big(\partial_{\theta}+i\tilde{\sigma}^{pol}\frac {\rho^2}2\Big)u \in L^2_\rho(\tilde{D}^{pol})  \right\},\]
	and \begin{equation}
	\tilde{\sigma}^{pol}(\rho,\theta)=
	\left\{
	\begin{array}{ll}
	a& \mathrm{if}\ (\rho,\theta)\in (0,+\infty)\times(-\pi+\alpha,0),\\
	1& \mathrm{if}\ (\rho,\theta)\in (0,+\infty)\times(0,\alpha).
	\end{array}
	\right.
	\end{equation} 
	Performing a suitable symmetry and  rotation of domain,
	we get for all $u\in \dom
	\tilde{q}^{pol}$ 
	\begin{equation}\label{eq:q2}
	\tilde{q}^{pol}(u)=\breve{q}^{pol}(v),
	\end{equation}
	where $v(\rho,\theta)=u(\rho,-\theta+\alpha)$. 
	
	In light of the above discussion (more precisely using~\eqref{eq:q0} and~\eqref{eq:q2}), a sufficient condition for $\mu(\alpha,a)$  to be an eigenvalue is to find a test function $u_* \in \dom \tilde{q}^{pol}$  satisfying
	\begin{equation}\label{eq:bnd_stat_22}
	\tilde{q}^{pol}(u_*)<|a|\Theta_0\|u_*\|^2_{L^2_\rho(\tilde{D}^{pol})}.\end{equation}
	This  follows from Remark~\ref{rem:v_0} and the min-max principle. To this end, we consider the  function 
	\[u_*(\rho,\theta)=e^{-\beta \frac {\rho^2}2}e^{-i\rho g(\theta)},\]
	where  $g\colon(-\pi+\alpha,\alpha) \rightarrow \R$ is a piecewise-differentiable function,  $\beta>0$, $g$ and $\beta$ to be suitably chosen later. We define the functional $\mathcal I$ on $\dom \tilde{q}^{pol}$ by
	\[u\mapsto \mathcal I[u]=\tilde{q}^{pol}(u)-|a|\Theta_0\|u\|^2_{L^2_\rho(\tilde{D}^{pol})}.\]
	Then establishing~\eqref{eq:bnd_stat_22}  is equivalent to showing that 
	\begin{equation}\label{eq:bnd_stat_21}
	\mathcal I[u_*]<0.
	\end{equation} 
    An elementary computation yields
	\begin{align*}
	\mathcal I[u_*]&=\int_0^{+\infty} \rho e^{-\beta \rho^2}\,d\rho \int_{-\pi+\alpha}^0 \Big(g^2+(\partial_\theta g)^2-|a|\Theta_0\Big)\,d\theta\\&\quad- \int_0^{+\infty} \rho^2 e^{-\beta \rho^2}\,d\rho \int_{-\pi+\alpha}^0 a\partial_\theta g\,d\theta\\
	&\quad+\int_0^{+\infty} \rho e^{-\beta \rho^2}\,d\rho \int_0^\alpha \Big( g^2+(\partial_\theta g)^2-|a|\Theta_0\Big)\,d\theta- \int_0^{+\infty} \rho^2 e^{-\beta \rho^2}\,d\rho \int_0^\alpha \partial_\theta g\,d\theta\\
	&\quad+\Big(\pi\beta^2+\frac 1 4\big(\alpha+a^2(\pi-\alpha)\big)\Big)\int_0^{+\infty}\rho^3 e^{-\beta \rho^2}\,d\rho.
	\end{align*}
	Let $\mathcal E_n=\int_0^{+\infty} \rho^n e^{-\beta \rho^2}\,d\rho$, for $n\geq0$. We use the equalities 
	$\mathcal E_1=1/(2\beta)$, $\mathcal E_2=\sqrt \pi/(4 \beta^{3/2})$, and $\mathcal E_3=1/(2\beta^2)$
	\cite[Equations 3.461]{gradshteyn2015table} to conclude that 
	\begin{align}
	\mathcal I[u_*]&=\frac 1{2\beta} \int_{-\pi+\alpha}^0 \Big(g^2+(\partial_\theta g)^2\Big)\,d\theta-a\frac {\sqrt \pi}{4 \beta^{\frac 32}} g(\theta)\Big|_{-\pi+\alpha}^0 \nonumber\\
	&\quad+\frac 1{2\beta} \int_0^\alpha \Big(g^2+(\partial_\theta g)^2\Big)\,d\theta-\frac {\sqrt \pi}{4 \beta^{\frac 32}} g(\theta)\Big|_0^\alpha +\frac \pi 2-\frac {|a|\Theta_0\pi}{2\beta}+\frac 1{8 \beta^2} \big(\alpha+a^2(\pi-\alpha)\big). \label{eq:bnd_stat_23}
	\end{align}
	We choose further
	\begin{equation*}
	g(\theta)=
	\left\{
	\begin{array}{ll}
	c_1 e^\theta+c_2 e^{-\theta}& \mathrm{if}\ -\pi+\alpha<\theta\leq 0,\\
	c_3 e^\theta+c_4 e^{-\theta}& \mathrm{if}\ 0<\theta<\alpha,
	\end{array}
	\right.
	\end{equation*} 
	where $c_1,c_2,c_3,c_4$ are real coefficients satisfying $c_1+c_2=c_3+c_4$.
	This condition on the coefficients is imposed to guarantee the continuity of the function $g$.
	The choice of $g$ is motivated by a similar one in~\cite[Section~2.1]{exner2018bound}, which was optimal within a certain class of test functions.  We plug this $g$ into~\eqref{eq:bnd_stat_23} and get
\begin{multline*}
\mathcal I[u_*]=\frac{(2-e^{-2\alpha}-e^{-2\pi+2\alpha})}{2\beta}c_1^2+\frac{(-e^{-2\alpha}+e^{2\pi-2\alpha})}{2\beta}c_2^2
+\frac{(-e^{-2\alpha}+e^{2\alpha})}{2\beta}c_3^2+\\
\frac{(1-e^{-2\alpha})}{\beta}c_1c_2+\frac{(-1+e^{-2\alpha})}{\beta}c_1c_3+\frac{(-1+e^{-2\alpha})}{\beta}c_2c_3+\\
\frac {(1-a-e^{-\alpha}+ae^{-\pi+\alpha})\sqrt{\pi}}{4\beta^{\frac 32}}c_1
+\frac{(1-a-e^{-\alpha}+ae^{\pi-\alpha})\sqrt{\pi}}{4\beta^{\frac 32}}c_2+\\
\frac{(e^{-\alpha}-e^{\alpha})\sqrt{\pi}}{4\beta^{\frac 32}}c_3+\frac{4\pi\beta^2-4\pi\beta |a|\Theta_0+ a^2(\pi-\alpha)+\alpha}{8\beta^2}.
\end{multline*}
$\mathcal I[u_*]$ is a quadratic expression in $c_1$, $c_2$ and $c_3$. Minimizing $\mathcal I[u_*]$ with respect to these coefficients yields a unique solution $(c_1,c_2,c_3)$, where
	\begin{align*}
	c_1=& \frac{e^{\pi-2\alpha}\big((-1+a)e^{\pi}+(-1+a)e^{\pi+2\alpha}+2e^\alpha(-a+e^\pi)\big)\sqrt{\pi}\big(-1+\coth(\pi)\big)}{16\sqrt{\beta}} \\
	c_2=&\frac{\big(-1+a+(-1+a)e^{2\alpha}-2(-1+ae^\pi) e^\alpha\big)\sqrt{\pi}\big(-1+\coth(\pi)\big)}{16\sqrt{\beta}}\\
	c_3=&\frac{e^{-\alpha}\big(-a+e^\pi+(-1+a)\cosh(\pi-\alpha)\big)\sqrt{\pi}\csch(\pi)}{8\sqrt{\beta}}.
	\end{align*}
	We compute the corresponding $\mathcal I[u_*]$, taking $x=1/\beta$. We get $\mathcal I[u_*]= P_{\alpha,a}(x)$.
	This result together with~\eqref{eq:bnd_stat_21} complete the proof.
\end{proof}
\paragraph{{\bf Computation}} Bonnaillie has established in~\cite{bonnaillie2012harmonic} a lower bound, $\Theta_0^\mathrm{low}$, of $\Theta_0$ equal to $0.590106125-10^{-9}$.  For each $\alpha \in (0,\pi)$,  $a \in [-1,1)\setminus\{0\}$ and $x>0$  we set
$	P_{\alpha,a,\Theta_0^\mathrm{low}}(x)=Ax^2-\pi/2|a|\Theta_0^\mathrm{low}x+\pi/2,$
	 for $A$ in Proposition~\ref{prop:bnd_stat_2}.
then $P_{\alpha,a}(x)\leq P_{\alpha,a,\Theta_0^\mathrm{low}}(x)$.  Our \emph{rigorous} computation  shows that, for all $\alpha \in (0,\pi)$ and $a \in [-1,1)\setminus\{0\}$,  $P_{\alpha,a,\Theta_0^\mathrm{low}}(x)$ admits a minimum with respect to $x$, attained at a positive value $x_0=x_0(\alpha,a)$. Then, we use Mathematica to  plot the region of  the pairs $(\alpha,a)$ where  $\min_{x>0} P_{\alpha,a,\Theta_0^\mathrm{low}}(x)=P_{\alpha,a,\Theta_0^\mathrm{low}}(x_0)<0$. The shaded region in Figure~\ref{fig:bnd_stat_2} represents these pairs. Consequently, the corresponding  $P_{\alpha,a}(x_0)$ is negative and the corresponding $\mu(\alpha,a)$ is an eigenvalue. 
\begin{figure}
		\centering
		\includegraphics[scale=0.4]{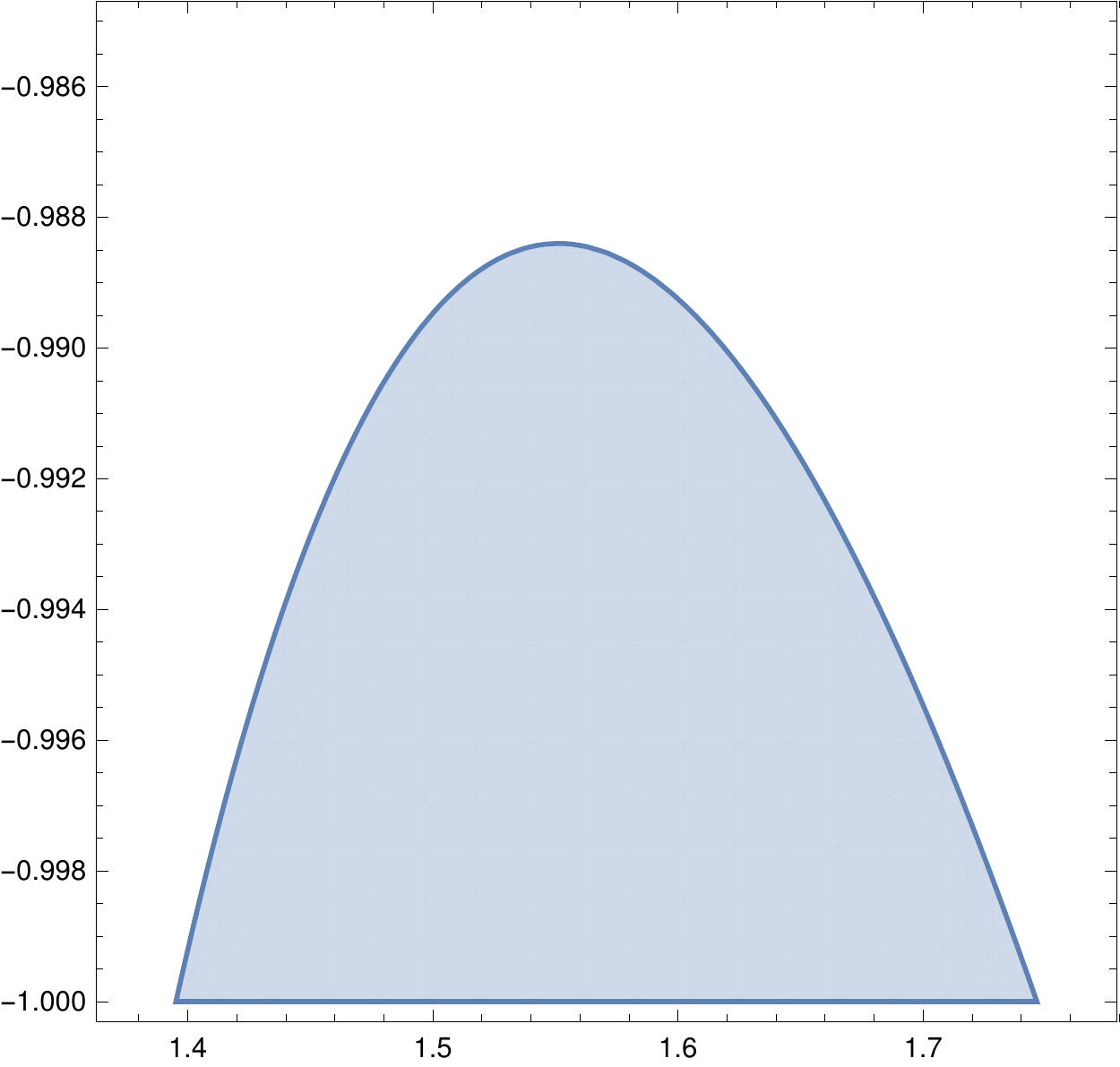}
	\caption{ The horizontal axis represents the angles $\alpha$ and the vertical axis represents the values of $a$. For $(\alpha,a)$ in the shaded region, $\mu(\alpha,a)$ is an eigenvalue.}
	\label{fig:bnd_stat_2}
\end{figure}
	
In the case where $\mu(\alpha,a)$ is the lowest eigenvalue of the operator $\mathcal H_{\alpha,a}$, let $v_{\alpha,a}$ be a corresponding normalized eigenfunction.
The following theorem reveals a decay of the eigenfunction $v_{\alpha,a}$, for large values of $|x|$. We omit the proof of Theorem~\ref{thm:v0-decay}, and we refer for details to the  similar proof in~\cite[Theorem~9.1]{bonnaillie2003analyse}.
\begin{theorem}\label{thm:v0-decay}
	Let $\alpha \in (0,\pi)$ and $a \in [-1,1)\setminus \{0\}$. Consider the case where $\mu(\alpha,a)$ is the lowest eigenvalue of the operator $\mathcal H_{\alpha,a}$ introduced in~\eqref{eq:P_alfa}, and let $v_{\alpha,a}$ be a corresponding normalized eigenfunction. For all $\delta$ such that $0<\delta<|a|\Theta_0-\mu(\alpha,a)$, there exists a constant $C_{\delta,\alpha}$  such that
	\[\|v_{\alpha,a}e^\phi\|_{L^2(\R^2_+)}+q_{\alpha,a}(v_{\alpha,a}e^\phi)\leq C_{\delta,\alpha},\]
	where  
	 $q_{\alpha,a}$ is the quadratic form in~\eqref{eq:qa}, and $\phi$ is a function defined in $\R^2_+$ as follows:	\[\phi(x)=\sqrt{|a|\Theta_0-\mu(\alpha,a)-\delta}\,|x|\,,\qquad \mbox{for all}\ x\in \R^2_+.\]
\end{theorem}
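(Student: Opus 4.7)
The plan is to follow the classical Agmon strategy adapted to our half-plane step-field setting. Set $\Lambda:=|a|\Theta_0-\mu(\alpha,a)>0$ (positive by hypothesis) and fix $\delta\in(0,\Lambda)$. For a parameter $\varepsilon>0$ that will be sent to $0$ at the end, introduce the regularised Agmon weight
\[
\phi_\varepsilon(x)=\sqrt{\Lambda-\delta}\sqrt{|x|^2+\varepsilon^2},
\]
which is smooth, bounded on each compact and satisfies $|\nabla\phi_\varepsilon|^2\leq \Lambda-\delta$ everywhere. The target estimate will follow from monotone convergence as $\varepsilon\to 0$.

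The main computation is the IMS-type identity. Writing $v=v_{\alpha,a}$ for brevity and using that $v\in\dom\mathcal H_{\alpha,a}$ (so in particular $e^{\phi_\varepsilon}v\in\dom q_{\alpha,a}$, since $\phi_\varepsilon$ is bounded on compacts and the weight grows only linearly while $v$ is known to be in $L^2$), I would test the eigenvalue equation with $e^{2\phi_\varepsilon}\bar v$ and perform integration by parts, using the Neumann condition $(\nabla-i\Ab_{\alpha,a})v\cdot(0,1)=0$ on $\partial\R^2_+$ to kill the boundary terms. Because $\Ab_{\alpha,a}\in H^1(\R^2_+;\R^2)$, the piecewise character of the potential along $\Gamma=\{\theta=\alpha\}$ does not produce interface contributions. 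Expanding $(\nabla-i\Ab_{\alpha,a})(e^{\phi_\varepsilon}v)=e^{\phi_\varepsilon}(\nabla\phi_\varepsilon)v+e^{\phi_\varepsilon}(\nabla-i\Ab_{\alpha,a})v$ then yields the standard identity
\begin{equation*}
q_{\alpha,a}(e^{\phi_\varepsilon}v)-\||\nabla\phi_\varepsilon|\,e^{\phi_\varepsilon}v\|_{L^2(\R^2_+)}^2=\mu(\alpha,a)\|e^{\phi_\varepsilon}v\|_{L^2(\R^2_+)}^2.
\end{equation*}

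The next step is to localise. Let $r>0$ large (to be fixed) and $(\varphi_r,\chi_r)$ be the partition of unity of $\R^2$ already used in the proof of Proposition~\ref{prop:mu-continuity} ($\supp\varphi_r\subset B_r$, $\supp\chi_r\subset B_{r/2}^c$, $|\nabla\varphi_r|^2+|\nabla\chi_r|^2\leq C/r^2$). Applying the IMS formula to $w_\varepsilon:=e^{\phi_\varepsilon}v$, one has $q_{\alpha,a}(w_\varepsilon)\geq q_{\alpha,a}(\chi_r w_\varepsilon)+q_{\alpha,a}(\varphi_r w_\varepsilon)-C r^{-2}\|w_\varepsilon\|_{L^2(\R^2_+)}^2$. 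Since $\chi_r w_\varepsilon\in\mathcal M_{r/2}$, Lemma~\ref{lem:out_B_r} gives
\[
q_{\alpha,a}(\chi_r w_\varepsilon)\geq\Big(|a|\Theta_0-\tfrac{C}{r^2}\Big)\|\chi_r w_\varepsilon\|_{L^2(\R^2_+)}^2.
\]
Combining with the identity above, using $|\nabla\phi_\varepsilon|^2\leq\Lambda-\delta$, and noting $\|\varphi_r w_\varepsilon\|_{L^2(\R^2_+)}^2\leq e^{2\sqrt{\Lambda-\delta}\sqrt{r^2+\varepsilon^2}}\|v\|_{L^2(\R^2_+)}^2$ on the support of $\varphi_r$, one obtains
\begin{equation*}
\Big(\delta-\tfrac{C}{r^2}\Big)\|\chi_r w_\varepsilon\|_{L^2(\R^2_+)}^2\leq C_{r,\delta,\varepsilon}\|v\|_{L^2(\R^2_+)}^2,
\end{equation*}
after absorbing the interior piece into the right-hand side. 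Choosing $r$ once and for all so large that $\delta-C/r^2>\delta/2$ yields an $\varepsilon$-uniform bound $\|e^{\phi_\varepsilon}v\|_{L^2(\R^2_+)}\leq C_{\delta,\alpha}$. Letting $\varepsilon\to 0$ and invoking monotone convergence gives the $L^2$-decay statement.

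The magnetic gradient estimate $q_{\alpha,a}(v e^{\phi})\leq C_{\delta,\alpha}$ is then obtained by inserting the $L^2$-bound back into the IMS identity (with $\phi_\varepsilon$ replaced by $\phi$ in the limit): the right-hand side $\mu(\alpha,a)\|e^\phi v\|^2+\||\nabla\phi|e^\phi v\|^2$ is now controlled. The main obstacles I expect are: (i) justifying the integration by parts on the unbounded domain with the step potential — handled by the $H^1$-regularity of $\Ab_{\alpha,a}$, the Neumann boundary condition, and the fact that $\phi_\varepsilon$ is smooth bounded; and (ii) making sure $e^{\phi_\varepsilon}v$ lies in $\dom q_{\alpha,a}$ — which is where the regularisation $\sqrt{|x|^2+\varepsilon^2}$ (bounded on compacts, globally Lipschitz with Lipschitz constant $\sqrt{\Lambda-\delta}$) and the $\varepsilon\to 0$ argument are needed.
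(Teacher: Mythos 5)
The paper itself does not supply a proof of Theorem~\ref{thm:v0-decay}; it simply refers to~\cite[Theorem~9.1]{bonnaillie2003analyse}. Your outline follows the standard Agmon strategy that that reference uses: the commutator identity $q_{\alpha,a}(e^{\phi}v)=\mu(\alpha,a)\|e^{\phi}v\|^2+\||\nabla\phi|e^\phi v\|^2$, a partition of unity into $B_r$ and its complement, and Lemma~\ref{lem:out_B_r} to control the exterior piece by $|a|\Theta_0-C/r^2$. That skeleton is correct and is exactly what one wants.

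However, there is a genuine gap in the justification of the very first step. You assert that $e^{\phi_\varepsilon}v\in\dom q_{\alpha,a}$ ``since $\phi_\varepsilon$ is bounded on compacts and the weight grows only linearly while $v$ is known to be in $L^2$.'' This is circular: from $v\in L^2(\R^2_+)$ alone you cannot conclude $e^{\phi_\varepsilon}v\in L^2(\R^2_+)$ for a weight that grows linearly at infinity --- a generic $L^2$ function decaying only polynomially would give $e^{\phi_\varepsilon}v\notin L^2$, and establishing precisely that $v$ decays exponentially is the content of the theorem. The regularisation $\phi_\varepsilon(x)=\sqrt{\Lambda-\delta}\sqrt{|x|^2+\varepsilon^2}$ does not help here: it removes the mild non-smoothness at the origin (unnecessary, since $|x|$ is already Lipschitz and that suffices for the IMS computation), but it leaves the linear growth at infinity untouched, so the testing of the eigenvalue equation against $e^{2\phi_\varepsilon}\bar v$ and the resulting integration by parts are not justified a priori.

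The standard fix, and the one you should insert, is a \emph{truncation} of the weight rather than a smoothing: set $\phi_N(x)=\min\bigl(\sqrt{\Lambda-\delta}\,|x|,\,N\bigr)$, which is bounded and Lipschitz with $|\nabla\phi_N|\le\sqrt{\Lambda-\delta}$ a.e.\ and $\nabla\phi_N=0$ where $\phi_N=N$. Then $e^{\phi_N}v\in L^2(\R^2_+)$ trivially, the identity
\[
q_{\alpha,a}\bigl(e^{\phi_N}v\bigr)=\mu(\alpha,a)\bigl\|e^{\phi_N}v\bigr\|_{L^2(\R^2_+)}^2+\bigl\||\nabla\phi_N|\,e^{\phi_N}v\bigr\|_{L^2(\R^2_+)}^2
\]
is licit, and your localisation step gives, for $r$ fixed large enough, a bound $\|e^{\phi_N}v\|_{L^2(\R^2_+)}\le C_{\delta,\alpha}$ that is \emph{uniform in $N$} (on $B_r$ you of course estimate $\phi_N\le\sqrt{\Lambda-\delta}\,r$, which is $N$-independent). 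Letting $N\to+\infty$ and invoking monotone convergence yields the $L^2$ decay, and reinserting it into the identity gives the bound on $q_{\alpha,a}(ve^\phi)$ as you describe. With this replacement of ``smooth at the origin'' by ``bounded at infinity'' the argument is complete; everything else in your outline (the IMS splitting, the role of Lemma~\ref{lem:out_B_r}, the choice $\delta-C/r^2>\delta/2$) is sound.
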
 
\section{The linear problem}\label{sec:lin_prob}

\subsection{The linear operator}\label{sec:intro2}
The parameter dependent magnetic Shr\"odinger operator has been extensively studied in the literature in the case of regular/corner planar domains, submitted to \emph{smooth} magnetic fields (see e.g.~\cite{helffer1996semiclassical,bernoff1998onset,lu1999estimates,lu2000gauge,helffer2001magnetic,bonnaillie2005fundamental,fournais2010spectral}). 

Let $\mathfrak b>0$, $\Eb \in H^1(\Om,\R^2)$.  We consider the Neumann realization of the self-adjoint operator in the domain $\Om$ (satisfying Assumption~\ref{assump1}):
\begin{align}
\mathcal P_{\mathfrak b,\Eb}&=-(\nabla-i\mathfrak b\Eb)^2 \quad\mathrm{with} \label{eq:P0}\\
\dom \mathcal P_{\mathfrak b,\Eb}&=\big\{u\in L^2(\Om)~:~(\nabla-i\mathfrak b\Eb)^j u \in L^2(\Om),\,j\in \{1,2\},\,(\nabla-i\mathfrak b\Eb)\cdot\nu|_{\partial \Om}=0\big\}, \nonumber
\end{align}
where $\nu$ is a unit  normal vector of $\partial \Om$. The associated quadratic form is
\begin{align}\label{eq:Quad}
Q_{\mathfrak b,\Eb}(u)&=\int_{\Om}\big|(\nabla-i\mathfrak b\Eb)u\big|^2\,dx\quad \mathrm{with}\\
\dom Q_{\mathfrak b,\Eb}&=\left\{u\in L^2(\Om)~:~ (\nabla-i\mathfrak b\Eb)u \in L^2(\Om)\right\}\nonumber.
\end{align}

If $\Eb=\Fb$, where \emph{$\Fb  \in \Hd$ is the magnetic potential in~\eqref{A_1} satisfying $\curl\Fb= B_0={\mathbbm 1}_{\Om_1}+a{\mathbbm 1}_{\Om_2}$ for a fixed $a\in[-1,1)\setminus\{0\}$}, then the operator and the form domains are respectively
\begin{equation}\label{eq:domP_F}
\dom \mathcal P_{\mathfrak b,\Fb}=\big\{u\in H^2(\Om)~:~\nabla u\cdot\nu|_{\partial \Om}=0\big\}\quad \mathrm{and}\quad \dom Q_{\mathfrak b,\Fb}=H^1(\Om).
\end{equation}

 The bottom of the spectrum 
 \begin{equation}\label{eq:lmda_a}
 \lambda(\mathfrak b)=\inf_{\substack{u\in \dom Q_{\mathfrak b,\Fb}\\ u \neq 0 }}\frac{Q_{\mathfrak b,\Fb}(u)}{\|u\|^2_{L^2(\Om)}}.
 \end{equation} 
 is an eigenvalue.
\begin{rem}
	Compared to smooth magnetic fields cases, an extra argument is required to establish that the domains of $\mathcal P_{\mathfrak b,\Fb}$ and $Q_{\mathfrak b,\Fb}$ are independent of the parameter $\mathfrak b$, as in~\eqref{eq:domP_F}, in our case of a step magnetic field ($\curl \Fb=B_0$). This argument is given in Appendix~\ref{sec:regularity}. The independence of the domains from  $\mathfrak b$ will be crucial while applying the perturbation theory in Proposition~\ref{prop:monot} later.
\end{rem}
\subsection{Bounds of the ground-state energy}\label{sec:bound}
	\begin{theorem}\label{thm:lambda_lin}
	 Under Assumption~\ref{assump3}, there exist $\mathfrak b_0, C>0$  such that for all $\mathfrak b\geq \mathfrak b_0$, we have
		\[-Cb^{\frac 34}\leq \lambda(\mathfrak b)-\mathfrak b\min_{j\in\{1,...,n\}}\mu(\alpha_j,a)\leq C\mathfrak b^{\frac 35}.\]
	\end{theorem}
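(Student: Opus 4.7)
The plan is to prove the two bounds separately, following the strategy outlined in Section~\ref{sec:heur}: in both directions we reduce to the four model operators introduced in Sections~\ref{sec:models}--\ref{sec:new_model}, namely $\p_{\mathfrak b, \R^2}$ and its Neumann counterpart on a half-plane, $\mathcal L_a$, and the new operator $\mathcal H_{\alpha_j, a}$. The key ingredient making the intersection points drive the leading term is the spectral ordering $\mu(\alpha_j, a) \leq |a|\Theta_0 \leq \beta_a \leq |a|$ provided by Remark~\ref{rem:v_0} and the results recalled in Section~\ref{sec:La}.

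For the lower bound, I would fix a length scale $\ell = \ell(\mathfrak b)$ and a partition of unity $(\chi_j)_j$ of $\overline{\Omega}$ subordinate to a covering by (half-)balls of radius $\ell$, with one ball centred at each $\mathsf p_j$, satisfying $\sum_j \chi_j^2 = 1$ and $\sum_j |\nabla \chi_j|^2 \leq C\ell^{-2}$. The IMS localization formula gives
\begin{equation*}
Q_{\mathfrak b, \Fb}(u) \geq \sum_j Q_{\mathfrak b, \Fb}(\chi_j u) - C\ell^{-2}\|u\|^2_{L^2(\Omega)}.
\end{equation*}
On each piece I would apply a local diffeomorphism straightening the nearby portion of $\partial \Omega$ and/or $\Gamma$, together with a gauge transformation relating $\Fb$ to the corresponding model potential. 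The four types of local models then yield respective lower bounds $\mathfrak b$, $\mathfrak b|a|\Theta_0$, $\mathfrak b\beta_a$, and $\mathfrak b\mu(\alpha_j, a)$; by the spectral ordering above, the infimum is $\mathfrak b \min_j \mu(\alpha_j, a)$. Each localization introduces a metric-and-gauge remainder of order $\mathfrak b\ell^3$ stemming from the cubic Taylor remainder of the diffeomorphism in a ball of radius $\ell$. Choosing $\ell = \mathfrak b^{-3/8}$ balances $\ell^{-2}$ against $\mathfrak b\ell^3$ and produces the claimed $\mathcal O(\mathfrak b^{3/4})$ error.

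For the upper bound, Assumption~\ref{assump3} combined with Remark~\ref{rem:v_0} supplies an index $j^*$ for which $\mu_* := \min_j \mu(\alpha_j, a) = \mu(\alpha_{j^*}, a)$ is an eigenvalue of $\mathcal H_{\alpha_{j^*}, a}$, with a normalized eigenfunction $v_* = v_{\alpha_{j^*}, a}$ decaying exponentially at infinity by Theorem~\ref{thm:v0-decay}. I would construct a quasimode
\begin{equation*}
\tilde u(x) = \chi\!\big(\mathfrak b^\sigma \Psi(x)\big)\, e^{i\mathfrak b \varphi(x)}\, v_*\!\big(\sqrt{\mathfrak b}\,\Psi(x)\big),
\end{equation*}
where $\Psi$ is a local diffeomorphism straightening $\partial \Omega$ and $\Gamma$ near $\mathsf p_{j^*}$ into the model configuration of angle $\alpha_{j^*}$, $\varphi$ is a gauge arranging that $\Fb$ coincides with $\Ab_{\alpha_{j^*}, a}$ up to a quadratic remainder, $\chi$ is a smooth compactly supported cutoff, and $\sigma \in (0, 1/2)$ is to be optimized. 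Rescaling via $y = \sqrt{\mathfrak b}\,\Psi(x)$ and inserting $\tilde u$ into the Rayleigh quotient, the leading term equals $\mathfrak b \mu_*$, and three types of corrections appear: (i) the Jacobian expansion of $\Psi$, of order $\mathcal O(\mathfrak b^{1-2\sigma})$; (ii) the residual gauge-potential error, of order $\mathcal O(\mathfrak b^{1-3\sigma})$; (iii) the cutoff commutator, which is super-polynomially small by Theorem~\ref{thm:v0-decay}. Optimizing $\sigma$ yields the claimed $\mathcal O(\mathfrak b^{3/5})$ error.

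The main obstacle is the local analysis at each $\mathsf p_j$. Unlike the smooth-field case, $\Fb$ is only $H^1$ across $\Gamma$, so the Taylor approximation by $\Ab_{\alpha_j, a}$ must be carried out sector-by-sector, and the gauge $\varphi$ must agree continuously on the two sides of $\Gamma$; the specific choice of $\Ab_{\alpha_j, a}$ in~\eqref{eq:Aa1}--\eqref{eq:Aa3} is tailored so that such a compatible gauge exists. A secondary technical point is the lower bound in balls meeting $\Gamma$ but not $\partial \Omega$, where the reduction to $\mathcal L_a$ on the full plane requires extending $\chi_j u$ by zero across the artificial boundary while preserving the magnetic energy in the presence of the jump in $\curl \Fb$.
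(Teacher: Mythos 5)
Your overall strategy is exactly the paper's: IMS localization on balls of scale $\ell$, reduction in each ball to the four model operators, the spectral ordering $\min_j\mu(\alpha_j,a)\leq |a|\Theta_0\leq\beta_a\leq |a|$ for the lower bound, and a quasimode built from $v_{\alpha_{j^*},a}$, the diffeomorphism $\Psi$, and the gauge of Lemma~\ref{lem:gauge1} for the upper bound, with cutoff errors absorbed via Theorem~\ref{thm:v0-decay}. You also correctly identify the two technical obstacles (sector-by-sector gauge compatibility across $\Gamma$, and zero-extension in the $\mathcal L_a$ zone).

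However, your error book-keeping is not internally consistent and does not produce the stated exponents. For the lower bound you claim the remainder is of order $\mathfrak b\ell^3$ and that it balances $\ell^{-2}$ at $\ell=\mathfrak b^{-3/8}$, but with this $\ell$ one gets $\ell^{-2}=\mathfrak b^{3/4}$ while $\mathfrak b\ell^3=\mathfrak b^{-1/8}$; moreover solving $\ell^{-2}=\mathfrak b\ell^3$ would give $\ell=\mathfrak b^{-1/5}$ and an error $\mathfrak b^{2/5}$, which is not what the theorem claims. What actually happens is that the gauge residual is $|f|\lesssim\ell^2$, and after a Cauchy inequality with an auxiliary weight $\mathfrak b^{\pm\delta}$ (or equivalently the cross term $\mathfrak b^{3/2}\ell^2$ between $\mathfrak b f$ and the $\sqrt{\mathfrak b}$-scale gradient) one must balance $\ell^{-2}\sim\mathfrak b^{1-\delta}\sim \mathfrak b^2\ell^4\,\mathfrak b^{\delta}$, which forces $\ell=\mathfrak b^{-3/8}$, $\delta=1/4$, and error $\mathfrak b^{3/4}$; the Cauchy-optimization step with $\delta$ (or $\eta$) is what your write-up is missing. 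Similarly, for the upper bound your error (i) ``Jacobian $\mathcal O(\mathfrak b^{1-2\sigma})$'' and (ii) ``gauge $\mathcal O(\mathfrak b^{1-3\sigma})$'' would both be minimized by $\sigma\to 1/2^-$ and could never produce $\mathfrak b^{3/5}$. The metric/Jacobian error on a ball of radius $\ell=\mathfrak b^{-\sigma}$ is $\mathcal O(\mathfrak b\ell)=\mathcal O(\mathfrak b^{1-\sigma})$, not $\mathfrak b^{1-2\sigma}$, and the gauge residual contributes only $\mathcal O(\mathfrak b^{1/2})$ after rescaling because the quartic moments of $v_0$ are controlled by its exponential decay; with $\sigma=2/5$ the Jacobian term $\mathfrak b^{3/5}$ dominates, giving the stated bound.
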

	Note that the error in the upper bound can be improved to be $\mathcal O(\mathfrak b^{1-\rho})$, for any $\rho \in (0,1/2)$ (see Remark~\ref{rem:up_opt}).
	
	The rest of this section is devoted to the proof of Theorem~\ref{thm:lambda_lin}. More precisely, the lower and upper bounds in this theorem are  established in Proposition~\ref{prop:Lower3} and~\ref{prop:Upper3} respectively.  The same techniques in~\cite{helffer2001magnetic} and~\cite{bonnaillie2005fundamental} are used here. We introduce a partition of unity to localize our analysis to different zones in $\Om$, then we compare our linear operator to an operator with a constant magnetic field  in $\R^2$ (if the zone is in $\Om\setminus \Gamma$), an operator with a constant magnetic field in $\R^2_+$ (if the zone meets the boundary away from $\Gamma$), an operator with a step magnetic field in $\R^2$, introduced in Section~\ref{sec:La} (if the zone meets $\Gamma$ away from $\partial \Om$), and finally an operator with a step magnetic field in $\R^2_+$,  introduced in Section~\ref{sec:new_model} (if the zone contains intersection points of  $\Gamma$ and $\partial \Om$).
	\subsubsection{Localization using a partition of unity}\label{sec:partition}

	Let $0<\rho<1$. For $R_0>0$, we can find a partition of unity, $\chi_j$, satisfying (when restricted to $\overline \Om$):
	\begin{multline}\label{eq:partition}
	\sum_j|\chi_j|^2=1, \quad \sum_j|\nabla \chi_j|^2 \leq CR_0^{-2}\mathfrak b^{2\rho}\quad
	\mathrm{and}\ \supp(\chi_j)\subset B(z_j,R_0\mathfrak b^{-\rho})\ \mbox{is such that}\\
	\begin{cases}
	\mathrm{either}& \supp(\chi_j)\cap(\partial \Om \cup \Gamma)=\emptyset,\\
	\mathrm{or}& z_j \in \partial \Om\setminus \Gamma\ \mathrm{and }\ \supp(\chi_j)\cap \Gamma=\emptyset,\\
	\mathrm{or}& z_j \in \Gamma\setminus \partial \Om \ \mathrm{and }\ \supp(\chi_j)\cap \partial \Om=\emptyset,\\
	\mathrm{or}& z_j =\mathsf p_j,
	\end{cases}
	\end{multline}
	where $C$ is independent of $R_0$ and $\mathfrak b$. \emph{The index $j$ is chosen such that $z_j=\mathsf p_j$, for $j \in \{1,...,n\}$}, where $\mathsf p_j \in \Gamma \cap \partial \Om$. For $u \in \dom Q_{\mathfrak b,\Fb}$, the IMS formula 
	asserts that
	\begin{multline}\label{eq:partition2}
	Q_{\mathfrak b,\Fb}(u)=\sum_{\blk}	Q_{\mathfrak b,\Fb}(\chi_j u)+\sum_{\bd}	Q_{\mathfrak b,\Fb}(\chi_j u)\\+\sum_{\br}	Q_{\mathfrak b,\Fb}(\chi_j u)+\sum_{\T}	Q_{\mathfrak b,\Fb}(\chi_j u)-\sum_j\big\||\nabla \chi_j|u\big\|^2_{L^2(\Om)},
	\end{multline}
	where
	\begin{align*}
	\blk:= &\{j~:~z_j \in \Om\setminus \Gamma\},&\qquad  \bd:= &\{j~:~z_j \in \partial \Om\setminus \Gamma\},\\
	\br:= &\{j~:~z_j \in \Gamma\setminus \partial \Om\},&\qquad
	\T:= &\{j~:~j=1,...,n \}.
	\end{align*}
	We will  optimize later the choice of $\rho$ and $R_0$ for our various problems.
	\subsubsection{Change of variables}\label{sec:Psi}
	In order to study the energy contribution near $\Gamma \cap \partial \Om$, we will carry out the computation in adapted coordinates in this zone.
	Recall that we are working under  Assumption~\ref{assump1} (see also Notation~\ref{not:alfa}). For $j  \in\{1,...,n\}$, there exist $r_j>0$ and a local diffeomorphism $\Psi=\Psi_j$ of $\R^2$ satisfying the following~(see Appendix~\ref{A:Psi}):  
	\begin{equation}\label{eq:Psi2}
	\Psi(\mathsf p_j)=(0,0)\,,\qquad |J_\Psi|(\mathsf p_j)=|J_{\Psi^{-1}}|(0,0)=1,
	\end{equation}
	and there exists a neighbourhood $\mathcal U_j$ of $(0,0)$ such that 
	
	\[\Psi\big(B(\mathsf p_j,r_j)\cap \Om_1\big) =\mathcal U_j \cap  D^{\alpha_j}_1\,,\quad  \Psi\big(B(\mathsf p_j,r_j)\cap \Om_2\big) =\mathcal U_j \cap  D^{\alpha_j}_2,\]
	and consequently, 
	\[\Psi\big(B(\mathsf p_j,r_j)\cap (\partial \Om_1 \setminus \Gamma)\big) =\mathcal U_j \cap \R_+\times\{0\},\]
	\[\Psi\big(B(\mathsf p_j,r_j)\cap (\partial \Om_2 \setminus \Gamma)\big) =\mathcal U_j \cap \R_-\times\{0\},\]
	\[\Psi\big(B(\mathsf p_j,r_j)\cap \Gamma \big) =\mathcal U_j \cap (\hat x_2=\hat x_1\tan \alpha_j).\]
	Here, $(\hat x_1,\hat x_2):=\Psi(x_1,x_2)$, and the sets $D^{\alpha_j}_1$ and $D^{\alpha_j}_2$ were defined in~\eqref{eq:A_alfa}.
	We assume further that the radii $r_j$ are sufficiently small so that $\big(B(\mathsf p_j,r_j)\big)_{j\in\{1,...,n\}}$ is a family of disjoint balls.	
	The smoothness of $\Psi$, the fact that $\{1,...,n\}$ is finite, the assumptions in~\eqref{eq:Psi2} and a Taylor expansion  prove the existence of $C>0$, \emph{independent of $j$}, such that the Jacobians $J_\Psi$ and $J_{\Psi^{-1}}$ satisfy
	\begin{equation}\label{eq:Psi3}
\big||J_\Psi(x)|-1 \big|\leq C \ell\qquad \mathrm{and}\qquad \big||J_{\Psi^{-1}}(\hat x)|-1 \big|\leq C \ell,
	\end{equation}
	for all $x \in B(\mathsf p_j,\ell) \subset B(\mathsf p_j,r_j)$ and $\hat x=\Psi(x)$. 
	Let $\Eb=(E_1,E_2)\in H^1(\Om;\R^2)$ be such that $\curl \Eb=B$, for $B \in L^2(\R^2)$, 
and let $u \in \dom Q_{\mathfrak b,\Eb}$ (see~\eqref{eq:Quad}) such that $\supp u \subset B(\mathsf p_j,r_j)$. Consider the magnetic potential $\hat {\Eb}=(\hat{E}_1,\hat{E}_2) \in H^1\big(\Psi\big(B(\mathsf p_j,r_j)\big)\cap \R^2_+,\R^2\big)$  satisfying $\hat{E}_1\,d\hat{x}_1+\hat{E}_2\,d\hat{x}_2=E_1\,d x_1+E_2\,dx_2$, and the function $\hat{u}$, defined in $\Psi(B(\mathsf p_j,r_j))\cap \R^2_+$ by $\hat{u}(\hat{x})=u\big(\Psi^{-1}(\hat{x})\big)$.  Furthermore, let 
	\[\hat B(\hat{x})=B\big(\Psi^{-1}(\hat{x})\big),\qquad \mbox{for all}\ \hat x \in \Psi(B(\mathsf p_j,r_j))\cap \R^2_+.\] 
	One can check that 
	\begin{equation}\label{eq:curl-hat}
	\curl \hat{\Eb}=\partial_{\hat{x}_1}\hat{E}_2-\partial_{\hat{x}_2}\hat{E}_1=\hat{B}J_{\Psi^{-1}},
	\end{equation}
	and
	\begin{equation}
	 Q_{\mathfrak b,\Eb}(u)
	=\int_D\sum_{1\leq k,m\leq 2} G_{k,m}(\hat x)\big(\partial_{\hat x_k}-i\mathfrak b\hat E_k\big)\hat {u}(\hat {x})\overline{\big(\partial_{\hat x_m}-i\mathfrak b\hat E_m\big)\hat {u}(\hat {x})}\,|J_{\Psi^{-1}}(\hat {x})|\,d\hat {x}.\label{eq:Q_trans}
	\end{equation}
	Here 
	$D=\Psi(B(\mathsf p_j,r_j))\cap \R^2_+$ and $G_{k,m}(\hat x)$ are the elements of the matrix $G(\hat x)=(d \Psi)(d \Psi)^t\,_{|\Psi^{-1}(\hat x)}$.

Note that $G(0,0)$ is the identity matrix. Then, for any $\ell<r_j$, one may apply Taylor's formula in $\Psi(B(\mathsf p_j,\ell)\big)$ to prove that
	\begin{equation}\label{eq:Bon1}|G_{k,m}(\hat x)-\delta_{k,m}|\leq C\ell,\end{equation}
	for some $C>0$ independent of $j$. 
	The following lemma presents a particular transformation, that will allow us to express a given vector field in a canonical manner.
	\begin{lemma}\label{lem:gauge1}
	Let $a\in [-1,1)\setminus \{0\}$, and $B(0,\ell)\subset \Psi\big(B(\mathsf p_j,r_j)\big)$ be a ball of radius $\ell$.  Consider the  vector potential $\Fb\in \Hd$ satisfying $\curl \Fb=\mathbbm 1_{\Om_1}+a\mathbbm 1_{\Om_2}$.  There exists a function $\varphi_\ell\in H^2\big(B(0,\ell)\cap \R^2_+\big)$ such that the vector potential $\hat \Fb_{\rm g}:=\hat \Fb-\nabla_{\hat{x}_1,\hat{x}_2} \varphi_\ell$, defined in $B(0,\ell)\cap \R^2_+$, satisfies
		\begin{equation*}
		\big(\hat{F}_{\rm g}\big)_1=0,\quad
		\big(\hat{F}_{\rm g}\big)_2= A_{\alpha,a}+f,
		\end{equation*}
		where $A_{\alpha,a}$ is the potential introduced in~\eqref{eq:P_alfa}, $f$ is a continuous function satisfying $|f(\hat{x}_1,\hat{x}_2)|\leq C (\hat{x}^2_1+|\hat x_1\hat x_2|)$, for some $C>0$ independent of $j$.
	\end{lemma}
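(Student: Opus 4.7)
The plan is to construct $\varphi_\ell$ via the classical Poincar\'e-type gauge, modified so as to simultaneously kill the first component of $\hat\Fb$ and make the second component vanish along the $\hat x_2$-axis. Concretely, I would set
\[\varphi_\ell(\hat x_1,\hat x_2)=\int_0^{\hat x_1}\hat F_1(t,\hat x_2)\,dt+\int_0^{\hat x_2}\hat F_2(0,s)\,ds.\]
Then $\partial_{\hat x_1}\varphi_\ell=\hat F_1$, giving $(\hat F_{\rm g})_1\equiv 0$, while $\partial_{\hat x_2}\varphi_\ell(0,\hat x_2)=\hat F_2(0,\hat x_2)$ forces $(\hat F_{\rm g})_2(0,\cdot)\equiv 0$. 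The $H^2$ regularity of $\varphi_\ell$ on $B(0,\ell)\cap\R^2_+$ follows from $\hat\Fb\in H^1$ (the decomposition $\nabla\varphi_\ell=\hat\Fb-\hat\Fb_{\rm g}$ will, once we identify $\hat\Fb_{\rm g}$ below, make each summand an $H^1$ function).

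The identity $\partial_{\hat x_1}(\hat F_{\rm g})_2=\curl\hat\Fb_{\rm g}=\curl\hat\Fb$ together with the vanishing of $(\hat F_{\rm g})_2$ on the $\hat x_2$-axis yields
\[(\hat F_{\rm g})_2(\hat x_1,\hat x_2)=\int_0^{\hat x_1}\curl\hat\Fb(t,\hat x_2)\,dt.\]
By \eqref{eq:curl-hat}, $\curl\hat\Fb=\hat B\,J_{\Psi^{-1}}$ with $\hat B=\mathbbm 1_{D_{\alpha_j}^1}+a\mathbbm 1_{D_{\alpha_j}^2}$ on the image of $\Psi$. Splitting $J_{\Psi^{-1}}=1+(J_{\Psi^{-1}}-1)$ gives
\[(\hat F_{\rm g})_2(\hat x_1,\hat x_2)=\int_0^{\hat x_1}\hat B(t,\hat x_2)\,dt+f(\hat x_1,\hat x_2),\qquad f(\hat x_1,\hat x_2):=\int_0^{\hat x_1}\hat B(t,\hat x_2)\bigl(J_{\Psi^{-1}}(t,\hat x_2)-1\bigr)\,dt.\]
A case analysis on the position of $(\hat x_1,\hat x_2)$ relative to the ray $\hat x_2=\hat x_1\tan\alpha_j$ (and on the sign of $\hat x_1$) shows that the horizontal segment from $(0,\hat x_2)$ to $(\hat x_1,\hat x_2)$ meets that ray at most once, and an elementary piecewise integration then recovers exactly the three formulas \eqref{eq:Aa1}--\eqref{eq:Aa3} for $A_{\alpha_j,a}$.

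It remains to bound $f$. Since $J_{\Psi^{-1}}(0,0)=1$ by \eqref{eq:Psi2} and $J_{\Psi^{-1}}$ is smooth, Taylor's theorem yields $|J_{\Psi^{-1}}(t,\hat x_2)-1|\leq C(|t|+|\hat x_2|)$ on $B(0,\ell)$; the constant $C$ is uniform in $j$ because the $\mathsf p_j$ are finitely many and each $\Psi_j$ is smooth. Combined with $|\hat B|\leq 1$ this gives
\[|f(\hat x_1,\hat x_2)|\leq C\int_0^{|\hat x_1|}\bigl(t+|\hat x_2|\bigr)\,dt\leq C\bigl(\hat x_1^2+|\hat x_1\hat x_2|\bigr),\]
and continuity of $f$ follows directly from its integral representation. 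The only genuinely non-routine step is the case-by-case identification of $\int_0^{\hat x_1}\hat B(t,\hat x_2)\,dt$ with the piecewise definition of $A_{\alpha,a}$, which must be carried out separately for $\alpha_j\in(0,\pi/2)$, $\alpha_j=\pi/2$ and $\alpha_j\in(\pi/2,\pi)$; the gauge construction and the Jacobian Taylor estimate are standard.
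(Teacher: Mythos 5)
Your construction is exactly the paper's: the same Poincar\'e-type gauge $\varphi_\ell(\hat x_1,\hat x_2)=\int_0^{\hat x_1}\hat F_1(t,\hat x_2)\,dt+\int_0^{\hat x_2}\hat F_2(0,s)\,ds$, with $(\hat F_{\rm g})_2$ then recovered from $\curl\hat\Fb=\hat B\,J_{\Psi^{-1}}$ by integrating in $\hat x_1$. You simply spell out the two steps the paper leaves as ``a simple computation'' — the case-by-case identification of $\int_0^{\hat x_1}\hat B\,dt$ with $A_{\alpha_j,a}$ for the three angular regimes, and the Taylor expansion $|J_{\Psi^{-1}}-1|\leq C(|\hat x_1|+|\hat x_2|)$ producing the stated $\hat x_1^2+|\hat x_1\hat x_2|$ bound on $f$ — and both are correct.
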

	\begin{proof}
		Define 
		\[\varphi_\ell(\hat{x}_1,\hat{x}_2)=\int_0^{\hat{x}_1} \hat{F}_1(\hat{x}_1',\hat{x}_2)\,d\hat{x}_1'+\int_0^{\hat{x}_2} \hat{F}_2(0,\hat{x}_2')\,d\hat{x}_2',\]
		for $(\hat{x}_1,\hat{x}_2)\in B(0,\ell)\cap \R^2_+$. Obviously  $\big(\hat{F}_{\rm g}\big)_1=0$. Furthermore, a simple computation using~\eqref{eq:Psi3} and~\eqref{eq:curl-hat}  yields
		\begin{equation*}
		\big(\hat{F}_{\rm g}\big)_2(\hat{x}_1,\hat{x}_2)=\int_0^{\hat{x}_1}\big(1+\mathcal O(\hat{x}_1)\big)\hat B(\hat{x}_1',\hat{x}_2)\,d\hat{x}_1'.
		\end{equation*}
	 Recalling the definition of $\hat{B}$, we complete the proof. Note that the independence of the constant $C$ from $j$ follows from the fact that the points $\mathsf p_j$ are finite.
	\end{proof}
	\subsubsection{Lower bound of $\lambda(\mathfrak b)$}\label{sec:low}
		\emph{In this section we are working under Assumption~\ref{assump1} and we use Notation~\ref{not:alfa}. We do not require Assumption~\ref{assump3} to be fulfilled}. Let $u\in \dom Q_{\mathfrak b,\Fb}$ (see~\eqref{eq:Quad}). We use the relation~\eqref{eq:partition2} to localize the estimates. 
		The error term  $\sum_j\big\||\nabla \chi_j|u\big\|^2_{L^2(\Om)}$ is estimated using~\eqref{eq:partition}
		\begin{equation}\label{eq:part-err}
		\sum_j\big\||\nabla \chi_j|u\big\|^2_{L^2(\Om)}\leq CR_0^{-2}\mathfrak b^{2\rho}\|u\|^2_{L^2(\Om)}.
		\end{equation}
		Recall that the magnetic potential $\Fb \in \Hd$ satisfies $\curl \Fb=B_0={\mathbbm 1}_{\Om_1}+a{\mathbbm 1}_{\Om_2}$. 
		\begin{flushleft}{\itshape Estimating $\sum_{\blk} Q_{\mathfrak b,\Fb}(\chi_j u)$.}\end{flushleft}
		Let $j \in \blk$. Notice that  $\chi_j u \in H^1(\Om)$, and $\chi_j u$ is supported in $\Om$, then a well-known spectral property (see~\cite[Lemma 1.4.1]{fournais2010spectral}) assures that 
		\begin{equation}\label{eq:part-int}
		\sum_{\blk}Q_{\mathfrak b,\Fb}(\chi_j u)\geq \sum_{\blk} |B_0|\mathfrak b \int_{\Om} \mathfrak |\chi_j|^2|u|^2\,dx 
		\geq |a|\mathfrak b \sum_{\blk} \|\chi_j u\|^2_{L^2(\Om)}. 
		\end{equation}
		\begin{flushleft}{\itshape Estimating $\sum_{\bd} Q_{\mathfrak b,\Fb}(\chi_j u)$.}\end{flushleft}
		 Let $j \in \bd$. Notice that $\curl\Fb$ is constant in $B(z_j,\allowbreak R_0\mathfrak b^{-\rho})$. This allows us to use the local lower bound estimates in~\cite[Section 8.2.2]{fournais2010spectral}, in the case of a smooth magnetic field. Using our notation, we present here the result in~\cite{fournais2010spectral}: there exists a universal constant $C>0$ (independent of $j$) such that when $\mathfrak b$ is sufficiently large, 
		\begin{equation*}
		Q_{\mathfrak b,\Fb}(v)\geq \big( |a|\Theta_0\mathfrak b-C\mathfrak r_1(R_0,\mathfrak b)\big) \|v\|^2_{L^2(\Om)} ,\end{equation*}
		 where $v$ is any function such that $v \in \dom Q_{\mathfrak b,\Fb}$ and $\supp(v)\subset B(z_j,R_0\mathfrak b^{-\rho})$,  and
		 	\begin{equation}\label{eq:r1}
		 	\mathfrak r_1(R_0,\mathfrak b)=\mathfrak b^{\frac12}+\eta\mathfrak b+R_0^4\eta^{-1}\mathfrak b^{2-4\rho}+R^2_0\mathfrak b^{1-\rho},
		 	\end{equation}
		for arbitrary $\eta \in (0,1)$. Consequently for $v=\chi_j u$, we conclude that 
		\begin{equation}\label{eq:part-bd}
		\sum_{\bd} Q_{\mathfrak b,\Fb}(\chi_j u)\geq \big(|a|\Theta_0\mathfrak b -C\mathfrak r_1(R_0,\mathfrak b)\big) \sum_{\bd} \|\chi_j u\|^2_{L^2(\Om)}.\end{equation}

		\begin{flushleft}{\itshape Estimating $\sum_{\br} Q_{\mathfrak b,\Fb}(\chi_j u)$.}\end{flushleft}
		 Here, we will use the local transformation $\Phi$ introduced in Section~\ref{sec:bc}. In particular, a key-ingredient is the change of gauge in Lemma~\ref{lem:Anew2}, that will link (locally) the form $Q_{\mathfrak b,\Fb}$ to the spectral value $\beta_a$ defined in Section~\ref{sec:La}.

		Let $j \in \br$. Consider a function $v \in \dom Q_{\mathfrak b,\Fb}$
		  such that $\supp v \subset B(z_j,R_0\mathfrak b^{-\rho})$. After possibly performing a translation in the $s$ variable, we may assume that $\Phi^{-1}(z_j)=(0,0)$.
		Assume that $\mathfrak b$ is sufficiently large so that $B(z_j,R_0\mathfrak b^{-\rho}) \subset \Gamma(t_0)\cap \Om$ (see Appendix~\ref{sec:bc}). 
		The transformation $\Phi$  associates to  $\Fb$ the vector potential $\tilde{\Fb}$ in~\eqref{eq:A_tild1},  and to $v$ a function $\tilde{v}=v\circ\Phi$ defined in $\Phi^{-1} \big(B(z_j,R_0\mathfrak b^{-\rho})\big)$.   
		Using the estimates in~\eqref{eq:a2}, the change of variables formulae in~\eqref{eq:A_tild2}, and the support of $\tilde v$, one may deduce the existence of $C>0$, independent of $j$, such that 
		\begin{multline}\label{eq:J1}
		(1-CR_0\mathfrak b^{-\rho}) \int_{\Phi^{-1} \big(B(z_j,R_0\mathfrak b^{-\rho})\big)}\big|(\nabla-i\mathfrak b\tilde{\Fb})\tilde{v}\big|^2\,dx \leq  Q_{\mathfrak b,\Fb}(v) \\\leq (1+CR_0\mathfrak b^{-\rho})\int_{\Phi^{-1} \big(B(z_j,R_0\mathfrak b^{-\rho})\big)}\big|(\nabla-i\mathfrak b\tilde{\Fb})\tilde{v}\big|^2\,dx, 
		\end{multline}

		Next, we will make profit of the gauge result in Lemma~\ref{lem:Anew2}. Thanks to~\eqref{eq:a2} and the support of $v$, one may note the existence of $c_0>0$ such that  $\Phi^{-1} \big(B(z_j,R_0\mathfrak b^{-\rho})\big)\subset B(0,c_0 R_0\mathfrak b^{-\rho}) \subset \left(-|\Gamma|/2,|\Gamma|/2\right)\times(-t_0,t_0)$, for large $\mathfrak b$. 
		We define 
		\[
		\tilde v_{\rm g}(s,t)=
		\tilde v(s,t)e^{-i\mathfrak b\omega(s,t)},\] 
		for $(s,t) \in \Phi^{-1} \big(B(z_j,R_0\mathfrak b^{-\rho})\big)$, where $\omega=\omega_\ell$  is the function in Lemma~\ref{lem:Anew2} and $\ell=c_0R_0\mathfrak b^{-\rho}$.   One can easily check that
		\begin{equation}\label{eq:Q_tild}
		\int_{\Phi^{-1} \big(B(z_j,R_0\mathfrak b^{-\rho})\big)}\big|(\nabla-i\mathfrak b\tilde{\Fb})\tilde{v}\big|^2\,dx =\int_{\Phi^{-1} \big(B(z_j,R_0\mathfrak b^{-\rho})\big)}\big|(\nabla-i\mathfrak b\tilde{\Fb}_{\rm g})\tilde{v}_{\rm g}\big|^2\,dx.
		\end{equation}
		Consequently, it suffices  to estimate the  right hand side of~\eqref{eq:Q_tild}. We extend $\tilde{v}$ and $\tilde{v}_{\rm g}$ by zero in $\R^2$. Using Cauchy's inequality and the support of $\tilde{v}_{\rm g}$, we get for $\delta \in(0,1)$,
		\begin{multline}\label{eq:J2}
	\int_{\Phi^{-1} \big(B(z_j,R_0\mathfrak b^{-\rho})\big)}\big|(\nabla-i\mathfrak b\tilde{\Fb}_{\rm g})\tilde{v}_{\rm g}\big|^2\,dx
		\geq (1-\mathfrak b^{-\delta})\int_{\R^2}\Big(\big|\big(\partial _s+i\mathfrak b\sigma t\big) \tilde{v}_{\rm g}\big|^2+|\partial_t\tilde{v}_{\rm g}|^2 \Big)\,ds\,dt\\-CR_0^4\mathfrak b^{2-4\rho+\delta}\int_{\R^2}|\tilde{v}_{\rm g}|^2\,ds\,dt,
		\end{multline}
		where
		$\sigma=\sigma(s,t)=\mathbbm 1_{\R_+}(t)+a\mathbbm 1_{\R_-}(t)$.
		Performing a suitable change of gauge and a scaling, one can use the spectral properties of the operator $\mathcal L_a$, in Section~\ref{sec:La}, to conclude that
		\begin{equation}\label{eq:T1}
		\int_{\R^2}\Big(\big|\big(\partial _s+i\mathfrak b\sigma t\big) \tilde{v}_{\rm g}\big|^2+|\partial_t\tilde{v}_{\rm g}|^2 \Big)\,ds\,dt\geq \beta_a\mathfrak b\int_{\R^2}|\tilde{v}_{\rm g}|^2\,ds\,dt.
		\end{equation}
		Implementing~\eqref{eq:T1} in~\eqref{eq:J2} yields
		\begin{equation}\label{eq:J3}
		\int_{\Phi^{-1} \big(B(z_j,R_0\mathfrak b^{-\rho})\big)}\big|(\nabla-i\mathfrak b\tilde{\Fb}_{\rm g})\tilde{v}_{\rm g}\big|^2\,dx \geq \big( \beta_a\mathfrak b-C\mathfrak b^{1-\delta}-CR_0^4\mathfrak b^{2-4\rho+\delta}\big)\int_{\R^2}|\tilde{v}_{\rm g}|^2\,ds\,dt.
		\end{equation}
		
		Now, we estimate the $L^2$-norm of $\tilde{v}_{\rm g}$. We have
		\[\int_{\R^2}|\tilde{v}_{\rm g}|^2\,ds\,dt=\int_{\R^2}|\tilde{v}|^2\,ds\,dt=\int_{B(z_j,R_0\mathfrak b^{-\rho})}|v|^2\,J_{\Phi^{-1}}\,dx.\]
		Hence by~\eqref{eq:a2} and the support of $v$, there exists $C>0$ independent of $j$ such that
		\begin{equation}\label{eq:J4}
		(1-CR_0\mathfrak b^{-\rho}) \int_\Omega|v|^2\,dx\leq \int_{\R^2}|\tilde{v}_{\rm g}|^2\,ds\,dt \leq (1+CR_0\mathfrak b^{-\rho}) \int_\Omega|v|^2\,dx.
		\end{equation}
		Plug~\eqref{eq:Q_tild},~\eqref{eq:J3},~and~\eqref{eq:J4} into~\eqref{eq:J1}  to obtain 
		\begin{equation*}
		Q_{\mathfrak b,\Fb}(v) \geq \big(\beta_a\mathfrak b-C\mathfrak r_2(R_0,\mathfrak b)\big) \int_\Omega|v|^2\,dx,
		\end{equation*}	
		where 
			\begin{equation}\label{eq:r2}
			\mathfrak r_2(R_0,\mathfrak b)=R_0\mathfrak b^{1-\rho}+\mathfrak b^{1-\delta}+R_0^4\mathfrak b^{2-4\rho+\delta}.
			\end{equation}
		 We consider now the particular case where $v=\chi_j u$, and we conclude that
		\begin{equation}\label{eq:part-bar}
		\sum_{\br}	Q_{\mathfrak b,\Fb}(\chi_j u) \geq \big(\beta_a \mathfrak b-C\mathfrak r_2(R_0,\mathfrak b)\big) \sum_{\br} \|\chi_j u\|^2_{L^2(\Om)}.
		\end{equation}	
		
		\begin{flushleft}{\itshape Estimating  $\sum_{\T} Q_{\mathfrak b,\Fb}(\chi_j u)$.}\end{flushleft}
		The techniques we use below are quite similar to the ones used in estimating the $\sum_{\br}Q_{\mathfrak b,\Fb}(\chi_j u)$. We will make profit of the local transformation $\Psi$ introduced in Section~\ref{sec:Psi}, and particularly of the change of gauge in Lemma~\ref{lem:gauge1}, to link locally the form $Q_{\mathfrak b,\Fb}$ to  $\mu(\cdot,a)$ defined in~\eqref{eq:mu_alfa}.
		
		Let $j \in \T$. Consider the function $v \in \dom Q_{\mathfrak b,\Fb}$ such that $\supp v \subset B(z_j,R_0\mathfrak b^{-\rho})$. We use the change of variables introduced in Section~\ref{sec:Psi}, valid in a neighbourhood of $z_j$, to send locally the domain in $\Om$ onto $\R^2_+$.  $\mathfrak b$ is assumed large enough so that
	$B(z_j,R_0\mathfrak b^{-\rho}) \subset B(z_j,r_j)$.
 We associate to  $v$ the function $\hat{v}=v\circ\Psi^{-1}$, defined in $\Psi\big( B(z_j,R_0\mathfrak b^{-\rho})\big)$.
We may use the transformation formula in~\eqref{eq:Q_trans} and the  properties in~\eqref{eq:Psi3} and~\eqref{eq:Bon1} to conclude that
		\begin{multline}\label{eq:Q_hat}
		(1-CR_0\mathfrak b^{-\rho}) \int_{\Psi( B(z_j,R_0\mathfrak b^{-\rho}))\cap \R^2_+}\big|(\nabla-i\mathfrak b\hat{\Fb})\hat{v}\big|^2\,dx\leq Q_{\mathfrak b,\Fb}(v)\\\leq (1+CR_0\mathfrak b^{-\rho})\int_{\Psi( B(z_j,R_0\mathfrak b^{-\rho}))\cap \R^2_+}\big|(\nabla-i\mathfrak b\hat{\Fb})\hat{v}\big|^2\,dx,
		\end{multline}
	where $\hat \Fb$ is the transform of $\Fb$ by $\Psi$, and $C>0$ is a constant independent of $j$.

		In addition, due to the support of $v$ and~\eqref{eq:Psi3}, we note the existence of $c_1>0$  such that 
		$\Psi \big(B(z_j,R_0\mathfrak b^{-\rho})\big)\subset B(0,c_1 R_0\mathfrak b^{-\rho}) \subset \Psi \big(B(z_j,r_j)\big)$, for large $\mathfrak b$. 
		Consequently, the gauge transform in Lemma~\ref{lem:gauge1} allows us to write
		\begin{multline}\label{eq:gauge2}
		\int_{\Psi( B(z_j,R_0\mathfrak b^{-\rho}))\cap \R^2_+}\big|(\nabla-i\mathfrak b\hat{\Fb})\hat{v}\big|^2\,dx\\=\int_{\Psi( B(z_j,R_0\mathfrak b^{-\rho}))\cap \R^2_+} \big|(\nabla-i\mathfrak b\hat{\Fb}_{\rm g})\hat{v}_{\rm g}\big|^2\,d\hat{x},
		\end{multline}
	where
		$\hat{v}_{\rm g}(\hat x)=
		\hat{v}(\hat x)e^{-i\mathfrak b\varphi(\hat x)}$, for $\hat x\in \Psi\big( B(z_j,R_0\mathfrak b^{-\rho})\big) \cap \R^2_+$.
	 Here $\varphi=\varphi_\ell$, for $\ell=c_1R_0\mathfrak b^{- \rho}$, is the gauge function in Lemma~\ref{lem:gauge1}, and $\hat{\Fb}_{\rm g}$ is  the magnetic potential in the aforementioned lemma.  Let $\delta\in(0,1)$. Recall the potential $\Ab_{\alpha,a}$ introduced in~\eqref{eq:P_alfa}. Extending $\hat{v}$ and $\hat{v}_{\rm g}$ by zero in $\R^2_+$, the Cauchy's inequality applied in~\eqref{eq:gauge2}, and the support of the function $\hat{v}$ imply
		\begin{multline}\label{eq:gauge3}
		\int_{\Psi( B(z_j,R_0\mathfrak b^{-\rho}))\cap \R^2_+}\big|(\nabla-i\mathfrak b\hat{\Fb})\hat{v}\big|^2\,dx\geq (1-\mathfrak b^{-\delta})\int_{\R^2_+} \big|(\nabla-i\mathfrak b\Ab_{\alpha_j,a})\hat{v}_{\rm g}\big|^2\,d\hat{x}\\-CR_0^4\mathfrak b^{2-4\rho+\delta}\int_{\R^2_+}|\hat{v}_{\rm g}\big|^2\,d\hat{x},
		\end{multline}
	where  $\alpha_j$ is the corresponding angle to the point $z_j$, defined in Notation~\ref{not:alfa}. Hence, using a simple scaling argument we write
		\begin{equation}\label{eq:gauge4}
		\int_{\Psi( B(z_j,R_0\mathfrak b^{-\rho}))\cap \R^2_+}\big|(\nabla-i\mathfrak b\hat{\Fb})\hat{v}\big|^2\,dx\geq \Big(\mu(\alpha_j,a)\mathfrak b-C\mathfrak b^{1-\delta}-CR_0^4\mathfrak b^{2-4\rho+\delta}\Big)\int_{\R^2_+}|\hat{v}_{\rm g}\big|^2\,d\hat{x},
		\end{equation}
		where $\mu(\alpha_j,a)$ is the value in~\eqref{eq:mu_alfa} corresponding to the angle $\alpha_j$. 
		But 
		\[\int_{\R^2_+}|\hat{v}_{\rm g}\big|^2\,d\hat{x}=\int_{B(z_j,R_0\mathfrak b^{-\rho})\cap\Omega} |v|^2\,|J_\Psi|\,dx.\]
		Thus, using~\eqref{eq:Psi3} we get
		\begin{equation}\label{eq:v_v_hat}
		(1-CR_0\mathfrak b^{-\rho})\int_\Omega |v|^2\,dx\leq \int_{\R^2_+}|\hat{v}_{\rm g}\big|^2\,d\hat{x}\leq (1+CR_0\mathfrak b^{-\rho})\int_\Omega |v|^2\,dx.
		\end{equation}
		Plug~\eqref{eq:gauge4} and~\eqref{eq:v_v_hat} into~\eqref{eq:Q_hat} to obtain
		\begin{equation}\label{eq:gauge5}
		Q_{\mathfrak b,\Fb}(v)\geq \big(\mu(\alpha_j,a)\mathfrak b-C\mathfrak r_3(R_0,\mathfrak b)\big)\|v\|^2_{L^2(\Om)},
		\end{equation}	
		where 
		\begin{equation}\label{eq:r3}
			\mathfrak r_3(R_0,\mathfrak b)=R_0\mathfrak b^{1-\rho}+\mathfrak b^{1-\delta}+R_0^4\mathfrak b^{2-4\rho+\delta}.
			\end{equation}
		Taking the particular case $v=\chi_j u$, we infer from~\eqref{eq:gauge5} that
		\begin{equation}\label{eq:part-T}
		\sum_{\T}Q_{\mathfrak b,\Fb}(\chi_j u) \geq \big(\min_{j\in\T}\mu(\alpha_j,a)\mathfrak b -C\mathfrak r_3(R_0,\mathfrak b)\big) \sum_{\T} \|\chi_j u\|_{L^2(\Omega)}^2.
		\end{equation}
	
	Let 
	\begin{equation}\label{eq:r}
	\mathfrak r(R_0,\mathfrak b)=\max\big(\mathfrak r_1(R_0,\mathfrak b),\mathfrak r_2(R_0,\mathfrak b),\mathfrak r_3(R_0,\mathfrak b)\big),
	\end{equation} 
	where $\mathfrak r_1, \mathfrak r_2, \mathfrak r_3$   defined in~\eqref{eq:r1},~\eqref{eq:r2} and~\eqref{eq:r3} respectively. 
		The estimates in~\eqref{eq:part-err},~\eqref{eq:part-int},~\eqref{eq:part-bd}, \eqref{eq:part-bar}, and~\eqref{eq:part-T} give the following lower bound of $Q_{\mathfrak b,\Fb}(u)$: 
		\begin{align}\label{eq:lower_final}
		Q_{\mathfrak b,\Fb}(u) &\geq  |a|\mathfrak b\sum_{\blk} \|\chi_j u\|^2_{L^2(\Om)} + |a|\Theta_0\mathfrak b \sum_{\bd}\|\chi_j u\|^2_{L^2(\Om)}\nonumber\\&\quad+\beta_a \mathfrak b \sum_{\br} \|\chi_j u\|_{L^2(\Om)}^2+\min_{j\in\T}\mu(\alpha_j,a)\mathfrak b  \sum_{\T} \|\chi_j u\|_{L^2(\Om)}^2 \nonumber\\&\quad-C\big(\mathfrak r(R_0,\mathfrak b)+R_0^{-2}\mathfrak b^{2\rho}\big)\|u\|^2_{L^2(\Om)}.
		\end{align}
	 We may extract  particular results from the discussion done above, which we present  in the following two propositions: 
	
	\begin{proposition}\label{prop:Ub}
		There exists $C>0$, and for all $R_0>1$ there exists $\mathfrak b_0>0$ such that for $\mathfrak b\geq \mathfrak b_0$ and $u \in \dom Q_{\mathfrak b,\Fb}$, it holds
		\[Q_{\mathfrak b,\Fb}(u)\geq \int_{\Om} \big(U_{\mathfrak b}(x)-CR_0^{-2}\mathfrak b^{2\rho}\big)|u(x)|^2\,dx,\]
		where
	 	\begin{equation*}
			U_{\mathfrak b}(x)=
			\begin{cases}
			|a|\mathfrak b&\qquad \dist(x,\partial \Om \cup\Gamma)\geq R_0\mathfrak b^{-\rho},\\
			\beta_a\mathfrak b-C\mathfrak r(R_0,\mathfrak b)&\qquad \dist(x,\partial \Om)\geq R_0\mathfrak b^{-\rho}\, \&\, \dist(x,\Gamma)< R_0\mathfrak b^{-\rho},\\
			|a|\Theta_0\mathfrak b-C\mathfrak r(R_0,\mathfrak b)&\qquad \dist(x,\partial \Om)< R_0\mathfrak b^{-\rho}\, \&\, x \notin \bigcup\limits_{j=1}^{n} B(\mathsf p_j,R_0\mathfrak b^{-\rho}),\\
			\mu(\alpha_j,a)\mathfrak b-C\mathfrak r(R_0,\mathfrak b)&\qquad j \in\{1,...,n\}\,,\ x\in  B(\mathsf p_j,R_0\mathfrak b^{-\rho}),
			\end{cases}
				\end{equation*}
			where $\mathfrak r(R_0,\mathfrak b)$ is the term in~\eqref{eq:r},
			  $\mu(\alpha_j,a)$ and $\Theta_0$ are  introduced in~\eqref{eq:mu_alfa} and~\eqref{eq:theta01} respectively.
	\end{proposition}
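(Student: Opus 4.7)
The statement is a pointwise reformulation of~\eqref{eq:lower_final}, already assembled from the four regional estimates. The plan is therefore to combine the IMS localization~\eqref{eq:partition2} with the local lower bounds~\eqref{eq:part-int},~\eqref{eq:part-bd},~\eqref{eq:part-bar},~\eqref{eq:part-T}, and re-express the resulting sum as a single integral against the piecewise potential $U_{\mathfrak b}$.

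For each $j$, write the local estimate as $Q_{\mathfrak b,\Fb}(\chi_j u)\geq\Lambda_j\,\|\chi_j u\|^2_{L^2(\Om)}$, with $\Lambda_j=|a|\mathfrak b$ for $j\in\blk$, $\Lambda_j=|a|\Theta_0\mathfrak b-C\mathfrak r(R_0,\mathfrak b)$ for $j\in\bd$, $\Lambda_j=\beta_a\mathfrak b-C\mathfrak r(R_0,\mathfrak b)$ for $j\in\br$, and $\Lambda_j=\mu(\alpha_j,a)\mathfrak b-C\mathfrak r(R_0,\mathfrak b)$ for $j\in\T$, where every regional error has been majorised by $C\mathfrak r(R_0,\mathfrak b)$ via~\eqref{eq:r}. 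Using the partition identity $\sum_j|\chi_j(x)|^2=1$ yields
\[
\sum_j\Lambda_j|\chi_j(x)|^2\ \geq\ \min_{\{j\,:\,x\in\supp\chi_j\}}\Lambda_j.
\]
For each $x\in\Om$ I would then identify the admissible indices using the support conditions in~\eqref{eq:partition}: if $\dist(x,\partial\Om\cup\Gamma)\geq R_0\mathfrak b^{-\rho}$ only $\blk$-indices reach $x$; if $x$ lies within $R_0\mathfrak b^{-\rho}$ of $\Gamma$ but no closer to $\partial\Om$, only $\blk$ and $\br$ indices are possible; if $x$ is within $R_0\mathfrak b^{-\rho}$ of $\partial\Om$ but outside every $B(\mathsf p_j,R_0\mathfrak b^{-\rho})$, then $\blk$, $\bd$, and possibly $\br$ may appear; finally, inside a corner ball at most one $\T$-index is active, provided $\mathfrak b_0$ is chosen large enough to make the corner balls pairwise disjoint. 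Using the ordering $|a|\geq\beta_a\geq|a|\Theta_0\geq\mu(\alpha_j,a)$ in each case shows that the minimum of the admissible $\Lambda_j$'s coincides with $U_{\mathfrak b}(x)$.

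Subtracting the partition error $\sum_j\||\nabla\chi_j|u\|^2_{L^2(\Om)}\leq CR_0^{-2}\mathfrak b^{2\rho}\|u\|^2_{L^2(\Om)}$ inherited from~\eqref{eq:partition} from~\eqref{eq:partition2} then closes the inequality. No analytic input beyond the estimates already proved in Section~\ref{sec:low} is needed; the only work is the case analysis above, together with fixing $\mathfrak b_0$ large enough so that the geometric reductions implicit in those earlier estimates remain valid (in particular, the balls $B(z_j,R_0\mathfrak b^{-\rho})$ must sit inside $B(\mathsf p_j,r_j)$ for $j\in\T$, fit inside the tubular neighbourhood of $\Gamma$ used for the $\br$-indices, and be pairwise disjoint for distinct corner indices). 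The only mild obstacle is the bookkeeping in the case analysis, which must be done carefully to ensure that no $x\in\Om$ is mis-assigned; this is purely combinatorial once the support constraints in~\eqref{eq:partition} are laid out.
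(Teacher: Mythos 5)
Your argument is correct and takes essentially the same approach as the paper's own proof: you re-derive the four-zone decomposition $Z_1,\dots,Z_4$ of $\Om$ implicitly, use the support constraints in~\eqref{eq:partition} to identify which families of $\chi_j$ can overlap a given $x$, and invoke the spectral ordering $\mu(\alpha_j,a)\leq|a|\Theta_0\leq\beta_a\leq|a|$ together with IMS to match $\min_{\{j\,:\,x\in\supp\chi_j\}}\Lambda_j$ with $U_{\mathfrak b}(x)$. (Your opening remark that the proposition is a pointwise version of~\eqref{eq:lower_final} is slightly loose, since that inequality already lumps the $C\mathfrak r$-error globally whereas $U_{\mathfrak b}$ keeps it only on $Z_2,Z_3,Z_4$; but you then correctly return to the regional bounds~\eqref{eq:part-int}--\eqref{eq:part-T} with $\Lambda_j=|a|\mathfrak b$ for interior indices, so the actual argument you carry out is sound.)
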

	\begin{proof}
		Let $R_0>1$ and $\mathfrak b>0$ be large. Define the following partition of $\Om$:
		\[Z_1=\big\{x \in \Om~:~\dist(x,\partial \Om \cup\Gamma)\geq R_0\mathfrak b^{-\rho}\big\}.\]
		\[Z_2=\big\{x \in \Om~:~\dist(x,\partial \Om)\geq R_0\mathfrak b^{-\rho}\,,\,\dist(x,\Gamma)< R_0\mathfrak b^{-\rho}\big\}.\]
		\[Z_3=\big\{x \in \Om~:~\dist(x,\partial \Om)< R_0\mathfrak b^{-\rho}\,,\,x \notin \bigcup\limits_{j \in\T} B(\mathsf p_j,R_0\mathfrak b^{-\rho})\big\}.\]
		\[Z_4=\bigcup_{j \in\T}B(\mathsf p_j,R_0\mathfrak b^{-\rho})\cap \Om,\]
		and consider the partition of unity in Section~\ref{sec:partition}. Clearly, we have
	\[\bigcup\limits_{j \in\T} B(z_j,R_0\mathfrak b^{-\rho})\subset Z_4=(Z_1\cup Z_2\cup Z_3)^\complement,\ \bigcup\limits_{j \in \partial \Om \setminus\Gamma} B(z_j,R_0\mathfrak b^{-\rho})\subset (Z_1\cup Z_2)^\complement,\]
	\[\mathrm{and}\ \bigcup\limits_{j \in \Gamma \setminus\partial \Om} B(z_j,R_0\mathfrak b^{-\rho})\subset (Z_1)^\complement.\]
	Hence, using the lower bounds established in~\eqref{eq:part-int},~\eqref{eq:part-bd}, \eqref{eq:part-bar}, and~\eqref{eq:part-T}  and the ordering $\max_j\mu(\alpha_j,a)\leq|a|\Theta_0\leq\beta_a\leq|a|$ (see Theorem~\ref{thm:ess_theta} and Section~\ref{sec:La}), the IMS formula yields the proof.
	\end{proof}
	
	Note again that $\min_{j\in\{1,...,n\}}\mu(\alpha_j,a)\leq |a|\Theta_0\leq\beta_a\leq|a|$ (see~Section~\ref{sec:La} and Theorem~\ref{thm:ess_theta}). \emph{We choose  $R_0=1$, $\rho=3/8$,  $\delta=1/4$ and $\eta=\mathfrak b^{-1/4}$} in~
\eqref{eq:lower_final}. Consequently, the min-max principle implies the following:
	\begin{proposition}\label{prop:Lower3}
		Under Assumption~\ref{assump1}, there exist $\mathfrak b_0,C>0$  such that for all $\mathfrak b\geq \mathfrak b_0$,
		\[\lambda(\mathfrak b) \geq \min_{j\in\{1,...,n\}}\mu(\alpha_j,a)\mathfrak b-C\mathfrak b^\frac 34,\]
		where $\lambda(\mathfrak b)$ and $\mu(\alpha_j,a)$   are  the values in~\eqref{eq:lmda_a} and~\eqref{eq:mu_alfa} respectively.	
	\end{proposition}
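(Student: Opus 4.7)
The plan is to recognize that the heavy lifting has already been done in the localization bound~\eqref{eq:lower_final}, and that Proposition~\ref{prop:Lower3} will follow by (i) using the spectral ordering to collapse all the local constants into the smallest one $\min_j \mu(\alpha_j,a)$, (ii) using that $\sum_j |\chi_j|^2 = 1$ to recover the full $L^2$ norm of $u$, and (iii) optimizing the parameters $R_0$, $\rho$, $\delta$, $\eta$ that appear in the error terms $\mathfrak r_1, \mathfrak r_2, \mathfrak r_3$ and in the partition-of-unity cost $R_0^{-2}\mathfrak b^{2\rho}$ so that the total remainder is $\mathcal O(\mathfrak b^{3/4})$. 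The final bound on $\lambda(\mathfrak b)$ then drops out of the min-max principle by taking the infimum over normalized $u \in \dom Q_{\mathfrak b,\Fb}$.

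More precisely, I would start from~\eqref{eq:lower_final} applied to an arbitrary $u \in \dom Q_{\mathfrak b,\Fb}$ with $\|u\|_{L^2(\Om)} = 1$. Since Theorem~\ref{thm:ess_theta} and Section~\ref{sec:La} give
\[
\min_{j\in\{1,\dots,n\}}\mu(\alpha_j,a) \;\leq\; |a|\Theta_0 \;\leq\; \beta_a \;\leq\; |a|,
\]
each of the four coefficients in the principal part of~\eqref{eq:lower_final} is bounded below by $\min_j \mu(\alpha_j,a)\,\mathfrak b$. Summing and using $\sum_j \|\chi_j u\|^2_{L^2(\Om)} = \|u\|^2_{L^2(\Om)} = 1$ produces
\[
Q_{\mathfrak b,\Fb}(u) \;\geq\; \min_{j}\mu(\alpha_j,a)\,\mathfrak b \;-\; C\bigl(\mathfrak r(R_0,\mathfrak b) + R_0^{-2}\mathfrak b^{2\rho}\bigr),
\]
with $\mathfrak r = \max(\mathfrak r_1,\mathfrak r_2,\mathfrak r_3)$ as in~\eqref{eq:r}.

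It remains to balance the remainders. Plugging $R_0 = 1$, $\rho = 3/8$, $\delta = 1/4$ and $\eta = \mathfrak b^{-1/4}$ into~\eqref{eq:r1},~\eqref{eq:r2},~\eqref{eq:r3} gives
\[
\mathfrak r_1 = \mathfrak b^{1/2} + \mathfrak b^{3/4} + \mathfrak b^{3/4} + \mathfrak b^{5/8},\qquad \mathfrak r_2 = \mathfrak r_3 = \mathfrak b^{5/8} + \mathfrak b^{3/4} + \mathfrak b^{3/4},
\]
and simultaneously $R_0^{-2}\mathfrak b^{2\rho} = \mathfrak b^{3/4}$, so every term is $\mathcal O(\mathfrak b^{3/4})$. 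One checks that these values satisfy all constraints used in the derivations of Section~\ref{sec:low}, namely $\rho \in (0,1)$ to make the supports in~\eqref{eq:partition} shrink, $\delta \in (0,1)$ and $2 - 4\rho + \delta < 1$ to make the cross-term $R_0^4\mathfrak b^{2-4\rho+\delta}$ subleading (here $2 - 3/2 + 1/4 = 3/4 < 1$), and $\mathfrak b$ large enough so that the balls $B(z_j,\mathfrak b^{-3/8})$ stay inside the coordinate patches $B(\mathsf p_j,r_j)$, the tubular neighbourhood $\Gamma(t_0)$, and avoid both $\partial\Om$ and $\Gamma$ in the interior zone.

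Finally, taking the infimum over unit $u$ in $\dom Q_{\mathfrak b,\Fb}$ and invoking the min-max characterization~\eqref{eq:lmda_a} concludes
\[
\lambda(\mathfrak b) \;\geq\; \min_{j\in\{1,\dots,n\}}\mu(\alpha_j,a)\,\mathfrak b \;-\; C\mathfrak b^{3/4}.
\]
The substantive work is already done: the only genuinely delicate point is that the optimization must handle all three error terms $\mathfrak r_1, \mathfrak r_2, \mathfrak r_3$ simultaneously, and the chosen parameters equalize the two dominant contributions $\eta\mathfrak b$ and $R_0^4\eta^{-1}\mathfrak b^{2-4\rho}$ in $\mathfrak r_1$, as well as the boundary-layer/tube-width trade-off $R_0\mathfrak b^{1-\rho}$ versus $R_0^{-2}\mathfrak b^{2\rho}$ in the IMS cost; I expect no conceptual obstacle beyond this bookkeeping, given that Assumption~\ref{assump3} is explicitly \emph{not} required at this stage.
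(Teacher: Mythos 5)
Your proposal is correct and follows essentially the same approach as the paper: plug the parameters $R_0=1$, $\rho=3/8$, $\delta=1/4$, $\eta=\mathfrak b^{-1/4}$ into~\eqref{eq:lower_final}, use the spectral ordering $\min_j\mu(\alpha_j,a)\leq|a|\Theta_0\leq\beta_a\leq|a|$ together with $\sum_j|\chi_j|^2=1$, and conclude via the min-max principle. The paper compresses this into one line after~\eqref{eq:lower_final}; your write-up merely spells out the same bookkeeping and verifies the same constraints.
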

	 The previous result is nothing but the lower bound in Theorem~\ref{thm:lambda_lin}, established  under the weaker Assumption~\ref{assump1}.
	 
	 In the non-linear Agmon estimates (see~Theorem~\ref{thm:agmon}), we need the localization zone to have the right surface scale, namely $\{\dist(x,S)< R_0 \mathfrak b^{-1/2}\}$ (for $\mathfrak b=\kappa H$). For this purpose, it is more convenient to choose the parameters in the above lower bound study as follows:
	 \emph{$\rho=\delta=1/2$, $\eta=\mathfrak b^{-1/2}$,  and $R_0$ large},
	 even though the lower bound estimate may appear weaker. With this choice of parameters,  Proposition~\ref{prop:Ub} becomes:
	 \begin{proposition}\label{prop:Ub2}
	 There exists $C>0$, and for all $R_0>1$ there exists $\mathfrak b_0>0$ such that for $\mathfrak b\geq \mathfrak b_0$ and  $u \in \dom Q_{\mathfrak b,\Fb}$, it holds
	 	\[Q_{\mathfrak b,\Fb}(u)\geq \int_{\Om} \Big(U^{(2)}_{\mathfrak b}(x)-C\frac{\mathfrak b}{R_0^2}\Big)|u(x)|^2\,dx,\]
	 	where
	 	\begin{equation*}
	 	U^{(2)}_{\mathfrak b}(x)=
	 	\begin{cases}
	 	|a|\mathfrak b&\qquad \dist(x,\partial \Om \cup\Gamma)\geq R_0\mathfrak b^{-\frac 12},\\
	 	\beta_a\mathfrak b-CR_0^4\mathfrak b^{\frac 12}&\qquad \dist(x,\partial \Om)\geq R_0\mathfrak b^{-\frac 12}\, \&\,\dist(x,\Gamma)< R_0\mathfrak b^{-\frac 12},\\
	 	|a|\Theta_0\mathfrak b-CR_0^4\mathfrak b^{\frac 12}&\qquad \dist(x,\partial \Om)< R_0\mathfrak b^{-\frac 12}\, \&\, x \notin \bigcup\limits_{j=1}^{n}  B(\mathsf p_j,R_0\mathfrak b^{-\frac 12}),\\
	 	\mu(\alpha_j,a)\mathfrak b-CR_0^4\mathfrak b^{\frac 12}&\qquad j \in\{1,...,n\}\,,\ x\in  B(\mathsf p_j,R_0\mathfrak b^{-\frac 12}).
	 	\end{cases}
	 	\end{equation*}
	 	Here $\mu(\alpha_j,a)$  and $\Theta_0$ are  the values in~\eqref{eq:mu_alfa} and~\eqref{eq:theta01} respectively.
	 \end{proposition}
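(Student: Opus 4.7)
The plan is to follow exactly the same road map as for Proposition~\ref{prop:Ub}, since the two statements differ only by the choice of the scaling parameters $\rho$, $\delta$, $\eta$, and $R_0$. Concretely, I would reuse the partition of unity $(\chi_j)$ from Section~\ref{sec:partition}, with $\rho=1/2$ (so the patches have radius of order $R_0\mathfrak{b}^{-1/2}$, matching the surface scale needed for the Agmon argument), and apply the IMS formula~\eqref{eq:partition2} to the quadratic form $Q_{\mathfrak b,\Fb}(u)$.

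On each of the four types of patches, I would invoke the local estimates that were already proved in Section~\ref{sec:low}, namely~\eqref{eq:part-int} in the interior, \eqref{eq:part-bd} near $\partial\Om\setminus\Gamma$, \eqref{eq:part-bar} near $\Gamma\setminus\partial\Om$, and~\eqref{eq:part-T} near each intersection point $\mathsf p_j$. These give the leading constants $|a|$, $|a|\Theta_0$, $\beta_a$, and $\mu(\alpha_j,a)$ respectively, which are exactly the four cases of $U^{(2)}_\mathfrak b$. The main bookkeeping step is then to re-evaluate the error terms $\mathfrak r_1, \mathfrak r_2, \mathfrak r_3$ defined in~\eqref{eq:r1},~\eqref{eq:r2},~\eqref{eq:r3} with the new choice $\rho=\delta=1/2$ and $\eta=\mathfrak b^{-1/2}$: a direct substitution shows that each of the four summands in $\mathfrak r_1$ becomes $O(R_0^4\mathfrak b^{1/2})$, and likewise for $\mathfrak r_2$ and $\mathfrak r_3$. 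The partition-gradient error $\sum_j\||\nabla\chi_j|u\|_{L^2(\Om)}^2$ from~\eqref{eq:part-err} now becomes $O(R_0^{-2}\mathfrak b)$ because $2\rho=1$, which is why we must keep $R_0$ large rather than bounded: this is the only term that does \emph{not} vanish compared to $\mathfrak b$, and letting $R_0$ grow controls it.

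Assembling these pieces through the IMS formula, as was done at the end of the proof of Proposition~\ref{prop:Ub}, and using the supports of the $\chi_j$'s to identify each patch with the appropriate region in the definition of $U^{(2)}_{\mathfrak b}$, yields exactly the pointwise estimate
\[
Q_{\mathfrak b,\Fb}(u)\geq \int_{\Om}\Bigl(U^{(2)}_\mathfrak b(x)-C\tfrac{\mathfrak b}{R_0^2}\Bigr)|u(x)|^2\,dx,
\]
the $-C\mathfrak b/R_0^2$ absorbing the partition-gradient error and dominating the $O(R_0^4 \mathfrak b^{1/2})$ contributions for $\mathfrak b$ large enough (depending on $R_0$).

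I do not expect any real obstacle: the hard analytic work—deriving the local bounds in each of the four regions via the change of variables $\Phi$, $\Psi$ and the gauge transformations of Lemmas~\ref{lem:Anew2} and~\ref{lem:gauge1}—is already done in Section~\ref{sec:low}. The only delicate point is tracking how the error terms scale when the localization radius is enlarged from $\mathfrak b^{-3/8}$ to $\mathfrak b^{-1/2}$; in particular one must verify that the choice $\eta=\mathfrak b^{-1/2}$ in the Cauchy inequality preserves the $(1-\mathfrak b^{-\delta})$ factor in the key step~\eqref{eq:J2} (and its analogue~\eqref{eq:gauge3}) while keeping the remainder of order $R_0^4\mathfrak b^{1/2}$, so that the overall leading constants on each patch are genuinely those stated in $U^{(2)}_{\mathfrak b}$.
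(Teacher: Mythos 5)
Your proposal is correct and follows exactly the route taken in the paper: the paper itself derives Proposition~\ref{prop:Ub2} simply by re-running the lower-bound argument of Section~\ref{sec:low} with the stated re-parametrization $\rho=\delta=1/2$, $\eta=\mathfrak b^{-1/2}$, and $R_0$ large, and then assembling the four local bounds through IMS as in the proof of Proposition~\ref{prop:Ub}. Your tracking of the error terms $\mathfrak r_1,\mathfrak r_2,\mathfrak r_3$ (each reducing to $O(R_0^4\mathfrak b^{1/2})$) and of the partition-gradient term ($O(R_0^{-2}\mathfrak b)$) matches the paper's bookkeeping precisely.
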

	 
	\subsubsection{Upper bound of $\lambda(\mathfrak b)$}\label{sec:up}
	In the next proposition, we establish the upper bound in Theorem~\ref{thm:lambda_lin}.
	\begin{proposition}\label{prop:Upper3}
	 Under Assumption~\ref{assump3}, there exist $\mathfrak b_0, C>0$ such that for all $\mathfrak b\geq \mathfrak b_0$,
		\[\lambda(\mathfrak b) \leq \min_{j\in \{1,...,n\}}\mu(\alpha_j,a)\mathfrak b+C\mathfrak b^{\frac 35},\]
		where $\lambda(\mathfrak b)$ is the value in~\eqref{eq:lmda_a}.
	\end{proposition}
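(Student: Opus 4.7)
The natural route is a direct min--max argument: build a trial function concentrated near the intersection point $\mathsf p_{j_0}$ achieving the minimum $\mu_\star := \min_{j} \mu(\alpha_j,a) = \mu(\alpha_{j_0},a)$, and show its Rayleigh quotient is $\mu_\star \mathfrak b + O(\mathfrak b^{3/5})$. Assumption~\ref{assump3} is used precisely here: it forces $\mu_\star < |a|\Theta_0$, so Remark~\ref{rem:v_0} makes $\mu_\star$ an eigenvalue of $\mathcal H_{\alpha_{j_0},a}$ with a normalized eigenfunction $v := v_{\alpha_{j_0},a}$, and Theorem~\ref{thm:v0-decay} supplies exponential decay (and hence finite polynomial moments $\int|\hat x|^k|v|^2 d\hat x < \infty$). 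Without this assumption one would only have $\mu_\star \leq |a|\Theta_0$ and no eigenfunction available for the construction.

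\textbf{Construction.} Use the diffeomorphism $\Psi=\Psi_{j_0}$ of Section~\ref{sec:Psi} to rectify a ball $B(\mathsf p_{j_0}, r_{j_0})$ onto a neighbourhood of the origin in $\R^2_+$ (with $\Gamma$ sent to the line of angle $\alpha_{j_0}$), and fix the scale $\ell := \mathfrak b^{-\rho}$ with $\rho \in (0,1/2)$ to be optimized. Let $\chi$ be a smooth cutoff supported in $B(0,\ell)$ with $\chi \equiv 1$ on $B(0,\ell/2)$ and $|\nabla \chi|\leq C/\ell$, let $\varphi = \varphi_\ell$ be the gauge from Lemma~\ref{lem:gauge1}, so that $\hat\Fb - \nabla \varphi = (0,\,A_{\alpha_{j_0},a}+f)$ with $|f(\hat x)|\leq C|\hat x|^2$, and introduce the rescaled eigenfunction $\tilde v(\hat x) := \mathfrak b^{1/2} v(\sqrt{\mathfrak b}\,\hat x)$, which satisfies $\|\tilde v\|_{L^2(\R^2_+)}=1$ and $\int |(\nabla - i\mathfrak b\Ab_{\alpha_{j_0},a})\tilde v|^2 d\hat x = \mu_\star \mathfrak b$. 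Define the trial function
\[
u_\mathfrak b(x) := \chi\!\big(\Psi(x)\big)\,\tilde v\!\big(\Psi(x)\big)\,e^{i\mathfrak b\,\varphi(\Psi(x))}, \qquad x \in \Om.
\]

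\textbf{Rayleigh quotient and optimization.} Inserting $u_\mathfrak b$ into~\eqref{eq:Q_trans}, the metric and Jacobian factors on $\supp \chi$ deviate from the identity by $O(\ell)$, contributing a multiplicative error $(1+O(\ell))$ to $Q_{\mathfrak b,\Fb}(u_\mathfrak b)$. After the gauge $e^{i\mathfrak b\varphi}$, the form becomes $\int_{\R^2_+}|(\nabla - i\mathfrak b\Ab_{\alpha_{j_0},a} - i\mathfrak b f e_2)(\chi \tilde v)|^2 d\hat x$. Peeling off $f$ via Cauchy--Schwarz with parameter $\eta>0$ and applying the IMS identity together with the eigenfunction equation give
\[
\int_{\R^2_+}\big|(\nabla-i\mathfrak b\Ab_{\alpha_{j_0},a})(\chi\tilde v)\big|^2 d\hat x = \mu_\star\mathfrak b\int\chi^2|\tilde v|^2 + \int|\nabla\chi|^2|\tilde v|^2,
\]
where the last term is exponentially small since $\sqrt{\mathfrak b}\,\ell = \mathfrak b^{1/2-\rho}\to\infty$ and $v$ decays exponentially. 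The $f$-error is
\[
\mathfrak b^2\int f^2\chi^2|\tilde v|^2 d\hat x \leq C\mathfrak b^2 \int|\hat x|^4|\tilde v|^2 d\hat x = C\mathfrak b^2\cdot\mathfrak b^{-2}\int|y|^4|v(y)|^2 dy = O(1)
\]
by scaling and Theorem~\ref{thm:v0-decay}. Since $\|u_\mathfrak b\|^2_{L^2(\Om)} = 1 + O(\ell)$ as well, one obtains
\[
\frac{Q_{\mathfrak b,\Fb}(u_\mathfrak b)}{\|u_\mathfrak b\|^2_{L^2(\Om)}} \leq \mu_\star\mathfrak b + C\big(\mathfrak b\ell + \eta\mathfrak b + \eta^{-1}\big) + O(\mathfrak b^{-\infty}).
\]
Choosing $\eta = \mathfrak b^{-\rho}$ balances the gauge terms, giving total error $O(\mathfrak b^{1-\rho} + \mathfrak b^{\rho}) = O(\mathfrak b^{1-\rho})$ for $\rho \in (0,1/2)$; the concrete value $\rho = 2/5$ yields the stated remainder $\mathfrak b^{3/5}$, while $\rho \nearrow 1/2$ recovers the improvement indicated in Remark~\ref{rem:up_opt}.

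\textbf{Main obstacle.} The delicate point is not any single estimate but the combined bookkeeping. One must confirm that $u_\mathfrak b \in H^1(\Om)$ (here the $H^2$-regularity of $\varphi$ from Lemma~\ref{lem:gauge1} is what makes the gauge factor admissible), that the form-domain computation is legitimate across the curve $\Gamma$ where the potential jumps, and most crucially that the polynomial-moment bounds derived from Theorem~\ref{thm:v0-decay} genuinely defeat the otherwise prohibitive $\mathfrak b^2$ prefactor in the gauge error---this is exactly what allows a remainder of order $\mathfrak b^{3/5}$ rather than the naïve order $\mathfrak b$.
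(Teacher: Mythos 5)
Your proposal follows essentially the same route as the paper's proof: select the intersection point $\mathsf p_{j_0}$ realizing the minimum, build a localized trial function from the eigenfunction $v_{\alpha_{j_0},a}$ (whose existence is guaranteed by Assumption~\ref{assump3}), pull it through the diffeomorphism $\Psi$ and the gauge of Lemma~\ref{lem:gauge1}, rescale, and control the gauge error via the finite moments supplied by the exponential decay in Theorem~\ref{thm:v0-decay}. The only cosmetic difference is that you invoke the IMS identity to peel off the cutoff, while the paper uses a Cauchy--Schwarz split; both give the same exponentially small error, and the choice of localization scale $\ell=\mathfrak b^{-2/5}$ and the resulting remainder $O(\mathfrak b^{3/5})$ coincide exactly.
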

	\begin{proof}
		Let $k \in \{1,...,n\}$ be such that $\mu(\alpha_k,a)=\min_{j \in\{1,...,n\}}\mu(\alpha_j,a)$, and let $\mathsf p_k$ be the corresponding intersection point of $\Gamma$ and $\partial \Omega$ (see Notation~\ref{not:alfa}). We will establish the desired upper bound by defining a suitable test function, localized in a neighbourhood of $\mathsf p_k$. To this end, we consider a smooth cut-off function, $\chi$, satisfying 
		\begin{equation*}0\leq \chi\leq 1\ \mathrm{in}\ \R^2,\quad \chi=1\ \mathrm{in} \ B(0,1/2) \quad \mathrm{and}\quad \supp \chi\subset B(0,1).\end{equation*}
		Let $\mathfrak b>0$ be sufficiently large such that
		\begin{equation}\label{eq:b_rj1}
	 B(0,\mathfrak b^{- 2/5})\subset \Psi\big(B(\mathsf p_k,r_k)\big),
		\end{equation}
		where $r_k$ is the radius introduced in Section~\ref{sec:Psi}.
		We define the function $\hat{\chi}$ in $\R^2$ by
		$\hat{\chi}(\hat x)=\chi\big(\mathfrak b^\frac 25 \hat x\big)$.
		Consequently,
		\begin{equation}\label{eq:chi_hat}
		0\leq \hat\chi\leq 1\ \mathrm{in}\ \R^2, \ \hat\chi=1\ \mathrm{in} \ B\big(0,1/2\mathfrak b^{- 2/5}\big),\ \supp \hat{\chi} \subset B(0,\mathfrak b^{-2/5}),\ \mathrm{and}\ |\nabla_{\hat x} \hat\chi|\leq C\mathfrak b^{ 2/5}.\end{equation}
		We define the following test function in $\Om$:
		\begin{equation}\label{eq:u}
		u(x)=
		\begin{cases}
		\hat{u}\circ \Psi(x)&\mathrm{if~} x \in \Psi^{-1}\big(B(0,\mathfrak b^{- 2/5})\big) \cap \Om,\\
		0 &\mathrm{otherwise},
		\end{cases}
		\end{equation}
		where $\Psi$ is the diffeomorphism in Section~\ref{sec:Psi},
		\begin{equation*}
		\hat{u}(\hat{x})=
		\begin{cases}
		\hat{\chi}(\hat{x})u_0(\hat{x})e^{i\mathfrak b\varphi(\hat{x})}&\mathrm{if~} \hat x \in B(0,\mathfrak b^{- 2/5})\cap \R^2_+,\\
		0 &\mathrm{otherwise},
		\end{cases}
		\end{equation*}
		and $u_0(\hat{x})=\sqrt{\mathfrak b} v_0(\sqrt{\mathfrak b}\hat{x})$, for all $\hat{x} \in  \R^2_+$.
		Here $v_0$ is a normalized eigenfunction corresponding to $\mu(\alpha_k,a)$ (see~Remark~\ref{rem:v_0}), and $\varphi=\varphi_\ell$ is the gauge function in Lemma~\ref{lem:gauge1}, for $\ell=\mathfrak b^{- 2/5}$ ($\mathfrak b$ satisfies~\eqref{eq:b_rj1}). 
		 We will prove that 
		\begin{equation}\label{eq:up1}
		\frac{Q_{\mathfrak b, \Fb}(u)}{\|u\|_{L^2(\Omega)}^2 }\leq \mathfrak b \mu(\alpha_k,a)+C\mathfrak b^{\frac 35} .
		\end{equation}
		
		\emph{Upper bound of $Q_{\mathfrak b, \Fb}(u)$.} We establish the upper bound in several steps.
		\paragraph{\itshape Step 1.} (Change of variables). We use the properties of $\Psi$ in Section~\ref{sec:Psi} to get
		\begin{equation}\label{eq:up2}
		Q_{\mathfrak b, \Fb}(u) \leq (1+C\mathfrak b^{-\frac 25})\int_{B(0,\mathfrak b^{- 2/5})\cap \R^2_+}\big|(\nabla-i\mathfrak b\hat{\Fb})\hat{u}\big|^2\,dx,
		\end{equation}
		for some $C>0$ (see~\eqref{eq:Psi3},~\eqref{eq:Q_trans} and~\eqref{eq:Bon1}).
		\paragraph{\itshape Step 2.} (Change of gauge). We use the change of gauge in Lemma~\ref{lem:gauge1} to write
		\begin{equation}\label{eq:up3}
		\int_{B(0,\mathfrak b^{- 2/5})\cap \R^2_+}\big|(\nabla-i\mathfrak b\hat{\Fb})\hat{u}\big|^2\,dx = \int_{B(0,\mathfrak b^{- 2/5})\cap \R^2_+}\big|(\nabla-i\mathfrak b\hat{\Fb}_{\rm g})\hat{\chi}u_0\big|^2\,dx,
		\end{equation}
		where $\hat{\Fb}_{\rm g}$ is the vector potential in Lemma~\ref{lem:gauge1}.
	\paragraph{\itshape Step 3.} (Link to $\mu(\alpha_k,a)$). By Lemma~\ref{lem:gauge1}, we have 
		\begin{multline*}
		\int_{B(0,\mathfrak b^{- 2/5})\cap \R^2_+}\big|(\nabla-i\mathfrak b\hat{\Fb}_{\rm g})\hat{\chi}u_0\big|^2\,dx\\ \leq\int_{B(0,\mathfrak b^{- 2/5})\cap \R^2_+}\Big( \big|\partial _{\hat{x}_1}(\hat{\chi}u_0) \big|^2 + \Big|\Big(\partial _{\hat{x}_2}-i\mathfrak b(A_{\alpha_k,a}+f) \Big)\hat{\chi}u_0\Big|^2\Big)\,d\hat{x}.
		\end{multline*}
		Recall that the function  $f$ satisfies $|f(\hat{x}_1,\hat{x}_2)|\leq C (\hat{x}^2_1+|\hat x_1\hat x_2|)$, for some $C>0$.
	   Let $y=\sqrt{\mathfrak b}\hat{x}$, for $\hat{x} \in \R^2_+$. We define the function $\tilde{\chi}$ in $\R^2_+$ such that
		\[
		\tilde{\chi}(y)=\chi(\mathfrak b^{-\frac 1{10}}y)=\chi(\mathfrak b^\frac 25\hat{x})=\hat{\chi}(\hat{x}).\]
		Note that
		\begin{equation}\label{eq:up4} 
		0\leq \tilde\chi\leq 1\ \mathrm{in}\ \R^2, \ \tilde\chi=1\ \mathrm{in} \ B\big(0,1/2\mathfrak b^{1/10}\big),\ \supp \tilde{\chi} \subset B(0,\mathfrak b^{1/10}),\ |\nabla_y \tilde\chi|\leq C\mathfrak b^{-1/10},
		\end{equation}
		and $(\hat{\chi}u_0)(\hat x)=\sqrt{\mathfrak b}(v_0\tilde{\chi})(y)$. Hence, a simple computation yields that
		\begin{multline}\label{eq:up5}
		\int_{B(0,\mathfrak b^{- 2/5})\cap \R^2_+}\big|(\nabla-i\mathfrak b\hat{\Fb}_{\rm g})\hat{\chi}u_0\big|^2\,dx\\=\mathfrak b\int_{\R^2_+}\Big( \big|\partial _{y_1}(\tilde{\chi}v_0) \big|^2+ \Big|\Big(\partial _{y_2}-i\big(A_{\alpha_k,a}(y_1,y_2)+\mathfrak b^{-\frac 12}\mathcal O(y^2_1)+\mathfrak b^{-\frac 12}\mathcal O(y_1y_2)\big) \Big)\tilde{\chi}v_0\Big|^2\Big)\,dy. 
		\end{multline}	
		Below, we estimate each term of the right hand side of~\eqref{eq:up5} apart.
		We start by estimating the term  $\int \big|\partial _{y_1}(\tilde{\chi}v_0) \big|^2\,dy$. We use Cauchy's inequality and~\eqref{eq:up4}  to get
		\begin{align}
		\int_{\R^2_+}\big|\partial_{y_1}(\tilde{\chi}v_0) \big|^2\,dy &\leq (1+\mathfrak b^{-\frac 12}) 
		\int_{\R^2_+} \big|\tilde{\chi}\partial_{y_1}v_0 \big|^2\,dy
		+C\mathfrak b^\frac 12\int_{\R^2_+} \big|v_0\partial_{y_1}\tilde{\chi} \big|^2\,dy \nonumber\\
		& \leq (1+\mathfrak b^{-\frac 12}) 
		\int_{\R^2_+} \big|\partial_{y_1}v_0 \big|^2\,dy \nonumber\\
		&\qquad+C\mathfrak b^{\frac 3{10}}\int_{\big(B(0,\mathfrak b^{\frac 1{10}})\setminus B(0,\frac 12\mathfrak b^{\frac1{10}})\big) \cap \R^2_+} \big|v_0\big|^2\,dy. \label{eq:up6}
		\end{align}	
		To control the error term in~\eqref{eq:up6}, we use the following result derived from the decay of the eigenfunction $v_0$ established in Theorem~\ref{thm:v0-decay} (taking $\delta=\big(|a|\Theta_0-\mu(\alpha_k,a)\big)/2$ in the aforementioned theorem): 
		\begin{equation}\label{eq:up_expo}
		\int_{\big(B(0,\mathfrak b^{\frac 1{10}})\setminus B(0,\frac 12\mathfrak b^{\frac 1{10}})\big) \cap \R^2_+} |v_0|^2\,dy \leq 
		e^{-C_2\mathfrak b^{\frac 1{10}}}\int_{\R^2_+}e^{2\phi}|v_0|^2\,dy  
		\leq C_1 e^{-C_2\mathfrak b^{\frac 1{10}}}. 
		\end{equation}
		Here $C_1=C_{\delta,\alpha_k}$, $C_2=\sqrt{\big(|a|\Theta_0-\mu(\alpha_k,a)\big)/2}$, and $\phi$ is the function  introduced in Theorem~\ref{thm:v0-decay}. 
		Plugging~\eqref{eq:up_expo} in~\eqref{eq:up6}, we get for large values of $\mathfrak b$, and for some positive constants $\tilde {C_1}$ and $\tilde{C_2}$
		\begin{equation}\label{eq:up7}
		\int_{\R^2_+}\big|\partial_{y_1}(\tilde{\chi}v_0) \big|^2\,dy\leq(1+\mathfrak b^{-\frac 12})  \int_{\R^2_+} \big|\partial_{y_1}v_0 \big|^2\,dy+\tilde{C_1} e^{-\tilde{C_2}\mathfrak b^{\frac 1{10}}}.
	    \end{equation}
		Now we estimate the second term in the right hand side of~\eqref{eq:up5}:
			\begin{align}
			&\int_{\R^2_+}\Big|\Big(\partial _{y_2}-i\big(A_{\alpha_k,a}+\mathfrak b^{-\frac 12}\mathcal O(y^2_1)+\mathfrak b^{-\frac 12}\mathcal O(y_1y_2)\big) \Big)\tilde{\chi}v_0\Big|^2\,dy\nonumber\\&\leq (1+\mathfrak b^{-\frac 12})\int_{\R^2_+}\big|(\partial_{y_2}-iA_{\alpha_k,a}) v_0\big|^2\,dy
		+C\mathfrak b^{-\frac 12}\int_{\R^2_+} y_1^4|v_0|^2\,dy\nonumber\\
		&\qquad+C\mathfrak b^{-\frac 12}\int_{\R^2_+} y_1^2y_2^2|v_0|^2\,dy
		+C\mathfrak b^\frac 12 \int_{\R^2_+}|\partial_{y_2}\tilde{\chi}|^2|v_0|^2\,dy. \label{eq:up8}
			\end{align}
		In~\eqref{eq:up8}, we used Cauchy's inequality together with the properties of $\tilde{\chi}$ in~\eqref{eq:up4}. 
	In a similar fashion of establishing~\eqref{eq:up_expo}, we use~\eqref{eq:up4} together with the exponential decay in Theorem~\ref{thm:v0-decay} to estimate
		\begin{equation}\label{eq:expo2}
		\mathfrak b^\frac 12 \int_{\big(B(0,\mathfrak b^{\frac 1{10}})\setminus B(0,\frac 12\mathfrak b^{\frac 1{10}})\big) \cap \R^2_+}|\partial_{y_2}\tilde{\chi}|^2|v_0|^2\,dy\leq \tilde{C_1} e^{-\tilde{C_2}\mathfrak b^{\frac 1{10}}}.
			\end{equation}
		Moreover, the aforementioned exponential decay shows that $y\mapsto y_1^2v_0(y)$ and  $y\mapsto y_1y_2v_0(y)$ are square integrable in $\R^2_+$, that is there exists $C>0$ such that
		\begin{equation}\label{eq:expo3}
		\int_{\R^2_+} y_1^4|v_0|^2\,dy\leq C \quad {\rm and}\ 	\int_{\R^2_+} y_1^2y_2^2|v_0|^2\,dy\leq C.
		\end{equation}
		From~\eqref{eq:up8}--\eqref{eq:expo3}, we get
		\begin{multline}\label{eq:up9}
			\int_{\R^2_+}\Big|\Big(\partial _{y_2}-i\big(A_{\alpha_k,a}+\mathfrak b^{-\frac 12}\mathcal O(y^2_1)+\mathfrak b^{-\frac 12}\mathcal O(y_1y_2)\big) \Big)\tilde{\chi}v_0\Big|^2\,dy\\\leq(1+\mathfrak b^{-\frac 12})  \int_{\R^2_+}\big|(\partial_{y_2}-iA_{\alpha_k,a}) v_0\big|^2\,dy
	+C\mathfrak b^{-\frac 12}.
		\end{multline}
		Since $v_0$ is a normalized eigenfunction of the operator $\mathcal H_{\alpha_k,a}$ (in~\eqref{eq:P_alfa}), corresponding to $\mu(\alpha_k,a)$, we have
		\begin{equation}\label{eq:upp}
		\int_{\R^2_+}\Big(\big|\partial_{y_1}v_0 \big|^2+\big|(\partial_{y_2}-iA_{\alpha_k,a}) v_0\big|^2\Big)\,dy=q_{\alpha_k,a}(v_0)=\mu(\alpha_k,a).\end{equation}
	Gathering pieces in~\eqref{eq:up5},~\eqref{eq:up7},~\eqref{eq:up9}, and~\eqref{eq:upp} implies
		\begin{align}
		\int_{B(0,\mathfrak b^{- 2/5})\cap \R^2_+}\big|(\nabla-i\mathfrak b\hat{\Fb}_{\rm g})\hat{\chi}u_0\big|^2\,dx
		& \leq (1+\mathfrak b^{-\frac 38})\mathfrak b q_{\alpha_k,a}(v_0)+C\mathfrak b^{\frac 12} \nonumber\\
		&\leq \mathfrak b \mu(\alpha_k,a)+C\mathfrak b^{\frac 12}. \label{eq:up10}
		\end{align}
		
		Finally, the estimates established  in~\eqref{eq:up2},~\eqref{eq:up3}, and~\eqref{eq:up10} yield
		\begin{equation}\label{eq:up11}
		Q_{\mathfrak b,\Fb}(u)\leq (1+C\mathfrak b^{-\frac 25})\Big( \mathfrak b \mu(\alpha_k,a)+C\mathfrak b^\frac 12\Big) 
		\leq  \mathfrak b \mu(\alpha_k,a)+C\mathfrak b^\frac 35  .
		\end{equation}
		
		\emph{Lower bound of $\|u\|^2_{L^2(\Omega)}$.} The definition of $u$ in~\eqref{eq:u} and the property in~\eqref{eq:Psi3} yield
		\begin{align}
		\int_{\Om}|u|^2\,dx&\geq (1-C\mathfrak b^{-\frac 25})\int_{B(0,\mathfrak b^{- \frac 25})\cap\R^2_+} |\hat u|^2\,d\hat x\nonumber\\
		&= (1-C\mathfrak b^{-\frac 25})\int_{B(0,\mathfrak b^{- \frac 25})\cap \R^2_+} |\hat\chi u_0|^2\,d\hat x\nonumber\\
		&= (1-C\mathfrak b^{-\frac 25})\int_{B(0,\mathfrak b^{\frac 1{10}})\cap \R^2_+} |\tilde{\chi} v_0|^2\,dy\nonumber\\
		&\geq (1-C\mathfrak b^{-\frac 25})\int_{B(0,\frac 12\mathfrak b^{\frac 1{10}})\cap \R^2_+} |v_0|^2\,dy\nonumber\\
		&=(1-C\mathfrak b^{-\frac 25})\Big(1-\int_{B(0,\frac 12\mathfrak b^{\frac 1{10}})^\complement \cap \R^2_+} |v_0|^2\,dy\Big).
		\end{align}
		Similarly to~\eqref{eq:up_expo}, we have
		\[\int_{B(0,\frac 12\mathfrak b^{\frac 1{10}})^\complement \cap \R^2_+} |v_0|^2\,dy\leq C_1 e^{-C_2\mathfrak b^{\frac 1{10}}}.\]
		Hence, 
		\begin{equation}\label{eq:up12}
		\int_{\Om}|u|^2\,dx\geq 1-C\mathfrak b^{-\frac 25}.
		\end{equation}
		We gather the results in~\eqref{eq:up11} and~\eqref{eq:up12} to establish the claim in~\eqref{eq:up1}. Consequently the min-max principle completes the proof of Proposition~\ref{prop:Upper3}.
	\end{proof}
	\begin{rem}\label{rem:up_opt}
	The error established in Proposition~\ref{prop:Upper3} is not optimal. More generally, for any $\rho \in (0,1/2)$, one may set $B(0,\mathfrak b^{-\rho})$ to be the support of $\hat\chi$ in~\eqref{eq:chi_hat}. Then, by adjusting the choice of the parameters in  the upper bound proof, one can get
	\[\lambda(\mathfrak b) \leq \min_{j\in \{1,...,n\}}\mu(\alpha_j,a)\mathfrak b+C\mathfrak b^{1-\rho},\]
	for all $\mathfrak b\geq \mathfrak b_0$.
	\end{rem}
	
	\section{Breakdown of superconductivity}\label{sec:giorgi}
	Below, we prove that when the magnetic field is sufficiently large, the only solution of~\eqref{eq:Euler} is the normal state $(0,\Fb)$, where $\Fb \in \Hd$ is the vector potential in~\eqref{A_1} (see~Theorem~\ref{thm:priori}).
	\subsection{A priori estimates}\label{sec:a-p-est}
	We present certain known estimates needed in the sequel to control the  errors arising in our various approximations.
	\begin{proposition}\label{prop:psi}
		If $(\psi,\Ab) \in H^1(\Om;\C)\times H^1(\Om;\R^2)$ is a weak solution of~\eqref{eq:Euler}, then
		\[\|\psi\|_{L^\infty(\Omega)}\leq 1.\]
	\end{proposition}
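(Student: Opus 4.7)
The strategy is a standard energy/test-function argument applied directly to the first Ginzburg--Landau equation in~\eqref{eq:Euler}, bypassing any use of the (complex) maximum principle. The plan is to test against $\varphi = (|\psi|^2-1)_+\,\psi$, which vanishes on the ``good'' set $\{|\psi|\leq 1\}$ and is nonzero on the ``bad'' set $\{|\psi|>1\}$ where the $L^\infty$ bound could fail; a sign analysis then forces the bad set to have measure zero.

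First I would rewrite the first equation of~\eqref{eq:Euler} in weak form: using the magnetic Neumann condition $\nu\cdot(\nabla - i\kappa H\Ab)\psi = 0$ and integrating by parts gives
\begin{equation*}
\int_\Omega (\nabla - i\kappa H\Ab)\psi\cdot\overline{(\nabla - i\kappa H\Ab)\varphi}\,dx = \kappa^2\int_\Omega(1-|\psi|^2)\psi\,\overline{\varphi}\,dx
\end{equation*}
for every admissible $\varphi\in H^1(\Omega;\C)$. Substituting $\varphi = (|\psi|^2-1)_+\psi$ and taking real parts, the cross term collapses via the identities $\re[\overline{\psi}(\nabla - i\kappa H\Ab)\psi] = \tfrac12\nabla|\psi|^2$ and $\nabla[(|\psi|^2-1)_+] = {\mathbbm 1}_{\{|\psi|^2>1\}}\nabla|\psi|^2$ to $\tfrac12|\nabla(|\psi|^2-1)_+|^2$, while the right-hand side becomes $-\kappa^2\bigl((|\psi|^2-1)_+\bigr)^2|\psi|^2$. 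The resulting identity,
\begin{equation*}
\tfrac12\int_\Omega |\nabla(|\psi|^2-1)_+|^2\,dx + \int_\Omega (|\psi|^2-1)_+|(\nabla-i\kappa H\Ab)\psi|^2\,dx + \kappa^2\int_\Omega \bigl((|\psi|^2-1)_+\bigr)^2|\psi|^2\,dx = 0,
\end{equation*}
has three nonnegative summands, so each vanishes; the last forces $(|\psi|^2-1)_+\equiv 0$ (on $\{|\psi|>1\}$ the integrand would be strictly positive), whence $|\psi|\leq 1$ a.e.

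The main obstacle is not the computation but justifying that $\varphi = (|\psi|^2-1)_+\psi$ lies in the admissible test-function class, since a priori $\psi\in H^1(\Omega;\C)$ only, and in two dimensions $H^1\not\hookrightarrow L^\infty$, so $(|\psi|^2-1)_+$ need not even be bounded. I would bypass this via a standard elliptic bootstrap on the first equation of~\eqref{eq:Euler}: because $\Ab\in H^1(\Omega;\R^2)$ lies in every $L^p(\Omega)$ and the nonlinearity is cubic, iterating $W^{2,p}$-estimates interior and up to the smooth boundary yields $\psi\in L^\infty(\Omega)$; alternatively one may replace $(|\psi|^2-1)_+$ by the truncation $\min((|\psi|^2-1)_+, M)$, run the argument as above, and send $M\to\infty$ by monotone convergence to reach the same conclusion.
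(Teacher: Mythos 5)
Your proof is correct and is essentially the same argument the paper points to: the paper omits the proof and refers to~\cite[Proposition~10.3.1]{fournais2010spectral}, whose proof uses exactly the test function $[|\psi|^2-1]_+\psi$ in the weak form of the first GL equation and the same sign analysis to conclude $\{|\psi|>1\}$ is null. The admissibility of the test function is indeed the only delicate point and is handled there by an a-priori regularity step, matching what you propose.
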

	We omit the proof of Proposition~\ref{prop:psi}, and we refer to the similar proof
	in~\cite[Proposition~10.3.1]{fournais2010spectral}. 
	
	Recall the magnetic field $B_0={\mathbbm 1}_{\Om_1}+a{\mathbbm 1}_{\Om_2}$ with $a\in[-1,1)\setminus\{0\}$, introduced in Assumption~\ref{assump1}.  There exists a  unique vector potential $\Fb
	\in \Hd$ such that  (see~\cite[Lemma A.1]{Assaad})
	\begin{equation}\label{A_1}
	\curl\,\Fb=B_0.\end{equation}	

	\begin{thm}\label{thm:priori}
		Let  $\beta\in(0,1)$. Suppose that the
		conditions in Assumption~\ref{assump1} hold.
		There exists $C>0$ such that for all $\kp>0$, if $(\psi,\Ab)\in H^1(\Om;\C)\times \Hd$ is a solution of~\eqref{eq:Euler}, then  
		\begin{enumerate}
			\item $\|(\nb-i \kp H\Ab)\psi\|_{L^2(\Om)}\leq \kp\|\psi\|_{L^2(\Om)}$.
			\item $\displaystyle\|{\curl(\Ab-\Fb)}\|_{L^2(\Om)}\leq \frac C
			H\|\psi\|_{L^2(\Om)}$.
			\item $\Ab-\Fb\in H^2(\Omega)$ and $\displaystyle\|\Ab-\Fb\|_{H^2(\Om)}\leq \frac C H\|\psi\|_{L^2(\Om)}$.
			\item $\Ab-\Fb\in \mathcal C^{0,\beta}(\overline{\Om})$ and $\displaystyle\|\Ab-\Fb\|_{\mathcal C^{0,\beta}(\overline \Om)}\leq \frac C H \|\psi\|_{L^2(\Om)}$.
		\end{enumerate}
	\end{thm}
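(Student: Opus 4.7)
\medskip
\noindent\textbf{Proof proposal.} These are standard Ginzburg--Landau a priori bounds, so my plan is to treat the four items in order, each building on the previous. Throughout, I will invoke Proposition~\ref{prop:psi} to use $\|\psi\|_{L^\infty(\Om)}\leq 1$, and the fact that $\Ab-\Fb\in\Hd$, so $\Div(\Ab-\Fb)=0$ in $\Om$ and $(\Ab-\Fb)\cdot\nu=0$ on $\partial\Om$.

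For item (1), I would multiply the first GL equation in~\eqref{eq:Euler} by $\bar\psi$ and integrate over $\Om$. Using the Neumann-type boundary condition $\nu\cdot(\nb-i\kappa H\Ab)\psi=0$ on $\partial\Om$, an integration by parts yields
\begin{equation*}
\|(\nb-i\kappa H\Ab)\psi\|_{L^2(\Om)}^{2}=\kappa^{2}\int_{\Om}\bigl(|\psi|^{2}-|\psi|^{4}\bigr)\,dx\leq \kappa^{2}\|\psi\|_{L^2(\Om)}^{2},
\end{equation*}
which is exactly (1).

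For item (2), I would use that $\Fb$ satisfies $\curl\Fb=B_0$ in $\Om$, and $\curl\Ab=B_0$ on $\partial\Om$ by the fourth equation in~\eqref{eq:Euler}; hence the scalar function $w:=\curl(\Ab-\Fb)$ satisfies $w=0$ on $\partial\Om$. The second equation in~\eqref{eq:Euler} can be rewritten as $\nb^{\perp}w=-\tfrac{1}{\kappa H}\im(\bar\psi(\nb-i\kappa H\Ab)\psi)$, so
\begin{equation*}
\|\nb w\|_{L^2(\Om)}=\|\nb^{\perp}w\|_{L^2(\Om)}\leq \tfrac{1}{\kappa H}\|\psi\|_{L^\infty(\Om)}\|(\nb-i\kappa H\Ab)\psi\|_{L^2(\Om)}\leq \tfrac{1}{H}\|\psi\|_{L^2(\Om)},
\end{equation*}
where I used (1) and Proposition~\ref{prop:psi}. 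Since $w$ has zero trace on $\partial\Om$, the Poincaré inequality in $H^{1}_{0}(\Om)$ gives $\|w\|_{L^2(\Om)}\leq C\|\nb w\|_{L^2(\Om)}\leq (C/H)\|\psi\|_{L^2(\Om)}$, establishing (2).

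For item (3), I would appeal to the div-curl elliptic regularity for vector fields in $\Hd$ (see e.g.\ \cite[Appendix B]{Assaad}): if $\mathbf{U}\in\Hd$ then $\|\mathbf{U}\|_{H^{2}(\Om)}\leq C\|\curl\mathbf{U}\|_{H^{1}(\Om)}$, thanks to the smoothness of $\partial\Om$ assumed in~Assumption~\ref{assump1}(5). Applying this to $\mathbf{U}=\Ab-\Fb\in\Hd$ and combining with the $L^{2}$ and $H^{1}$ bounds on $w=\curl(\Ab-\Fb)$ already obtained in the proof of (2) gives
\begin{equation*}
\|\Ab-\Fb\|_{H^{2}(\Om)}\leq C\|w\|_{H^{1}(\Om)}\leq \tfrac{C}{H}\|\psi\|_{L^2(\Om)},
\end{equation*}
which is (3). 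Finally, item (4) is immediate from (3) and the two-dimensional Sobolev embedding $H^{2}(\Om)\hookrightarrow \mathcal C^{0,\beta}(\overline\Om)$, valid for every $\beta\in(0,1)$ since $\partial\Om$ is smooth.

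The only delicate point is the div-curl regularity used in step (3): it relies on $\Ab-\Fb$ lying in $\Hd$ (so one really needs both $\Ab,\Fb\in\Hd$), and on $\partial\Om$ being smooth, both of which are guaranteed by our working hypotheses. Everything else is a one-line integration by parts or a direct application of Proposition~\ref{prop:psi}.
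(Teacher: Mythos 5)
Your proof is correct and follows essentially the same route the paper invokes (it defers to Fournais--Helffer, Lemma~10.3.2, and Assaad, Theorem~4.2, rather than writing the argument out): multiply the first GL equation against $\bar\psi$ for (1); use $\curl\Ab = B_0$ on $\partial\Om$ together with the second GL equation and Poincar\'e for (2); the div--curl elliptic estimate in $\Hd$ for (3); and the 2D Sobolev embedding $H^2\hookrightarrow\mathcal C^{0,\beta}$ for (4). The one place worth a word more care is the claim $w|_{\partial\Om}=0$: it is justified because the second GL equation gives $\nabla^\perp w\in L^2(\Om)$, hence $w\in H^1(\Om)$ and its trace is well defined and equals $\curl\Ab-B_0=0$ on $\partial\Om$, but as you state it this chain is left implicit.
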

The proof of the previous theorem is given in~\cite[Lemma~10.3.2]{fournais2010spectral} and~\cite[Theorem~4.2]{Assaad}.
	\subsection{Trivial minimizers}
We adapt a result of Giorgi--Phillips~\cite{giorgi2002breakdown} to our case of the step magnetic field $B_0$. Let ${\mathbf
		F} \in \Hd$ be the magnetic potential in~\eqref{A_1}, satisfying $\curl\,\Fb=B_0$. Observe that $(0,\Fb)$ is a  critical point of the functional in~\eqref{eq:GL}, i.e. it is a weak solution of~\eqref{eq:Euler}.  In Theorem~\ref{thm:giorgi} below, we show that this solution is the \emph{unique  minimizer of the functional in~\eqref{eq:GL}}, for sufficiently large values of $H$.
	\begin{theorem}\label{thm:giorgi}
	Under Assumption~\ref{assump1}, there exist positive constants $\kappa_1$ and $C_1$ such that if $\kappa\geq \kappa_1$, 
		\[H> C_1\kappa,\]
		then $(0,\Fb)$ is the unique solution of~\eqref{eq:Euler} in $H^1(\Om)\times \Hd$.
	\end{theorem}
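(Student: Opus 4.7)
My plan is to implement the classical Giorgi--Phillips strategy, adapted to our step magnetic field, by comparing any non-trivial solution of~\eqref{eq:Euler} to the linear problem of Section~\ref{sec:lin_prob}. Let $(\psi,\Ab)\in H^1(\Om;\C)\times\Hd$ solve~\eqref{eq:Euler}. Multiplying the first equation by $\overline\psi$, integrating over $\Om$, and using the Neumann boundary condition together with Proposition~\ref{prop:psi} gives
\begin{equation*}
\int_\Om |(\nabla-i\kappa H\Ab)\psi|^2\,dx=\kappa^2\int_\Om(1-|\psi|^2)|\psi|^2\,dx\leq\kappa^2\int_\Om|\psi|^2\,dx.
\end{equation*}

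Next, I would swap $\Ab$ for $\Fb$ by applying the elementary inequality $|u+v|^2\leq 2|u|^2+2|v|^2$ with $u=(\nabla-i\kappa H\Ab)\psi$ and $v=i\kappa H(\Ab-\Fb)\psi$. Theorem~\ref{thm:priori}(4) combined with the trivial bound $\|\psi\|_{L^2(\Om)}\leq|\Om|^{1/2}$ gives $\|\Ab-\Fb\|_{L^\infty(\Om)}\leq C/H$, so $\kappa^2 H^2\|\Ab-\Fb\|_{L^\infty(\Om)}^2\leq C\kappa^2$, and therefore
\begin{equation*}
\int_\Om|(\nabla-i\kappa H\Fb)\psi|^2\,dx\leq \tilde C\kappa^2\int_\Om|\psi|^2\,dx,
\end{equation*}
for some $\tilde C>0$ independent of $\kappa$ and $H$.

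The contradiction will come from the linear lower bound. By~\eqref{eq:lmda_a} and Proposition~\ref{prop:Lower3} (valid under Assumption~\ref{assump1} alone), for $\kappa H\geq \mathfrak b_0$,
\begin{equation*}
\int_\Om|(\nabla-i\kappa H\Fb)\psi|^2\,dx\geq \lambda(\kappa H)\int_\Om|\psi|^2\,dx\geq\bigl(c_0\kappa H-C'(\kappa H)^{3/4}\bigr)\int_\Om|\psi|^2\,dx,
\end{equation*}
with $c_0:=\min_{j\in\{1,\ldots,n\}}\mu(\alpha_j,a)$. Combining the two estimates shows that either $\psi\equiv 0$, or $c_0\kappa H-C'(\kappa H)^{3/4}\leq\tilde C\kappa^2$. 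Choosing $\kappa_1$ large enough that $\kappa_1\cdot C_1\kappa_1\geq\mathfrak b_0$ and $C_1$ large enough that the latter inequality fails whenever $H>C_1\kappa$ and $\kappa\geq\kappa_1$, we force $\psi\equiv 0$. Plugging $\psi=0$ into the second and fourth equations of~\eqref{eq:Euler} gives $\nabla^\perp(\curl\Ab-B_0)=0$ in $\Om$ and $\curl\Ab=B_0$ on $\partial\Om$; connectedness of $\Om$ yields $\curl\Ab=B_0$ throughout $\Om$, and the uniqueness statement preceding~\eqref{A_1} then forces $\Ab=\Fb$.

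The principal point requiring care is the strict positivity $c_0>0$. This holds because by Theorem~\ref{thm:ess_theta} the essential spectrum of $\mathcal H_{\alpha_j,a}$ starts at $|a|\Theta_0>0$, while the vanishing $\mu(\alpha_j,a)=0$ would force, through the diamagnetic inequality $|\nabla|u||\leq|(\nabla-i\Ab_{\alpha_j,a})u|$ applied to a ground state, $|u|$ to be constant on $\R^2_+$, incompatible with $u\in L^2(\R^2_+)$. Once this positivity is secured, the rest is a routine quantitative combination of the a priori estimates of Section~\ref{sec:a-p-est} with the linear lower bound of Section~\ref{sec:low}.
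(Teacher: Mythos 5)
Your proof is correct and follows essentially the same Giorgi--Phillips strategy as the paper: derive an upper bound $\lambda(\kappa H)\leq\tilde C\kappa^2$ from the a priori estimates of Theorem~\ref{thm:priori}, derive a lower bound of order $\kappa H$ from Proposition~\ref{prop:Lower3}, and combine to force $H\lesssim\kappa$ when $\psi\neq 0$. The only cosmetic difference is in the intermediate estimate of $\kappa^2H^2\|(\Ab-\Fb)\psi\|_{L^2}^2$: you use the $L^\infty$ bound on $\Ab-\Fb$ from Theorem~\ref{thm:priori}(4) together with $\|\psi\|_{L^2}\leq|\Om|^{1/2}$, whereas the paper uses the $L^2$ (or $H^2$) bound on $\Ab-\Fb$ together with $\|\psi\|_{L^\infty}\leq 1$; these are equivalent and yield the same conclusion. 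Two small points you add that the paper leaves implicit are welcome precision: the diamagnetic argument showing $\min_j\mu(\alpha_j,a)>0$ (which is indeed needed for the lower bound to force $H\lesssim\kappa$ and is not spelled out in the paper), and the observation that once $\psi\equiv 0$, the second and fourth GL equations together with uniqueness in $\Hd$ force $\Ab=\Fb$.
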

	\begin{proof}
		Let $\kappa>0$ and $H>0$.
		 Assume that the corresponding GL system~\eqref{eq:Euler} admits a non-trivial solution $(\psi,\Ab) \in H^1(\Om)\times \Hd$. We mean by \emph{non-trivial}  that
		\begin{equation}\label{eq:non_triv}
		\|\psi\|_{L^2(\Om)}>0.
		\end{equation}
	 We compare $\|(\nb-i \kp H\Fb)\psi\|_{L^2(\Om)}$ and $\|(\nb-i \kp H\Ab)\psi\|_{L^2(\Om)}$ using  Cauchy's inequality 
		\begin{equation}\label{eq:gio2}
		\|(\nb-i \kp H\Fb)\psi\|^2_{L^2(\Om)} \leq 2\|(\nb-i \kp H\Ab)\psi\|^2_{L^2(\Om)}+2(\kappa H)^2\|(\Ab-\Fb)\psi\|^2_{L^2(\Om)}.
		\end{equation}
			The estimates in Theorem~\ref{thm:priori} ensure that
			\begin{equation}\label{eq:gio1}
			\|(\nb-i \kp H\Ab)\psi\|^2_{L^2(\Om)}+(\kappa H)^2\|\Ab-\Fb\|^2_{L^2(\Om)} \leq C\kappa^2\|\psi\|^2_{L^2(\Om)}.
			\end{equation}
			This inequality, together with $|\psi|\leq 1$, allow us to control the right hand side of~\eqref{eq:gio2} and get
		\[\|(\nb-i \kp H\Fb)\psi\|^2_{L^2(\Om)} \leq C\kappa^2\|\psi\|^2_{L^2(\Om)}.\]
	Since $(\psi,\Ab)$ is non-trivial, we get 
		\begin{equation}\label{eq:gio3}
		\lambda(\kappa H)\leq C\kappa^2,
		\end{equation}
		where $\lambda(\kappa H)$ is the value in~\eqref{eq:lmda_a} . 
		
		On the other hand,  let $\kappa_0$ be such that $\kappa_0\geq\mathfrak b_0$, where $\mathfrak b_0$ is the constant in Proposition~\ref{prop:Lower3}. Applying this Proposition, we get the existence of $\tilde{C}>0$ such that for all $\kappa\geq \kappa_0$ and $H\geq 1$,
		\begin{equation}\label{eq:gio4}
		\lambda(\kappa H) \geq \tilde{C}\min\big(|a|\Theta_0,\min_j\mu(\alpha_j,a)\big)\kappa H.
		\end{equation}
		We combine~\eqref{eq:gio3} and~\eqref{eq:gio4} to obtain the following: for all $\kappa\geq \kappa_0$ and $H\geq 1$, if 
		the corresponding GL system~\eqref{eq:Euler} admits a non-trivial solution, then 
		 \[\tilde{C}\min\big(|a|\Theta_0,\min_j\mu(\alpha_j,a)\big)\kappa H \leq\lambda(\kappa H)\leq C\kappa^2,\]
		which in this case implies that
		\[H \leq C_1\kappa,\]
		for $C_1=C/\big(\tilde{C}\min\big(|a|\Theta_0,\min_j\mu(\alpha_j,a)\big)\big)$. 
		This result can be reformulated as follows:
		 For all $\kappa\geq\kappa_0$, if $H>\max(C_1\kappa,1)$ then $\mathcal E_{\kappa,H}
$ admits only  trivial minimizers.	Take $\kappa_1\geq \max(\kappa_0, 1/C_1)$ so that for all $\kappa \geq \kappa_1$, $C_1\kappa\geq 1$. We have then proved Theorem~\ref{thm:giorgi}.	  
	\end{proof}
	\section{Monotonicity of $\lambda(\mathfrak b)$}\label{sec:monot}
	We consider $\lambda(\mathfrak b)$---the lowest eigenvalue of the operator $\mathcal P_{\mathfrak b,\Fb}$ defined in Section~\ref{sec:intro2}. We will establish the so-called strong diamagnetic property (\cite{fournais2007strong}); $\mathfrak b\mapsto\lambda(\mathfrak b)$ is strictly increasing for large values of $\mathfrak b$ (Proposition~\ref{prop:monot}). This property will  enable us to prove the first statement of Theorem~\ref{thm:Hc3} (Proposition~\ref{prop:H_unique}). Moreover, we  will provide the asymptotics of $H_{C_3}(\kappa)$ stated in Theorem~\ref{thm:Hc3} (Proposition~\ref{prop:Hc3}).
	
Information about the localization of a ground-state of  $\mathcal P_{\mathfrak b,\Fb}$ is needed while establishing the monotonicity result in Proposition~\ref{prop:monot}. Theorem~\ref{thm:lin_bound} below provides such  localization (Agmon) estimates. Our argument is quite similar to that in~\cite[Section~15]{bonnaillie2003analyse}. Still, we give the proof of this theorem for completeness.
	
Recall the set $\Gamma \cap \partial \Om=\big\{ \mathsf p_j~:~j \in\{1,...,n\}\big\}$.  In this section, we assume that Assumption~\ref{assump3} holds. We denote by
\begin{equation}\mu^*=\min_{j\in\{1,...,n\}}\mu(\alpha_j,a).\end{equation}
Let $S^*$ be the set of points  $\mathsf p_k$ corresponding to  the minimal energy $\mu(\alpha_k,a)$
	\begin{equation}\label{eq:N*}
	S^*=\big\{\mathsf p_k \in \Gamma \cap \partial \Om~:~\mu(\alpha_k,a)=\mu^*\big\},\end{equation}
	
	As shown in the next theorem, a ground-state is localized near the points of $S^*$.
	\begin{theorem}\label{thm:lin_bound}
		Under Assumption~\ref{assump3}, there exist positive constants $\mathfrak b_0$, $C$, and $\zeta$ such that if $\mathfrak b\geq \mathfrak b_0$ and $\psi$ is a ground-state of the operator $\mathcal P_{\mathfrak b,\Fb}$ then
		\begin{equation}\label{eq:agm0}
		\int_\Om e^{2\zeta\sqrt \mathfrak b\dist(x,S^*)}\left(|\psi|^2+\mathfrak b^{-1} \big|\big(\nabla-i\mathfrak b\Fb\big)\psi\big|^2\right)\,dx \leq C\|\psi\|^2_{L^2(\Om)}.\end{equation}
		Consequently,  for all $N>0$,
		\[\int_\Om \dist(x,S^*)^N|\psi|^2\,dx=\mathcal O\big(\mathfrak b^{-\frac N2}\big).\]	
	\end{theorem}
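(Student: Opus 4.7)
The plan is to run a standard Agmon argument with an exponential weight centered on $S^*$, combined with the local spectral lower bound of Proposition~\ref{prop:Ub2} and the matching upper bound of Proposition~\ref{prop:Upper3}.

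Fix a small $\zeta>0$ to be chosen, and let $\phi(x)=\zeta\sqrt{\mathfrak b}\,\dist(x,S^*)$. Since $\Om$ is bounded $\phi$ is bounded and Lipschitz with $|\nabla\phi|^2\leq\zeta^2\mathfrak b$ a.e., so $e^\phi\psi\in H^1(\Om)=\dom Q_{\mathfrak b,\Fb}$. Testing $\mathcal P_{\mathfrak b,\Fb}\psi=\lambda(\mathfrak b)\psi$ against $e^{2\phi}\psi$ and using the Neumann boundary condition on $\psi$ yields the standard IMS identity
\begin{equation*}
Q_{\mathfrak b,\Fb}(e^\phi\psi)=\lambda(\mathfrak b)\|e^\phi\psi\|^2_{L^2(\Om)}+\int_\Om|\nabla\phi|^2e^{2\phi}|\psi|^2\,dx.
\end{equation*}
Applying Proposition~\ref{prop:Ub2} to the test function $e^\phi\psi$ (with $\rho=1/2$ and $R_0$ to be fixed) gives a pointwise lower bound on $Q_{\mathfrak b,\Fb}(e^\phi\psi)$ in terms of $U^{(2)}_\mathfrak b$, so after rearranging
\begin{equation*}
\int_\Om\bigl(U^{(2)}_\mathfrak b(x)-\lambda(\mathfrak b)-|\nabla\phi|^2-CR_0^{-2}\mathfrak b\bigr)e^{2\phi}|\psi|^2\,dx\leq 0.
\end{equation*}

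The heart of the argument is to extract a uniform spectral gap from Assumption~\ref{assump3}. Set
\begin{equation*}
\delta:=\tfrac12\min\Bigl(|a|\Theta_0-\mu^*\,,\,\min_{\mathsf p_j\notin S^*}\bigl(\mu(\alpha_j,a)-\mu^*\bigr)\Bigr)>0,
\end{equation*}
with the convention that the second term is $+\infty$ if $S^*=\Gamma\cap\partial\Om$. Inspecting the four cases defining $U^{(2)}_\mathfrak b$ together with the upper bound $\lambda(\mathfrak b)\leq\mu^*\mathfrak b+C\mathfrak b^{3/5}$ of Proposition~\ref{prop:Upper3}, for $\mathfrak b$ large enough we have $U^{(2)}_\mathfrak b(x)-\lambda(\mathfrak b)\geq 2\delta\mathfrak b$ on $\Om\setminus N$, where $N:=\bigcup_{\mathsf p_k\in S^*}B(\mathsf p_k,R_0\mathfrak b^{-1/2})$. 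Choose $\zeta$ with $\zeta^2\leq\delta/2$ and then $R_0$ so large that $CR_0^{-2}\leq\delta/2$; the previous inequality then reads
\begin{equation*}
\delta\mathfrak b\int_{\Om\setminus N}e^{2\phi}|\psi|^2\,dx\leq\int_{N}\bigl(\lambda(\mathfrak b)+\zeta^2\mathfrak b+CR_0^{-2}\mathfrak b-U^{(2)}_\mathfrak b\bigr)e^{2\phi}|\psi|^2\,dx.
\end{equation*}
On $N$ we have $\phi\leq\zeta R_0$ and all the terms inside the brackets are $O(\mathfrak b)$, so the right-hand side is bounded by $C\mathfrak b\|\psi\|_{L^2(\Om)}^2$. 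Adding $\delta\mathfrak b\int_N e^{2\phi}|\psi|^2\,dx\leq C\mathfrak b\|\psi\|_{L^2(\Om)}^2$ to both sides produces the $L^2$ half of~\eqref{eq:agm0}, namely $\int_\Om e^{2\phi}|\psi|^2\,dx\leq C\|\psi\|_{L^2(\Om)}^2$.

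For the kinetic term I would return to the identity above, which together with the $L^2$ bound gives $Q_{\mathfrak b,\Fb}(e^\phi\psi)\leq C\mathfrak b\|\psi\|_{L^2(\Om)}^2$. Writing $e^\phi(\nabla-i\mathfrak b\Fb)\psi=(\nabla-i\mathfrak b\Fb)(e^\phi\psi)-e^\phi(\nabla\phi)\psi$ and using $|a+b|^2\leq 2|a|^2+2|b|^2$ yields
\begin{equation*}
\int_\Om e^{2\phi}|(\nabla-i\mathfrak b\Fb)\psi|^2\,dx\leq 2Q_{\mathfrak b,\Fb}(e^\phi\psi)+2\int_\Om|\nabla\phi|^2e^{2\phi}|\psi|^2\,dx\leq C\mathfrak b\|\psi\|^2_{L^2(\Om)},
\end{equation*}
which is exactly the gradient half of~\eqref{eq:agm0}. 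The polynomial moment bound is then immediate from $t^N\leq(N/(2e\zeta))^N\mathfrak b^{-N/2}e^{2\zeta\sqrt{\mathfrak b}\,t}$ applied to $t=\dist(x,S^*)$. The main technical obstacle is securing the uniform gap $\delta$ on $\Om\setminus N$: this relies essentially on both parts of Assumption~\ref{assump3} (separating $\mu^*$ from $|a|\Theta_0$) and on the strict inequality $\mu(\alpha_j,a)>\mu^*$ for $\mathsf p_j\notin S^*$ (finitely many points), with the matching order $O(\mathfrak b^{3/5})$ in Proposition~\ref{prop:Upper3} ensuring that $\lambda(\mathfrak b)-\mu^*\mathfrak b$ is absorbed. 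The decoupling of $\zeta$ and $R_0$ from $\mathfrak b$ is essential because the error $CR_0^{-2}\mathfrak b$ in Proposition~\ref{prop:Ub2} is of the same order as the leading term.
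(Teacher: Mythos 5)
Your proof follows essentially the same Agmon-type argument as the paper's: the paper uses the truncated weight $g(x)=\zeta\max\big(\dist(x,S^*),R_0\mathfrak b^{-1/2}\big)$ so that $\nabla g$ vanishes on the collar $N$, while you use the untruncated $\phi(x)=\zeta\sqrt{\mathfrak b}\dist(x,S^*)$ and instead observe $e^{2\phi}\leq e^{2\zeta R_0}$ on $N$ — a cosmetic difference. One small imprecision in your write-up: due to the $\mathcal O(R_0^4\mathfrak b^{1/2})$ error in $U^{(2)}_{\mathfrak b}$ and the $\mathcal O(\mathfrak b^{3/5})$ error in Proposition~\ref{prop:Upper3}, the claimed bound $U^{(2)}_{\mathfrak b}(x)-\lambda(\mathfrak b)\geq 2\delta\mathfrak b$ on $\Om\setminus N$ should read $\geq 2\delta\mathfrak b-o(\mathfrak b)$; your subsequent choices $\zeta^2\leq\delta/2$ and $CR_0^{-2}\leq\delta/2$ still leave a positive gap of order $\delta\mathfrak b$ for $\mathfrak b$ large, so the conclusion is unaffected once the constant is slightly loosened.
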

	\begin{proof}
	Let $R_0>1$. We define the real Lipschitz function 
	\begin{equation}\label{eq:g}
	g(x)=\zeta \max\big(\dist(x,S^*),R_0\mathfrak b^{-\frac 12}\big),\qquad x\in\Om,\end{equation}
	where $\zeta>0$ is to be chosen later. An integration by parts yields
	\begin{equation}\label{eq:agm1}
	\re\big\langle  \mathcal P_{\mathfrak b,\Fb}\psi, e^{2\sqrt \mathfrak b g} \psi\big\rangle=Q_{\mathfrak b,\Fb}\big(e^{\sqrt \mathfrak b g} \psi\big)-\mathfrak b\big\||\nabla g|e^{\sqrt \mathfrak b g} \psi\big\|^2_{L^2(\Om)},\end{equation}
	where $Q_{\mathfrak b,\Fb}$ is the quadratic form in~\eqref{eq:Quad}.  Hence, using~\eqref{eq:agm1}  and the definition of $\psi$, we get
	\begin{equation}\label{eq:agm2}
	\lambda(\mathfrak b)\|e^{\sqrt \mathfrak b g} \psi\|^2=Q_{\mathfrak b,\Fb}\big(e^{\sqrt \mathfrak b g} \psi\big)-\mathfrak b\big\||\nabla g|e^{\sqrt \mathfrak b g} \psi\big\|^2.
	\end{equation}
	By Propositions~\ref{prop:Ub2} and~\ref{prop:Upper3}, we have 
	\begin{equation}\label{eq:agm3}
	Q_{\mathfrak b,\Fb}\big(e^{\sqrt \mathfrak b g} \psi\big)\geq \int_{\Om} \big(U^{(2)}_{\mathfrak b}(x)-C\mathfrak b R_0^{-2}\big)\big|e^{\sqrt \mathfrak b g(x)} \psi(x)\big|^2\,dx,
	\end{equation}
	and
	\begin{equation}\label{eq:agm4}
	\lambda(\mathfrak b)\leq \mu^*\mathfrak b+o(\mathfrak b).
	\end{equation}
	Implementing~\eqref{eq:agm3} and~\eqref{eq:agm4} in~\eqref{eq:agm2}, dividing by $\mathfrak b$ and using the properties of the function $U^{(2)}_{\mathfrak b}$ in Proposition~\ref{prop:Ub2} yield
	\begin{multline}\label{eq:agm5}
	\int_{\big\{t(x)\geq R_0\mathfrak b^{-\frac 12}\big\}} \big(\mu^{**}-CR_0^4\mathfrak b^{-\frac 12}-CR_0^{-2}\big)\big|e^{\sqrt \mathfrak b g} \psi\big|^2\,dx\\ + \int_{\big\{t(x)\leq R_0\mathfrak b^{-\frac 12}\big\}} \big(\mu^*-CR_0^4\mathfrak b^{-\frac 12}-CR_0^{-2}\big)\big|e^{\sqrt \mathfrak b g} \psi\big|^2\,dx\\ \leq
	\big(\mu^*+o(1)\big)\|e^{\sqrt \mathfrak b g} \psi\|^2+\big\||\nabla g|e^{\sqrt \mathfrak b g} \psi\big\|^2.
	\end{multline}
	Here $t(x)=\dist(x,S^*)$, and
	$\mu^{**}$ is the minimum of all the $\mu(\alpha_j,a)$ that are strictly greater that $\mu^*$ (if such a $\mu(\alpha_j,a)$ does not exist, we take $\mu^{**}=|a|\Theta_0$). 
	By~\eqref{eq:g}, we have $\supp (\nabla g) \subset \big\{t(x)\geq R_0\mathfrak b^{-\frac 12}\big\}$ and $|\nabla g|\leq \zeta$. Consequently, 
	\begin{equation}\label{eq:agm6}
	\big\||\nabla g|e^{\sqrt \mathfrak b g} \psi\big\|^2\leq \zeta^2\int_{\big\{t(x)\geq R_0\mathfrak b^{-\frac 12}\big\}}e^{2\sqrt \mathfrak b g} |\psi|^2\,dx.
	\end{equation}
	Hence,~\eqref{eq:agm5} yields
	\begin{multline}\label{eq:agm7}
		\int_{\big\{t(x)\geq R_0\mathfrak b^{-\frac 12}\big\}} \big(\mu^{**}-\mu^*-o(1)-CR_0^4\mathfrak b^{-\frac 12}-CR_0^{-2}-\zeta^2\big)\big|e^{\sqrt \mathfrak b g} \psi\big|^2\,dx
		\\ \leq \int_{\big\{t(x)\leq R_0\mathfrak b^{-\frac 12}\big\}} \big(CR_0^4\mathfrak b^{-\frac 12}+CR_0^{-2}+o(1)\big)\big|e^{\sqrt \mathfrak b g} \psi\big|^2\,dx .
	\end{multline}
 We may choose $\zeta<\sqrt{\mu^{**}-\mu^*}$. Then using~\eqref{eq:agm7} and the definition of $g$ in~\eqref{eq:g}, there exist large positive constants $R_0$ and $\mathfrak b_0$  such that for all $\mathfrak b\geq \mathfrak b_0$
	\begin{equation}\label{eq:agm8}
	\int_\Om e^{2\zeta\sqrt \mathfrak b\dist(x,\partial \Om)} |\psi|^2\,dx \leq \tilde C (R_0,\zeta)\|\psi\|^2_{L^2(\Om)}.
	\end{equation}
	
	One can deduce the other part of~\eqref{eq:agm0} by gathering the estimates in~\eqref{eq:agm2},~\eqref{eq:agm6}, \eqref{eq:agm8} and the upper bound in Proposition~\ref{prop:Upper3}.
	\end{proof}
\begin{rem}
	In similar situations in the literature, when the applied magnetic field is uniform, certain normal Agmon estimates were established showing the decay of the ground-state away from the boundary.  Such  decays were usually used in the proofs of the monotonicity of the ground-state energy (see~\cite[Section~2]{fournais2007strong}). In the present work, one can similarly establish such a normal decay of the ground-state away from the boundary of $\Omega_1\cup\Om_2$. However as it will be explained later in this section, the localization result in Theorem~\ref{thm:lin_bound} is sufficient  while deriving the monotonicity of the ground-state in our step magnetic field case. Therefore, we opt  not to state the normal estimates here.
\end{rem}
	Having the domain of $\mathcal P_{\mathfrak b,\Fb}$  independent of $\mathfrak b$ (see~\eqref{eq:domP_F}),  the existence of the left and right derivatives of $\lambda(\mathfrak b)$ is guaranteed by the analytic perturbation theory (see~\cite{Kato}):
	\[\lambda'_\pm(\mathfrak b)=\lim_{\epsilon \rightarrow 0^\pm}\frac {\lambda(\mathfrak b+\epsilon)-\lambda(\mathfrak b)}{\epsilon}.\] 

	\begin{proposition}\label{prop:monot}
		Under Assumption~\ref{assump3}, the limits of $\lambda'_-(\mathfrak b)$ and $\lambda'_+(\mathfrak b)$ as $\mathfrak b \rightarrow +\infty$ exist, and we have
		\[\lim_{\mathfrak b \rightarrow +\infty}\lambda'_+(\mathfrak b,a)=\lim_{\mathfrak b \rightarrow +\infty}\lambda'_-(\mathfrak b,a)=\mu^*,\]
		where $\mu^*=\min_{j \in\{1,...,n\}} \mu(\alpha_j,a)>0$.
		
		Consequently, $\mathfrak b \mapsto \lambda(\mathfrak b)$ is strictly increasing, for large $\mathfrak b$.
		\end{proposition}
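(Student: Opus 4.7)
The plan is to compute the one-sided derivatives $\lambda'_\pm(\mathfrak b)$ via a Feynman--Hellmann formula, then to localize the resulting integral near $S^*$ using the Agmon decay of Theorem~\ref{thm:lin_bound}, and finally to identify each local contribution with $\mu(\alpha_k,a)=\mu^*$ through the change of variables of Section~\ref{sec:Psi} together with the gauge fixing of Lemma~\ref{lem:gauge1}.

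First, I would invoke Kato's perturbation theorem: since $\dom\mathcal P_{\mathfrak b,\Fb}$ is independent of $\mathfrak b$ by~\eqref{eq:domP_F}, the one-sided derivatives $\lambda'_\pm(\mathfrak b)$ exist and satisfy
\begin{equation*}
\lambda'_+(\mathfrak b)=\inf_{\psi} \mathcal F_\mathfrak b(\psi),\qquad \lambda'_-(\mathfrak b)=\sup_{\psi}\mathcal F_\mathfrak b(\psi),
\end{equation*}
the inf/sup being over normalized ground states $\psi$ of $\mathcal P_{\mathfrak b,\Fb}$, where a direct computation of $\partial_\mathfrak b Q_{\mathfrak b,\Fb}(\psi)$ at fixed $\psi$ yields
\begin{equation*}
\mathcal F_\mathfrak b(\psi)=-2\int_\Om \Fb\cdot\mathbf j_\mathfrak b(\psi)\,dx,\qquad \mathbf j_\mathfrak b(\psi)=\im\bigl(\overline\psi(\nabla-i\mathfrak b\Fb)\psi\bigr).
\end{equation*}
It therefore suffices to prove that $\mathcal F_\mathfrak b(\psi_\mathfrak b)\to\mu^*$ as $\mathfrak b\to+\infty$ uniformly over families of normalized ground states $(\psi_\mathfrak b)$. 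By the Agmon bound~\eqref{eq:agm0} combined with Cauchy--Schwarz and the a priori estimate $\|(\nabla-i\mathfrak b\Fb)\psi_\mathfrak b\|_{L^2(\Om)}^2=\lambda(\mathfrak b)=O(\mathfrak b)$, for $R=R(\mathfrak b)\to\infty$ growing at least logarithmically in $\mathfrak b$, the contribution to $\mathcal F_\mathfrak b(\psi_\mathfrak b)$ from the complement of $V_R:=\bigcup_{\mathsf p_k\in S^*}B(\mathsf p_k,R\mathfrak b^{-1/2})$ is $o(1)$.

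On each ball $B_k:=B(\mathsf p_k,R\mathfrak b^{-1/2})$ I would straighten the geometry with the diffeomorphism $\Psi_k$ from Section~\ref{sec:Psi} and apply the gauge function $\varphi_\ell$ of Lemma~\ref{lem:gauge1} with $\ell\simeq R\mathfrak b^{-1/2}$, so that the transformed magnetic potential reads $\Ab_{\alpha_k,a}+f$ with $|f(\hat x)|\le C(\hat x_1^2+|\hat x_1\hat x_2|)$. Since $\Div\mathbf j_\mathfrak b=0$ in $\Om$ and $\mathbf j_\mathfrak b\cdot\nu=0$ on $\partial\Om$ (consequences of $\mathcal P_{\mathfrak b,\Fb}\psi_\mathfrak b=\lambda(\mathfrak b)\psi_\mathfrak b$ with $\lambda(\mathfrak b)\in\R$ and Neumann boundary conditions), the purely local gauge change alters $\mathcal F_\mathfrak b$ only through a boundary integral on $\partial B_k\cap\Om$ which is exponentially small by Agmon. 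After rescaling $y=\sqrt\mathfrak b\,\hat x$, the rescaled functions are quasi-minimizers for the model quadratic form $q_{\alpha_k,a}$; the Feynman--Hellmann identity for the model operator (which, by the homogeneity of $\Ab_{\alpha_k,a}$ and the scaling $\lambda^{\mathrm{model}}(\mathfrak b)=\mathfrak b\mu(\alpha_k,a)$, produces the $\mathfrak b$-independent value $\mu(\alpha_k,a)$), combined with the exponential decay of the model eigenfunctions from Theorem~\ref{thm:v0-decay} (used to absorb the contribution of $f$, the Jacobian perturbation~\eqref{eq:Bon1}, and cut-off errors, all of order $\mathfrak b^{-1/2}$ in the scaled coordinates), identifies each local contribution as $\mu^*\,\|\mathbbm 1_{B_k}\psi_\mathfrak b\|_{L^2(\Om)}^2+o(1)$. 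Summing over $\mathsf p_k\in S^*$ and using $\|\psi_\mathfrak b\|_{L^2(\Om)}=1$ together with Agmon decay yields $\mathcal F_\mathfrak b(\psi_\mathfrak b)=\mu^*+o(1)$.

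The main obstacle lies in this third step: propagating the chain of local transformations (diffeomorphism, gauge, scaling) through the bilinear Feynman--Hellmann integrand while keeping uniform control of all error terms---the boundary contributions produced by the purely local gauge change, the quadratic perturbation $f$, and the Jacobian mismatch $G_{k,m}-\delta_{k,m}$ in~\eqref{eq:Bon1}---and in doing so justify the use of the model operator on a neighborhood that is shrinking in the original coordinates but expanding in the scaled ones. Once $\lambda'_\pm(\mathfrak b)\to\mu^*>0$ has been established, the strict monotonicity of $\mathfrak b\mapsto\lambda(\mathfrak b)$ on $[\mathfrak b_0,+\infty)$ is a direct consequence of $\lambda(\mathfrak b_2)-\lambda(\mathfrak b_1)\ge\int_{\mathfrak b_1}^{\mathfrak b_2}\lambda'_-(s)\,ds\ge(\mu^*/2)(\mathfrak b_2-\mathfrak b_1)$ for $\mathfrak b_0\le\mathfrak b_1<\mathfrak b_2$.
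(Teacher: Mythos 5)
Your plan is organized around directly evaluating the Feynman--Hellmann integral
$\mathcal F_\mathfrak b(\psi)=-2\int_\Om \Fb\cdot\mathbf j_\mathfrak b(\psi)\,dx$ and identifying its value by
rescaling near each $\mathsf p_k\in S^*$. The first two steps are sound: the perturbation-theory
formulas for $\lambda'_\pm(\mathfrak b)$ are correct (and coincide with the paper's), and the Agmon
bound of Theorem~\ref{thm:lin_bound} combined with Cauchy--Schwarz and
$\|(\nabla-i\mathfrak b\Fb)\psi_\mathfrak b\|^2_{L^2}=\lambda(\mathfrak b)=O(\mathfrak b)$ does show that the part of
$\mathcal F_\mathfrak b$ coming from the complement of $V_R$ is $o(1)$ for $R(\mathfrak b)$ growing
logarithmically. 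The gap is in your third step, and it is a genuine one. You pass from
``the rescaled ground state is a quasi-minimizer of $q_{\alpha_k,a}$'' to ``its
Feynman--Hellmann integral for the model is close to $\mu(\alpha_k,a)$.'' Closeness of
Rayleigh quotients does not by itself control the linear functional
$\phi\mapsto -2\int \Ab_{\alpha_k,a}\cdot\mathbf j(\phi)$. To make that step rigorous you
would need a quantitative stability result (something like strong $H^1$ convergence of
the rescaled ground state to the model eigenfunction, which typically rests on a spectral
gap and simplicity of $\mu(\alpha_k,a)$ --- neither of which is established anywhere in
the paper), or equivalently a two-term expansion of $\lambda(\mathfrak b)$. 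The paper's
discussion below the statement of the proposition explicitly flags that the argument is
designed to avoid exactly this machinery, using only the leading-order asymptotics of
$\lambda(\mathfrak b)$.

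The device you are missing is the sandwiching trick from~\cite[Thm.~1.1]{fournais2007strong}.
After choosing an \emph{explicit global} gauge $\Fb_\mathfrak g=\Fb+\nabla\varphi$, with
$\varphi\in H^2(\Om)$ as in Lemma~\ref{lem:gauge_field} so that
$|\Fb_\mathfrak g(x)|\le C\,\dist(x,\mathsf p_j)$ near each $\mathsf p_j$, one writes the algebraic
identity
\begin{equation*}
Q_{\mathfrak b+B,\Fb_\mathfrak g}(\psi)=Q_{\mathfrak b,\Fb_\mathfrak g}(\psi)+B\,\lambda'_+(\mathfrak b)+B^2\int_\Om|\Fb_\mathfrak g|^2|\psi|^2\,dx
\end{equation*}
for the analytic branch $\psi=\psi_{\mathfrak g,m_+}(\mathfrak b)$, then applies the variational
inequality $Q_{\mathfrak b+B,\Fb_\mathfrak g}(\psi)\ge\lambda(\mathfrak b+B)$ and the fact that
$Q_{\mathfrak b,\Fb_\mathfrak g}(\psi)=\lambda(\mathfrak b)$. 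This turns $\lambda'_+(\mathfrak b)$ into a difference
quotient minus an error $B\int|\Fb_\mathfrak g|^2|\psi|^2$, which the Agmon estimate and the bound
on $\Fb_\mathfrak g$ control by $CB\mathfrak b^{-1}$. Choosing $B=\eta\mathfrak b$ and invoking only
Theorem~\ref{thm:lambda_lin} for the difference quotient gives
$\liminf\lambda'_+\ge\mu^*-C\eta$, and letting $\eta\to0$ finishes. No information about
the local shape of $\psi_\mathfrak b$ is required. Note also a secondary issue in your sketch:
if the gauge change $\varphi$ is taken global, the Feynman--Hellmann integral is
\emph{exactly} unchanged (since $\Div\mathbf j_\mathfrak b=0$ and $\mathbf j_\mathfrak b\cdot\nu=0$ on
$\partial\Om$), not ``unchanged up to a boundary integral on $\partial B_k$''; if instead
you mean a compactly supported gauge change in $B_k$, the induced boundary term on
$\partial B_k$ involves $\varphi$, not $\Fb_\mathfrak g$, and is not obviously exponentially
small. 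The paper sidesteps both of these delicacies by building a single $H^2(\Om)$ gauge.
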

	The proof of Proposition~\ref{prop:monot} is inspired by that of~\cite[Theorem 1.1]{fournais2007strong}, although the two proofs  differ slightly at the technical level in a way that we will describe below.
	
	The argument in~\cite{fournais2007strong} avoids the use of a complete expansion of the ground-state energy. Such expansions have been used in other works such as~\cite{fournais2006third,bonnaillie2007superconductivity}, and  are usually difficult to  establish. Fournais and Helffer  succeeded to prove the monotonicity of the ground-state  by only using  its leading order asymptotics. Their proof  mainly rely on the control of a certain (error) term, $\|\hat \Ab \psi\|_{L^2(\Om)}$, appearing in the differentiation of the energy, where $\psi$ is a ground-state of the linear operator and $\hat \Ab$ is a vector potential that we introduce below. In~\cite{fournais2007strong}, they use the fact that their vector potential, denoted by $\Fb$, generates  a  constant magnetic field ($\curl \Fb=1$). In the case where the sample is not a disc,  this implies the  existence of a part of the boundary (away from the points with maximal curvature) where $\psi$ is negligible. The remaining part, $\Om_0$, of the boundary (containing the points with maximal curvature) is a simply connected domain. Hence, a gauge transform is used to construct from the potential $\Fb$ another potential $\hat \Ab \in H^1(\Om,\R^2)$ such that $|\hat \Ab|\leq C\dist (x,\partial\Om)$ in $\Om_0$. This upper bound of $|\hat \Ab|$ compensates the fact that $\psi$ is \emph{big} in $\Om_0$, and, together with the normal and boundary Agmon estimates, allow to control $\|\hat \Ab \psi\|_{L^2(\Om)}$. 
	
	We adopt a parallel strategy where we use the leading order asymptotics of $\lambda(\mathfrak b)$ established in Theorem~\ref{thm:lambda_lin}. The intersection points, $\mathsf p_j$, of the magnetic edge $\Gamma$ and the boundary $\partial \Om$ play the role of the points with maximum curvature in~\cite{fournais2007strong}. However, the discontinuity of our magnetic field  makes us take into consideration the way $\Gamma$ intersects  $\partial \Om$, while constructing the gauge vector potential $\Fb_{\mathfrak g}$ (playing the role of  $\hat \Ab$ in~\cite{fournais2007strong}). This generates a more complicated definition of  $\Fb_{\mathfrak g}$ related to the geometry of the problem (Lemma~\ref{lem:gauge_field}). This definition guarantees that $\Fb_{\mathfrak g}$ is in $H^1(\Om,\R^2)$ and satisfies $|\Fb_{\mathfrak g}|\leq C\dist(x,\mathsf p_j)$ in the vicinity  of any  point $\mathsf p_j$. Consequently, the localization estimates in Theorem~\ref{thm:lin_bound} are sufficient to control the (error) term $\|\Fb_{\mathfrak g} \psi\|_{L^2(\Om)}$. Here, $\psi$ is a ground-state corresponding to the energy $\lambda(\mathfrak b)$.

	Now, we present the approach in details. It is convenient to work in the so-called Frenet coordinates.
For $t_0>0$, we define
		\[\breve{\Phi}~:~\frac {|\partial \Om|}{2\pi}\mathbb S^1\times(0,t_0) \ni (s,t)\longmapsto \gamma(s)+t\nu(s) \in \R^2 .\]
		where $\big(|\partial \Om|/2\pi\big)\mathbb S^1\ni s\mapsto \gamma(s)\in\partial \Om$ is the arc length parametrization of $\partial \Om$, oriented counterclockwise and $\nu(s)$ is the inward unit normal vector of $\partial \Om$ at the point $\gamma(s)$. We assume that $t_0$ is sufficiently small so that $\breve{\Phi}$ is a diffeomorphism, and we denote its image by  $\Om(t_0)$.
		
		 Notice that $t=\dist(\breve{\Phi}(s,t),\partial \Om)$. The Jacobian of $\breve{\Phi}$ satisfies $J_{\breve{\Phi}}=1-tk(s)$. Here, $k(s)$ is the curvature of $\partial \Om$ at the point $\gamma(s)$, which is bounded according to the assumptions on the domain. For more details about Frenet coordinates, see~\cite[Appendix F]{fournais2010spectral}.
		
	For $j \in\{1,...,n\}$, let $s_j$ be the abscissa of the point $\mathsf p_j\in\Gamma\cap\partial\Om$ in the Frenet coordinates, that is  $(\breve{\Phi})^{-1}(\mathsf p_j)=(s_j,0)$. We denote by $l=\min_{p,m}|s_p-s_m|$. For any positive $\epsilon$ such that $\epsilon<\min(t_0,l/2)$, we define the set:
		\begin{equation}\label{eq:N_j}
		\mathcal N(\mathsf p_j,\epsilon) =\big\{x=\breve{\Phi}(s,t)~:~0<t< \epsilon,|s-s_j|< \epsilon\big\}.\end{equation}
		When the above conditions on $\epsilon$ hold, we choose an $\epsilon_0>0$ to get  a family of pairwise disjoint sets $\left(\mathcal N(\mathsf p_j,2\epsilon_0)\right)_{j=1}^n$ of $\Om(t_0)$.

		\begin{lemma}\label{lem:gauge_field}
		Let $\Gamma \cap \partial \Om=\big\{ \mathsf p_j~:~j \in\{1,...,n\}\big\}$. There exist  $C>0$ and a function $\varphi\in H^2(\Om)$ such that $\Fb_{\mathfrak g}=\Fb+\nabla \varphi$ satisfies for any $j \in\{1,...,n\}$ 
		\begin{equation*}
			|\Fb_{\mathfrak g}(x)|\leq C\dist(x,\mathsf p_j),\  x\in \mathcal N(\mathsf p_j,\epsilon_0).
		\end{equation*}
		\end{lemma}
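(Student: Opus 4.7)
The plan is to build $\varphi$ by patching together local gauge functions near each intersection point $\mathsf p_j$, using Lemma~\ref{lem:gauge1} as the local input. In the $\hat x = \Psi_j(x)$ coordinates of Section~\ref{sec:Psi}, that lemma produces a function $\varphi_\ell^{(j)} \in H^2(B(0,\ell)\cap\R^2_+)$ such that $\hat\Fb - \nabla_{\hat x}\varphi_\ell^{(j)} = (0,\, A_{\alpha_j,a} + f)$ with $|f(\hat x)| \leq C(\hat x_1^{\,2}+|\hat x_1 \hat x_2|)$. Inspection of the explicit formulas~\eqref{eq:Aa1}--\eqref{eq:Aa3} shows that $A_{\alpha_j,a}$ is piecewise linear and vanishes at the origin, hence $|A_{\alpha_j,a}(\hat x)|\leq C|\hat x|$; combining with the bound on $f$ gives the key local estimate $|\hat\Fb - \nabla_{\hat x}\varphi_\ell^{(j)}|\leq C|\hat x|$ on $B(0,\ell)\cap\R^2_+$.

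Next, I would choose $\epsilon_0$ small enough that $\mathcal N(\mathsf p_j,2\epsilon_0) \subset \Psi_j^{-1}\bigl(B(0,\ell)\cap\R^2_+\bigr)$ for some admissible $\ell<r_j$, and set $\tilde\varphi_j := \varphi_\ell^{(j)}\circ \Psi_j$. Since the one-form with components $(\hat F_1-\partial_{\hat x_1}\varphi_\ell^{(j)},\,\hat F_2-\partial_{\hat x_2}\varphi_\ell^{(j)})$ pulls back under $\Psi_j$ to the one-form associated with $\Fb - \nabla \tilde\varphi_j$, and since $\Psi_j$ is a smooth diffeomorphism with $\Psi_j(\mathsf p_j)=0$ and $|J_{\Psi_j}|(\mathsf p_j)=1$ (so $|\Psi_j(x)|$ and $|x-\mathsf p_j|$ are comparable on $\mathcal N(\mathsf p_j,2\epsilon_0)$), the local bound transfers to
\begin{equation*}
|\Fb(x) - \nabla\tilde\varphi_j(x)| \leq C\,\dist(x,\mathsf p_j), \qquad x\in\mathcal N(\mathsf p_j,2\epsilon_0).
\end{equation*}
Moreover $\tilde\varphi_j \in H^2(\mathcal N(\mathsf p_j,2\epsilon_0))$ since $\Psi_j$ is smooth and $\varphi_\ell^{(j)}\in H^2$.

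Finally I would patch. Because the $\mathcal N(\mathsf p_j,2\epsilon_0)$ are pairwise disjoint, pick cut-offs $\chi_j\in C^\infty(\overline\Om)$ with $\chi_j\equiv 1$ on $\mathcal N(\mathsf p_j,\epsilon_0)$, $\chi_j$ supported in $\mathcal N(\mathsf p_j,2\epsilon_0)$, and $\chi_j$ vanishing in a neighbourhood of the portion of $\partial\mathcal N(\mathsf p_j,2\epsilon_0)$ that lies inside $\Om$ (so that extension by $0$ preserves $H^2$). Define $\varphi := -\sum_{j=1}^n \chi_j\,\tilde\varphi_j$, extended by $0$ outside the union of the neighbourhoods. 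On $\mathcal N(\mathsf p_j,\epsilon_0)$ we have $\chi_j\equiv 1$ and all other $\chi_k\equiv 0$, so $\varphi = -\tilde\varphi_j$, hence $\Fb_{\mathfrak g} = \Fb+\nabla\varphi = \Fb - \nabla\tilde\varphi_j$, which satisfies the desired estimate. The $H^2(\Om)$ regularity of $\varphi$ follows since each summand is compactly supported in $\overline\Om$ with the required regularity, and the uniformity of $C$ over $j$ is automatic by the finiteness of $\{\mathsf p_j\}$.

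The main technical obstacle I anticipate is the correct bookkeeping when transferring the linear-in-$|\hat x|$ bound through the pullback of a one-form, to make sure the matrix of partial derivatives $\partial\hat x/\partial x$ (which is bounded but not the identity away from $\mathsf p_j$) does not destroy the linear decay to the point $\mathsf p_j$; this is handled by the comparability $|\hat x|\asymp |x-\mathsf p_j|$ coming from $\Psi_j(\mathsf p_j)=0$ and the smoothness of $\Psi_j$.
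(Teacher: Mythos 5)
Your proposal is correct, but it follows a genuinely different route from the paper's own proof. The paper works in arc-length (Frenet) coordinates for $\partial\Om$ via the map $\breve\Phi$: it computes $\curl_{s,t}\breve\Fb$, then \emph{constructs by hand} a vector potential $\breve\Fb_\mathfrak g^j=(0,\breve F_\mathfrak g^j)$ with the same curl, giving explicit integral formulas that require a three-way case analysis depending on whether the image of $\Gamma$ lies in $\{s>0\}$, $\{s<0\}$, or $\{s=0\}$ near $\mathsf p_j$; the bound $|\breve\Fb_\mathfrak g^j|\leq C|s|$ then falls out of those formulas, and a cut-off patches the local gauges together. You instead recycle the machinery already set up for the energy bounds: the diffeomorphism $\Psi_j$ of Section~\ref{sec:Psi} straightens $\partial\Om$ to $\partial\R^2_+$ and $\Gamma$ to a ray, and Lemma~\ref{lem:gauge1} already delivers a local gauge $\varphi_\ell^{(j)}$ with $\hat\Fb-\nabla_{\hat x}\varphi_\ell^{(j)}=(0,A_{\alpha_j,a}+f)$, from which $|A_{\alpha_j,a}(\hat x)|\leq C|\hat x|$ (piecewise linear, vanishing at $0$) and $|f(\hat x)|\leq C|\hat x|^2$ give the linear-in-$\dist(x,\mathsf p_j)$ bound after pulling back the one-form, since $d\Psi_j$ is bounded and $|\Psi_j(x)|\asymp|x-\mathsf p_j|$. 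Your patching via $\chi_j$ supported in $\mathcal N(\mathsf p_j,2\epsilon_0)$ and vanishing near the interior part of its boundary yields the global $\varphi\in H^2(\Om)$ exactly as in the paper. The trade-off: your argument avoids re-deriving explicit gauge formulas and a fresh case analysis, at the price of tracking the one-form transformation through $\Psi_j$ (which you flag and handle correctly); the paper's version gives formulae from which $\Fb_\mathfrak g\in L^\infty$ can be read directly, a fact used in the later footnote in the proof of Proposition~\ref{prop:monot}. If you adopt your route, you should note explicitly that boundedness of $\Fb_\mathfrak g$ (needed there) follows from Lemma~\ref{lem:gauge1} and Theorem~\ref{thm:regul} combined with the boundedness of $\nabla\chi_j$ and $\tilde\varphi_j$ on the compact sets involved.
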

	\begin{proof}
		Let $\breve{\Fb}=(\breve{F_1},\breve{F_2})$ be the vector potential defined so that
		\[F_1 dx_1+F_2 dx_2=\breve{F_1}ds+\breve{F_2}dt.\]
		We have
		 \begin{equation*}
		\curl_{s,t}\breve{\Fb}=\partial_s\breve{F_2}-\partial_t\breve{F_1}=
         \begin{cases}
          1-tk(s),&\mathrm{if}~\breve{\Phi}(s,t)\in\Om_1,\\
          a(1-tk(s)),&\mathrm{if}~\breve{\Phi}(s,t)\in\Om_2.
          \end{cases}
		 \end{equation*}

		We fix a point $\mathsf p_j \in\Gamma\cap\partial\Om$ and we work locally in the set $\mathcal N(\mathsf p_j,2\epsilon_0)$. After performing a translation, we assume that the Frenet coordinates of $\mathsf p_j$ are $(0,0)$, but for simplicity we still denote by $\breve{\Phi}$ the obtained diffeomorphism $\breve{\Phi}^j$. Furthermore, let $\mathcal N_m=\mathcal N(\mathsf p_j,2\epsilon_0)\cap \Om_m$, $m=1,2$. To fix computation, we assume w.l.o.g that $\breve{\Phi}^{-1}(\partial\mathcal N_1\cap \partial \Om)$ (respectively  $\breve{\Phi}^{-1}(\partial\mathcal N_2\cap \partial \Om)$ is a subset of $\{(s,t)~:~s\geq 0,t=0\}$ (respectively $\{(s,t)~:~s\leq 0,t=0\}$). The curve $\Gamma\cap \mathcal N(\mathsf p_j,2\epsilon_0)$ is transformed to the curve $\breve{\Gamma}$  in the $(s,t)$-plane.  Under Assumption~\ref{assump1} (particularly Item 6) and due to the nature of the diffeomorphism $\breve{\Phi}$, the curve $\breve{\Gamma}$ is not tangent to the $s$-axis.
Hence for sufficiently small $\epsilon_0$, one may distinguish between three cases:
\begin{itemize}
	\item [\emph{Case 1.}] $\breve{\Gamma} \subset \{(s,t)~:~s>0\}$.
	\item [\emph{Case 2.}] $\breve{\Gamma} \subset \{(s,t)~:~s<0\}$.
	\item [\emph{Case 3.}] $\breve{\Gamma} \subset \{(s,t)~:~s=0\}$.
\end{itemize}
In each of the first two cases, we assume that $\epsilon_0$ is small enough so that the curve $\breve{\Gamma}$ corresponds to a strictly monotonous function $s\mapsto f(s)$.
 We consider the vector potential  $\breve{\Fb}_\mathfrak g^j=\big(0,\breve{F}_\mathfrak g^j\big)\in H^1\big(\breve{\Phi}^{-1}\big(\mathcal N(\mathsf p_j,2\epsilon_0)\big)\big)$, where $\breve{F}_\mathfrak g^j$ is defined in $\breve{\Phi}^{-1}\big(\mathcal N(\mathsf p_j,2\epsilon_0)\big)$, in each of the three cases above,  as follows:
	\paragraph{\itshape Case 1.} $\breve{F}_\mathfrak g^j(s,t)$ is given by
	\begin{equation*}
\begin{cases}
s+(a-1)f^{-1}(t)-at\int_0^{f^{-1}(t)}k(s')\,ds'-t\int_{f^{-1}(t)}^sk(s')\,ds',&~s>0\ \mathrm{and}\ s\geq f^{-1}(t), \\
as-at\int_0^{s}k(s')\,ds' ,&\mathrm{elsewhere}.
\end{cases}
	\end{equation*}
\paragraph{\itshape Case 2.} $\breve{F}_\mathfrak g^j(s,t)$ is given by
\begin{equation*}
\begin{cases}
as+(1-a)f^{-1}(t)-t\int_0^{f^{-1}(t)}k(s')\,ds'-at\int_{f^{-1}(t)}^sk(s')\,ds',&~s<0\ \mathrm{and}\ s\leq f^{-1}(t), \\
s-t\int_0^{s}k(s')\,ds' ,&\mathrm{elsewhere}.
\end{cases}
\end{equation*}
\paragraph{\itshape Case 3.}
\begin{equation*}
\breve{F}_\mathfrak g^j(s,t)=
\begin{cases}
s-t\int_0^s k(s')\,ds',&~s>0, \\
as-at\int_0^s k(s')\,ds',&~s<0.
\end{cases}
\end{equation*}

Note that $\breve{\Phi}^{-1}\big(\mathcal N(\mathsf p_j,2\epsilon_0)\big)$ is simply connected and that $\curl_{s,t}\breve{\Fb}=\curl_{s,t}\breve{\Fb}_\mathfrak g^j$ in each of the aforementioned cases. Consequently, there exists  a function $\breve{\varphi}^j\in H^2\big(\Phi^{-1}\big(\mathcal N(\mathsf p_j,2\epsilon_0)\big)\big)$ such that
\[\breve{\Fb}+\nabla_{s,t}\breve{\varphi}^j=\breve{\Fb}_\mathfrak g^j.\]	
Having $k(s)$ bounded and $\epsilon_0$ small, and using the properties of the diffeomorphism $\Phi$, one can see that $|\breve{\Fb}_\mathfrak g^j|\leq C_1|s|\leq C\dist(x,\mathsf p_j)$ for some $C_1,C>0$. 
	
Now, we consider $\chi \in \mathcal C^\infty(\overline{\Om})$ such that 
\[\supp \chi \subset \bigcup_{j=1}^n\mathcal N(\mathsf p_j,2\epsilon_0),\quad 0\leq \chi\leq 1 \quad  \mathrm{and}\quad \chi=1\ \mathrm{in}\ \bigcup_{j=1}^n\mathcal N\big(\mathsf p_j,\epsilon_0\big).\]
Hence, defining  $\varphi(x)=\breve{\varphi}^j\big((\breve{\Phi})^{-1}(x)\big)\chi(x)$ completes the proof. 
	\end{proof}		
		
		\begin{proof}[Proof of Proposition~\ref{prop:monot}]
			Let $\mathfrak b \geq 0$ and $M$ be the multiplicity of $\lambda(\mathfrak b)$. Recall that the domains of the corresponding operator and quadratic form are independent of $\mathfrak b$ (see~\eqref{eq:domP_F}). The perturbation theory 
			 asserts the existence of $\epsilon>0$, and  analytic functions 
			 \[(\mathfrak b-\epsilon,\mathfrak b+\epsilon)\ni \beta\mapsto \psi_m\in H^2(\Om)\setminus \{0\},\]
			 \[(\mathfrak b-\epsilon,\mathfrak b+\epsilon)\ni \beta\mapsto E_m\in \R,\]
			 for $m=1,...,M$, such that the functions $\{\psi_m(\mathfrak b)\}$ are linearly independent and normalized in $L^2(\Om)$, and
			\[\mathcal P_{\beta,\Fb}\psi_m(\beta)=E_m(\beta)\psi_m(\beta),\quad E_m(\mathfrak b)=\lambda(\mathfrak b).\]
				For small $\epsilon$, there exist $m_+$ and $m_-$ in $\{1,...,M\}$ such that
				\[\mathrm{for}\ \beta \in (\mathfrak b,\mathfrak b+\epsilon),\qquad  E_{m_+}(\beta)=\min_{\{1,...,M\}}E_m(\beta),\]
				\[\mathrm{for}\ \beta \in (\mathfrak b-\epsilon,\mathfrak b),\qquad  E_{m_-}(\beta)=\min_{\{1,...,M\}}E_m(\beta).\]
				Let $\Fb_\mathfrak g$ be the field introduced in Lemma~\ref{lem:gauge_field},
				and $\mathcal P_{\mathfrak b, \Fb_{\mathfrak g}}$, $Q_{\mathfrak b,\Fb_{\mathfrak g}}$ be  the  operator and  the quadratic form defined  in~\eqref{eq:P0} and~\eqref{eq:Quad} respectively.
				The operators $\mathcal P_{\mathfrak b,\Fb_{\mathfrak g}}$ and $\mathcal P_{\mathfrak b, \Fb}$ are unitarily equivalent. Indeed,  $\mathcal P_{\mathfrak b,\Fb_{\mathfrak g}}=e^{i\mathfrak b\varphi}\mathcal P_{\mathfrak b, \Fb}e^{-i\mathfrak b\varphi}$, where $\varphi$ is the gauge function in Lemma~\ref{lem:gauge_field}. Let $\psi_{\mathfrak g,m\pm}(\mathfrak b)=e^{i\mathfrak b\varphi}\psi_{m\pm}(\mathfrak b)$ be normalized eigenfunctions of $\mathcal P_{\mathfrak b,\Fb_{\mathfrak g}}$, associated with the lowest ground-state energy $\lambda(\mathfrak b)$.
				By the first order perturbation theory, the derivatives $\lambda'_\pm(\mathfrak b)$ can be written as
				\[
				\lambda'_\pm(\mathfrak b)
				=\frac{d}{d\beta}Q_{\beta,\Fb_{\mathfrak g}}\big(\psi_{\mathfrak g,m\pm}(\beta)\big)|_{\beta=\mathfrak b}
				=2 \im\big\langle\Fb_\mathfrak g \psi_{\mathfrak g,m\pm}(\mathfrak b),(\nabla-i\mathfrak b\Fb_\mathfrak g)\psi_{\mathfrak g,m\pm}(\mathfrak b)\big\rangle.
				\]
				This implies for any $B>0$ 
				\begin{align*}
			\lambda'_+(\mathfrak b)&= 
				\frac {Q_{\mathfrak b+B,\Fb_{\mathfrak g}}\big(\psi_{\mathfrak g,m+}(\mathfrak b)\big)-Q_{\mathfrak b,\Fb_{\mathfrak g}}\big(\psi_{\mathfrak g,m+}(\mathfrak b)\big)}{B}-B\int_{\Om}|\Fb_\mathfrak g|^2 |\psi_{\mathfrak g,m+}(\mathfrak b)|^2\,dx ,\\
				&\geq \frac{\lambda(\mathfrak b+B)-\lambda(\mathfrak b)}{B}
-B\int_\Om|\Fb_\mathfrak g|^2 |\psi_{\mathfrak g,m+}(\mathfrak b)|^2\,dx.			
\end{align*}
We decompose the integral in the right hand side of the previous inequality into two, one over $\bigcup_{j=1}^n\mathcal N(\mathsf p_j,\epsilon_0)$ and the other over its complement. By Theorem~\ref{thm:lin_bound} and Lemma~\ref{lem:gauge_field}, the first integral is bounded from above by $C\mathfrak b^{-1}$ (assuming $\mathfrak b$ large). The second integral is bounded  by  $C\|\Fb_\mathfrak g\|^2_\infty\mathfrak b^{-1}$, due to the exponential decay in Theorem~\ref{thm:lin_bound}\footnote{The fact that $\Fb_\mathfrak g \in L^{\infty}(\Om)$ can be deduced from the explicit definition of this field in Lemma~\ref{lem:gauge_field} together with the boundedness of the potential  $\Fb$ established in~\ref{thm:regul}.}. 
These bounds imply that $\int_\Om|\Fb_\mathfrak g|^2 |\psi_{\mathfrak g,m+}(\mathfrak b)|^2$ is bounded by $C\mathfrak b^{-1}$.
Hence, choosing $B=\eta\mathfrak b$ for any $\eta>0$ and using Propositions~\ref{prop:Lower3} and~\ref{prop:Upper3}, we get
\[\liminf_{\mathfrak b\rightarrow +\infty}\lambda'_+(\mathfrak b)\geq \min_{j\in\{1,...,n\}}\mu(\alpha_j,a)-C\eta.\]
Since $\eta$ is arbitrary, then
\begin{equation}\label{eq:lmda'1}
\liminf_{\mathfrak b\rightarrow +\infty}\lambda'_+(\mathfrak b)\geq \min_{j\in\{1,...,n\}}\mu(\alpha_j,a).\end{equation}
For $B<0$, we use a similar argument to get
\begin{equation}\label{eq:lmda'2}\limsup_{\mathfrak b\rightarrow +\infty}\lambda'_-(\mathfrak b)\leq \min_{j\in\{1,...,n\}}\mu(\alpha_j,a).\end{equation}
By the perturbation theory $\lambda'_+(\mathfrak b)\leq \lambda'_-(\mathfrak b)$. This together with~\eqref{eq:lmda'1} and~\eqref{eq:lmda'2} complete the proof.
	\end{proof}
\begin{proposition}	\label{prop:H_unique}
	Under Assumption~\ref{assump3}, there exists $\kappa_0>0$ such that for all $\kappa\geq \kappa_0$, the equation in $H$
	\[\lambda(\kappa H)=\kappa^2\]
	has a unique solution, which we denote by $H_{C_3}(\kappa)$.
\end{proposition}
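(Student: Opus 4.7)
The plan is to combine the asymptotics of $\lambda(\mathfrak b)$ from Theorem~\ref{thm:lambda_lin} with the strict monotonicity for large $\mathfrak b$ from Proposition~\ref{prop:monot}, treating the equation $\lambda(\kappa H)=\kappa^2$ as an equation in the single variable $\mathfrak b=\kappa H$. Write $\mathfrak b_0$ for a threshold above which Proposition~\ref{prop:monot} guarantees that $\mathfrak b\mapsto\lambda(\mathfrak b)$ is strictly increasing, and enlarge it if needed so that the two-sided bound of Theorem~\ref{thm:lambda_lin} also holds for $\mathfrak b\geq \mathfrak b_0$. Continuity of $\mathfrak b\mapsto \lambda(\mathfrak b)$ will be the other essential input; since the form domain $H^1(\Om)$ and the operator domain described in \eqref{eq:domP_F} do not depend on $\mathfrak b$, continuity follows directly from the analytic perturbation theory already invoked in Section~\ref{sec:monot}.

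First I would handle existence. From Theorem~\ref{thm:lambda_lin} one has $\lambda(\mathfrak b)/\mathfrak b\to\mu^*$ as $\mathfrak b\to+\infty$, so in particular $\lambda(\mathfrak b)\to+\infty$. Fix $C_0:=\sup_{0\leq\mathfrak b\leq \mathfrak b_0}\lambda(\mathfrak b)$, which is finite by the continuity of $\lambda$ on the compact interval $[0,\mathfrak b_0]$. Choose $\kappa_0>0$ large enough so that $\kappa_0^2>C_0$, and at the same time so that for every $\kappa\geq\kappa_0$ one can exhibit some $\mathfrak b_2(\kappa)>\mathfrak b_0$ with $\lambda(\mathfrak b_2)>\kappa^2$; the upper bound $\lambda(\mathfrak b_2)\leq \mu^*\mathfrak b_2+C\mathfrak b_2^{3/5}$ and the lower bound $\lambda(\mathfrak b_2)\geq \mu^*\mathfrak b_2-C\mathfrak b_2^{3/4}$ make the explicit choice $\mathfrak b_2=2\kappa^2/\mu^*$ work for all sufficiently large $\kappa$. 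Since $\lambda(\mathfrak b_0)\leq C_0<\kappa^2<\lambda(\mathfrak b_2)$, the intermediate value theorem produces some $\mathfrak b^\star\in(\mathfrak b_0,\mathfrak b_2)$ with $\lambda(\mathfrak b^\star)=\kappa^2$, and we set $H_{C_3}(\kappa)=\mathfrak b^\star/\kappa$.

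Uniqueness is then split according to the size of $\mathfrak b=\kappa H$. For $\mathfrak b\leq \mathfrak b_0$ the inequality $\lambda(\mathfrak b)\leq C_0<\kappa^2$ rules out any solution. For $\mathfrak b>\mathfrak b_0$, Proposition~\ref{prop:monot} gives that $\mathfrak b\mapsto \lambda(\mathfrak b)$ is strictly increasing, so the equation $\lambda(\mathfrak b)=\kappa^2$ has at most one solution in $(\mathfrak b_0,+\infty)$. Combining the two ranges yields uniqueness.

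The only mildly delicate point is calibrating the thresholds, i.e., choosing $\kappa_0$ so that $\kappa_0^2>C_0$ \emph{and} the large-$\mathfrak b$ regime where the asymptotics and the monotonicity of $\lambda$ both apply covers the solution $\mathfrak b^\star$. This is not a real obstacle: once Theorem~\ref{thm:lambda_lin} and Proposition~\ref{prop:monot} are in hand, it is a matter of picking $\kappa_0$ sufficiently large so that $2\kappa^2/\mu^*\geq \mathfrak b_0$ and $\kappa^2>C_0$ simultaneously. Hence the statement of Proposition~\ref{prop:H_unique} follows.
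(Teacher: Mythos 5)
Your proof is correct and follows essentially the same route as the paper: both rely on continuity of $\mathfrak b\mapsto\lambda(\mathfrak b)$ from the analytic perturbation theory, strict monotonicity from Proposition~\ref{prop:monot} above a threshold $\mathfrak b_0$, divergence $\lambda(\mathfrak b)\to+\infty$ from the lower bound in Theorem~\ref{thm:lambda_lin} for surjectivity, and the choice of $\kappa_0$ so that $\kappa^2$ exceeds the supremum of $\lambda$ over the bounded initial interval, which excludes spurious small-$\mathfrak b$ solutions. The paper phrases this by noting that $\lambda$ restricted to $[\mathfrak b_0,+\infty)$ is an increasing bijection onto $[\lambda(\mathfrak b_0),+\infty)$ and setting $\kappa_0=\sqrt{\lambda(\mathfrak b_0)}$, while you spell out the intermediate value step and the explicit test point $\mathfrak b_2$, but the argument is the same.
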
	
\begin{proof}
	Proposition~\ref{prop:monot} and the perturbation theory ensure the existence of $\mathfrak b_0$ such that $\mathfrak b\mapsto\lambda(\mathfrak b)$  is a strictly increasing continuous function from $[\mathfrak b_0,+\infty)$ onto $[\lambda(\mathfrak b_0),+\infty)$. We may choose $\mathfrak b_0$ sufficiently large so that for any $0<\mathfrak b<\mathfrak b_0$, $\lambda(\mathfrak b)<\lambda(\mathfrak b_0)$. 
	Let $\kappa_0=\sqrt{\lambda(\mathfrak b_0)}$, then for all $\kappa\geq\kappa_0$, the equation
	\[\lambda(\kappa H)=\kappa^2\]
	admits a unique solution $H_{C_3}(\kappa)=\lambda^{-1}(\kappa^2)/\kappa$, where $\lambda^{-1}(\cdot)$ is the inverse function of $\lambda(\cdot)$ defined on $[\lambda(\mathfrak b_0),+\infty)$.
\end{proof}
\begin{rem}\label{rem:equal_loc}
	For $\kappa>0$, recall  the local critical fields, 
	$\overline{H}^\mathrm{loc}_{C_3}(\kappa)$ and $\underline{H}^\mathrm{loc}_{C_3}(\kappa)$,  defined in~\eqref{eq:Hc3_loc_over} and~\eqref{eq:Hc3_loc_under} respectively. For sufficiently large values of $\kappa$, the equality of these two critical fields follows easily from the result established in Proposition~\ref{prop:H_unique}.
\end{rem}
\begin{proposition}\label{prop:Hc3}
	 Under Assumption~\ref{assump3}, there exists $\kappa_0>0$ such that for all $\kappa\geq\kappa_0$, the unique solution, $H=H_{C_3}(\kappa)$, to the equation
	\[ \lambda(\kappa H)=\kappa^2\]
	satisfies the following. There exist positive constants $\eta_1$ and $\eta_2$ such that
	\[-\eta_1\kappa^\frac 12\leq H_{C_3}(\kappa) -\frac \kappa {\min\limits_{j \in\{1,...,n\}}\mu(\alpha_j,a)}\leq\eta_2\kappa^\frac 12. \]	
	\end{proposition}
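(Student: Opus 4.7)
The plan is to invert the two-sided asymptotic of $\lambda(\mathfrak b)$ from Theorem~\ref{thm:lambda_lin} at the point $\mathfrak b=\kappa H_{C_3}(\kappa)$, where by Proposition~\ref{prop:H_unique} we have the defining identity $\lambda\bigl(\kappa H_{C_3}(\kappa)\bigr)=\kappa^2$. Write $\mu^*=\min_{j\in\{1,\dots,n\}}\mu(\alpha_j,a)$, which is a strictly positive constant depending only on $a$ and the angles $\alpha_j$. Inserting $\mathfrak b=\kappa H_{C_3}(\kappa)$ into Theorem~\ref{thm:lambda_lin}, I obtain for $\kappa$ sufficiently large
\[\mu^*\,\kappa H_{C_3}(\kappa)-C\bigl(\kappa H_{C_3}(\kappa)\bigr)^{3/4}\;\leq\;\kappa^2\;\leq\;\mu^*\,\kappa H_{C_3}(\kappa)+C\bigl(\kappa H_{C_3}(\kappa)\bigr)^{3/5}.\]

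First I would extract a rough a priori estimate on the size of $H_{C_3}(\kappa)$. From the right-hand inequality above, $\mu^*\kappa H_{C_3}(\kappa)\geq \kappa^2-C\bigl(\kappa H_{C_3}(\kappa)\bigr)^{3/5}$, which forces $H_{C_3}(\kappa)\geq \kappa/(2\mu^*)$ for $\kappa$ large. From the left-hand inequality, $\mu^*\kappa H_{C_3}(\kappa)\leq \kappa^2+C\bigl(\kappa H_{C_3}(\kappa)\bigr)^{3/4}$; since the exponent of $H_{C_3}(\kappa)$ on the right is $3/4<1$, absorbing the correction on the left yields $H_{C_3}(\kappa)\leq M\kappa$ for some constant $M>0$ and all large $\kappa$. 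In particular $\kappa H_{C_3}(\kappa)\leq M\kappa^2$, so $\bigl(\kappa H_{C_3}(\kappa)\bigr)^{3/4}\leq M'\kappa^{3/2}$ and $\bigl(\kappa H_{C_3}(\kappa)\bigr)^{3/5}\leq M'\kappa^{6/5}$.

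Next I would bootstrap. Dividing the displayed two-sided bound by $\kappa\mu^*$ and using the a priori estimates from the previous step,
\[\frac{\kappa}{\mu^*}-\frac{C}{\mu^*}\,\kappa^{1/5}\;\leq\;H_{C_3}(\kappa)\;\leq\;\frac{\kappa}{\mu^*}+\frac{C}{\mu^*}\,\kappa^{1/2},\]
which gives the claimed estimate with $\eta_2=C/\mu^*$ and an $\eta_1$ that can in fact be taken proportional to $\kappa^{-3/10}$ (but a fortiori bounded by some absolute constant, so one may simply take any fixed $\eta_1>0$ as in the statement).

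There is no serious obstacle here: the content has already been done in Theorem~\ref{thm:lambda_lin} and Proposition~\ref{prop:H_unique}. The only point requiring minor care is the bootstrap step, namely converting the implicit bound $H_{C_3}(\kappa)\leq \kappa/\mu^*+C\kappa^{3/4}H_{C_3}(\kappa)^{3/4}/\mu^*$ into an explicit linear-in-$\kappa$ bound before re-injecting it to get the final $\kappa^{1/2}$ correction; this is standard and handled by observing that the $3/4$ power is subcritical. Note that the error coming from the upper bound on $\lambda(\mathfrak b)$ (controlled by $\mathfrak b^{3/5}$) produces an error term of order $\kappa^{1/5}$ in $H_{C_3}(\kappa)$, which is much better than the claimed $\kappa^{1/2}$; the $\kappa^{1/2}$ rate in the statement is therefore dictated exclusively by the lower bound $-C\mathfrak b^{3/4}$ in Theorem~\ref{thm:lambda_lin}.
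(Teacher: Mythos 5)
Your proof is correct and the content is essentially the same as the paper's: both arguments rest entirely on Theorem~\ref{thm:lambda_lin} and Proposition~\ref{prop:H_unique}. The difference is one of presentation. The paper defines explicit bracketing values $H_1=\kappa/\mu^*-\eta_1\kappa^{1/2}$ and $H_2=\kappa/\mu^*+\eta_2\kappa^{1/2}$, checks $\lambda(\kappa H_1)<\kappa^2<\lambda(\kappa H_2)$ via Theorem~\ref{thm:lambda_lin} (where it simplifies by using the weaker symmetric form $|\lambda(\mathfrak b)-\mu^*\mathfrak b|\leq C\mathfrak b^{3/4}$), and concludes $H_1<H_{C_3}(\kappa)<H_2$ by continuity of $\lambda$. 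You instead substitute $\mathfrak b=\kappa H_{C_3}(\kappa)$ directly into the two-sided bound, obtain an a priori estimate $H_{C_3}(\kappa)=\Theta(\kappa)$, and then bootstrap. Both routes are valid; yours is marginally more direct and avoids constructing the bracketing pair, at the small cost of the extra a priori step (which is needed so $\kappa H_{C_3}$ can be plugged into the asymmetric errors $C\mathfrak b^{3/4}$ and $C\mathfrak b^{3/5}$). Your closing observation is also correct and worth keeping in mind: the one-sided error $\kappa^{1/2}$ in the upper estimate of $H_{C_3}$ comes exclusively from the lower bound $-C\mathfrak b^{3/4}$ of $\lambda$, whereas the sharper upper bound $C\mathfrak b^{3/5}$ actually yields a stronger $\kappa^{1/5}$ control from below, which the paper trades away for symmetry. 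One tiny bookkeeping caveat: the constants $\eta_1,\eta_2$ you end up with absorb the powers of the a priori constant $M$, not just $C/\mu^*$, but this doesn't change anything of substance.
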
	
\begin{proof}
 We assume that $\kappa$ is sufficiently large so that the results of Proposition~\ref{prop:H_unique} hold. We will suitably define two  fields $H_1=H_1(\kappa)$ and $H_2=H_2(\kappa)$ satisfying 
	\[\lambda(\kappa H_1)<\kappa^2\qquad \mathrm{and}\qquad \lambda(\kappa H_2)>\kappa^2,\]
then the desired result follows by using the continuity of $\mathfrak b \mapsto \lambda(\mathfrak b)$.

	Set $H_1=\kappa/\mu^*-\eta_1\kappa^{\delta_1}$, where $\eta_1>0$ and $\delta_1 \in (0,1)$ are two constants to be  chosen  soon.  For any fixed choice of $\eta_1$ and $\delta_1$, we assume that $\kappa$ is sufficiently large so that $H_1>1$. Hence, Theorem~\ref{thm:lambda_lin} asserts the existence of $\kappa_0>0$ and  $C>0$ such that for all $\kappa \geq \kappa_0$,
	\begin{align*}
	\lambda(\kappa H_1)\leq \mu^* \kappa H_1 +C(\kappa H_1)^\frac 34
	&\leq \kappa^2-\eta_1\mu^* \kappa^{1+\delta_1}+C\kappa^\frac32\big( (\mu^*)^{-1}-\eta_1\kappa^{-1+\delta_1}\big)^\frac 34\\
	&\leq \kappa^2-\eta_1\mu^* \kappa^{1+\delta_1}+ C {(\mu^*)}^{-3/4}\kappa^\frac32.
	\end{align*}
	Choose $\delta_1=1/2$ and $\eta_1>C{(\mu^*)}^{-7/4}$ (so that $-\eta_1\mu^* + C {(\mu^*)}^{-3/4}<0$). This choice of parameters yields 
		\begin{equation*}\label{eq:H1}
		\lambda(\kappa H_1)<\kappa^2\,,\qquad \mbox{for all}\ \kappa\geq \kappa_0.
		\end{equation*}
	Similarly, set $H_2=\kappa/\mu^*+\eta_2\kappa^{\delta_2}$, where $\eta_2>0$ and $\delta_2 \in (0,1)$ are constants to be  chosen. By Theorem~\ref{thm:lambda_lin},  there exists $\kappa_0>0$ and  $C>0$ such that for all $\kappa \geq \kappa_0$,
	\begin{equation*}
	\lambda(\kappa H_2)\geq \mu^* \kappa H_2 -C(\kappa H_2)^\frac 34
	\geq\kappa^2+\eta_2\mu^* \kappa^{1+\delta_2}-C(\mu^*)^{-\frac 34}\kappa^\frac32.	
	\end{equation*}
	Choose $\delta_2=1/2$ and $\eta_2$ such that $\eta_2>C{(\mu^*)}^{-7/4}$ to obtain 
	\begin{equation*}\label{eq:H2}
	\lambda(\kappa H_2)>\kappa^2\,,\qquad \mbox{for all}\ \kappa\geq \kappa_0.
	\end{equation*}
\end{proof}
\section{Proof of Theorem~\ref{thm:agmon}}\label{sec:agmon}
The aim of this section is to establish Theorem~\ref{thm:agmon}. This theorem displays how, with an increasing field, the order parameter  (in~\eqref{eq:Euler}) and the corresponding GL energy decay successively away from  the intersection points of $\Gamma$ and $\partial \Om$,  $\{\mathsf p_j\}_j$, according to the ordering of the eigenvalues $\{\mu(\alpha_j,a)\}_j$. Moreover, it asserts the eventual localization of the order parameter near the point(s) ${\mathsf p_k}$ with the smallest corresponding eigenvalue $\mu^*$.

The following lower bound is crucial in establishing Theorem~\ref{thm:agmon}.
\begin{lemma}\label{lem:spectral}
	Suppose that $\Om$ satisfies Assumption~\ref{assump3}. Let $\T=\{1,...,n\}$ and $\mu>0$ satisfy $\mu^*\leq\mu<|a|\Theta_0$. Define 
	\[\Sigma=\left\{ j\in\T~:~\mu(\alpha_j,a)\leq\mu\right\},\ S=\left\{\mathsf p_j\in\Gamma\cap\partial\Om,~j\in\Sigma\right\}\ \mathrm{and}\ d=\min_{j\in \T\setminus\Sigma}\mu(\alpha_j,a)-\mu\]
	(in the case $\Sigma=T$, we set $d=|a|\Theta_0-\mu$).
	There exists $C>0$, and for all $R_0>1$ there exists $\tilde{\kappa}_0>0$ such that for $\kappa\geq\tilde{\kappa}_0$, if $(\psi,\Ab) \in H^1(\Om;\C)\times \Hd$ is a critical point of~\eqref{eq:Euler}, $H$ satisfies $H\geq \kappa/\mu$	and $Q_{\kp H,\Ab}$ is the form in~\eqref{eq:Quad}, then for $\varphi \in \dom Q_{\kp H,\Ab}$  such that $\dist(\supp\varphi,S)\geq R_0(\kappa H)^{-1/2}$ we have
	\[Q_{\kp H,\Ab}(\varphi)\geq \kappa H\Big(\mu+\frac d2-\frac C{R_0^2}\Big)\|\varphi\|_{L^2(\Om)}^2.\]
\end{lemma}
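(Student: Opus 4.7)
The plan is to reduce the lower bound for $Q_{\kappa H,\Ab}(\varphi)$ to the one for $Q_{\kappa H,\Fb}$ given by Proposition~\ref{prop:Ub2}, absorbing the discrepancy between $\Ab$ and $\Fb$ into a local gauge on patches of size $R_0(\kappa H)^{-1/2}$. The first observation is that since $\dist(\supp\varphi,S)\geq R_0(\kappa H)^{-1/2}$, the balls $B(\mathsf p_k,R_0(\kappa H)^{-1/2})$ with $k\in\Sigma$ do not meet $\supp\varphi$. Combining the ordering $|a|\geq\beta_a\geq|a|\Theta_0\geq\mu(\alpha_j,a)$ with $\mu(\alpha_j,a)\geq\mu+d$ for $j\in\T\setminus\Sigma$ (and $|a|\Theta_0\geq\mu+d$) gives
\begin{equation*}
U^{(2)}_{\kappa H}(x)\geq(\mu+d)\kappa H-CR_0^{4}(\kappa H)^{1/2}\qquad\text{on }\supp\varphi,
\end{equation*}
where $U^{(2)}_{\kappa H}$ is the potential of Proposition~\ref{prop:Ub2}. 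If $\Ab$ could be replaced by $\Fb$ in $Q$, the conclusion would follow at once.

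To justify the replacement, I would introduce a partition of unity $(\chi_j)$ on $\Om$ with $\sum_j\chi_j^2=1$, $\supp\chi_j\subset B(x_j,\delta)$ and $\sum_j|\nabla\chi_j|^2\leq C\delta^{-2}$ at the scale $\delta=R_0(\kappa H)^{-1/2}$. By Theorem~\ref{thm:priori}\,(4) together with $\|\psi\|_{L^2(\Om)}\leq|\Om|^{1/2}$ (from Proposition~\ref{prop:psi}), $\|\Ab-\Fb\|_{\mathcal C^{0,\beta}(\overline\Om)}\leq C/H$, so in each ball $B(x_j,\delta)$ one decomposes
\begin{equation*}
\Ab(x)=\Fb(x)+\nabla\phi_j(x)+\vec w_j(x),\quad \phi_j(x)=(\Ab-\Fb)(x_j)\cdot(x-x_j),\quad|\vec w_j(x)|\leq C\delta^\beta/H.
\end{equation*}
After the IMS formula and the gauge transformation $\chi_j\varphi\mapsto\tilde\varphi_j:=e^{-i\kappa H\phi_j}\chi_j\varphi$, Young's inequality $|A-B|^2\geq(1+\eta)^{-1}|A|^2-\eta^{-1}|B|^2$ with $\eta=R_0^{-2}$, combined with $(\kappa H)^2\|\vec w_j\|_{L^\infty}^2\leq C\kappa^2\delta^{2\beta}\leq CR_0^{2\beta}\mu(\kappa H)^{1-\beta}$ (using $H\geq\kappa/\mu$), yields
\begin{equation*}
Q_{\kappa H,\Ab}(\chi_j\varphi)\geq\left(1-\tfrac{C}{R_0^2}\right)Q_{\kappa H,\Fb}(\tilde\varphi_j)-CR_0^{2+2\beta}\mu(\kappa H)^{1-\beta}\|\chi_j\varphi\|_{L^2(\Om)}^2.
\end{equation*}

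Summing over $j$ (using $\sum_j|\tilde\varphi_j|^2=|\varphi|^2$), adding the IMS error $C\delta^{-2}\|\varphi\|_{L^2}^2=C(\kappa H)/R_0^2\|\varphi\|_{L^2}^2$, and applying Proposition~\ref{prop:Ub2} to each $\tilde\varphi_j$ produces
\begin{equation*}
Q_{\kappa H,\Ab}(\varphi)\geq\bigl[(\mu+d)\kappa H-CR_0^4(\kappa H)^{1/2}-C(\kappa H)/R_0^2-CR_0^{2+2\beta}\mu(\kappa H)^{1-\beta}\bigr]\|\varphi\|_{L^2(\Om)}^2.
\end{equation*}
Choosing $\tilde\kappa_0$ large enough to absorb the $(\kappa H)^{1/2}$ and $(\kappa H)^{1-\beta}$ terms into $d(\kappa H)/2$ gives the announced bound $\kappa H(\mu+d/2-C/R_0^2)\|\varphi\|_{L^2}^2$. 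The delicate point is the local gauge step: globally $\Ab-\Fb=O(1/H)$ is too large to be neglected against the frequency $\kappa H$, but on patches of diameter $R_0(\kappa H)^{-1/2}$ the Hölder seminorm of $\Ab-\Fb$ trades \emph{size} for \emph{scale}, producing a pointwise remainder $\vec w_j$ whose contribution $(\kappa H)^2\|\vec w_j\|_\infty^2$ is $o(\kappa H)$; the positivity of the Hölder exponent $\beta$ is what makes the argument work, and once this decomposition is in hand everything else is a standard combination of IMS, Young and Proposition~\ref{prop:Ub2}.
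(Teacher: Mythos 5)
Your proposal is correct and follows essentially the same route as the paper: partition at the scale $R_0(\kappa H)^{-1/2}$, local linear gauge $\phi_j$ built from $(\Ab-\Fb)(x_j)$, the $\mathcal C^{0,\beta}$ estimate of Theorem~\ref{thm:priori}\,(4) to bound the remainder by $C\delta^\beta/H$, a Cauchy/Young step to absorb it, Proposition~\ref{prop:Ub2} on each localized piece, and IMS to sum; the only cosmetic difference is the choice of Young parameter and that the paper fixes $\beta=3/4$ and rewrites $(\kappa H)^{1/2}$ as $\mathcal O(\kappa^{-1}\kappa H)$ via $H\geq\kappa/\mu$.
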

\begin{proof}
	Let $\varphi \in \dom Q_{\kp H,\Ab}$  be such that $\dist(\supp\varphi,S)\geq R_0 (\kappa H)^{-1/2}$, $\Fb$ be the vector potential defined in~\eqref{A_1}, and $\beta \in (0,1)$. We consider the family of cut-off functions $(\chi_j)_{j\in \mathcal P}$ introduced in Section~\ref{sec:partition} for $\mathfrak b=\kappa H$ and  $\rho=1/2$.
	For all $j\in \mathcal P$, we define on $\overline \Om$ the function
	$\phi_j(x)=\big(\Ab(z_j)-\Fb(z_j)\big)\cdot x$. As a consequence of the last item in Theorem~\ref{thm:priori}, we may approximate the vector potential $\Ab$ as follows:
	\begin{equation}\label{eq:AF_loc}
	|\Ab(x)-\nabla\phi_j(x)-\Fb(x)|\leq C\frac{R_0^\beta(\kappa H)^{-\frac 12\beta}}{H}\,,\ \mbox{for all}\ x\in B(z_j,R_0(\kappa H)^{-\frac 12})\cap \overline{\Om}.
	\end{equation}
	We choose $\beta=3/4$ and we define $h=e^{-i\kappa H\phi_j}\varphi$. Using~\eqref{eq:AF_loc} and $H\geq \kappa/\mu$, Cauchy's inequality yields
	\begin{equation}\label{eq:Ag1}
	\|(\nabla-i\kappa H\Ab)\chi_j\varphi\|^2_{L^2(\Om)}\geq (1-\kappa^{-\frac 12})\|(\nabla-i\kappa H\Fb)\chi_j h\|^2_{L^2(\Om)}-CR_0^{\frac 32}\kappa \|\chi_j\varphi\|^2_{L^2(\Om)}. \end{equation}
	Notice that $\supp h=\supp \varphi$. Hence~\eqref{eq:Ag1}, $H\geq \kappa/\mu$, the support of $\varphi$ and Proposition~\ref{prop:Ub2} assert that  
	\begin{align*}
	\|(\nabla-i\kappa H\Ab)\chi_j\varphi\|^2_{L^2(\Om)}&\geq (1-\kappa^{-\frac 12})\kappa H\Big(\min_{j \in \T\setminus \Sigma}\mu(\alpha_j,a)-\frac C{R_0^2}-CR_0^4\kappa^{-1}\Big)\|\chi_j\varphi\|^2_{L^2(\Om)}\nonumber \\&\qquad-CR_0^{\frac 32}\kappa\|\chi_j\varphi\|^2_{L^2(\Om)}\\
	&\geq\kappa H\Big(\min_{j \in \T\setminus \Sigma}\mu(\alpha_j,a)-\frac C{R_0^2}-CR_0^4\kappa^{-1}\Big)\|\chi_j\varphi\|^2_{L^2(\Om)}. 
	\end{align*}
	Hence, the IMS formula gives
	\begin{equation}\label{eq:spect2}
	\|(\nabla-i\kappa H\Ab)\varphi\|^2_{L^2(\Om)}
	\geq\kappa H\Big(\min_{j \in \T\setminus \Sigma}\mu(\alpha_j,a)-\frac C{R_0^2}-CR_0^4\kappa^{-1}\Big)\|\varphi\|^2_{L^2(\Om)}. 
	\end{equation}
	Choose $\tilde{\kappa}_0$ sufficiently large so that  $CR_0^4\tilde{\kappa}_0^{-1}<d/2$, for $d=\min_{j\in \T\setminus\Sigma}\mu(\alpha_j,a)-\mu$. Consequently,~\eqref{eq:spect2} yields for all $\kp\geq \tilde{\kappa}_0$,
	\begin{align*}
	\|(\nabla-i\kappa H\Ab)\varphi\|^2_{L^2(\Om)}&\geq \kappa H\Big(\min_{j \in \T\setminus \Sigma}\mu(\alpha_j,a)-\frac d2-\frac C{R_0^2}\Big)\|\varphi\|^2_{L^2(\Om)}\\
	& \geq\kappa H\Big(\mu+\frac d2-\frac C{R_0^2}\Big)\|\varphi\|^2_{L^2(\Om)}. \qedhere
	\end{align*}
\end{proof}
\begin{proof}[Proof of Theorem~\ref{thm:agmon}] 
	Let $R_0>1$. Take $\kappa_0$ in Theorem~\ref{thm:agmon} to be $\kappa_0=\max(\tilde{\kappa}_0,\kappa_1)$, where $\tilde{k}_0$ and $\kappa_1$ are  the constants in Lemma~\ref{lem:spectral} and Theorem~\ref{thm:giorgi} respectively. Assume that $\mu^{-1}\leq C_1$, where $C_1$ is the constant in Theorem~\ref{thm:giorgi}, else Equation~\eqref{eq:agmon} is evidently true for any positive constants $C$ and $\beta$, in light of Theorem~\ref{thm:giorgi}. 
	So, the case examined below is
	\begin{equation}\label{eq:A5}
	\kappa \geq \kappa_0,\quad \mathrm{and}\ \mu^{-1}\leq \frac H\kappa\leq C_1.
	\end{equation}
	Let $S$ be the set  appearing in Lemma~\ref{lem:spectral},
	$t(x)=\dist(x,S)$, and $\tilde{\chi} \in \mathcal C^\infty(\R)$ be a function satisfying
	\begin{equation*}
	\tilde{\chi}=0\ \mathrm{on}\ (-\infty,1/2]\qquad \mathrm{and}\qquad \tilde{\chi}=1\ \mathrm{on}\ [1,+\infty).
	\end{equation*}
	We define the two functions $\chi$ and $f$ as follows:
	\begin{equation}\label{eq:expo1}
	\chi(x)=\tilde{\chi}\big(R_0^{-1}{(\kappa H)^{\frac 12 }} t(x)\big) \ \mathrm{and} \ f(x)=\chi(x)\exp\big(\beta (\kappa H)^{\frac 12}t(x)\big),
	\end{equation}
	where $\beta$ is a positive constant whose value will be fixed soon. Integrating in the first equation of~\eqref{eq:Euler}, we get
	\begin{equation}\label{eq:decay1}
	\int_{\Om} |\nb f|^2|\psi|^2 \,dx\geq \int_{\Om} \big|(\nb-i\kp H {\Ab})f\psi\big|^2\,dx-\kappa^2\int_{\Om}|\psi|^2f^2\,dx.
	\end{equation}
	Notice that the conditions in Lemma~\ref{lem:spectral} are satisfied for $\varphi=f\psi$, hence we may apply this lemma to obtain
	\[\int_{\Om} |\nb f|^2|\psi|^2 \,dx\geq \left(\kappa H
	\Big(\mu+\frac d2-\frac C{R_0^2}\Big)-\kappa^2\right)\|f\psi\|^2_{L^2(\Om)}.\]
	Since  $H\geq\kappa/\mu$, we get further
	\begin{equation}\label{eq:decay2}
	\int_{\Om} |\nb f|^2|\psi|^2 \,dx\geq \Big(\frac d2-\frac C{R_0^2}\Big)\mu^{-1} \kappa^2\|f\psi\|^2_{L^2(\Om)}.
	\end{equation}
	On the other hand, using~\eqref{eq:expo1}, we estimate the term $\int_{\Om} |\nb f|^2|\psi|^2 \,dx$ as follows:
	\begin{equation}\label{eq:decay3}
	\int_{\Om} |\nb f|^2|\psi|^2 \,dx\leq 2\beta^2\kp H \|f \psi\|^2_{L^2(\Om)}+C(R_0) \kp H\int_{ \big\{\sqrt{\kp H}t(x)< R_0\big\}}|\psi|^2\,dx,
	\end{equation}
	where $C(R_0)$ is a constant only dependent on $R_0$. Recall that we are working  under the assumption in~\eqref{eq:A5}. Hence, we combine~\eqref{eq:decay2} and~\eqref{eq:decay3}, and we divide by $\kappa^2$ to get
	\[\Big(\frac{\mu^{-1}d}2-\frac{C\mu^{-1}}{R_0^2}-2C_1\beta^2\Big)\|f\psi\|^2_{L^2(\Om)}\leq \tilde{C}(R_0)\int_{ \{\sqrt{\kp H} t(x)< R_0\}}|\psi|^2\,dx,\]
	where $C_1$ is the value in~\eqref{eq:A5}.
	We choose $\beta$ small so that $\mu^{-1}d-4C_1\beta^2>0$ (that is $\beta<1/2\sqrt{\mu^{-1}d/C_1}$). Consequently, for $R_0$ sufficiently large, we get the existence of $\hat C=C(R_0,\beta)>0$ such that 
	\begin{equation}\label{eq:decay4}
	\|f\psi\|^2_{L^2(\Om)}\leq \hat C\int_{ \big\{\sqrt{\kp H} t(x)< R_0\big\}}|\psi|^2\,dx.
	\end{equation}
	Plug~\eqref{eq:decay3} and~\eqref{eq:decay4} in~\eqref{eq:decay1} to complete the proof.
\end{proof}	
	\section{Equality of global and local fields}\label{sec:main thm}
		 We consider the global and local critical fields $\overline{H}_{C_3}(\kappa)$, $\underline{H}_{C_3}(\kappa)$, $\overline{H}^\mathrm{loc}_{C_3}(\kappa)$ and $\underline{H}^\mathrm{loc}_{C_3}(\kappa)$ defined  in~\eqref{eq:Hc3_over},~\eqref{eq:Hc3_under},~\eqref{eq:Hc3_loc_over}, and~\eqref{eq:Hc3_loc_under} respectively. 
		\begin{theorem}\label{thm:Hc_order}
			Let $\kappa>0$. Under Assumption~\ref{assump1}, the following relations hold:
			\begin{equation}\label{eq:Hc_order1}
			\overline{H}_{C_3}(\kappa)\geq \overline{H}^\mathrm{loc}_{C_3}(\kappa),\qquad \underline{H}_{C_3}(\kappa)\geq \underline{H}^\mathrm{loc}_{C_3}(\kappa).
			\end{equation}
		\end{theorem}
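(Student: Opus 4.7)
The key observation is the following implication: if $(0,\Fb)$ is the unique minimizer of $\mathcal E_{\kappa,H}$, then $\lambda(\kappa H)\geq \kappa^2$. Once this is established, both inequalities in~\eqref{eq:Hc_order1} follow immediately from the definitions. Indeed, the implication shows that
\[\bigl\{H>0:(0,\Fb)\text{ is the only minimizer of }\mathcal E_{\kappa,H}\bigr\}\ \subset\ \bigl\{H>0:\lambda(\kappa H)\geq \kappa^2\bigr\},\]
and taking infima yields $\underline{H}_{C_3}(\kappa)\geq \underline{H}^\mathrm{loc}_{C_3}(\kappa)$. The same set inclusion applied to the family indexed by $H'>H$ gives $\overline{H}_{C_3}(\kappa)\geq \overline{H}^\mathrm{loc}_{C_3}(\kappa)$.

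To prove the key implication, I will argue by contrapositive: assume that $\lambda(\kappa H)<\kappa^2$ and construct a pair $(\psi,\Fb)\neq(0,\Fb)$ whose energy is strictly below $\mathcal E_{\kappa,H}(0,\Fb)=0$ (here I use that $\curl\Fb=B_0$, so the magnetic term vanishes at $\Ab=\Fb$). Pick a normalized eigenfunction $\phi\in \dom \mathcal P_{\kappa H,\Fb}$ associated with $\lambda(\kappa H)$, and for $t>0$ insert the test pair $(t\phi,\Fb)$ in the functional~\eqref{eq:GL}. A direct computation, using $\|\phi\|_{L^2(\Om)}=1$ and $Q_{\kappa H,\Fb}(\phi)=\lambda(\kappa H)$, yields
\[\mathcal E_{\kappa,H}(t\phi,\Fb)=t^2\bigl(\lambda(\kappa H)-\kappa^2\bigr)+\tfrac{\kappa^2}{2}\,t^4\|\phi\|^4_{L^4(\Om)}.\]
Since $\lambda(\kappa H)-\kappa^2<0$, choosing $t>0$ small enough (depending on $\|\phi\|_{L^4}$) makes the right-hand side strictly negative, so $(t\phi,\Fb)$ has lower energy than $(0,\Fb)$. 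In particular, $(0,\Fb)$ is not the unique minimizer (in fact it is not a minimizer at all), which is the desired contrapositive.

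The argument is essentially routine once the variational characterization~\eqref{eq:lmda_a} is in hand; I do not expect any serious obstacle, and neither Assumption~\ref{assump3} nor the asymptotic spectral analysis of Sections~\ref{sec:new_model}--\ref{sec:monot} plays a role here. The only mild subtlety is to make sure the ground-state eigenfunction lies in a space where the quartic term is finite, i.e.\ that $\phi\in L^4(\Om)$. This is automatic under Assumption~\ref{assump1}, because the form domain $H^1(\Om)$ (see~\eqref{eq:domP_F}) embeds continuously into $L^4(\Om)$ by the two-dimensional Sobolev embedding, so $\|\phi\|_{L^4(\Om)}<\infty$.
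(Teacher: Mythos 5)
Your proof is correct and takes essentially the same route as the paper: in both cases the heart of the argument is the computation $\mathcal E_{\kappa,H}(t\phi,\Fb)=t^2\bigl(\lambda(\kappa H)-\kappa^2\bigr)+\tfrac{\kappa^2}{2}t^4\|\phi\|^4_{L^4(\Om)}$, which for small $t>0$ produces a negative GL energy whenever $\lambda(\kappa H)<\kappa^2$, hence a non-trivial minimizer. You package this as a single implication and then read off both inequalities via set inclusion, whereas the paper unwinds the two definitions separately, but the mathematical content is identical.
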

		\begin{proof}
			First, we prove the left inequality in~\eqref{eq:Hc_order1}. 
			Let $H<\overline H^{\mathrm {loc}}_{C_3}(\kappa)$, hence there exists $H_0>H$ such that 
			\begin{equation}\label{eq:hypo1}
			\lambda(\kappa H_0)-\kappa^2<0,
			\end{equation}
			where $\lambda(\kappa H_0)$ is the value in~\eqref{eq:lmda_a}.
			It suffices to prove that $H<\overline H_{C_3}(\kappa)$. 
			Let $\psi_0$ be a normalized ground-state of $\mathcal P_{\kappa H_0,\Fb}$ in~\eqref{eq:P0}.
			Let $t>0$, we have
			\[\mathcal E_{\kappa,H_0}(t\psi_0,\Fb)=t^2(\lambda(\kappa H_0)-\kappa^2)+\frac {\kappa^2}2t^4\|\psi_0\|^4_{L^4(\Om)}.\]
			Choose $t$ such that $t^2<2\big(\kappa^2-\lambda(\kappa H_0)\big)/{\kappa^2\|\psi_0\|^4_{L^4(\Om)}}$, and use~\eqref{eq:hypo1} to get
			\[\mathcal E_{\kappa,H_0}(t\psi_0,\Fb)<0.\]
			This reveals the existence of a non-trivial minimizer of $\mathcal E_{\kappa,H_0}$. Recalling the definition of $\overline H_{C_3}(\kappa)$, we get that $H<\overline H_{C_3}(\kappa)$ which yields the claim. 
			
			Secondly, to derive the right inequality in~\eqref{eq:Hc_order1}, we proceed as in the argument above to get that  $\mathcal E_{\kappa,H}$ has a non-trivial minimizer, for all  $H<\underline H^{\mathrm {loc}}_{C_3}(\kappa)$. Consequently, assuming that $\underline H^{\mathrm {loc}}_{C_3}(\kappa)> \underline H_{C_3}(\kappa)$ contradicts the definition of $\underline H_{C_3}(\kappa)$.
		\end{proof}
With Theorem~\ref{thm:Hc_order} and the equality of the local critical fields in hand (see Remark~\ref{rem:equal_loc}), it  remains to prove the equality of the local and global upper fields in order to establish the equality of the global and local fields. This together with Proposition~\ref{prop:H_unique} and Proposition~\ref{prop:Hc3} will complete the proof of Theorem~\ref{thm:Hc3}. To this end, we follow similar steps as in~\cite[Theorem 1.7]{bonnaillie2007superconductivity} and use the following additional result:
	\begin{thm}\label{thm:agmon1}
		Given $a \in[-1,1)\setminus\{0\}$, there exist positive constants $\kappa_0$, $C$ and $\delta$ such that if
		$\kappa \geq \kappa_0$ and $(\psi,\Ab)$ is a solution of~\eqref{eq:Euler} for $H>1/|a|\kappa$
		then
		\begin{multline*}
		\int_{\Omega}
		\Big(|\psi|^2+{\frac{1}{\kp H}}|(\nabla-i\kappa H\Ab)\psi|^2\Big)\,\exp\Big(2\delta\sqrt{\kappa H}\,{\rm dist}(x,\partial\Omega\cup \Gamma)\Big)dx\\
		\leq
		C \int_{\Omega\cap\{{\rm dist}(x,\partial\Omega \cup \Gamma)< \frac{1}{\sqrt{\kappa H}}\}}
		|\psi|^2\,dx.\end{multline*}
	\end{thm}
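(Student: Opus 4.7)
The argument mirrors the proof of Theorem~\ref{thm:agmon} in Section~\ref{sec:agmon}, with the localization set $S$ replaced by the curve $\partial\Om\cup\Gamma$ and the spectral threshold $\mu^{\ast}$ replaced by the bulk value $|a|$, which arises from $\curl\Fb=\mathbbm 1_{\Om_{1}}+a\mathbbm 1_{\Om_{2}}\in\{1,a\}$. Choose $R_{0}>1$ large, set $d(x)=\dist(x,\partial\Om\cup\Gamma)$, and take a smooth cut-off $\chi$ that vanishes on $\{\sqrt{\kappa H}\,d<R_{0}/2\}$, equals $1$ on $\{\sqrt{\kappa H}\,d\geq R_{0}\}$, and satisfies $|\nabla\chi|\leq C\sqrt{\kappa H}/R_{0}$. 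For a small $\delta>0$ to be fixed, put $f=\chi\exp(\delta\sqrt{\kappa H}\,d)$. Multiplying the first equation of~\eqref{eq:Euler} by $f^{2}\overline{\psi}$, integrating by parts with the magnetic Neumann condition, and invoking $|\psi|\leq 1$ from Proposition~\ref{prop:psi} yields the standard IMS-type identity
\[
Q_{\kappa H,\Ab}(f\psi)-\bigl\||\nabla f|\,\psi\bigr\|^{2}_{L^{2}(\Om)}=\kappa^{2}\!\!\int_{\Om}(1-|\psi|^{2})f^{2}|\psi|^{2}\,dx\leq\kappa^{2}\|f\psi\|^{2}_{L^{2}(\Om)}.
\]

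Since $\supp(f\psi)\subset\{d(x)\geq R_{0}(\kappa H)^{-1/2}\}$, each point of $\supp(f\psi)$ sits in the bulk of $\Om_{1}$ or $\Om_{2}$ and is away from $\partial\Om\cup\Gamma$. On a partition of unity at scale $R_{0}(\kappa H)^{-1/2}$, I use the H\"older bound of Theorem~\ref{thm:priori}(4) to perform, on each ball $B(z_{j},R_{0}(\kappa H)^{-1/2})$, the local gauge change $\Ab\leftrightarrow\Fb+\nabla\phi_{j}$ exactly as in~\eqref{eq:AF_loc}, and apply Proposition~\ref{prop:Ub2} with $\mathfrak b=\kappa H$. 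On every such piece the relevant branch of $U^{(2)}_{\kappa H}$ is the bulk value $|a|\kappa H$, so after reassembling the partition I obtain
\[
Q_{\kappa H,\Ab}(f\psi)\geq\Bigl(|a|\kappa H-\frac{C\kappa H}{R_{0}^{2}}-CR_{0}^{3/2}\kappa\Bigr)\|f\psi\|^{2}_{L^{2}(\Om)}.
\]

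Combining the two estimates and using $|\nabla f|^{2}\leq 2\delta^{2}\kappa H f^{2}+C(\kappa H)R_{0}^{-2}e^{2\delta R_{0}}\mathbbm 1_{\{R_{0}/2\leq\sqrt{\kappa H}d\leq R_{0}\}}$, I absorb $2\delta^{2}\kappa H\|f\psi\|^{2}$ into the left, which gives
\[
\Bigl(|a|\kappa H-\kappa^{2}-2\delta^{2}\kappa H-o(\kappa H)\Bigr)\|f\psi\|^{2}_{L^{2}(\Om)}\leq C\kappa H\!\int_{\{\sqrt{\kappa H}d(x)<R_{0}\}}\!\!|\psi|^{2}\,dx.
\]
With $R_{0}$ large, $\kappa$ large, and $H>\kappa/|a|$ with a definite margin (which is automatic in the intended application to Theorem~\ref{thm:Hc3}, since $H$ lies near $H_{C_{3}}(\kappa)\sim\kappa/\mu^{\ast}$ and $\mu^{\ast}<|a|\Theta_{0}<|a|$), I fix $\delta$ depending only on $a$ so that the coefficient on the left is at least $c\kappa H$, which yields the $L^{2}$ part of~\eqref{eq:agmon}. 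The gradient part follows from $\int f^{2}|(\kn)\psi|^{2}\leq 2\|(\kn)(f\psi)\|^{2}+2\||\nabla f|\psi\|^{2}$, the IMS identity above used to bound $\|(\kn)(f\psi)\|^{2}\leq\kappa^{2}\|f\psi\|^{2}+\||\nabla f|\psi\|^{2}$, and division by $\kappa H$.

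The main obstacle is producing the required definite positive coefficient on the left: the gap $|a|\kappa H-\kappa^{2}=\kappa H(|a|-\kappa/H)$ collapses as $H\downarrow\kappa/|a|$, so a Agmon rate $\delta$ independent of $H$ can only be obtained when $H$ stays away from $\kappa/|a|$. As noted, this is automatic in the applications appearing in Section~\ref{sec:main thm}. The only other delicate ingredient, the local gauge approximation of $\Ab$ by $\Fb+\nabla\phi_{j}$ at the H\"older regularity level of Theorem~\ref{thm:priori}(4) on balls of radius $(\kappa H)^{-1/2}$, is identical to the argument justifying~\eqref{eq:AF_loc} and requires no additional work.
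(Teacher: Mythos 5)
The paper does not prove this theorem; it imports it from \cite[Theorems~1.5~\&~7.3]{Assaad}, so there is no in-text proof to compare against. Your argument is the standard normal Agmon estimate, and its pieces are in order: the IMS identity obtained by multiplying the first GL equation by $f^{2}\overline{\psi}$ and using $|\psi|\le 1$; the cutoff at scale $R_{0}(\kappa H)^{-1/2}$ from $\partial\Om\cup\Gamma$; the local gauge replacement of $\Ab$ by $\Fb+\nabla\phi_{j}$ via the H\"older estimate in Theorem~\ref{thm:priori}(4), whose error of size $CR_{0}^{3/2}\kappa$ is indeed lower order than $|a|\kappa H$ for $\kappa$ large; and the bulk branch $|a|\kappa H$ of $U^{(2)}_{\kappa H}$ from Proposition~\ref{prop:Ub2}, which is the only one seen on $\supp(f\psi)$.

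The issue you yourself raise at the end is, however, genuine and not merely cosmetic. The coefficient that survives on the left is $\kappa H\bigl(|a|-\kappa/H-2\delta^{2}-o(1)\bigr)$, and $|a|-\kappa/H\to 0$ as $H\downarrow\kappa/|a|$, so no $\delta>0$ depending only on $a$ can make it positive for all $H>\kappa/|a|$. This is not a defect of your argument but a real obstruction: just above $H=\kappa/|a|$ the bulk still carries density (the analogue of the mixed phase near $H_{C_{2}}$ in the uniform-field case), and no $H$-uniform exponential rate can exist. The theorem as quoted here therefore needs a hypothesis of the form $|a|H\ge b\kappa$ with a fixed $b>1$, which is how \cite[Theorems~1.5~\&~7.3]{Assaad} are actually formulated, with constants depending on that $b$. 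You are also right that nothing is lost in the present paper: the only place Theorem~\ref{thm:agmon1} is invoked (Section~\ref{sec:main thm}) has $H>\overline{H}^{\mathrm{loc}}_{C_{3}}(\kappa)\sim\kappa/\mu^{*}$ with $\mu^{*}\le|a|\Theta_{0}<|a|$, so the margin $|a|H/\kappa\ge 1/\Theta_{0}>1$ is automatic. In short, your proposal proves what is actually used downstream, but not the literal statement as it appears here; either the hypothesis should be tightened to $|a|H\ge b\kappa$ with a fixed $b>1$, or the constants $C,\delta$ should be allowed to depend on a lower bound for $|a|H/\kappa-1$.
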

		Theorem~\ref{thm:agmon1} above displays certain Agmon-type estimates established in~\cite[Theorems~1.5~\&~7.3]{Assaad}. These estimates reveal the exponential decay of the order parameter and the GL energy in the bulk of $\Om_1$ and $\Om_2$, in a certain regime of the intensity of the applied magnetic field.
	\begin{proof}[Proof of Theorem~\ref{thm:Hc3}]
		Let $\kappa>0$.  $\overline H_{C_3}(\kappa)\geq\overline H^{\mathrm {loc}}_{C_3}(\kappa)$ was proved in Theorem~\ref{thm:Hc_order}. 
		Next, we prove that $\overline H_{C_3}(\kappa)\leq\overline H^{\mathrm {loc}}_{C_3}(\kappa)$.
		Assume that $\overline H_{C_3}(\kappa)>\overline H^{\mathrm {loc}}_{C_3}(\kappa)$, then the definitions of $\overline H_{C_3}(\kappa)$ and $\overline H^{\mathrm {loc}}_{C_3}(\kappa)$ ensure the existence of $H>0$ satisfying: 
		\begin{enumerate}
			\item $\overline H^{\mathrm {loc}}_{C_3}(\kappa)<H\leq\overline H_{C_3}(\kappa)$.
			\item $\lambda(\kappa H)\geq\kappa^2$.
			\item  The GL functional $\GL$ in~\eqref{eq:GL} admits  a non-trivial minimizer $(\psi,\Ab)$.
 		\end{enumerate}
	In particular, $(\psi,\Ab)$ satisfies
		\[\kappa^2\|\psi\|^2_{L^2(\Om)}>Q_{\kappa H ,\Ab}(\psi),\]
		where $Q_{\kappa H ,\Ab}$ is the quadratic form in~\eqref{eq:Quad}. We define $\Delta=\kappa^2\|\psi\|^2_{L^2(\Om)}-Q_{\kappa H ,\Ab}(\psi)$. An integration in the first GL equation of $\eqref{eq:Euler}$ gives 
		\begin{equation}\label{eq:psi_4_delta}
		\|\psi\|_{L^4(\Om)}^4=\frac {\Delta}{\kappa^2}.
		\end{equation} 
		Furthermore, the assumption that $H>\overline H_{C_3}^{\mathrm {loc}}(\kappa)$ and the asymptotics of $\overline H_{C_3}^{\mathrm {loc}}(\kappa)$ in Proposition~\ref{prop:Hc3} assert that we are working under the conditions of Theorem~\ref{thm:agmon1} and allow us to write
		\begin{align}
		\|\psi\|^2_{L^2(\Om)}&\leq 	C \int_{\Omega\cap\{{\rm dist}(x,\partial\Omega\cup \Gamma)< \frac{1}{\sqrt{\kappa H}}\}}
		|\psi|^2\,dx\nonumber\\
		                   &\leq C\|\psi\|_{L^4(\Om)}^2\Big(\int_{\Omega\cap\{{\rm dist}(x,\partial\Omega \cup \Gamma)< \frac{1}{\sqrt{\kappa H}}\}}
		                   \,dx\Big)^\frac 12  \leq C\kappa^{-\frac 32}\Delta^{\frac 12}.\label{eq:psi_2_delta}               
		\end{align} 
		The last inequality follows from~\eqref{eq:psi_4_delta}. Since $\psi\neq 0$ then, using  the min-max principle and Cauchy-Schwarz inequality, we can estimate
		\begin{equation}\label{eq:delta1}
		0<\Delta
		\leq \big(\kappa^2-(1-\delta)\lambda(\kappa H)\big)\|\psi\|^2_{L^2(\Om)}+C\delta^{-1}(\kappa H)^2\|\Ab-\Fb\|^2_{L^4(\Om)}\|\psi\|^2_{L^4(\Om)},
		\end{equation} 
		for any $\delta \in(0,1)$.
By the Sobolev estimates in $\R^2$ and the curl-div estimates (see~\cite[Proposition~D.2.1]{fournais2010spectral}), we have
		\[\|\Ab-\Fb\|_{L^4(\Om)}\leq C\|\Ab-\Fb\|_{H^1(\Om)}\leq C\|\curl(\Ab-\Fb)\|_{L^2(\Om)}.\]
		Consequently, since $\GL(\psi,\Ab)\leq0$ we conclude that
		\begin{equation}\label{eq:delta2}
	 (\kappa H)^2\|\Ab-\Fb\|^2_{L^4(\Om)}\leq C(\kappa H)^2\|\curl(\Ab-\Fb)\|^2_{L^2(\Om)}\leq C\Delta.
		\end{equation}
		Choose $\delta=\Delta^{1/2}\kappa^{-3/4}$. The hypothesis on $H$ and the definition of $\overline H_{C_3}(\kappa)$ together with Theorem~\ref{thm:giorgi} ensure that $H\leq C_1\kp$, where $C_1$ is the constant in the aforementioned theorem. We use this upper bound of $H$ and  Proposition~\ref{prop:Upper3}, and we insert~\eqref{eq:psi_4_delta},~\eqref{eq:psi_2_delta},  and~\eqref{eq:delta2} in~\eqref{eq:delta1}  to get
		\[0<\Delta\leq \big(\kappa^2-\lambda(\kappa H)\big)\|\psi\|^2_{L^2(\Om)}+C \Delta \kappa^{-\frac 14}.\]
		When $\kappa$ is  big, $1-C\kappa^{-1/4}>0$. Therefore, since $\lambda(\kappa H)\geq\kappa^2$ we get
		\[0<(1-C\kappa^{-\frac 14})\Delta\leq \big(\kappa^2-\lambda(\kappa H)\big)\|\psi\|_{L^2(\Om)}^2\leq 0,\]
	which is absurd. This means that  $\overline H_{C_3}(\kappa)\leq\overline H^{\mathrm {loc}}_{C_3}(\kappa)$. 
	\end{proof}
	 
\section*{Acknowledgements} I would like to thank Ayman Kachmar and Mikael Persson-Sundqvist for their encouragement and comments which  significantly improved this article. I am grateful to Jacob Christiansen for the insightful suggestions about various sections of the manuscript. I would also like to acknowledge the useful discussions with  Erik Wahl\'en and Anders Holst about certain regularity problems.

	\appendix
	\section{Some spectral properties of the model operator $\mathcal H_{\alpha,a}$}\label{sec:persson}
	Let $\alpha \in (0,\pi)$ and $a \in [-1,1)\setminus \{0\}$. Recall the operator $\mathcal H_{\alpha,a}$  defined on $\R^2_+$ in Section~\ref{sec:new_model}.
	This appendix is devoted to the establishment some spectral properties of this operator, presented in the aforementioned section. In particular, we prove the claim in Theorem~\ref{thm:ess_theta} that the bottom of the essential spectrum of $\mathcal H_{\alpha,a}$ is equal to $|a|\Theta_0$.
	
	Recall the set $\mathcal M_r$ defined in~\eqref{eq:M_r}. A central step in proving Theorem~\ref{thm:ess_theta} is to establish Theorem~\ref{thm:ess_spect} below.
	\begin{theorem}\label{thm:ess_spect}
		The essential spectrum of the Neumann realization of the operator $\mathcal H_{\alpha,a}$ defined in~\eqref{eq:P_alfa} satisfies
		\[\inf \spc_\mathrm{ess}\mathcal H_{\alpha,a}=\Sigma\mathcal H_{\alpha,a},\]
		where 
		\[\Sigma \mathcal H_{\alpha,a}=\lim_{r\rightarrow+\infty}\Sigma (\mathcal H_{\alpha,a},r)\] 
		and 
		\[\Sigma (\mathcal H_{\alpha,a},r)=\inf_{\substack{u\in  \mathcal M_r\\ u \neq 0}} \frac{\|(\nabla-i\Ab_{\alpha,a})u\|^2_{L^2(\R^2_+)}}{\|u\|^2_{L^2(\R^2_+)}}. \]
	\end{theorem}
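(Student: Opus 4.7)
My plan is to prove the two inequalities comprising Theorem~\ref{thm:ess_spect} separately, following the classical Persson strategy adapted to the Neumann realization on $\R^2_+$ with the step magnetic potential $\Ab_{\alpha,a}$. Both directions rely only on the closedness of $q_{\alpha,a}$ and the spectral theorem, so the low regularity of $\Ab_{\alpha,a}$ will not cause trouble.

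\textbf{Step 1: $\inf\spc_\mathrm{ess}(\mathcal H_{\alpha,a}) \geq \Sigma(\mathcal H_{\alpha,a})$.} Fix $\lambda < \inf\spc_\mathrm{ess}(\mathcal H_{\alpha,a})$ and let $P:=\mathbbm{1}_{(-\infty,\lambda]}(\mathcal H_{\alpha,a})$, a finite-rank orthogonal projection (rank $N$, say) with an orthonormal basis $\{\varphi_1,\dots,\varphi_N\}\subset L^2(\R^2_+)$ of its range. For $u\in\mathcal M_r$ with $\|u\|=1$, Cauchy--Schwarz gives $|\langle u,\varphi_i\rangle|\le\|\varphi_i\|_{L^2(\R^2_+\setminus B_r^+)}$, and dominated convergence furnishes a quantity $\varepsilon(r)\to 0$ with $\|Pu\|_{L^2}\le\varepsilon(r)$ uniformly in such $u$. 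Since $P$ commutes with $\mathcal H_{\alpha,a}$ and $\mathcal H_{\alpha,a}P$ is bounded on $L^2$ with operator norm $C_N<\infty$, the orthogonal decomposition $u=Pu+P^\perp u$ and the spectral theorem yield
\[
q_{\alpha,a}(u)=q_{\alpha,a}(Pu)+q_{\alpha,a}(P^\perp u)\ge -C_N\|Pu\|^2+\lambda\|P^\perp u\|^2\ge \lambda-(|\lambda|+C_N)\varepsilon(r)^2.
\]
Taking the infimum over $u\in\mathcal M_r$ gives $\Sigma(\mathcal H_{\alpha,a},r)\ge\lambda-(|\lambda|+C_N)\varepsilon(r)^2$; sending $r\to\infty$ and then $\lambda\uparrow\inf\spc_\mathrm{ess}(\mathcal H_{\alpha,a})$ yields the claimed inequality.

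\textbf{Step 2: $\inf\spc_\mathrm{ess}(\mathcal H_{\alpha,a})\le\Sigma(\mathcal H_{\alpha,a})$.} Fix $\mu>\Sigma(\mathcal H_{\alpha,a})$ and choose $\mu'\in(\Sigma(\mathcal H_{\alpha,a}),\mu)$. I inductively construct an orthonormal family $(u_n)\subset\dom q_{\alpha,a}$ with compact, pairwise disjoint supports and $q_{\alpha,a}(u_n)\le\mu$. Assuming $u_1,\dots,u_{n-1}$ have supports contained in $B_{\rho_{n-1}}^+$, pick $R_n>\rho_{n-1}$ large enough that $\Sigma(\mathcal H_{\alpha,a},R_n)<\mu'$, and select $v\in\mathcal M_{R_n}$ with $\|v\|_{L^2}=1$ and $q_{\alpha,a}(v)<\mu'$. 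For $\rho>R_n$ let $\chi_\rho$ be a smooth radial cut-off equal to $1$ on $B_\rho$, vanishing outside $B_{2\rho}$, with $|\nabla\chi_\rho|\le C/\rho$; the elementary IMS-type identity gives
\[
q_{\alpha,a}(\chi_\rho v)\le q_{\alpha,a}(v)+\bigl\||\nabla\chi_\rho|\,v\bigr\|_{L^2}^2\le \mu'+C/\rho^2,
\]
while $\|\chi_\rho v\|_{L^2}\to 1$ by dominated convergence. For $\rho=:\rho_n$ large enough, $u_n:=\chi_\rho v/\|\chi_\rho v\|_{L^2}$ satisfies $q_{\alpha,a}(u_n)\le\mu$ and $\supp u_n\subset B_{2\rho_n}^+\setminus B_{R_n}^+$, disjoint from the supports of $u_1,\dots,u_{n-1}$. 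Applying the min-max principle to $\mathrm{span}(u_1,\dots,u_N)$ (orthonormal by disjointness), the $N$-th Rayleigh quotient of $\mathcal H_{\alpha,a}$ is at most $\mu$ for every $N$. Since the spectrum strictly below $\inf\spc_\mathrm{ess}(\mathcal H_{\alpha,a})$ consists of finitely many eigenvalues of finite multiplicity, this forces $\inf\spc_\mathrm{ess}(\mathcal H_{\alpha,a})\le\mu$. Letting $\mu\downarrow\Sigma(\mathcal H_{\alpha,a})$ concludes.

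The only delicate point is the cut-off step in Step~2, where one must simultaneously preserve the form bound, the disjoint-support property and the normalization. Since $\Ab_{\alpha,a}\in H^1_{\mathrm{loc}}(\R^2_+;\R^2)$ the Leibniz rule needed for the IMS identity holds on all of $\dom q_{\alpha,a}$, and the remainder of the argument proceeds essentially verbatim from the sector-operator treatment in~\cite[Section~3]{bonnaillie2003analyse}. No genuinely new ingredient is required beyond careful bookkeeping of the cut-off radii $R_n<\rho_n$.
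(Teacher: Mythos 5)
Your proof is essentially correct, and for the inequality $\inf\spc_{\mathrm{ess}}(\mathcal H_{\alpha,a})\geq\Sigma\mathcal H_{\alpha,a}$ your Step~1 takes the same route as the paper's: use the finite-rank spectral projection below $\lambda$ and show, via dominated convergence, that test functions supported far out are nearly orthogonal to its range. (You could have dropped the $-C_N\|Pu\|^2$ term, since $q_{\alpha,a}\ge 0$, which is exactly what the paper does.) For the reverse inequality the two proofs genuinely diverge. The paper starts from $\lambda\in\spc_{\mathrm{ess}}(\mathcal H_{\alpha,a})$, takes a Weyl sequence, proves (its Lemma~\ref{lem:weyl}) that such a sequence vanishes in $L^2(B_r^+)$ for each fixed $r$, truncates it to obtain admissible test functions for $\Sigma(\mathcal H_{\alpha,a},r)$, and concludes $\Sigma(\mathcal H_{\alpha,a},r)\le\lambda+C\epsilon$. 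You instead construct, for each $\mu>\Sigma\mathcal H_{\alpha,a}$, an infinite orthonormal family of compactly and disjointly supported functions with Rayleigh quotients below $\mu$, and then invoke min-max to force $\inf\spc_{\mathrm{ess}}\le\mu$. Both are standard variants of Persson's argument; yours avoids the auxiliary Weyl-sequence lemma at the cost of having to bookkeep the nested radii $R_n<\rho_n$. One small imprecision to flag: the intermediate inequality $q_{\alpha,a}(\chi_\rho v)\le q_{\alpha,a}(v)+\||\nabla\chi_\rho|v\|^2_{L^2}$ is not the IMS identity as written -- IMS with a partition $\chi_\rho^2+\eta_\rho^2=1$ gives $q_{\alpha,a}(\chi_\rho v)\le q_{\alpha,a}(v)+\||\nabla\chi_\rho|v\|^2+\||\nabla\eta_\rho|v\|^2$, with an extra localization term -- but both gradient terms are $\mathcal O(\rho^{-2})$, so your conclusion survives after adjusting the constant.
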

	\begin{rem}	\label{persson}
	The function $r\mapsto \Sigma (\mathcal H_{\alpha,a},r)$ is increasing on $\R_+$. Indeed, if a function $u \in \mathcal M_r $ then $u \in \mathcal M_\rho$ for $\rho<r$. Consequently, the limit $\Sigma \mathcal H_{\alpha,a}$ exists and belongs to $(0,+\infty]$, having $\Sigma(\mathcal H_{\alpha,a},r)$  positive.
	\end{rem}
	The following lemma is needed in the proof of Theorem~\ref{thm:ess_spect}.
	\begin{lemma}\label{lem:weyl}
		Let $(u_n)$ be a Weyl sequence of the operator $\mathcal H_{\alpha,a}$. For all $r>0$, $(u_n)$ converges to zero in $L^2(B_r^+)$. 
	\end{lemma}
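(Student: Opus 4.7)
The plan is to show that a Weyl sequence is automatically locally pre-compact in $L^2$, so its weak limit zero forces strong convergence to zero on every $B_r^+$. Recall that a Weyl sequence $(u_n)\subset \dom \mathcal H_{\alpha,a}$ associated with some $\lambda$ in the essential spectrum of $\mathcal H_{\alpha,a}$ satisfies $\|u_n\|_{L^2(\R^2_+)}=1$, $u_n\rightharpoonup 0$ weakly in $L^2(\R^2_+)$, and $\|(\mathcal H_{\alpha,a}-\lambda)u_n\|_{L^2(\R^2_+)}\to 0$.

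First, I would extract uniform control on the quadratic form. Since
\[
q_{\alpha,a}(u_n)=\langle \mathcal H_{\alpha,a} u_n, u_n\rangle =\lambda\|u_n\|^2+\langle (\mathcal H_{\alpha,a}-\lambda)u_n,u_n\rangle,
\]
Cauchy--Schwarz and the normalization give $q_{\alpha,a}(u_n)=\lambda+o(1)$, so $\|(\nabla-i\Ab_{\alpha,a})u_n\|_{L^2(\R^2_+)}$ is bounded independently of $n$.

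Next, fix $r>0$. From the explicit formulas~\eqref{eq:Aa1}--\eqref{eq:Aa3} the potential $\Ab_{\alpha,a}$ is linear in $x$, hence bounded on $B_r^+$ by some constant $M_r$. The triangle inequality then yields
\[
\|\nabla u_n\|_{L^2(B_r^+)}\leq \|(\nabla-i\Ab_{\alpha,a})u_n\|_{L^2(B_r^+)}+M_r\|u_n\|_{L^2(B_r^+)},
\]
so $(u_n)$ is bounded in $H^1(B_r^+)$. The Rellich--Kondrachov compact embedding $H^1(B_r^+)\hookrightarrow L^2(B_r^+)$ then produces a subsequence $(u_{n_k})$ converging strongly in $L^2(B_r^+)$. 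Since restriction to $B_r^+$ is continuous from the weak topology of $L^2(\R^2_+)$ to the weak topology of $L^2(B_r^+)$, the weak $L^2(\R^2_+)$-limit $0$ is also the weak $L^2(B_r^+)$-limit of $u_{n_k}$, and it must coincide with the strong limit.

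Finally, to upgrade from subsequential to full-sequence convergence, I would argue by contradiction in the standard way: if $\|u_n\|_{L^2(B_r^+)}\not\to 0$, there exists $\varepsilon>0$ and a subsequence with $\|u_{n_k}\|_{L^2(B_r^+)}\geq \varepsilon$, but by the step above this subsequence admits a further subsequence converging to $0$ in $L^2(B_r^+)$, a contradiction. The argument is essentially routine; the only point that warrants a little care is the local boundedness of $\Ab_{\alpha,a}$ across the discontinuity line of the magnetic field, which is guaranteed by the explicit piecewise-linear form~\eqref{eq:Aa1}--\eqref{eq:Aa3} chosen in Section~\ref{sec:new_model}.
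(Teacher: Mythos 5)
Your proof is correct and follows essentially the same route as the paper's: bound the quadratic form $q_{\alpha,a}(u_n)$ using the Weyl sequence conditions, use local boundedness of $\Ab_{\alpha,a}$ on $B_r^+$ to promote this to an $H^1(B_r^+)$-bound, invoke the compact embedding into $L^2(B_r^+)$, and combine with weak convergence to $0$ via a subsequence/contradiction argument. The paper states the contradiction step slightly differently but the idea and all the estimates are the same.
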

	\begin{proof}
		A Weyl sequence $(u_n)$ is included in $\dom\mathcal H_{\alpha,a}$ and satisfies:
		\begin{equation}\label{eq:weyl} \|u_n\|_{L^2(\R^2_+)}=1,\quad u_n\rightharpoonup0\quad \mathrm{and}\quad \|\mathcal H_{\alpha,a} u_n-\lambda u_n\|_{L^2(\R^2_+)}\rightarrow 0,
		\end{equation}
		where $\lambda$ is the scalar associated to $(u_n)$. First, we prove the boundedness of  $(u_n)$ in $H^1(B_r^+)$. Using Cauchy-Schwarz inequality, we have
		\begin{equation}\label{eq:P4}
		\langle \mathcal H_{\alpha,a} u_n,u_n\rangle-\lambda =\langle \mathcal H_{\alpha,a} u_n-\lambda u_n,u_n\rangle     
		\leq\|\mathcal H_{\alpha,a} u_n-\lambda u_n\|_{L^2(\R^2_+)}.
		\end{equation}
		The third property satisfied by $(u_n)$ in~\eqref{eq:weyl} assures the existence of $n_0\in \N$ such that for all $n \geq n_0$, $\|\mathcal H_{\alpha,a} u_n-\lambda u_n\|_{L^2(\R^2_+)}\leq 1$. Implementing this inequality in~\eqref{eq:P4}, we get for $n\geq n_0$
		\begin{equation}\label{eq:P1}
		\langle \mathcal H_{\alpha,a} u_n, u_n \rangle-\lambda \leq 1.
		\end{equation}
		Having $u_n \in \dom\mathcal H_{\alpha,a}$, we integrate by parts in~\eqref{eq:P1} to get
		\begin{equation*}	
		\|(\nabla-i\Ab_{\alpha,a})u_n\|^2_{L^2(\R^2_+)}+\|u_n\|^2_{L^2(\R^2_+)}\leq \lambda+2.
		\end{equation*}
		Particularly,
		\begin{equation}	\label{eq:P2}
		\|(\nabla-i\Ab_{\alpha,a})u_n\|^2_{L^2(B_r^+)}+\|u_n\|^2_{L^2(B_r^+)}\leq \lambda+2.
		\end{equation}
		Thus, there exists  $C>0$ dependent on $r$ such that
		\[\|\nabla u_n\|^2_{ L^2(B_r^+)}+\|u_n\|^2_{L^2(B_r^+)}\leq \lambda+C,\]
		having $\Ab_{\alpha,a}$ bounded in  $B_r^+$. Hence $(u_n)$ is bounded in $H^1(B_r^+)$. 
		
		Next, we prove that the sequence $(u_n)$ converges to zero in $L^2(B_r^+)$. Suppose not, then there exist $\epsilon>0$ and a subsequence $(u_{n_j})$ of $(u_n)$ such that 
		\begin{equation}	\label{eq:P3}
		\|u_{n_j}\|_{L^2(B_r^+)}> \epsilon.
		\end{equation} 
		The boundedness of $(u_n)$ in $H^1(B_r^+)$ and the compact injection of $H^1(B_r^+)$ into $L^2(B_r^+)$ imply that $(u_{n_j})$ is convergent in $L^2(B_r^+)$, along a subsequence. The second property in~\eqref{eq:weyl} assures that the limit of this subsequence is zero, which contradicts~\eqref{eq:P3}.
	\end{proof}
	\begin{proof}[Proof of Theorem~\ref{thm:ess_spect}]
		First we prove 
		\begin{equation}	\label{eq:first}
		\Sigma\mathcal H_{\alpha,a} \leq \inf \spc_\mathrm{ess}(\mathcal H_{\alpha,a}).
		\end{equation}
		Let $\lambda \in \spc_\mathrm{ess}(\mathcal H_{\alpha,a})$. Recalling the definition of $\Sigma\mathcal H_{\alpha,a}$, it suffices to prove that $\Sigma (\mathcal H_{\alpha,a},r)\leq \lambda$ for all $r>0$. 
		We consider the Weyl sequence $(u_n)$ associated to $\lambda$, and localize this sequence outside $B_r^+$ by using a truncation function $\chi \in \mathcal C^\infty(\R^2_+,[0,1])$ 
		satisfying for $\rho>r$
		\begin{equation}\label{eq:chi}
		\chi(x)=1\ \mathrm{in}\ B_\rho^\complement\cap \R^2_+,\qquad \mathrm{and}\ \chi(x)=0\ \mathrm{in}\ B_r^+.
		\end{equation}
		Note that $\chi u_n \in \mathcal M_r$.
		The triangle inequality gives 
		\begin{equation}\label{eq:chi0}
		\|(\nabla-i\Ab_{\alpha,a})\chi u_n\|_{L^2(\R^2_+)}\leq  \|\chi(\nabla-i\Ab_{\alpha,a})u_n\|_{L^2(\R^2_+)}+ \|u_n|\nabla \chi| \|_{L^2(\R^2_+)}.                                            
		\end{equation}
		Using  the properties of $\chi$ in~\eqref{eq:chi}, we have
		\begin{equation*}
		\|u_n|\nabla \chi| \|^2_{L^2(\R^2_+)}
		= \int_{B_\rho\cap \R^2_+}|u_n|^2|\nabla \chi |^2\,dx
		 \leq  C^2\int_{B_\rho\cap \R^2_+}|u_n|^2\,dx.
		\end{equation*}
		But $(u_n)$ converges to zero in $L^2\big(B_\rho \cap \R^2_+\big)$ by Lemma~\ref{lem:weyl}, then for any $\epsilon>0$ there exists $n_0 \in \N$ such that for $n\geq n_0$, 
		\begin{equation*}
		\int_{B_\rho\cap \R^2_+}|u_n|^2\,dx\leq \frac{\epsilon^2}{C^2}.
		\end{equation*}
		Hence,
		\begin{equation}\label{eq:chi1}
		\|u_n|\nabla \chi| \|_{L^2(\R^2_+)}\leq \epsilon .
		\end{equation}  
		On the other hand, the properties of $(u_n)$ and $\chi$ in~\eqref{eq:weyl} and~\eqref{eq:chi} respectively, together with an integration by parts, ensure the existence of $n_1\geq n_0$ such that for all $n \geq n_1$,
		\begin{equation*}
		\|\chi(\nabla-i\Ab_{\alpha,a})u_n\|^2_{L^2(\R^2_+)}\leq \|(\nabla-i\Ab_{\alpha,a})u_n\|^2_{L^2(\R^2_+)}\leq\lambda+\epsilon.
		\end{equation*}
		Put the above inequality together with~\eqref{eq:chi1} into~\eqref{eq:chi0} to get
		\begin{equation}\label{eq:chi2}
		\|(\nabla-i\Ab_{\alpha,a})\chi u_n\|^2_{L^2(\R^2_+)}\leq  \lambda+ C\epsilon.
		\end{equation}
		Next, we prove that for  $n$ sufficiently large
		\begin{equation}\label{eq:chi3}
		\frac 1{\|\chi u_n\|^2_{L^2(\R^2_+)}}\leq 1+\epsilon.
		\end{equation} 
		We have
		\begin{align}\label{eq:chi4}
		1=\|u_n\|^2_{L^2(\R^2_+)}\geq \|\chi u_n\|^2_{L^2(\R^2_+)}
		&=\int_{B_\rho\cap \R^2_+}|\chi u_n|^2\,dx+\int_{B_\rho^\complement\cap \R^2_+}|u_n|^2\,dx\nonumber\\
		&=\int_{\R^2_+}|u_n|^2\,dx+\int_{B_\rho\cap \R^2_+}(\chi^2-1)| u_n|^2\,dx\nonumber\\
		&\geq 1-\int_{B_\rho\cap \R^2_+}|u_n|^2\,dx. 
		\end{align}
		In light of Lemma~\ref{lem:weyl}, we introduce $\lim_{n\rightarrow +\infty}$ on~\eqref{eq:chi4} to get the convergence of $\|\chi u_n\|^2_{L^2(\R^2_+)}$ to $1$ as $n$ tends to $+\infty$, which proves~\eqref{eq:chi3}.
		The inequalities in~\eqref{eq:chi2} and~\eqref{eq:chi3} imply the existence of $n_2 \in \N$ and a positive constant $C$, independent of $\epsilon$, such that for all $n\geq n_2$,
		\[\frac {\|(\nabla-i\Ab_{\alpha,a})\chi u_n\|^2_{L^2(\R^2_+)}}{\|\chi u_n\|^2_{L^2(\R^2_+)}}\leq \lambda+C\epsilon.\] 
		Then by the definition of $\Sigma (\mathcal H_{\alpha,a},r)$, we get for any $\lambda \in \spc_\mathrm{ess}(\mathcal H_{\alpha,a})$
		\[\Sigma (\mathcal H_{\alpha,a},r)\leq \lambda+C\epsilon.\]
		Taking $\epsilon$ to zero  establishes~\eqref{eq:first}.

		Now we prove that
		\begin{equation}\label{eq:second}
		\Sigma\mathcal H_{\alpha,a} \geq \inf \spc_\mathrm{ess}(\mathcal H_{\alpha,a}).
		\end{equation}
		Let $\mu<\inf \spc_\mathrm{ess}(\mathcal H_{\alpha,a})$ and $\epsilon>0$. By Remark~\ref{persson} , it is sufficient to establish the existence of $r_\epsilon>0$ such that 
		\begin{equation}\label{eq:second1}
		\Sigma(\mathcal H_{\alpha,a},r_\epsilon) \geq \mu-\mathcal O(\epsilon).
		\end{equation}
		By the min-max principle, the previous inequality trivially holds if $\mu<\inf \spc(\mathcal H_{\alpha,a})$. Assume now that $\inf \spc(\mathcal H_{\alpha,a})\leq\mu<\inf \spc_\mathrm{ess}(\mathcal H_{\alpha,a})$. Let $q_{\alpha,a}$ be the quadratic form associated to $\mathcal H_{\alpha,a}$, and $\mathbbm{1}_{(-\infty,\mu]}(\mathcal H_{\alpha,a})$ be the spectral projection operator corresponding to this operator, that has finite rank (since we are below the essential spectrum). 
		There exists a finite orthonormal system of  normalized eigenfunctions $(v_i) \in L^2(\R^2_+)$ such that 
		\[\mathbbm{1}_{(-\infty,\mu]}(\mathcal H_{\alpha,a})=\sum_i\langle.,v_i\rangle v_i.\]
	For all $x \in \R^2_+$ and  $\varphi\in L^2(\R^2_+)$, we have
		\[|\mathbbm{1}_{(-\infty,\mu]}(\mathcal H_{\alpha,a})\varphi|^2(x)=\sum_i|\langle \varphi,v_i\rangle|^2|v_i(x)|^2 \leq  \|\varphi\|^2_{L^2(\R^2_+)}\sum_i |v_i(x)|^2.\]
		Since the sum is over a finite set and $(v_i)$ are in $L^2(\R^2_+)$, then  the dominated convergence theorem asserts that, for all $\epsilon>0$, there exists $r_\epsilon$ such that 
		\[\mbox{for all}\  \varphi\in L^2(\R^2_+),\int_{|x|\geq r_\epsilon} |\mathbbm{1}_{(-\infty,\mu]}(\mathcal H_{\alpha,a})\varphi|^2(x)\,dx\leq \epsilon \|\varphi\|^2_{L^2(\R^2_+)}.\]
		Hence  for all $\varphi \in \mathcal M_{r_\epsilon}$, it holds
		\begin{equation}\label{eq:chi5}
		\|\mathbbm{1}_{(-\infty,\mu]}(\mathcal H_{\alpha,a})\varphi\|^2_{L^2(\R^2_+)} \leq \epsilon \|\varphi\|^2_{L^2(\R^2_+)}.
		\end{equation} 
		Using the properties of the spectral projections, we have for all $\varphi \in  \mathcal M_{r_\epsilon}$,
		\begin{equation*}q_{\alpha,a}(\varphi)=q_{\alpha,a}\big(\mathbbm{1}_{(-\infty,\mu]}(\mathcal H_{\alpha,a})\varphi\big)+q_{\alpha,a}\Big(\big(I-\mathbbm{1}_{(-\infty,\mu]}(\mathcal H_{\alpha,a})\big)\varphi\Big).\end{equation*}
		The min-max principle and the definition of $\mathbbm{1}_{(-\infty,\mu]}(\mathcal H_{\alpha,a})$ ensure the boundedness of  $q_{\alpha,a}\Big(\big(I-\mathbbm{1}_{(-\infty,\mu]}(\mathcal H_{\alpha,a})\big)\varphi\Big)$ from below by $\mu\|\big(I-\mathbbm{1}_{(-\infty,\mu]}(\mathcal H_{\alpha,a})\big)\varphi\|^2$ (for $\varphi \neq 0$). In addition, $q_{\alpha,a}\big(\mathbbm{1}_{(-\infty,\mu]}(\mathcal H_{\alpha,a})\varphi\big)$ is  non negative, then for all $\varphi \in  \mathcal M_{r_\epsilon}$,
		\begin{equation}\label{eq:chi6}
		q_{\alpha,a}(\varphi)\geq \mu\|\big(I-\mathbbm{1}_{(-\infty,\mu]}(\mathcal H_{\alpha,a})\big)\varphi\|^2_{L^2(\R^2_+)}.
		\end{equation}
		On the other hand, we have
		\[\|\big(I-\mathbbm{1}_{(-\infty,\mu]}(\mathcal H_{\alpha,a})\big)\varphi\|^2_{L^2(\R^2_+)}=\|\varphi\|^2_{L^2(\R^2_+)}-\|\mathbbm{1}_{(-\infty,\mu]}(\mathcal H_{\alpha,a})\varphi\|^2_{L^2(\R^2_+)}.\]
		Hence, by~\eqref{eq:chi5} we get 
		\[\|\big(I-\mathbbm{1}_{(-\infty,\mu]}(\mathcal H_{\alpha,a})\big)\varphi\|^2_{L^2(\R^2_+)}\geq (1-\epsilon)\|\varphi\|^2_{L^2(\R^2_+)}.\]
		We use the above inequality together with~\eqref{eq:chi6} to obtain
		\[\frac {q_{\alpha,a}(\varphi)}{\|\varphi\|^2_{L^2(\R^2_+)}}\geq \mu(1-\epsilon).\]
		Since $\varphi \in \mathcal M_{r_\epsilon}$ is arbitrary, then
		\[\Sigma(\mathcal H_{\alpha,a}, r_\epsilon)\geq \mu (1-\epsilon).\]
		This establishes~\eqref{eq:second1} and consequently~\eqref{eq:second}.
	\end{proof}
	Next, we give the proof of Lemma~\ref{lem:out_B_r} which will also be used in the proof of Theorem~\ref{thm:ess_theta} below.
		\begin{proof}[Proof of Lemma~\ref{lem:out_B_r}]
		The main tool is a partition of unity that divides $\R^2_+$ into three sectors, which allows us to use spectral properties of some explored operators in Sections~\ref{sec:corner} and~\ref{sec:La}.
		One can find a partition of unity $(\hat{\chi}_j)$ for the interval $[0,\pi]$ satisfying
		\begin{multline*}
		\supp \hat{\chi}_1 \subset  \Big[0,\frac 23\alpha\Big],\ \supp \hat{\chi}_2 \subset \Big[\frac 13 \alpha,\frac 12\alpha+\frac \pi 2\Big],\ \supp \hat{\chi}_3 \subset  \Big[\frac 34 \alpha+\frac \pi 4,\pi\Big],\\
		\sum_{j=1}^3 \hat{\chi}_j^2(\theta)=1,\quad \sum_{j=1}^3 |\hat{\chi}_j^{'2}(\theta)|\leq C,\quad \forall \theta \in [0,\pi] ,		\end{multline*}
		where $C$ is a constant dependent on $\alpha$,
		 but independent of $a$. Let $r>0$. We define the truncation functions in polar coordinates
		\[\forall (\rho,\theta) \in \R_+\times(0,\pi), \qquad \chi_j^{r,\mathrm{pol}}(\rho,\theta)=\hat{\chi}_j(\theta),\]
		for $j\in\{1,2,3\}$. The associated functions in the Cartesian coordinates are defined by:
		\[\chi_j^r(x_1,x_2)=\chi_j^{r,\mathrm{pol}}(\rho,\theta),\qquad (x_1,x_2)\in \R^2_+.\]
		Consider a non-zero function $\varphi \in  \mathcal M_r$.
		The IMS localization formula ensures that
		\begin{equation}\label{eq:ims}
		\|(\nabla-i\Ab_{\alpha,a}) \varphi\|^2_{L^2(\R^2_+)} =\sum_{j=1}^3\|(\nabla-i\Ab_{\alpha,a})(\chi_j^r \varphi)\|^2_{L^2(\R^2_+)}-
		\sum_{j=1}^3\|\varphi |\nabla \chi_j^r| \|^2_{L^2(\R^2_+)}.
		\end{equation}
		We first evaluate the term $\sum_{j=1}^3\|\varphi |\nabla \chi_j^r| \|^2_{L^2(\R^2_+)}$. 
		For $(x_1,x_2) \in \R^2_+$, we have
		\begin{equation*}
		|\nabla \chi_j^r(x_1,x_2)|^2 =\big|\partial_\rho \chi_j^{r,\mathrm{pol}}(\rho,\theta)\big|^2+\frac 1{\rho^2}\big|\partial_\theta \chi_j^{r,\mathrm{pol}}(\rho,\theta)\big|^2=\frac 1{\rho^2}\big|\partial_\theta \chi_j^{r,\mathrm{pol}}(\rho,\theta)\big|^2.
		\end{equation*}
		By the construction of $ \chi_j^r$ and due to the support of $\varphi$, we get
		\begin{equation}\label{eq:chi13}\sum_{j=1}^3\|\varphi |\nabla \chi_j^r| \|^2_{L^2(\R^2_+)}\leq \frac C{r^2}\|\varphi\|^2_{L^2(\R^2_+)},
		\end{equation}
	 for some  $C=C(\alpha)$.
	Next, we bound $\sum_{j=1}^3\|(\nabla-i\Ab_{\alpha,a}) (\chi_j^r \varphi)\|^2_{L^2(\R^2_+)}$. The idea is to extend the functions $\chi_j^r \varphi$ by zero, to refer to the operators  introduced in the sections~\ref{sec:corner} and~\ref{sec:La}.
	Notice that $\curl \Ab_{\alpha,a}=\curl \Ab_0=1$ in the support of $\chi_1^{r} \varphi$, where $\Ab_0$ is the vector potential defined in~\eqref{canon}. Hence, extending $\chi_1^{r} \varphi$ by zero in the half-plane $\R^2_+$, and performing a suitable change of gauge, we get by the min-max principle 
	\begin{equation}\label{eq:chi9}
	\frac {\|(\nabla-i\Ab_{\alpha,a}) (\chi_1^{r} \varphi)\|^2_{L^2(\R^2_+)}}{\|\chi_1^{r} \varphi\|^2_{L^2(\R^2_+)}}
	\geq \inf_{\substack{u\in \dom \q_{\mathfrak b=1,\R^2_+}\\u \neq 0}} \frac {\| (\nabla-i\Ab_0)u\|^2_{L^2(\R^2_+)}}{\|u\|^2_{L^2(\R^2_+)}}
	=\Theta_0.
	\end{equation}
	(see Section~\ref{sec:corner}).
	Proceeding similarly and using a simple scaling, we get
	\begin{equation}\label{eq:chi10}
	\frac {\|(\nabla-i\Ab_{\alpha,a})(\chi_3^{r} \varphi)\|^2_{L^2(\R^2_+)}}{\|\chi_3^{r} \varphi\|^2_{L^2(\R^2_+)}}\geq |a|\Theta_0.
	\end{equation}
	Finally, we extend $\chi_2^{r} \varphi$ by zero in $\R^2$, and we perform a rotation of domain (by angle $\pi/2-\alpha$) and a suitable change of gauge to get
		\begin{equation}\label{eq:chi11}\frac {\|(\nabla-i\Ab_{\alpha,a}) (\chi_2^{r} \varphi)\|^2_{L^2(\R^2_+)}}{\|\chi_2^{r} \varphi\|^2_{L^2(\R^2_+)}}
		\geq \beta_a,
		\end{equation}
		where $\beta_a$ is the ground-state energy of the operator $\mathcal L_a$  in~\eqref{eq:ham_operator}.
		Gathering results in~\eqref{eq:chi9},~\eqref{eq:chi10} and~\eqref{eq:chi11} yields
		\begin{equation}\label{eq:chi12}
		\sum_{j=1}^3\|(\nabla-i\Ab_{\alpha,a}) (\chi_j^r \varphi)\|^2_{L^2(\R^2_+)}\geq|a|\Theta_0 \|\varphi\|^2_{L^2(\R^2_+)}.
		\end{equation}
		The last inequality follows from the fact that  $a \in [-1,1)\setminus \{0\}$, $\beta_a\geq|a|\Theta_0$ (see Section~\ref{sec:La}) and  $\sum_{j=1}^3 |\chi_j^r|^2=1$ in $\R^2_+$. Implementing~\eqref{eq:chi13} and~\eqref{eq:chi12} in~\eqref{eq:ims} completes the proof.	
	\end{proof}
		\begin{proof}[Proof of Theorem~\ref{thm:ess_theta}]
		We can equivalently prove that $\Sigma \mathcal H_{\alpha,a}=|a|\Theta_0$, now that we have Theorem~\ref{thm:ess_spect} in hand. This is  done in two steps:
		\paragraph{\itshape Step 1.} We prove $\Sigma \mathcal H_{\alpha,a}\geq |a|\Theta_0$. Let $r>0$, recall the definition of $\Sigma (\mathcal H_{\alpha,a},r)$.
		In light of Lemma~\ref{lem:out_B_r}, we get the following lower bound:
		\[\Sigma (\mathcal H_{\alpha,a},r)\geq |a|\Theta_0-\frac C {r^2}.\]	
		Taking $r \rightarrow +\infty$ in the inequality above establishes Step 1.
		\paragraph{\itshape Step 2.}  We prove $\Sigma \mathcal H_{\alpha,a}\leq |a|\Theta_0$. Let $\epsilon>0$ and $r>0$. The Neumann realization of the operator $-(\nabla-ia\Ab_0)^2$ in the half-plane $\R^2_+$ admits $|a|\Theta_0$ as a ground-state energy. Hence, the min-max principle together with a standard limiting argument ensure the existence of a constant $r>0$ and a function $f$, belonging to the form domain of $ (\nabla-ia\Ab_0)^2$ and vanishing outside $B(0,r)$,
		such that
		\[|a|\Theta_0\leq \frac {\| (\nabla-ia\Ab_0)f\|^2_{L^2(\R^2_+)}}{\|f\|^2_{L^2(\R^2_+)}}\leq |a|\Theta_0+\epsilon.\]
		Notice that $\curl \Ab_{\alpha,a}= \curl a\Ab_0=a$ in the set $D_\alpha^2$ defined in~\eqref{eq:A_alfa}. Hence, one may perform a translation and a change of gauge to obtain from $f$ a function $v$, supported in $ B_r^\complement \cap D_\alpha^2$ and satisfying  
		\[\frac {\| (\nabla-ia\Ab_0)f\|^2_{L^2(\R^2_+)}}{\|f\|^2_{L^2(\R^2_+)}}=\frac {\| (\nabla-i\Ab_{\alpha,a})v\|^2_{L^2(\R^2_+)}}{\|v\|^2_{L^2(\R^2_+)}}.\] 
		Consequently,
		\[\Sigma (\mathcal H_{\alpha,a},r)\leq \frac {\| (\nabla-i\Ab_{\alpha,a})v\|^2_{L^2(\R^2_+)}}{\|v\|^2_{L^2(\R^2_+)}} \leq |a|\Theta_0+\epsilon.\]
		Take successively $\epsilon$ to zero and $r$ to $+\infty$ to complete the proof of Step 2.
	\end{proof} 
	\begin{proof}[Proof of Lemma~\ref{lem:cont2}]
		Let $h \in \R$ such that $a+h \in [-1,1)\setminus\{0\}$.
		We prove that $\lim_{h \rightarrow 0} \mu(\alpha,a+h,r)=\mu(\alpha,a,r)$.
		Let $u \in \mathcal D_r$ such that $\|u\|_{L^2(\R^2_+)}=1$. We extend $u$ by zero outside the ball $B_r$, and we use the min-max principle together with Cauchy's inequality to write, 
		\begin{align*}
		\mu(\alpha,a,r)\leq q_{\alpha,a}(u)
		&\leq (1+|h|)q_{\alpha,a+h}(u)+C|h|^{-1}\int_{B_r}h^2(x_1^2+x_2^2)|u|^2\,dx\\
		&\leq (1+|h|)q_{\alpha,a+h}(u)+C(r)|h|,
		\end{align*}
		where $C(r)$ is a constant dependent solely on $r$. Again the min-max principle gives
		\[\mu(\alpha,a,r)\leq (1+|h|)\mu(\alpha,a+h,r)+C(r)|h|.\]
		Taking $h$ to zero, we get $\mu(\alpha,a,r) \leq \liminf_{h \rightarrow 0} \mu(\alpha,a+h,r)$. 
		
		In a similar fashion, we establish that $\mu(\alpha,a,r) \geq \limsup_{h \rightarrow 0} \mu(\alpha,a+h,r)$.	
	\end{proof}
	\section{Change of variables}
	\subsection{Frenet coordinates}\label{sec:bc}
	
	In this section we assume that the set $\Gamma$ consists of a simple smooth curve that intersects the boundary of $\Omega$ transversely in two points. In the general case, $\Gamma$ consists of a finite number of (disjoint) such curves. We may reduce to the simple case above by working on each component separately.
	We introduce some \emph{Frenet coordinates} which are valid in a tubular neighbourhood of $\Gamma$. These coordinates are known in the literature. We list below some of their basic properties. For more details, see~\cite[Appendix~F]{fournais2010spectral} and~\cite{Assaad2019}. 
	
	Let $\left[-|\Gamma|/2,|\Gamma|/2\right] \ni s\longmapsto M(s)\in\Gamma$ be  the arc length parametrization of $\Gamma$. Let $T(s)$ be a unit tangent vector to $\Gamma$ at the point $M(s)$, and $\nu(s)$ be the unit normal of $\Gamma$ at the point $M(s)$, pointed toward $\Om_1$.
	The orientation of the parametrization $M$ is fixed as follows:
	\[\mathrm{det}\big(T(s),\nu(s) \big)=1.\]
	The  curvature $k_r$ of $\Gamma$  is  defined by
	$T'(s)=k_r(s)\nu(s)$.
For $t_0>0$, we define the transformation
	\begin{equation}\label{Frenet}
	\Phi~:~\left(-\frac{|\Gamma|}2,\frac{|\Gamma|}2 \right)\times(-t_0,t_0)\, \ni (s,t)\longmapsto M(s)+t \nu(s) \in \R^2.
	\end{equation}
	For a  sufficiently small  $t_0$, $\Phi$ is a diffeomorphism from $\big(-|\Gamma|/2,|\Gamma|/2 \big)\times(-t_0,t_0)$ to $\Gamma(t_0)$,
	where $\Gamma(t_0):=\ImP$.
	The Jacobian of $\Phi$ is 
	\begin{equation}\label{eq:a1}
	\mathfrak a(s,t)=J_\Phi(s,t)=1-tk_r(s).
	\end{equation}
	The inverse, $\Phi^{-1}$, of $\Phi$ defines a system of coordinates  for the tubular neighbourhood $ \Gamma(t_0)$ of  $\Gamma$,
	\[\Phi^{-1}(x)=\big( s(x),t(x)\big).\]
	Note that since the curvature is bounded, then~\eqref{eq:a1} implies the  existence of $C>0$ such that
	\begin{equation}\label{eq:a2}
	\big|J_{\Phi^{-1}}(x)-1 \big|\leq C \ell\qquad \mathrm{and}\qquad \big|J_\Phi(s,t)-1 \big|\leq C \ell,
	\end{equation}
	where $x \in B(\ell)\subset \Gamma(t_0)$, $B(\ell)$ is a ball of radius $\ell$, and $(s,t)=\big(s(x),t(x)\big)$.  
	
	To each function $u \in  H_0^1\big(\Gamma(t_0) \big)$, we associate the function $\tilde{u} \in  H^1\big(\big(-|\Gamma|/2,|\Gamma|/2\big)\times(-t_0,t_0)\big)$ as follows:
	\begin{equation*}
	\tilde{u}(s,t)=u\big(\Phi(s,t)\big).
	\end{equation*}
	We also associate to any vector potential $\Eb=(E_1,E_2) \in H^1_\mathrm{loc}(\R^2,\R^2)$, the vector field
	$\tilde{\Eb}=(\tilde{E_1},\tilde{E_2}) \in H^1\big(\big(-|\Gamma|/2,|\Gamma|/2\big)\times(-t_0,t_0),\R^2\big)$,
	where 
	\begin{equation}\label{eq:A_tild1}
	\tilde{E_1}(s,t)=\mathfrak a(s,t)E\big(\Phi(s,t) \big)\cdot T(s)\quad\mathrm{and}\quad \tilde{E_2}(s,t)=E\big(\Phi(s,t) \big)\cdot\nu(s).
	\end{equation}
	We have the following change of variable formulae:
	\begin{equation}\label{eq:A_tild2}
	\int_{\Gamma(t_0)}\big|\big(\nabla-i \Eb \big)u \big|^2\,dx=\int_{-\frac{|\Gamma|}2}^{\frac{|\Gamma|}2}\int_{-t_0}^{t_0}\left(\mathfrak a^{-2}\big|(\partial_s-i\tilde{E_1})\tilde{u}\big|^2+\big|(\partial_t-i\tilde{E_2})\tilde{u}\big|^2 \right)\, \mathfrak a\,ds\,dt.
	\end{equation}	
	Finally, we present the following gauge transformation lemma: 
	
	\begin{lemma}\label{lem:Anew2}
		Let $a \in [-1,1)\setminus\{0\}$ and  $\overline{B_\ell}\subset (-|\Gamma|/2,|\Gamma|/2)\times(-t_0,t_0)$ be a ball of radius $\ell$ such that $\Phi(B_\ell)\subset \Om$.  If $\Eb$ is a vector potential in $H^1(\Omega,\R^2)$ with   
		$\curl \Eb=\mathbbm{1}_{\Omega_1}+a\mathbbm{1}_{\Omega_2}$, then there exists a function $\omega_\ell
		\in H^2(B_\ell)$  such that the vector potential $\tilde{\Eb}_{\rm g}:=\tilde{\Eb}-\nabla_{s,t}\omega_\ell$, defined in $B_\ell$, satisfies
		\begin{equation}\label{eq:Anew3}
		\big(\tilde{E}_{\rm g}\big)_1(s,t)= 
		\begin{cases}
		- \big(t-\frac {t^2}2 k_r(s)\big),&\mathrm{if}~t>0\\
		-a \big(t-\frac {t^2}2 k_r(s)\big),&\mathrm{if}~t<0
		\end{cases}
		;\qquad \big(\tilde{E}_{\rm g}\big)_2(s,t)=0.
		\end{equation}
	\end{lemma}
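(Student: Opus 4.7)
The plan is to treat this as a Poincar\'e-lemma argument on the simply-connected ball $B_\ell$: I will check that the proposed gauge-transformed potential has exactly the same (distributional) curl in the $(s,t)$-coordinates as the transformed potential $\tilde{\Eb}$, and then invoke Poincar\'e's lemma together with a regularity argument.

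First I would compute $\curl_{s,t}\tilde{\Eb}$. Starting from~\eqref{eq:A_tild1} and using that the pullback under $\Phi$ of $dx_1\wedge dx_2$ is $\mathfrak a(s,t)\,ds\wedge dt$ (with $\mathfrak a$ from~\eqref{eq:a1}), the equality of $1$-forms $E_1\,dx_1+E_2\,dx_2=\tilde E_1\,ds+\tilde E_2\,dt$ gives, after taking exterior derivatives,
\begin{equation*}
\partial_s\tilde E_2-\partial_t\tilde E_1=\mathfrak a(s,t)\,(\curl \Eb)\circ\Phi(s,t)=\mathfrak a(s,t)\,\sigma(t),
\end{equation*}
where $\sigma(t)=1$ for $t>0$ and $\sigma(t)=a$ for $t<0$ (since $\nu$ points into $\Om_1$, the region $\{t>0\}$ lies in $\Om_1$ under $\Phi$). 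Next I would verify, by a direct differentiation of the ansatz in~\eqref{eq:Anew3}, that
\begin{equation*}
\partial_s(\tilde E_{\rm g})_2-\partial_t(\tilde E_{\rm g})_1=\partial_t\!\left(\sigma(t)\Big(t-\tfrac{t^2}{2}k_r(s)\Big)\right)=\sigma(t)\bigl(1-tk_r(s)\bigr)=\mathfrak a(s,t)\sigma(t),
\end{equation*}
both for $t>0$ and $t<0$, and I would check that $(\tilde E_{\rm g})_1$ is continuous across $t=0$ (both one-sided values vanish), so no spurious distributional contribution appears from the interface.

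Consequently the $H^1$ vector field $\tilde{\Eb}-\tilde{\Eb}_{\rm g}$ has zero distributional curl on the simply-connected ball $B_\ell$. By the Poincar\'e lemma there exists $\omega_\ell\in H^1(B_\ell)$ (unique up to an additive constant) with $\nabla_{s,t}\omega_\ell=\tilde{\Eb}-\tilde{\Eb}_{\rm g}$. The ansatz $\tilde{\Eb}_{\rm g}$ lies in $W^{1,\infty}(B_\ell)\subset H^1(B_\ell)$ (it is piecewise polynomial in $t$ with Lipschitz $s$-dependence given by the bounded curvature $k_r$), and $\tilde{\Eb}\in H^1(B_\ell)$ by the pullback of $\Eb\in H^1(\Om)$ under the smooth diffeomorphism $\Phi$. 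Hence $\nabla_{s,t}\omega_\ell\in H^1(B_\ell)$, which gives $\omega_\ell\in H^2(B_\ell)$.

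The main (minor) obstacle is the interface $t=0$: I need to confirm that $(\tilde E_{\rm g})_1$ matches across $\{t=0\}$ so that its weak $t$-derivative is merely the piecewise classical one (an $L^\infty$ function with a jump equal to $\sigma(0^+)-\sigma(0^-)=1-a$ supplied as an ordinary function value, not a Dirac mass), and similarly that the distributional identity for $\curl_{s,t}\tilde{\Eb}$ picks up the correct factor $\mathfrak a\sigma$ across the barrier. Both are routine once the ansatz is seen to vanish at $t=0$ and one notes $\mathfrak a(s,0)=1$.
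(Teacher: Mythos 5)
Your proof is correct. The paper states Lemma~\ref{lem:Anew2} without giving a proof (it refers the reader to earlier work for background on the Frenet coordinates), so there is nothing to compare it against; but the route you take is the natural one and exactly what the lemma is designed to make true. The two key checks you flag are precisely the ones that matter: the pullback identity $\partial_s\tilde E_2-\partial_t\tilde E_1=\mathfrak a\,\sigma$ with $\sigma(t)=\mathbbm 1_{\{t>0\}}+a\,\mathbbm 1_{\{t<0\}}$ (using that $\nu$ points into $\Om_1$, so $\{t>0\}$ maps into $\Om_1$), and the continuity of $(\tilde E_{\rm g})_1$ across $\{t=0\}$ (both one-sided limits vanish), which guarantees that the distributional $t$-derivative has no Dirac layer and the ansatz reproduces $\mathfrak a\sigma$ exactly. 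From there, $\curl_{s,t}(\tilde{\Eb}-\tilde{\Eb}_{\rm g})=0$ on the simply connected ball $B_\ell$ gives $\omega_\ell$ via Poincar\'e's lemma, and $\omega_\ell\in H^2(B_\ell)$ follows because $\tilde{\Eb}\in H^1(B_\ell)$ (push-forward of an $H^1$ field by the smooth diffeomorphism $\Phi$) while $\tilde{\Eb}_{\rm g}\in W^{1,\infty}(B_\ell)\subset H^1(B_\ell)$ on the bounded ball. No gap.
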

	\subsection{Coordinates near $\Gamma \cap \partial \Om$}\label{A:Psi}
In this section we will explicitly define the diffeomorphism $\Psi$ introduced in Section~\ref{sec:Psi}. The construction of $\Psi$ below is inspired by~\cite[Lemma 14.3]{bonnaillie2003analyse}. 

For $j\in\{1,...,n\}$, consider $\mathsf p_j \in \Gamma\cap\partial \Om$ and $\alpha_j$  the corresponding angle introduced in Notation~\ref{not:alfa}. We choose a system of coordinates such that $\mathsf p_j$ is the origin, there exists a neighbourhood of $\mathsf p_j$ where  $\partial \Om$ and $\Gamma$ coincide respectively with the representative curves of two smooth monotonous functions $f_1$ and $f_2$, defined in an interval $(-r_j,r_j)$ for a small $r_j>0$ and the following is satisfied:
\[f_1(0)=0,\ f_2(0)=0,\ f'_1(0)=-\tan\frac {\alpha_j}2,\ f'_2(0)=\tan\frac {\alpha_j}2,\] 
\[\Om \cap B(0,r_j)=E\cap B(0,r_j),\]
\begin{multline*} E:=\{(x_1,x_2)~:~x_1 \geq 0\ \mathrm{and}\ f_1(x_1)<x_2\leq f_2(x_1)\} \\\cup \{(x_1,x_2)~:~x_2 \geq 0\ \mathrm{and}\ f^{-1}_1(x_2)<x_1\leq f^{-1}_2(x_2)\}.\end{multline*}
We define the diffeomorphism $\breve{\Psi}$ in $B(0,r_j)$ by (see Figure~\ref{fig3}) 
\[\breve \Psi(x_1,x_2)=\Big(\frac {f_2(x_1)-f_1(x_1)}{2\tan\frac{\alpha_j}{2}},x_2-\frac {f_2(x_1)+f_1(x_1)}{2}\Big):=(\breve x_1,\breve x_2).\]
By performing a rotation of axes of an angle $-\alpha_j/2$, we can define out of $\breve \Psi$ a diffeomorphism $\Psi$ satisfying the desired conditions in Section~\ref{sec:Psi}.
\begin{figure}
	\begin{subfigure}{0.45\linewidth}
		\centering
		\includegraphics[scale=1]{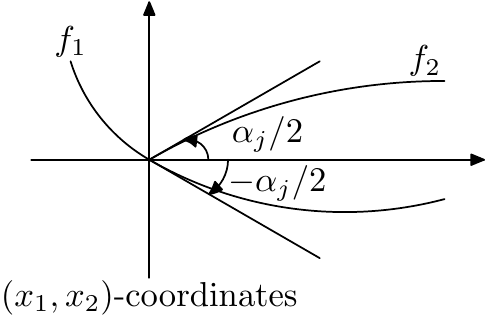}
	\end{subfigure}%
	\begin{subfigure}{0.45\linewidth}
		\centering
		\includegraphics[scale=1]{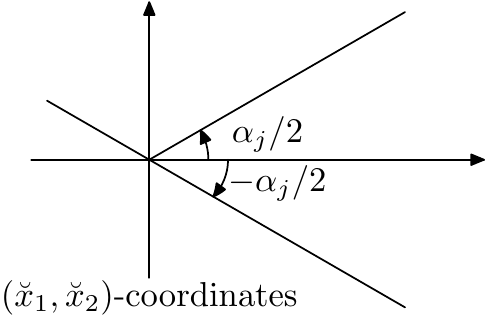}
	\end{subfigure}%
	\caption{Change of coordinates from $(x_1,x_2)$ to $(\breve x_1,\breve x_2)$.} 
	\label{fig3}
\end{figure}	
\section {Regularity properties}\label{sec:regularity}
Let $\mathfrak b>0$. Recall the operator $\mathcal P_{\mathfrak b,\Fb}$ and the associated quadratic form $Q_{\mathfrak b,\Fb}$, introduced  in~\eqref{eq:P0} and~\eqref{eq:Quad} respectively, where $\Fb$ is the vector potential in $\Hd$ satisfying $\curl \Fb=B_0={\mathbbm 1}_{\Om_1}+a{\mathbbm 1}_{\Om_2}$, $a\in[-1,1)\setminus\{0\}$.  In this section we prove the claim in~\eqref{eq:domP_F}  that the corresponding domains of $\mathcal P_{\mathfrak b,\Fb}$ and $Q_{\mathfrak b,\Fb}$ are independent of the parameter $\mathfrak b$.

A key-ingredient of the argument is the boundedness of the field $\Fb$. This boundedness is known for smooth fields, but it should be ensured for our potential with the piecewise-constant field $B_0$. As will be seen below, the fact that $\Fb \in \Hd$ and  $B_0\in L^p(\Om)$, for  $p\in [1,\infty]$, is sufficient for our needs.

\begin{theorem}\label{thm:regul}
	Let $a\in[-1,1)\setminus\{0\}$ and $\Fb \in \Hd$ be such that $\curl \Fb={\mathbbm 1}_{\Om_1}+a{\mathbbm 1}_{\Om_2}$, then $\Fb \in L^\infty(\Om)$.
\end{theorem}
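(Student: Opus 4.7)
The plan is to convert the div-curl system for $\Fb$ into a scalar Poisson problem and then appeal to standard elliptic regularity. Since $\Om$ is bounded, simply connected with smooth boundary and $\Fb \in \Hd$ satisfies $\Div \Fb = 0$ in $\Om$ with $\Fb \cdot \nu = 0$ on $\partial\Om$, I would first introduce a stream function $\phi$ such that $\Fb = \nabla^\perp \phi = (\partial_{x_2}\phi, -\partial_{x_1}\phi)$. The existence of such a $\phi \in H^2_{\mathrm{loc}}(\Om)$ follows because the $1$-form $F_1\,dx_1 + F_2\,dx_2$ is closed (as $\curl \Fb \in L^\infty$ and $\Om$ is simply connected, one argues by mollification or using the Poincar\'e lemma in appropriate Sobolev classes).

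Next I would record that, in this representation, $-\Delta \phi = \curl \Fb = B_0$ in $\Om$, while the boundary condition $\Fb \cdot \nu = 0$ is equivalent to $\nabla \phi \cdot \tau = 0$ along $\partial\Om$ (where $\tau$ is a unit tangent vector), i.e.\ $\phi$ is constant on each connected component of $\partial \Om$. Since $\partial\Om$ is connected (because $\Om$ is simply connected and bounded) and $\phi$ is only defined up to an additive constant, I can normalize $\phi|_{\partial \Om} = 0$. Thus $\phi$ solves the Dirichlet problem
\begin{equation*}
-\Delta \phi = B_0 \quad \mathrm{in}\ \Om,\qquad \phi = 0 \quad \mathrm{on}\ \partial\Om,
\end{equation*}
with right-hand side $B_0 = \mathbbm{1}_{\Om_1} + a\mathbbm{1}_{\Om_2} \in L^\infty(\Om)$.

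Finally, I would apply standard elliptic regularity for the Dirichlet Laplacian on a bounded smooth domain: for every $p \in (2,\infty)$, one has $\phi \in W^{2,p}(\Om)$ with $\|\phi\|_{W^{2,p}(\Om)} \leq C_p \|B_0\|_{L^p(\Om)}$. Choosing any $p > 2$ and invoking the Sobolev embedding $W^{2,p}(\Om) \hookrightarrow C^{1,\beta}(\overline{\Om})$ with $\beta = 1 - 2/p \in (0,1)$, I conclude that $\nabla \phi \in L^\infty(\Om)$ (in fact $\nabla\phi \in \mathcal C^{0,\beta}(\overline\Om)$), and consequently $\Fb = \nabla^\perp \phi \in L^\infty(\Om)$.

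The only mildly delicate point is the construction and uniqueness of the stream function $\phi$ in the appropriate class; this is where one must carefully use that $\Om$ is simply connected together with $\Div \Fb = 0$ in the distributional sense and $\Fb \cdot \nu = 0$ on $\partial\Om$. Once $\phi$ is in hand with $\phi|_{\partial\Om}=0$, the rest is the classical $L^p$-theory for the Poisson equation combined with Sobolev embedding, with no essential obstacle from the fact that $B_0$ is merely piecewise constant (its jump across $\Gamma$ only costs regularity at the level of second derivatives of $\phi$, which is more than sufficient to bound $\nabla \phi$ uniformly).
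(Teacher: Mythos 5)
Your proof is correct and follows essentially the same route as the paper's: both represent $\Fb = \nabla^\perp \phi$ via a stream function solving a Dirichlet Poisson problem with right-hand side $B_0 \in L^\infty(\Om)$, and then invoke $W^{2,p}$ elliptic regularity together with Sobolev embedding to conclude $\nabla \phi \in L^\infty(\Om)$. One small correction in the first paragraph: the stream function exists because $\Div \Fb = 0$ makes the $1$-form $-F_2\,dx_1 + F_1\,dx_2$ closed, not because $F_1\,dx_1 + F_2\,dx_2$ is closed (the latter is equivalent to $\curl \Fb = 0$, which fails here since $\curl \Fb = B_0 \neq 0$); you invoke the right hypothesis at the end of your sketch, but the parenthetical justification given earlier misidentifies which form is closed and why.
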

\begin{proof}
	Since $\Fb \in \Hd$ and $\curl \Fb=B_0\in L^2(\Om)$ then $\Fb=(-\partial_{x_2}u,\partial_{x_1}u)$, where $u$ is the unique solution in $H^1_0(\Om)\cap H^2(\Om)$ of the Dirichlet problem for the Laplacian $-\Delta u=B_0$ (see~\cite[Propositions D.2.1 \& D.2.5]{fournais2010spectral} and~\cite[Theorem~9.15]{gilbarg2015elliptic}). 
	
	Now, notice that $B_0\in L^p(\Om)$, for all $p\in[1,+\infty]$. Consequently, for a fixed $p\in[2,+\infty)$ there exists a unique $v\in W_0^{1,p}(\Om)\cap W^{2,p}(\Om)$ satisfying $-\Delta v=B_0$ (\cite[Theorem~9.15]{gilbarg2015elliptic}). But $W_0^{1,p}(\Om)\cap W^{2,p}(\Om)\subset H_0^1(\Om)\cap H^2(\Om)$, thus $v=u$ and $\Fb=(-\partial_{x_2}v,\partial_{x_1}v)$.  Pick $p=4$,~\cite[(7.30)]{gilbarg2015elliptic} asserts that $v\in \mathcal C^1(\Om)$ and $\partial_{x_1}v$, $\partial_{x_2}v \in L^\infty(\Om)$. This completes the proof.
\end{proof}

\begin{proof}[Proof of~\eqref{eq:domP_F}] 
With $\Fb\in L^\infty(\Om)$ in hand, the proof is easy to establish. We will only derive the operator domain result in~\eqref{eq:domP_F}. Let $u\in \dom \mathcal P_{\mathfrak b,\Fb}$. We have
\[\Delta u=(\nabla-i\mathfrak b \Fb)^2u+2i\mathfrak b\Fb\cdot\nabla u+|\mathfrak b|^2|\Fb|^2u.\] 
Since $\Fb \in \Hd \cap L^\infty(\Om)$, we get that $\Delta u \in L^2(\Om)$ and $\nabla u\cdot \nu_{|\partial \Om}=0$. This ensures that $u\in H^2(\Om)$ (see~\cite[Theorem~E.4.7]{fournais2010spectral}). One can similarly establish the opposite inclusion; $\{u\in H^2(\Om)~:~\nabla u\cdot\nu|_{\partial \Om}=0\}\subset \dom \mathcal P_{\mathfrak b,\Fb}$. 
\end{proof}
	\bibliographystyle{alpha}
	\bibliography{WAbib}
	\end{document}